\newif\ifarxiv
\arxivtrue

\newif\ifneurips
%\neuripstrue

\ifarxiv
\documentclass[11pt]{article}
\usepackage{geometry}\geometry{margin=1in}%
\fi

\ifneurips
\documentclass[nonatbib]{article}
\usepackage{neurips_2023}
\fi

\usepackage{amsmath}
\usepackage{amsthm}

\usepackage{enumitem}

\usepackage{graphicx}

\usepackage{amssymb}
\usepackage{hyperref} % This package makes your reference clickable.
\usepackage{color}
\usepackage{thmtools}
\usepackage{bbm}
\usepackage{tikz-cd}
\usepackage[ruled,algo2e,noend]{algorithm2e}

\newenvironment{namedproof}[1]{\paragraph{Proof of #1.}\hspace{-1em}}{\hfill$\blacksquare$\vspace{1em}}

\newenvironment{proofsketch}{\paragraph{Proof sketch.}\hspace{-1em}}{\hfill$\blacksquare$\vspace{1em}}

\newtheorem{theorem}{Theorem}[section]
\newtheorem{lemma}[theorem]{Lemma}
\newtheorem{proposition}[theorem]{Proposition}
\newtheorem{corollary}[theorem]{Corollary}

\newtheorem{fact}[theorem]{Fact}
\newtheorem{question}[theorem]{Question}
\newtheorem{claim}[theorem]{Claim}

\theoremstyle{definition}
\newtheorem{definition}[theorem]{Definition}
\theoremstyle{remark}

\newcommand{\norm}[1]{\left \lVert #1 \right \rVert}

\mathchardef\mhyphen="2D

\DeclareMathOperator*{\argmin}{argmin}

\newcommand{\RR}{\mathbb{R}}

\newcommand{\NN}{\mathbb{N}}

\DeclareMathOperator{\supp}{supp}

\DeclareMathOperator{\diag}{diag}
\newcommand{\EE}{\mathbb{E}}

\DeclareMathOperator{\vspan}{span}
\DeclareMathOperator{\Proj}{Proj}

\DeclareMathOperator{\tr}{tr}
\newcommand{\CD}{\mathcal{D}}

\DeclareMathOperator{\vrank}{rank}

\newcommand{\pinv}{\dagger}

\DeclareMathOperator*{\argmax}{arg\,max}

\renewcommand{\t}{\top}
\DeclareMathOperator{\poly}{poly}
\newcommand{\CN}{\mathcal{N}}
\newcommand{\CP}{\mathcal{P}}
\newcommand{\CS}{\mathcal{S}}

\newcommand{\BX}{\mathbb{X}}

\newcommand{\ds}[1]{\textsc{ds}(#1)}

\title{Feature Adaptation for Sparse Linear Regression}% with~Correlated~Random Designs}

\author{Jonathan A. Kelner\thanks{\texttt{kelner@mit.edu}. This work was supported in part by NSF Large CCF-1565235, NSF Medium CCF-1955217, and NSF TRIPODS 1740751.} \\ MIT \and Frederic Koehler\thanks{\texttt{fkoehler@stanford.edu}. This work was supported in part by NSF award CCF-1704417, NSF award IIS-1908774, and N. Anari’s Sloan Research Fellowship} \\ Stanford \and Raghu Meka\thanks{\texttt{raghum@cs.ucla.edu}. This work was supported in part by NSF CAREER Award CCF-1553605 and NSF Small CCF-2007682} \\ UCLA \and Dhruv Rohatgi\thanks{\texttt{drohatgi@mit.edu}. This work was supported by a U.S. DoD NDSEG Fellowship.} \\ MIT}

\begin{document}

\maketitle

\begin{abstract}
Sparse linear regression is a central problem in high-dimensional statistics. We study the correlated random design setting, where the covariates are drawn from a multivariate Gaussian $N(0,\Sigma)$, and we seek an estimator with small excess risk. 

If the true signal is $t$-sparse, information-theoretically, it is possible to achieve strong recovery guarantees with only $O(t\log n)$ samples. However, computationally efficient algorithms have sample complexity linear in (some variant of) the \emph{condition number} of $\Sigma$. Classical algorithms such as the Lasso can require significantly more samples than necessary even if there is only a single sparse approximate dependency among the covariates.

We provide a polynomial-time algorithm that, given $\Sigma$, automatically adapts the Lasso to tolerate a small number of approximate dependencies. In particular, we achieve near-optimal sample complexity for constant sparsity and if $\Sigma$ has few ``outlier'' eigenvalues.
Our algorithm fits into a broader framework of \emph{feature adaptation} for sparse linear regression with ill-conditioned covariates. With this framework, we additionally provide the first polynomial-factor improvement over brute-force search for constant sparsity $t$ and arbitrary covariance $\Sigma$.

\end{abstract}

\ifarxiv
\newpage

\tableofcontents
\newpage
\fi

\section{Introduction}
Sparse linear regression is a fundamental problem in high-dimensional statistics.
%The sparse linear regression problem is a case study for the successes of clever algorithm design. 
In a natural %learning-theoretic (or random design)
random design %(or learning-theoretic) 
formulation of this problem, we are given $m$ independent and identically distributed samples $(X_i, y_i)_{i=1}^m$ where each sample's covariates are drawn from an $n$-dimensional Gaussian random vector $X_i \sim N(0,\Sigma)$, and each response is $y_i = \langle X_i, v^*\rangle + \xi_i$ for independent noise $\xi_i \sim N(0,\sigma^2)$ and a $t$-sparse ground truth regressor $v^* \in \RR^n$, where $t$ is much smaller than $n$. The goal\footnote{More generally, from a learning theory perspective, we could consider an arbitrary improper learner outputting a function $\hat f(X_0)$, rather than specifically learning a linear function $\langle X_0, \hat v \rangle$. At least  when $\Sigma$ is known, there is no advantage as we can always project $\hat f$ onto the space of linear functions.} is to output a vector $\hat{v} \in \RR^n$ for which the \emph{excess risk} \[\EE (\langle X_0, \hat{v}\rangle - y_0)^2 - \sigma^2 = (\hat{v} - v^*)^\t \Sigma (\hat{v} - v^*) =: \norm{\hat{v} - v^*}_\Sigma^2\] is as small as possible, where $(X_0,y_0)$ is an independent sample from the same model.

Without the sparsity assumption, the number of samples needed to achieve small excess risk (say, $O(\sigma^2)$) is linear in the dimension; with $O(n)$ samples, simple and computationally efficient algorithms such as ordinary least squares achieve the statistically optimal excess risk $O\left(\frac{\sigma^2 n}{m}\right)$. Sparsity allows for a significant statistical improvement: ignoring computational efficiency, it is well known that there is an estimator $\hat{v}$ with excess risk $O(\frac{\sigma^2 t \log n}{m})$ as long as $m = \Omega(t \log n)$ (see e.g.\ \cite{foster1994risk,rigollet2015high}; Theorem 4.1 in \cite{kelner2021power}). %using only $O(t\log n)$ samples.

The catch is that computing this estimator involves a brute-force search over $\binom{n}{t}$ possibilities (i.e., the possible supports for $v^*$). At first glance, this combinatorial search may seem unavoidable if we wish to take advantage of sparsity. Indeed, similar problems are notoriously difficult: the only non-trivial algorithms for e.g., learning $t$-sparse parities with noise still require $n^{\Omega(t)}$ time \cite{mossel2003learning, valiant2012finding}. %But this intuition turns out to be (at least partially) mistaken. 
%A sequence of seminal works developed provable estimation guarantees for \emph{polynomial-time} algorithms for sparse linear regression based on $\ell_1$ regularization and greedy variable selection (see e.g. \cite{chen1994basis, tibshirani1996regression, candes2005decoding, candes2006stable, candes2007dantzig}) --- for example, the Lasso and Orthogonal Matching Pursuit estimators. These algorithms have seen widespread adoption in the applied sciences.
However, it is a celebrated fact that for sparse linear regression, computationally efficient methods such as  Lasso and Orthogonal Matching Pursuit can avoid this combinatorial search and still achieve very strong theoretical guarantees under conditions such as the Restricted Isometry Property (see e.g. \cite{chen1994basis,donoho1989uncertainty, candes2005decoding, candes2006stable, candes2007dantzig,bickel2009simultaneous}). In the random design setting we consider, the Lasso is known to achieve optimal statistical rates (up to constants) when the covariance matrix $\Sigma$ is \emph{well-conditioned} \cite{raskutti2010restricted,zhou2009restricted}.
%Moreover, they achieve nearly optimal statistical rates when the covariance matrix $\Sigma$ is \emph{well-conditioned} (with respect to sparse vectors, e.g., satisfies the \emph{Restricted Isometry Property}).

%Subsequently, a rich literature has focused on precisely analyzing the sample complexity of these existing algorithms, and identifying minimal assumptions on $\Sigma$ that ensure statistical optimality \cite{van2007deterministic, bickel2009simultaneous, van2009conditions, wainwright2009sharp, raskutti2010restricted, van2013lasso, amelunxen2014living, van2018tight}. But there is a fundamental gap that new analyses of existing algorithms cannot bridge: 
What about when $\Sigma$ is ill-conditioned? In contrast with the statistically optimal estimator, Lasso and its cousins provably \emph{require} sample complexity scaling with (some variant of) the condition number of $\Sigma$ (see e.g. Theorem~14 in \cite{van2018tight} or Theorem~6.5 in \cite{kelner2021power}). And with a few exceptions (e.g., in some settings with special graphical structure \cite{kelner2021power}) there has been little progress on designing new efficient algorithms for sparse linear regression with ill-conditioned $\Sigma$ (see Section~\ref{section:related-work} for further discussion). For a general covariance $\Sigma$, no algorithm is even known that can achieve sample complexity $f(t) \cdot n^{1-\epsilon}$ (for an arbitrary function $f$) without brute-force search. 
%and arbitrary $\Sigma$.
%, would constitute a substantial breakthrough. 

A computationally efficient algorithm that approaches the optimal statistical rate for \emph{arbitrary} $\Sigma$ 
%would be, in our view, 
might be too much to hope for. While no computational lower bounds are known, even in restricted computational models such as the Statistical Query model,\footnote{There are lower bounds for a family of regression estimators with coordinate-separable regularization \cite{zhang2017optimal} and a family of ``preconditioned-Lasso'' estimators \cite{kelner2021power, kelner2022distributional}.} the related \emph{worst-case} problem of finding a $t$-sparse solution to a system of linear equations requires $n^{\Omega(t)}$ time under standard complexity assumptions \cite{gupte2021fine}.
%(but non-statistical) problem of sparse recovery from worst-case covariates is known to be computationally intractable without condition number assumptions 
%\cite{natarajan1995sparse, zhang2014lower, foster2015variable, har2018approximate, gupte2021fine}). 
So it is plausible, though not certain, that some assumptions on $\Sigma$ are necessary. In this work \--- inspired by a long tradition (in random matrix theory, statistics, graph theory, and other areas) of studying matrices with a spectrum that is split between a large ``bulk'' and a small number of outlier ``spike'' eigenvalues \cite{mendelson2003performance,van2000empirical,zhang2005learning} \--- we identify a broad generalization of the standard well-conditionedness assumption, under which brute-force search can still be avoided. 

%Our work is inspired by a long tradition (in random matrix theory, statistics, graph theory, and other areas) of studying matrices with a spectrum that is split between a large ``bulk'' and a small number of outlier ``spike'' eigenvalues \--- see for example \cite{mendelson2003performance,van2000empirical,zhang2005learning} for some important works that efficiently handle (large) outlier eigenvalues in the context of (kernel) ridge regression.

\subsection{Beyond well-conditioned $\Sigma$}

Say that $\Sigma$ has eigenvalues $\lambda_1 \leq \dots \leq \lambda_n$, and that the sparsity $t$ is a constant.\footnote{Note that for moderate-sized datasets (e.g. $n=1000$), brute-force search is infeasible even for $t$ as small as four or five.} Then standard bounds for Lasso require sample complexity $(\lambda_n/\lambda_1) \cdot O(\log n)$. But if the covariates contain even a single approximate linear dependency, then $\lambda_n/\lambda_1$ may be arbitrarily large. Moreover, if the dependency is sparse (e.g. two covariates are highly correlated), then there is a natural choice of $v^*$ for which Lasso provably fails (see Theorem 6.5 of \cite{kelner2021power}). Indeed, this phenomenon is not just a limitation of the analysis; Lasso fails empirically as well, even for very small $t$ (see Figure~\ref{fig:bpfailure} in Appendix~\ref{section:suppfig} for a simple example with $t=3$).

%\paragraph{Motivation.} What are some concrete settings where the Lasso suffers from a significantly suboptimal sample complexity? Perhaps the simplest failure mode is when the response is the difference between two highly correlated covariates. This constitutes a two-sparse approximate linear dependency, i.e., a two-sparse vector in the low eigenspaces of $\Sigma$, and the Lasso provably performs poorly for recovering such dependencies (see Theorem 6.5 of \cite{kelner2021power}).

Such dependencies arise in applications ranging from finance (e.g., where some pairs of stocks or ETFs may be highly correlated, and an investor may be interested in the differences) to genomic data (where functionally related genes may have highly correlated expression patterns). Two-sparse dependencies can be directly identified by looking at the covariance matrix; see Section~\ref{section:related-work} for some discussion of previous research in this direction. But as $t$ increases, naive methods for identifying $t$-sparse dependencies quickly become computationally intractable. With domain knowledge, it may be possible to manually identify and correct such dependencies, but this process would also be time-consuming. %As a result, it would be useful to have an \emph{automatic} procedure for identifying and correcting for Lasso's failure modes in ill-conditioned data \--- ideally yielding an efficient end-to-end regression algorithm.
Thus, we ask the following question: instead of assuming that $\lambda_n/\lambda_1$ is bounded, suppose that there are constants $d_\ell$ and $d_h$ so that $\lambda_{n-d_h} / \lambda_{d_\ell+1}$ is bounded, i.e. the spectrum of $\Sigma$ has only $d_\ell$ outliers at the low end, and only $d_h$ outliers at the high end. Can we still design an algorithm that achieves sample complexity $O(\log n)$ without resorting to brute-force search?

 %In the context of sparse linear regression, we ask the following question. Say that $\Sigma$ has eigenvalues $\lambda_1 \leq \dots \leq \lambda_n$, and that the sparsity $t$ is a constant. Then standard bounds for Lasso require sample complexity $(\lambda_n/\lambda_1) \cdot O(\log n)$. We weaken the assumption that the condition number $\lambda_n/\lambda_1$ is bounded: suppose that there are constants $d_\ell$ and $d_h$ so that $\lambda_{n-d_h} / \lambda_{d_\ell+1}$ is bounded, i.e. the spectrum of $\Sigma$ has only $d_\ell$ outliers at the low end, and only $d_h$ outliers at the high end. Of course, $\lambda_n/\lambda_1$ may be arbitrarily large. Can we still design an algorithm that achieves sample complexity $O(\log n)$ without resorting to brute-force search?

\paragraph{Main result.} We give a positive answer: an algorithm for sparse linear regression that is both computationally and statistically efficient for covariance matrices with a small number of ``outlier'' eigenvalues. In particular, this means we can handle
%with only 
a few approximate dependencies among the covariates (quantified by the number of eigenvalues below a threshold).
In comparison, Lasso and other classical algorithms cannot tolerate even a single sparse approximate dependency.
Our main algorithmic result is the following:
%precisely achieves this goal:
\begin{theorem}\label{theorem:main-alg-intro}
Let $n,t,d_\ell,d_h,L \in \NN$ and $\sigma,\delta>0$. Let $\Sigma \in \RR^{n\times n}$ be a positive semi-definite matrix with (non-negative) eigenvalues $\lambda_1 \leq \dots \leq \lambda_n$. Let $v^* \in \RR^n$ be any $t$-sparse vector. Let $(X_i,y_i)_{i=1}^m$ be independent with $X_i \sim N(0,\Sigma)$ and $y_i = \langle X_i,v^*\rangle + \xi_i$, where $\xi_i \sim N(0,\sigma^2)$. 

Let $n_\text{eff} := t(\lambda_{n-d_h}/\lambda_{d_\ell+1})\log(nL/\delta) + t^{O(t)} d_l + d_h$. Given $\Sigma$, $t$, $d_\ell$, $\delta$, and $(X_i,y_i)_{i=1}^m$, there is an estimator $\hat{v} \in \RR^n$ that has excess risk
\[
\norm{\hat{v} - v^*}_\Sigma^2 \leq O\left( \frac{\sigma^2 n_\text{eff}L}{m}\right) + 2^{-L} \cdot \norm{v^*}_\Sigma^2
\] with probability at least $1-\delta$, so long as $m \geq \Omega(n_\text{eff} L)$. Moreover, $\hat{v}$ can be computed in time $\poly(n)$.
\end{theorem}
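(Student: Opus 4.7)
The shape of the error bound---additive in $\sigma^2 n_\text{eff} L / m$ with exponentially decaying bias $2^{-L}\norm{v^*}_\Sigma^2$---points to an iterative residual-boosting scheme built on top of a \emph{base estimator} that contracts the signal by a constant factor per step. Concretely, the plan is to first construct, using $m' = \Theta(n_\text{eff})$ samples, a polynomial-time base estimator $\mathcal{A}$ that, for a $t$-sparse regressor $v$ with Gaussian noise of variance $\sigma^2$, outputs $\hat v$ satisfying $\norm{\hat v - v}_\Sigma^2 \leq O(\sigma^2 n_\text{eff}/m') + \tfrac{1}{2}\norm{v}_\Sigma^2$ with probability at least $1-\delta/L$. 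Iterating $\mathcal{A}$ on residuals---splitting the $m$ samples into $L$ fresh batches of $m/L$ and letting $\hat v_{k+1} = \hat v_k + \mathcal{A}(\text{residuals of batch } k)$---then telescopes to the claimed bound via a union bound over the $L$ failure events (explaining the $\log(nL/\delta)$ inside $n_\text{eff}$) and the geometric series $\sum_k 2^{-k}$. A standard sparsification step (e.g.\ hard thresholding, or a small combinatorial reduction) keeps the iterates approximately $t$-sparse so that $\mathcal{A}$ remains applicable to $v^* - \hat v_k$.

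For the base estimator, I would use the eigendecomposition of $\Sigma$ to split $\RR^n$ into three orthogonal pieces: the top-$d_h$ eigenspace $V_h$, the bottom-$d_\ell$ eigenspace $V_\ell$, and the complement $V_m$ on which $\Sigma$ has condition number $\kappa := \lambda_{n-d_h}/\lambda_{d_\ell+1}$. The component of any regressor lying in $V_h$ is estimated by an unconstrained OLS step; because $\dim V_h = d_h$ with large eigenvalues, this costs $O(d_h)$ samples. On $V_m$ alone, a standard Lasso would recover any $t$-sparse signal supported there with $O(t\kappa\log n)$ samples via a restricted-eigenvalue analysis. The difficulty---and the raison d'être of ``feature adaptation''---is the $V_\ell$ component: small eigenvalues of $\Sigma$ encode precisely the sparse approximate linear dependencies among the covariates that cause Lasso to fail.

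To handle $V_\ell$, the plan is to enumerate a family of $t^{O(t)}$ candidate \emph{feature adaptations}---each a choice of $O(d_\ell)$ auxiliary features derived from $\Sigma$ alone (no access to the data required) and appended to the standard basis---such that at least one adaptation renders the effective Lasso problem well-conditioned at the true sparsity pattern. For each candidate, I would run OLS on the auxiliary features jointly with Lasso on the original $t$-sparse component, yielding per-candidate sample complexity $O(t\kappa\log n + d_\ell + d_h)$. Selection among the $t^{O(t)}$ candidates is done by held-out risk minimization on a fresh batch, costing only $O(t\log t + \log(1/\delta))$ additional samples by a union bound, which altogether matches the stated $n_\text{eff}$.

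The main obstacle, and the crux of the proof, will be the structural lemma underpinning the enumeration. Note that $P_\ell v^* = \sum_{i \in \supp(v^*)} v^*_i\, P_\ell e_i$ lies in the $t$-dimensional subspace of $V_\ell$ spanned by $\{P_\ell e_i : i \in \supp(v^*)\}$. What is needed is a data-independent procedure that, from $\Sigma$ alone, produces $t^{O(t)}$ candidate low-dimensional subspaces of $V_\ell$ forming an ``effective cover'' in the sense that for every $t$-sparse $v^*$ some candidate approximates $P_\ell v^*$ well enough that the leftover error is absorbed into the $O(\sigma^2 n_\text{eff}/m)$ term. I would attempt this by a combinatorial covering argument on the vectors $\{P_\ell e_i\}_{i=1}^n \subset V_\ell$---for instance, clustering them into $t^{O(t)}$ groups and enumerating one representative direction per group, or enumerating over extreme points of a suitable polytope. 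Once this lemma is established, the rest of the argument is a synthesis of standard ingredients: restricted-eigenvalue Lasso on $V_m$, OLS concentration on $V_h$ and the auxiliary features, held-out model selection among the $t^{O(t)}$ candidates, and the iterative boosting described above.
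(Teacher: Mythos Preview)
Your boosting outline has the right shape, but the sparsification step is a genuine gap. The base estimator's output $\hat v_k$ is a Lasso-type solution and need not be even approximately $t$-sparse; hard-thresholding it to $t$ coordinates can inflate $\|\hat v_k - v^*\|_\Sigma$ arbitrarily unless a restricted-eigenvalue condition holds---precisely what fails here. The paper sidesteps this entirely: rather than sparsifying $\hat v_k$, it \emph{augments the covariates} with the single new feature $\langle X, \hat s^{(j)}\rangle$, so that the residual $v^* - \hat s^{(j)}$ is exactly $(t{+}1)$-sparse in the augmented basis (coefficient $-1$ on the new feature). By Cauchy interlacing the augmented covariance acquires at most one extra outlier eigenvalue on each end, so the base estimator applies with parameters $(t{+}1, d_\ell{+}1, d_h{+}1)$. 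This covariate-augmentation trick is the crux of the iteration and is not substitutable by thresholding.

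Your base-estimator plan also diverges from the paper in a way that leaves a gap. You propose enumerating $t^{O(t)}$ candidate adaptations in $V_\ell$ and selecting via held-out risk; your own accounting then yields $n_\text{eff} \approx t\kappa\log n + d_\ell + d_h + t\log t$, which is \emph{strictly better} than the theorem's $t^{O(t)} d_\ell$ term---a signal that something is off. Indeed the proposed covering (``cluster the $n$ vectors $P_\ell e_i$ into $t^{O(t)}$ groups'') is not obviously feasible: one must handle, for every size-$t$ support $T$, the projection of $\vspan\{e_i : i \in T\}$ into $V_\ell$, and there is no evident way to do this with a number of candidates independent of $n$. The paper's structural lemma is different in kind: an iterative-peeling procedure produces a \emph{single} coordinate set $S \subseteq [n]$ of size $|S| \le (7t)^{2t+1} d_\ell$ such that every $t$-sparse $v$ satisfies $\|v_{S^c}\|_2 \le 3\lambda_{d_\ell+1}^{-1/2}\|v\|_\Sigma$. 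The base estimator then runs \emph{one} modified Lasso that simply omits regularization on the coordinates in $S$; the $t^{O(t)} d_\ell$ term in $n_\text{eff}$ is exactly $|S|$, the dimension of the unregularized block. Building $S$ is itself the heart of the argument: starting from coordinates that correlate with the bottom-$d_\ell$ eigenspace, one iteratively adjoins coordinates correlating with $\vspan\{Pe_i : i \in S\}$, with a trace/rank bound controlling the growth factor at each of the $t$ rounds.
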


Specifically, taking $L \sim \log(m\norm{v^*}_\Sigma^2/\sigma^2)$, the time complexity is dominated by $L$ eigendecompositions and $L$ calls to a Lasso program, for overall runtime $\tilde{O}(n^3)$ (see Algorithm~\ref{alg:main}). This is substantially faster than the brute-force method (which takes $O(n^t)$ time) even for small values of $t$.

%Here, the parameters $d_\ell$ and $d_h$ measure the number of small outlier eigenvalues and large outlier eigenvalues respectively (and can be chosen by the practitioner). The key point is that the sample complexity does not scale with $\lambda_n/\lambda_1$ but rather with $\lambda_{n-d_h}/\lambda_{d_{\ell+1}}$, as well as the number of outliers $d_\ell + d_h$. Moreover, the dependence on $n$ is still logarithmic. While the exponential dependence on the sparsity $t$ may be suboptimal, we note that in practice the algorithm may not suffer this dependence (see e.g. Figure~\ref{fig:experiment}), and it is possible that the analysis can be tightened. 

%In any case, Theorem~\ref{theorem:main-alg-intro} gives the first sample-efficient algorithm for sparse linear regression with polynomial runtime (independent of $t$), in the regime where $t$ is constant and $\Sigma$ has few outlier eigenvalues. 
The excess risk decays at rate $\tilde{O}(\sigma^2 n_\text{eff}/m)$ (hiding the logarithmic factor), which is near the statistically optimal rate of $\tilde{O}(\sigma^2 t/m)$ so long as $n_\text{eff}$ is small, i.e. $t$ is small and only a few eigenvalues lie outside a constant-factor range.
In our analysis, we prove that the standard Lasso estimator can already tolerate a few \emph{large} eigenvalues --- the main algorithmic innovation is needed to tolerate a few \emph{small} eigenvalues, which turns out to be much trickier. Notice that when $d_\ell = d_h = 0$ we recover standard Lasso guarantees up to the factor of $L$; thus, Theorem~\ref{theorem:main-alg-intro} morally represents a generalization of classical results.

We also show how to achieve a different trade-off between time and samples, eliminating the dependence on $d_\ell$ in sample complexity at the cost of larger runtime: 

\begin{theorem}\label{theorem:brute-alg-intro}
In the setting of Theorem~\ref{theorem:main-alg-intro}, let $n'_\text{eff} := t(\lambda_{n-d_h}/\lambda_{d_\ell+1})\log(nL/\delta) + t^2 \log(t) + d_h$. Given $\Sigma$, $t$, $d_\ell$, $\delta$, and $(X_i,y_i)_{i=1}^m$, there is an estimator $\hat{v} \in \RR^n$ that has excess risk
\[
\norm{\hat{v} - v^*}_\Sigma^2 \leq O\left( \frac{\sigma^2 n'_\text{eff}L}{m}\right) + 2^{-L} \cdot \norm{v^*}_\Sigma^2
\] with probability at least $1-\delta$, so long as $m \geq \Omega(n'_\text{eff} L)$. Moreover, $\hat{v}$ can be computed in time $\poly(n,m,d_\ell^t, t^{t^2})$.
\end{theorem}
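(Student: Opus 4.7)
\begin{proofsketch}
Since Theorem~\ref{theorem:main-alg-intro} already handles the regime where $d_\ell$ is modest by paying $t^{O(t)} d_\ell$ in sample complexity, the plan here is to trade samples for time by brute-force enumeration of the ``low-spike structure'' of $v^*$. I would proceed in three stages.

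First, I would compute the spectral decomposition of $\Sigma$ and identify the $d_\ell$-dimensional low-spike subspace $U_\ell := \vspan(u_1,\dots,u_{d_\ell})$, where $u_i$ is the eigenvector of $\Sigma$ with eigenvalue $\lambda_i$. Restricted to $U_\ell^\perp$, the eigenvalues of $\Sigma$ lie in $[\lambda_{d_\ell+1}, \lambda_n]$, and the $d_h$ large outliers at the top can already be absorbed by a standard Lasso analysis, as is indicated in the discussion following Theorem~\ref{theorem:main-alg-intro}. The actual difficulty is thus to isolate and subtract the low-spike component of $v^*$ without an adaptive, sample-hungry search.

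Second, I would project every standard basis vector onto $U_\ell$ to obtain vectors $q_i := P_{U_\ell} e_i \in U_\ell$. Since $v^*$ is $t$-sparse with support $S$, the projection $P_{U_\ell} v^* = \sum_{i \in S} v^*_i q_i$ lies in the span of at most $t$ of the $q_i$. The brute-force step enumerates, via a finite net, candidate $t$-dimensional ``basis configurations'' inside $U_\ell$ that capture how $v^*$ sits in this subspace. Since $U_\ell$ is $d_\ell$-dimensional, one expects on the order of $\binom{d_\ell}{t} \le d_\ell^t$ combinatorial choices to identify which $t$ coordinate-projections form the right basis, together with an additional $t^{t^2}$-sized grid to discretize the change-of-basis matrix on those $t$ coordinates to sufficient precision. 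For each candidate I would build a feature-adapted regression problem by ``correcting'' the original features against the guessed low-spike directions, and then run Lasso on the transformed problem with $\tilde O\!\left(t(\lambda_{n-d_h}/\lambda_{d_\ell+1})\log n + d_h\right)$ samples, which should suffice by the usual well-conditioned Lasso bound on the adapted features.

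Third, I would aggregate the $O(d_\ell^t \cdot t^{t^2})$ candidate estimators by holding out a small sample split and selecting the one minimizing empirical risk; a standard union bound then costs an additional $O(t\log d_\ell + t^2 \log t)$ samples, yielding the $t^2 \log(t)$ term in $n'_\text{eff}$. The main obstacle will be in the second step: constructing a finite family of feature-adaptations that is simultaneously small enough to give the claimed runtime and provably rich enough to contain one transformation for which the adapted covariance has effective condition number $O(\lambda_{n-d_h}/\lambda_{d_\ell+1})$ on $t$-sparse vectors. I expect this to require showing that any $t$-sparse vector's low-spike projection can be approximated, up to a controlled perturbation in $\Sigma$-norm, by one drawn from a combinatorial family indexed by discrete ``types,'' in the spirit of the feature-adaptation framework underlying Theorem~\ref{theorem:main-alg-intro}; quantifying this approximation tightly enough that only a $\poly(d_\ell^t, t^{t^2})$-sized net is needed is the technical crux.
\end{proofsketch}
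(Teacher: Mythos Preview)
Your second stage has a genuine gap. You claim that enumerating ``basis configurations'' inside the $d_\ell$-dimensional subspace $U_\ell$ costs only $\binom{d_\ell}{t}$ combinatorial choices, but the objects you actually need to identify are \emph{coordinates} $i \in [n]$, not directions in $U_\ell$. The projections $q_i = P_{U_\ell} e_i$ are $n$ vectors living in a $d_\ell$-dimensional space; knowing (even exactly) the $t$-dimensional subspace $\vspan\{q_i : i \in \supp(v^*)\} \subseteq U_\ell$ does not tell you which $t$ of the $n$ coordinates to correct, and in general many coordinates can have projections lying near any prescribed subspace of $U_\ell$. So the step ``for each candidate, build a feature-adapted regression problem by correcting the original features against the guessed low-spike directions'' is underspecified: you have not explained how a guessed subspace of $U_\ell$ yields a concrete set of $\le t$ coordinate-level corrections, and without that the brute force is over $\binom{n}{t}$, not $d_\ell^t$. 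Your proposed $t^{t^2}$-sized grid for discretizing a $t\times t$ change-of-basis matrix is similarly floating: there is no argument that a finite net on such matrices captures the relevant dependence structure to the precision Lasso needs.

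The paper closes exactly this gap with the \iterpeel{} subroutine (Lemma~\ref{lemma:iterpeel-guarantees-intro}): it efficiently computes a set $S \subseteq [n]$ of size at most $(7t)^{2t+1} d_\ell$ such that every $t$-sparse approximate dependency is supported inside $S$. The brute force is then over $\binom{S}{t}$, i.e.\ over $t$-subsets of a set of \emph{coordinates}, which has size $|S|^t = t^{O(t^2)} d_\ell^t$ --- this is where both the $d_\ell^t$ and the $t^{t^2}$ factors in the runtime come from, not from a discretization argument. For each $T \in \binom{S}{t}$ one builds the dictionary $D(T) = [I_n \mid d_1 \mid \cdots \mid d_t]$ where $d_1,\dots,d_t$ are a $\bar\Sigma$-orthonormal basis for $\vspan\{e_i : i \in T\}$ (Lemma~\ref{lemma:large-small-l1-rep}), runs a regularized Lasso against each $D(T)$, and selects the best via held-out validation. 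The boosting wrapper is as you describe. So your overall three-stage architecture (enumerate, Lasso, select) matches the paper, but the enumeration is over coordinate subsets of the \iterpeel{} output, and that subroutine is the missing piece in your plan.
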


%One may notice that both Theorem~\ref{theorem:main-alg-intro} and Theorem~\ref{theorem:brute-alg-intro} have exponential dependence on $t$ in either the sample complexity or runtime. 

\paragraph{Discussion \& limitations.} We discuss two limitations of the above results. First, both results incur exponential dependence on the sparsity $t$ (in the sample complexity for Theorem~\ref{theorem:main-alg-intro}, and the runtime for Theorem~\ref{theorem:brute-alg-intro}), which may be suboptimal. For Theorem~\ref{theorem:main-alg-intro}, we remark that in practice the algorithm may not suffer this dependence (see e.g. Figure~\ref{fig:experiment}), and it is possible that the analysis can be tightened. For Theorem~\ref{theorem:brute-alg-intro}, we emphasize that the runtime is still fundamentally different than brute-force search: in particular, it's \emph{fixed-parameter tractable} in $t$ and $d_\ell$.% (in contrast, computing the brute-force estimator would require time $\Omega(n^t)$, which becomes rapidly infeasible even for e.g. $t = 4$). 

Second, both results require that $\Sigma$ is known. Thus, they are only applicable in settings where we either have a priori knowledge, or can estimate $\Sigma$ accurately because a large amount of unlabelled data is available. % (but only a few labels). 
At a high level, this limitation is due to the need to compute the eigendecomposition of $\Sigma$, which cannot be approximated from the empirical covariance of a small number of samples.

For simplicity, we have stated our results in terms of Gaussian covariates and noise, but this is not a fundamental limitation. We expect it is possible to prove similar results in the sub-Gaussian case at the cost of making the proof longer ---  for instance, by building upon the techniques from \cite{lecue2013learning} and related works.%, at the cost of making the proof longer. 
%; probabilistically, all that is needed is sufficient concentration for the appropriate generalization bounds to go through, so e.g. sub-Gaussian data should suffice. 

%(Corollary~\ref{cor:main-alg-brute}): there is an estimator computable in time $\poly(n,m,d_l^t, t^{O(t^2)})$ where the excess risk bound does not depend on $d_l$.

\SetKwFunction{adabp}{AdaptedBP}

\SetKwProg{myalg}{Procedure}{}{}
\begin{algorithm2e}[t]
  \SetAlgoLined\DontPrintSemicolon
  \caption{Adapted BP for sparse linear regression with few outlier eigenvalues}
  \label{alg:intro}

\SetKwFunction{findheavy}{FindHeavyCoordinates}
\SetKwComment{Comment}{/* }{ */}

\myalg{\findheavy{$\{v_1,\dots,v_k\}$,$\alpha$}}{
%\KwData{Spanning set $\{v_1,\dots,v_k\} \subseteq \RR^n$ and threshold $\alpha>0$}
%\KwResult{$S = \{i \in [n]: \sup_{x \in V\setminus \{0\}} x_i/\norm{x}_2 \geq \alpha\}$ where $V = \vspan\{v_1,\dots,v_k\}$}
\Comment*[l]{\textsc{Gram-Schmidt} computes an orthonormalization of $v_1,\dots,v_k$}
$a_1,\dots,a_k \gets \textsc{Gram-Schmidt}(\{v_1,\dots,v_k\})$\; 

\Return $\{i \in [n]: \sum_{j=1}^k ((a_j)_i)^2 \geq \alpha^2\}$\;
}

\SetKwFunction{iterpeel}{IterativePeeling}
\myalg{\iterpeel{$\Sigma, d, t$}}{
%\KwData{Projection matrix $P: n \times n$ and integer $t \geq 1$}
%\KwResult{A set $S \subseteq [n]$ of size $|S| \leq (7t)^{2t+1} \cdot \dim \ker(P)$ such that $\mathcal{G}_P(v) \subseteq S$ for all $v \in B_0(t)$}
Compute eigendecomposition $\Sigma = \sum_{i=1}^n \lambda_i u_i u_i^\t$\;

$P \gets \sum_{i=d+1}^n u_i u_i^\t$ \;

$K_t \gets \{i \in [n]: P_{ii} < 1 - 1/(9t^2)\}$\;

\For{$j = t$ to $1$}{

    $\mathcal{I}_P(K_j) \gets$ \findheavy{$\{P_i: i \in K_j\}, 1/(6t)$}\;
    
    $K_{j-1} \gets K_j \cup \mathcal{I}_P(K_j)$\;
}
\Return $K_0$\;
}

\myalg{\adabp{$\Sigma$, $d$, $t$, $(X_i,y_i)_{i=1}^m$}}{
%\KwData{Covariance matrix $\Sigma: n \times n$, bad eigenvalue count $d$, sparsity $t$, and samples $(X_i,y_i)_{i=1}^m$}
%\KwResult{Estimate $\hat{v}$ of unknown sparse regressor}

$S \gets \iterpeel(\Sigma, d, t)$\;

\Return $\hat{v} \in \argmin_{v \in \RR^n: \BX v = y} \sum_{i \not \in S} |v_i|$\;
}

\end{algorithm2e}

%The estimator described in Theorem~\ref{theorem:main-alg-intro} is essentially computed via an initial \emph{feature adaptation} step (based on the covariance $\Sigma$), followed by an $\ell_1$-regularized estimation step (based on the samples and using the adapted feature dictionary). 

\paragraph{Pseudocode \& simulation.} See Algorithm~\ref{alg:intro} for complete pseudocode of % the iterative peeling method and 
\adabp{}, a simplification of the method for the noiseless setting $\sigma=0$. In Figure~\ref{fig:experiment} we show that \adabp{} significantly outperforms standard Basis Pursuit (i.e. Lasso for noiseless data \cite{chen1994basis}) on a simple example with $n=1000$ variables, $d_\ell=10$ sparse approximate dependencies, and a ground truth regressor with sparsity $t = 13$. The covariates $X_{1:1000}$ are all independent $N(0,1)$ except for $10$ disjoint triplets $\{(X_i,X_{i+1},X_{i+2}): i=1,4,\dots,28\}$, each of which has joint distribution \[ X_i := Z_i; \quad X_{i+1} = Z_i + 0.4 Z_{i+1}; \quad X_{i+2} = Z_{i+1} + 0.4Z_{i+2} \] where $Z_i,Z_{i+1},Z_{i+2} \sim N(0,1)$ are independent. The (noiseless) responses are $y = 6.25(X_1 - X_2) + 2.5X_3 + \frac{1}{\sqrt{10}} \sum_{i=991}^{1000} X_i$. See Appendix~\ref{section:experimental} for implementation details.

\begin{figure}
\centering
\includegraphics[width=0.55\textwidth]{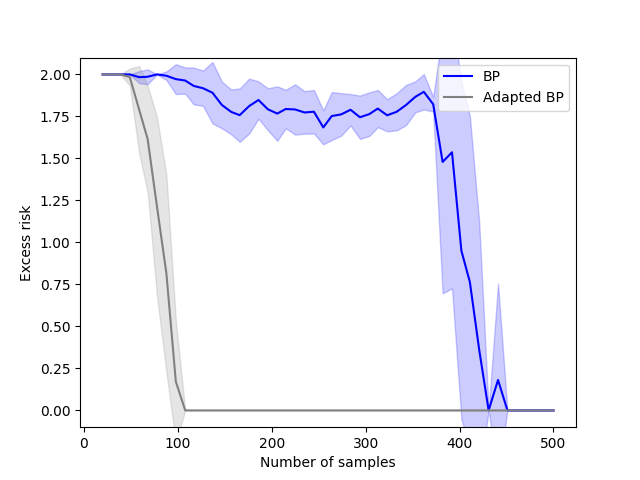}
\caption{Basis Pursuit (BP) versus Adapted BP in a simple synthetic example with $n=1000$ covariates. The $x$-axis is the number of samples. The $y$-axis is the out-of-sample prediction error (averaged over $10$ independent runs, and error bars indicate the standard deviation). } 
\label{fig:experiment}
\end{figure}

\subsection{Organization}

In Section~\ref{section:overview} we give an overview of the proofs of Theorem~\ref{theorem:main-alg-intro} and~\ref{theorem:brute-alg-intro} (the complete proofs and full algorithm pseudocode are given in Appendix~\ref{section:outlier-algorithm}). In Section~\ref{section:feature-adaptation} we discuss our other results obtained via feature adaptation. Section~\ref{section:related-work} covers related work.

%In Section~\ref{section:prelim}, we introduce preliminary notation and discuss algorithmic applications of upper bounds on $\CN_{t,\alpha}(\Sigma)$ and related notions. In Section~\ref{section:iterative-peeling} we prove our main guarantee for the procedure \iterpeel{}, which yields a warm-up algorithmic guarantee when $\Sigma$ has few small eigenvalues, and a proof of Theorem~\ref{theorem:small-eig-intro}. In Appendix~\ref{section:outlier-algorithm} we describe the full algorithm and prove Theorems~\ref{theorem:main-alg-intro} and~\ref{theorem:brute-alg-intro} (which handle both small and large outlier eigenvalues, and achieve a faster rate via an additional boosting step). In Appendices~\ref{section:fpt} and~\ref{section:extremal-bound} we prove Theorem~\ref{theorem:fpt-kappa} and Proposition~\ref{prop:all-sigma-intro} respectively. The remaining appendices are referenced throughout the paper where applicable.

\section{Proof techniques}\label{section:overview}

We obtain Theorems~\ref{theorem:main-alg-intro} and~\ref{theorem:brute-alg-intro} as outcomes of a flexible algorithmic approach for tackling sparse linear regression with ill-conditioned covariates: \emph{feature adaptation}. As a pre-processing step, adapt or augment the covariates with additional features (i.e. well-chosen linear combinations of the covariates). Then, to predict the responses, apply $\ell_1$-regularized regression (Lasso) over the new set of features rather than the original covariates. In other words, we algorithmically change the \emph{dictionary} (set of features) used in the Lasso regression. See Section~\ref{section:related-work} for a comparison to past approaches. 
%We refer to the set of linear combinations that produce these features as the \emph{dictionary}.

We start by explaining the goals of feature adaptation for general $\Sigma$, and then show how we achieve those desiderata when $\Sigma$ has few outlier eigenvalues. More precisely, the main technical difficulty is in dealing with the small eigenvalues, so in this proof overview we focus on the case where the only outliers are small eigenvalues. Complete proofs of Theorems~\ref{theorem:main-alg-intro} and~\ref{theorem:brute-alg-intro} are in Appendix~\ref{section:outlier-algorithm}.

%We start by discussing the broad conceptual approach underlying Theorems~\ref{theorem:main-alg-intro} and~\ref{theorem:brute-alg-intro}, and then discuss the details for the case of $\Sigma$ with few outlier eigenvalues.

%\subsection{Conceptual approach: feature adaptation}
\subsection{What makes a good dictionary: the view from weak learning}

Obviously, the feature adaptation approach generalizes Lasso. Surprisingly, even though the sample complexity of the standard Lasso estimator is thoroughly understood, the basic question of whether for \emph{every} covariate distribution (i.e. every $\Sigma$) there \emph{exists} a good dictionary remains wide-open. To crystallize the power of feature adaptation, we introduce the following notion of a ``good'' dictionary. We suggest considering the simplified setting of $\alpha$-\emph{weak learning}, where the goal is just to find some $\hat{v}$ so that the predictions $\langle X, \hat{v}\rangle$ are $\alpha$-correlated with the ground truth $\langle X, v^*\rangle$ when $X \sim N(0,\Sigma)$. Moreover, we focus first on the existential question (rather than the algorithmic question of finding the dictionary). We will return to the setting of Theorems~\ref{theorem:main-alg-intro} and~\ref{theorem:brute-alg-intro} later. For now, in the weak learning setting, a good dictionary (when the covariate distribution is $N(0,\Sigma)$) is one that satisfies the following covering property, but is not too large:

\begin{definition}\label{def:covering}
Let $\Sigma \in \RR^{n\times n}$ be a positive semi-definite matrix and let $t,\alpha > 0$. A set $\{D_1,\dots,D_N\} \subseteq \RR^n$ is a $(t,\alpha)$-dictionary for $\Sigma$ if for every $t$-sparse $v \in \RR^n$, there is some $i \in [N]$ with 
\[|\langle v,D_i\rangle_\Sigma| \geq \alpha \norm{v}_\Sigma\norm{D_i}_\Sigma,\]
where we define $\langle x,y\rangle_\Sigma := x^\t \Sigma y$ and $\norm{x}_\Sigma^2 := x^\t \Sigma x$ for any $x,y \in \RR^n$. Let $\CN_{t,\alpha}(\Sigma)$ be the size of the smallest $(t,\alpha)$-dictionary.    
\end{definition}

The relevance of the covering number $\CN_{t,\alpha}(\Sigma)$ is quite simple: given a $(t,\alpha)$-dictionary $\mathcal{D}$ for $\Sigma$, and given samples $(X_i,y_i)_{i=1}^m$, the weak learning algorithm can simply output the vector $\hat{v} \in \mathcal{D}$ that maximizes the empirical correlation between the predictions $\langle X_i, \hat{v}\rangle$ and the responses $y_i$. So long as there are enough samples for empirical correlations to concentrate, Definition~\ref{def:covering} guarantees success. Formally, allowing for preprocessing time to compute the dictionary, $O(\alpha)$-weak learning is possible in time $\CN_{t,\alpha}(\Sigma) \cdot \poly(n)$, with $O(\alpha^{-2}\log \CN_{t,\alpha}(\Sigma))$ samples (Proposition~\ref{prop:dict-to-weak-learning}). 

Hypothetically, bounding $\CN_{t,\alpha}(\Sigma)$ may not be \emph{necessary} to develop an efficient sparse linear regression algorithm. However, all assumptions on $\Sigma$ that are currently known to enable efficient sparse linear regression also immediately imply bounds on $\CN_{t,\alpha}$ (see Appendix~\ref{app:standard-covering}). For example, when $\Sigma$ is well-conditioned, the standard basis is a good dictionary of size $n$ (Fact~\ref{fact:kappa-dictionary}).

%In the weak learning setting, the reason why well-conditioned data is both computationally and statistically tractable is now very simple (proof in Section~\ref{section:covering-defs}): 
%Roughly speaking, the fundamental reason why vanilla Lasso and Orthogonal Matching Pursuit (i.e. without any feature augmentation) succeed on well-conditioned data  is that at least one of the covariates must non-trivially correlate with the sparse response. In other words:

%Given at least $\Omega(\kappa t\log(n))$ samples, the empirical correlations concentrate well enough that the weak learning problem can be solved by iterating over the standard basis. Thus, the standard basis can be thought of as a \emph{dictionary} of size $n$ which is sample-efficient when $\Sigma$ is well-conditioned. Unfortunately, this dictionary may become inefficient when $\Sigma$ has even a single outlier eigenvalue. At the other extreme, it's easy to construct a dictionary of size $t \cdot \binom{n}{t}$ that $1/\sqrt{t}$-correlates with $t$-sparse vectors for arbitrary $\Sigma$ (this corresponds to brute-force search, which is statistically optimal but computationally inefficient). Finally, a $\Sigma$-orthonormal basis is a dictionary of size $n$ that achieves $1/\sqrt{n}$-correlation (with all vectors, not just sparse vectors), and this corresponds to Ordinary Least Squares. 

In contrast, the only known bounds for arbitrary $\Sigma$ (until the present work) are $\CN_{t,1/\sqrt{t}}(\Sigma) \leq t\cdot \binom{n}{t}$ (the brute-force dictionary, which includes a $\Sigma$-orthonormal basis for every set of $t$ covariates) and $\CN_{t,1/\sqrt{n}}(\Sigma) \leq n$ (a $\Sigma$-orthonormal basis for all $n$ covariates, which doesn't take advantage of sparsity and corresponds to algorithms such as Ordinary Least Squares). Thus, the following basic question \--- when can we improve upon these trivial bounds \--- seems central to understanding when brute-force search can be avoided in sparse linear regression:

\begin{question}\label{question:cn}
How large is $\CN_{t,\alpha}(\Sigma)$ for an arbitrary positive semi-definite $\Sigma \in \RR^{n\times n}$? Are there natural families of ill-conditioned $\Sigma$ (and functions $f,g$) for which $\CN_{t, 1/f(t)}(\Sigma) \leq g(t) \cdot \poly(n)$?
\end{question}

\subsection{Constructing a good dictionary when $\Sigma$ has few small eigenvalues}\label{section:constructing-overview}

We now address Question~\ref{question:cn} in the setting where $\Sigma$ has a small number of eigenvalues that are much smaller than $\lambda_n$.
%Implicit in the proof of Theorem~\ref{theorem:main-alg-intro} is the following bound on $\CN_{t,\alpha}$ for matrices with few small eigenvalues:
In this setting, the standard basis may not be a good dictionary. For example, if two covariates are highly correlated, their difference may not be correlated with any of them. Nonetheless, we can prove the following covering number bound:

\begin{theorem}\label{theorem:small-eig-intro}
Let $n,t,d \in \NN$. Let $\Sigma \in \RR^{n \times n}$ be a positive semi-definite matrix with eigenvalues $\lambda_1 \leq \dots \leq \lambda_n$. Then $\CN_{t, \alpha}(\Sigma) \leq t(7t)^{2t^2+t} d^t + n$,
where $\alpha = \frac{1}{7\sqrt{t}}\sqrt{\lambda_{d+1}/\lambda_n}$.
\end{theorem}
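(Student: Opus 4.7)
The proof proceeds by explicitly constructing a $(t,\alpha)$-dictionary whose size matches the claimed bound, using the \iterpeel subroutine from Algorithm~\ref{alg:intro}. Let $P = \sum_{i=d+1}^n u_i u_i^\top$ denote the spectral projection onto the top $n-d$ eigenvectors of $\Sigma$; this commutes with $\Sigma$ and satisfies $\tr(I-P)=d$. Running \iterpeel$(\Sigma,d,t)$ produces the nested chain $K_t \subseteq K_{t-1} \subseteq \cdots \subseteq K_0 \subseteq [n]$, which will play the role of the ``problematic'' coordinate set.

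The first step is to bound $|K_0|$. The initial set $K_t = \{i : P_{ii} < 1 - 1/(9t^2)\}$ has size at most $9t^2 d$, since $\sum_i (1-P_{ii}) = \tr(I-P) = d$. At each subsequent call to \findheavy, orthonormalizing the $k := |K_j|$ rows produces an orthonormal family whose squared entries sum to $k$ in total, so the threshold $(1/(6t))^2$ is exceeded at no more than $36 t^2 k$ coordinates, giving $|K_{j-1}| \leq (1 + 36 t^2)|K_j|$. Unrolling across $t$ levels yields $|K_0| \leq (7t)^{2t+1} d$ up to small absolute constants. I then take the dictionary
\[
\CD := \{e_i\}_{i=1}^n \,\cup\, \bigcup_{T \subseteq K_0,\, |T| \leq t} B_T,
\]
where $B_T$ is any $\Sigma$-orthonormal basis of $\vspan\{e_i : i \in T\}$, with total size $|\CD| \leq n + t|K_0|^t \leq n + t(7t)^{2t^2+t} d^t$.

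To verify the covering property, let $v$ be $t$-sparse with support $S$. If $S \subseteq K_0$, then $v$ lies in the span of $B_S$, so the element of $B_S$ best aligned with $v$ achieves correlation $\geq 1/\sqrt{|S|} \geq 1/\sqrt{t}$, which dominates $\alpha$. Otherwise, pick any $i \in S \setminus K_0$; the central claim is that $e_i$ itself is $\alpha$-correlated with $v$ in the $\Sigma$ inner product. Exploiting $P\Sigma = \Sigma P$, I would decompose
\[
\langle v,e_i\rangle_\Sigma = \langle Pv, Pe_i\rangle_\Sigma + \langle (I-P)v,(I-P)e_i\rangle_\Sigma,
\]
lower-bound the first term by invoking the structure of $K_0$, and control the second term using the smallness of the bottom $d$ eigenvalues.

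\textbf{Main obstacle.} The crux is the lower bound on $\langle Pv, Pe_i\rangle_\Sigma$. The thresholds $1-1/(9t^2)$ and $1/(6t)$ in \iterpeel are tuned so that every $i \notin K_0$ simultaneously satisfies (i) $Pe_i$ is $\Sigma$-close to $e_i$, and (ii) $Pe_i$ is not well-approximated by $\vspan\{Pe_{i'} : i' \in S \cap K_0\}$. I expect to prove (ii) by downward induction along $K_t \subseteq \cdots \subseteq K_0$: the \findheavy criterion at level $j$ certifies that no coordinate outside $K_{j-1}$ is heavy against the Gram--Schmidt basis of $\{P_{i'} : i' \in K_j\}$, and iterating through $t$ levels ensures that, after subtracting the contribution of $S \cap K_0$ from $Pv$, a mass of order $\|v\|_\Sigma/\sqrt{t}$ survives in the $Pe_i$ direction. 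Since $P\Sigma P$ has condition number $\lambda_n/\lambda_{d+1}$ on its range, this $\ell_2$-style alignment translates to an inner product lower bound of order $\sqrt{\lambda_{d+1}/\lambda_n}$, reproducing $\alpha = (7\sqrt{t})^{-1}\sqrt{\lambda_{d+1}/\lambda_n}$ up to constants. The delicate part is propagating the near-orthogonality claim cleanly through all $t$ peeling levels without losing more than the budgeted factor of $t^{O(t)}$.
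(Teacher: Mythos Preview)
Your dictionary construction and the bound on $|K_0|$ are correct and match the paper's. The gap is in your verification of the covering property.

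In Case~2 you write: ``pick any $i \in S \setminus K_0$; the central claim is that $e_i$ itself is $\alpha$-correlated with $v$.'' This is false. Take $K_0 = \{1\}$ and $v = e_1 + \epsilon e_2$ for tiny $\epsilon$, with $e_1, e_2$ $\Sigma$-orthogonal. Then $S = \{1,2\} \not\subseteq K_0$, and the only choice is $i=2$, but $\langle v, e_2\rangle_\Sigma / (\|v\|_\Sigma \|e_2\|_\Sigma) = O(\epsilon)$. The correlated dictionary element here is the $\Sigma$-normalized $e_1$ coming from $B_{\{1\}}$, not any standard basis vector indexed by $S \setminus K_0$. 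So your case split is the wrong one: whether $S \subseteq K_0$ does not determine whether the witness comes from the standard basis or from some $B_T$. Your ``main obstacle'' paragraph inherits the same problem --- the assertion that ``a mass of order $\|v\|_\Sigma/\sqrt{t}$ survives in the $Pe_i$ direction'' after subtracting $S \cap K_0$ is exactly what fails in the counterexample.

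The paper sidesteps this entirely by passing through the stronger $\ell_1$-representation property (Definition~\ref{def:l1-rep}). The key structural fact extracted from \iterpeel{} is not a statement about individual coordinates $i \notin K_0$, but the uniform bound $\|v_{K_0^c}\|_2 \le 3\lambda_{d+1}^{-1/2}\|v\|_\Sigma$ for every $t$-sparse $v$ (Lemma~\ref{lemma:iterpeel-guarantees-intro}). With this in hand, one writes $v = v_{K_0} + v_{K_0^c}$, represents $v_{K_0}$ via the $\Sigma$-orthonormal basis $B_{S \cap K_0}$ (cost $\le \sqrt{t}\,\|v_{K_0}\|_\Sigma$) and $v_{K_0^c}$ via the standard basis (cost $\le \sqrt{t}\,\|v_{K_0^c}\|_2 \sqrt{\lambda_n}$), and both pieces are bounded by $O(\sqrt{t\lambda_n/\lambda_{d+1}})\,\|v\|_\Sigma$. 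This gives a $(t, 7\sqrt{t}\sqrt{\lambda_n/\lambda_{d+1}})$-$\ell_1$-representation (Lemma~\ref{lemma:small-ker-dictionary-intro}), and then the one-line implication ``$\ell_1$-representation $\Rightarrow$ dictionary'' (Lemma~\ref{lemma:l1rep-to-covering}) finishes. The point is that the $\ell_1$-representation argument automatically selects the right witness without any case analysis on where the mass of $v$ sits.
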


In particular, when $t = O(1)$ and $\Sigma$ is well-conditioned except for $O(1)$ outliers $\lambda_1,\dots,\lambda_d$, we get a linear-size dictionary just as in the case where $\Sigma$ is well-conditioned. In fact, the desired $(t,\alpha)$-dictionary can be constructed efficiently. Our key lemma shows that when $\Sigma$ has few small eigenvalues, there is a small subset of covariates that ``causes'' all of the sparse approximate dependencies \--- in the sense that the $\ell_2$ norm of any sparse vector, \emph{excluding} the mass on the subset, can be upper bounded in terms of the $\Sigma$-norm of the vector. Moreover, there is an efficient algorithm that finds a superset of these covariates. Formally, we prove the following:

\begin{lemma}\label{lemma:iterpeel-guarantees-intro}
Let $n,t,d \in \NN$. Let $\Sigma \in \RR^{n\times n}$ be a positive semi-definite matrix with eigenvalues $\lambda_1 \leq \dots \leq \lambda_n$. Given $\Sigma$, $d$, and $t$, there is a polynomial-time algorithm \iterpeel{} producing a set $S \subseteq [n]$ with the following guarantees:
\begin{enumerate}[label=(\alph*)]
    \item For every $t$-sparse $v \in \RR^n$, it holds that $\norm{v_{[n]\setminus S}}_2 \leq 3\lambda_{d+1}^{-1/2}\norm{v}_\Sigma$.
    \item $|S| \leq (7t)^{2t+1}d.$
\end{enumerate}
\end{lemma}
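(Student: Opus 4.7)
The proof splits naturally into the size bound (part b), which follows from trace arguments, and the approximation bound (part a), which is technically more involved.

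\textbf{Size bound (part b).} I would track $|K_j|$ through the peeling iterations via trace arguments. The initial set satisfies $|K_t| \leq 9t^2 d$: every $i \in K_t$ contributes $Q_{ii} = 1 - P_{ii} > 1/(9t^2)$ to $\tr(Q) = d$, since $Q = I - P$ is the orthogonal projection onto the span of the bottom $d$ eigenvectors. At each iteration $j$, each $i \in \mathcal{I}_P(K_j)$ contributes at least $1/(36t^2)$ to $\tr(\Pi_{V_j}) = \dim V_j \leq |K_j|$, so $|\mathcal{I}_P(K_j)| \leq 36 t^2 |K_j|$, giving $|K_{j-1}| \leq (1 + 36t^2) |K_j|$. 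Iterating $t$ times yields $|S| \leq (1 + 36t^2)^t \cdot 9 t^2 d$, which can be bounded by $(7t)^{2t+1} d$ (possibly after tightening constants in the intermediate steps).

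\textbf{Approximation bound (part a).} The key intermediate claim is: for every $t$-sparse $w$ with $\supp(w) \subseteq N_0 := [n] \setminus S$, $\|\Pi_{V_0}^\perp Pw\|_2 \geq \|w\|_2/3$, where $\Pi_{V_0}$ is the orthogonal projection onto $V_0 = \vspan\{Pe_i : i \in K_0\}$. Given this, for arbitrary $t$-sparse $v$, write $v = w + v_S$ with $w = v_{N_0}$; since $Pv_S \in V_0$, we have $\Pi_{V_0}^\perp P v = \Pi_{V_0}^\perp P w$, so $\|Pv\|_2 \geq \|\Pi_{V_0}^\perp Pw\|_2 \geq \|w\|_2/3$. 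Combined with the bound $\|Pv\|_2 \leq \lambda_{d+1}^{-1/2} \|v\|_\Sigma$ (which follows from $\Sigma \succeq \lambda_{d+1} P$), this yields $\|v_{N_0}\|_2 \leq 3 \lambda_{d+1}^{-1/2} \|v\|_\Sigma$.

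To prove the key claim, I would use that $V_0 \subseteq \text{range}(P)$, so $\Pi_{V_0} P = \Pi_{V_0}$, hence $\|\Pi_{V_0}^\perp P w\|_2^2 = \|Pw\|_2^2 - \|\Pi_{V_0} w\|_2^2$. The lower bound $\|Pw\|_2^2 \geq (1 - 1/(9t)) \|w\|_2^2$ is immediate: since $\supp(w) \subseteq N_0 \subseteq [n] \setminus K_t$ every diagonal entry $Q_{ii}$ on $\supp(w)$ is at most $1/(9t^2)$, so $\tr(Q_{T,T}) \leq 1/(9t)$ and hence $w^\t Q w \leq \|w\|_2^2/(9t)$. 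The remaining task is the upper bound $\|\Pi_{V_0} w\|_2^2 \leq (8/9 - 1/(9t)) \|w\|_2^2$.

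\textbf{Main obstacle.} Bounding $\|\Pi_{V_0} w\|_2^2$ is the heart of the argument. The peeling invariant directly gives $(\Pi_{V_j})_{ii} < 1/(36 t^2)$ for every $i \in N_0$ and every $j \in \{1,\ldots,t\}$, but because $V_0 \supseteq V_1$ strictly in general, this does not immediately control the diagonals of $\Pi_{V_0}$. My plan is to telescope orthogonally as $V_0 = V_t \oplus W_{t \to t-1} \oplus \cdots \oplus W_{1 \to 0}$, where $W_{j \to j-1} = V_{j-1} \cap V_j^\perp$ is spanned by $\{\Pi_{V_j}^\perp P e_{i'} : i' \in K_{j-1} \setminus K_j\}$. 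For each inner shell $j \geq 2$ one has $W_{j\to j-1} \subseteq V_{j-1}$, so the invariant at level $j-1$ applies and the contribution to $\|\Pi_{V_0} w\|_2^2$ is controlled by $(\Pi_{V_{j-1}})_{ii}$ on $\supp(w)$. The truly delicate step is the outermost shell $W_{1 \to 0}$, where the invariant fails and the only available information is that each basis vector $\Pi_{V_1}^\perp P e_{i'}$ is the residual of a vector $Pe_{i'}$ with substantial component $\|\Pi_{V_1} e_{i'}\|_2 \geq 1/(6t)$ already in $V_1$. I expect this ``dual'' constraint, combined with the smallness of $(\Pi_{V_1})_{ii}$ for $i \in N_0$, to imply that $\Pi_{W_{1\to 0}} w$ is forced into a limited residual direction whose overlap with $w$ (supported on $N_0$) is small; summing the shell contributions and applying Cauchy--Schwarz should complete the proof.
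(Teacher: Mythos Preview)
Your size bound (part b) is correct and matches the paper's argument exactly.

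For part (a), however, there is a genuine gap, and your route diverges from the paper's in a way that makes the problem harder rather than easier. Your reduction to the stronger claim $\|\Pi_{V_0}^\perp Pw\|_2 \geq \|w\|_2/3$ for $t$-sparse $w$ supported on $N_0$ is sound \emph{as a reduction}, but that claim does not obviously follow from the peeling invariants. The invariants only bound $(\Pi_{V_j})_{ii}$ for $j\ge 1$; nothing in the algorithm controls $(\Pi_{V_0})_{ii}$ on $N_0$, so the outermost shell $W_{1\to 0}$ could in principle absorb a large fraction of $w$. Your ``dual constraint'' idea (that the generators $\Pi_{V_1}^\perp Pe_{i'}$ have $\|\Pi_{V_1} e_{i'}\|_2 \ge 1/(6t)$) does not by itself bound $\|\Pi_{W_{1\to 0}} w\|_2$: these generators need not be orthogonal, and there can be up to $36t^2|K_1|$ of them, so coordinate-wise inner-product bounds do not sum usefully. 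Moreover, even if the claim is true, it is strictly stronger than what is needed: writing $\Pi_{V_0} w = Pu$ for some $u$ supported on $K_0$, the vector $w-u$ would witness membership in $\mathcal W_{P,K_0}$, but $w-u$ is not $t$-sparse, so the paper's conclusion for $B_0(t)$ does not imply your claim.

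The paper avoids all of this by never splitting $v = w + v_S$ and never bounding $\|\Pi_{V_0} w\|_2$. Instead it proves a \emph{support-decrementing} invariant directly on the full vector $v$: for every $t$-sparse $v\in\mathcal W_{P,K_j}$ (i.e.\ $\|v_{K_j^c}\|_2 > 3\|Pv\|_2$), at least one element of $\supp(v)\setminus K_j$ lands in $\mathcal I_P(K_j)$. The key computation is that $v_{K_j^c}$ has correlation at least $1/(5\sqrt t)$ with $\sum_{i\in K_j} v_i Pe_i$; this uses that $\|Pv\|_2$ is small relative to $\|v_{K_j^c}\|_2$ and that $Pe_i \approx e_i$ for $i\notin K_j$, so $Pv_{K_j}$ and $Pv_{K_j^c}$ nearly cancel, forcing $Pv_{K_j}$ to point back at $v_{K_j^c}$. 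After $t$ iterations every surviving bad $v\in B_0(t)$ would have $\supp(v)\subseteq K_0$, hence $v_{K_0^c}=0$, contradicting $v\in\mathcal W_{P,K_0}$. This yields $\|v_{K_0^c}\|_2 \le 3\|Pv\|_2$ for all $t$-sparse $v$ directly, with no need to isolate the $N_0$-supported part.
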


%In particular, when $t$ is constant and $\Sigma$ has only a few eigenvalues significantly smaller than $\lambda_n$, then even though the standard basis may be an arbitrarily poor dictionary, it's possible to adaptively augment the standard basis with a constant number of features, to make a $(t, \Omega(t^{-1/2}))$-dictionary for $\Sigma$.

Once this set $S$ has been found, the dictionary is simply the standard basis $\{e_1,\dots,e_n\}$, together with a $\Sigma$-orthonormal basis for every set of $t$ covariates in $S$. By guarantee (a), we can prove that every $t$-sparse vector correlates with some element of this dictionary under the $\Sigma$-inner product. By guarantee (b), the dictionary is much smaller than the brute-force dictionary that contains a basis for all $\binom{n}{t}$ sets of $t$ covariates. Together, this gives an algorithmic proof for Theorem~\ref{theorem:small-eig-intro}.

%While achieving the exact rate of convergence guaranteed by Theorem~\ref{theorem:main-alg-intro} requires some additional ideas, including a recursive boosting step, the heart of the algorithm is simple: first, we algorithmically identify a small superset of the coordinates that participate in sparse approximate dependencies. Second, we perform an $\ell_1$-regularized regression \emph{without regularizing} these ``bad'' coordinates\footnote{In terms of defining new features, this is morally equivalent to adding new features corresponding to an orthogonal basis of the unregularized coordinates. Not regularizing coordinates has essentially the same effect and is more convenient to implement.} 

\paragraph{Intuition for \texttt{IterativePeeling()}.} 
We compute the set $S$ via a new iterative method which leverages knowledge of the small eigenspaces of $\Sigma$. See Algorithm~\ref{alg:intro} for the pseudocode. To compute $S$, the algorithm \iterpeel{} first computes the orthogonal projection matrix $P$ that projects onto the subspace spanned by the top $n-d$ eigenvectors of $\Sigma$. Starting with the set of coordinates that correlate with $\ker(P)$, the procedure then iteratively grows $S$ in such a way that at each step, a new participant of each approximate sparse dependency is discovered, but $S$ does not become too much larger.

The intuition is as follows: as a preliminary attempt, we could identify all $O(d)$ coordinates that correlate (with respect to the standard inner product) with the lowest $d$ eigenspaces of $\Sigma$. If e.g. the covariates have a sparse dependency 
\[ X_1 + X_2 = 0, \]
then $\ker \Sigma$ contains the vector $e_1 + e_2$, so the coordinates $\{e_1,e_2\}$ will be correctly discovered. Unfortunately, if $\Sigma$ contains a more complex sparse dependency such as 
\[ \epsilon^{-1}(X_1 - X_2) - X_3 - X_4 = 0 \]
where $\epsilon>0$ is very small, then this heuristic will discover $\{e_1,e_2\}$ but miss $\{e_3,e_4\}$. For this example, the solution is to notice that $e_3$ and $e_4$ \emph{do} correlate with the subspace spanned by $\ker(\Sigma) \cup \{e_1,e_2\}$ (which contains $e_3 + e_4$). In general, if $S$ is the set of coordinates discovered thus far, then by finding basis vectors that correlate with an appropriate subspace (of dimension at most $|S|$), we can efficiently augment $S$ with at least one new coordinate from each $t$-sparse approximate dependency, without making $S$ bigger by more than a factor of $O(t)$. Iterating this augmentation $t$ times therefore provably identifies all problematic coordinates.

To formalize this intuition, the following lemma will be needed to bound how much $S$ grows at each iteration; it shows that the number of coordinates that correlate with a low-dimensional subspace is not too large (proof deferred to Appendix~\ref{section:iterative-peeling}):

\begin{lemma}\label{lemma:heavy-coordinate-bound}
Let $V \subseteq \RR^n$ be a subspace with $d := \dim V$. For some $\alpha > 0$ define \[S = \left\{ i \in [n]: \sup_{x \in V \setminus \{0\}} \frac{x_i}{\norm{x}_2} \geq \alpha\right\}.\]
Then $|S| \leq d/\alpha^2$. Moreover, given a set of vectors that span $V$, we can compute $S$ in time $\poly(n)$.
\end{lemma}

We also define the set of vectors $v$ that have unusually large norm outside a set $S$, compared to $\sqrt{v^\t P v}$, which is the distance from $v$ to the subspace spanned by the bottom $d$ eigenvectors of $\Sigma$:
%and vector $v$:

\begin{definition}
For any matrix $P \in \RR^{n\times n}$ and subset $S \subseteq [n]$, define $\mathcal{W}_{P,S} := \{v \in \RR^n: \norm{v_{S^c}}_2 > 3\sqrt{v^\t P v}\}.$
\end{definition}

We then formalize the guarantee of each iteration of \iterpeel{} as follows:

\iffalse
\begin{lemma}\label{lem:S-small}
Let $n,t \in \NN$. Let $P\in \RR^{n\times n}$ be an orthogonal projection matrix with $d := \dim \ker(P)$.
Define $S \subseteq [n]$ by \[S := \bigcup_{v \in B_0(t)} \mathcal{G}_P(v)\]
where $B_0(t) \subseteq \RR^n$ is the $\ell_0$ ball of $t$-sparse real vectors. %, and define $S \subseteq [n]$ by \[S := \bigcup_{v \in \mathcal{X}} \supp(v).\]
Then given $P$ and $t$, there is a polynomial-time algorithm that computes a superset of $S$ of size at most $(7t)^{2t+1}d$.
\end{lemma}
This lemma is proved by iterating the following one.
\fi

\begin{lemma}\label{lem:S-small-helper}
Let $n,t \in \NN$ and let $P: n \times n$ be an orthogonal projection matrix. Suppose $\tau \ge 1 $ and $K \subseteq [n]$ satisfy
\begin{enumerate}[label=(\alph*)]
    \item $P_{ii} \geq 1 - 1/(9t^2)$ for all $i \not \in K$,
    \item $|\supp(v) \setminus K| \leq \tau$ for every $v \in B_0(t) \cap \mathcal{W}_{P,K}$.
\end{enumerate} 
Then there exists a set $\mathcal{I}_P(K)$ with $|\mathcal{I}_P(K)| \le 36 t^2 |K|$ such that
\[ |\supp(v) \setminus (\mathcal{I}_P(K) \cup K)| \leq \tau - 1 \] %\max(0, \tau-1) \]
for all $v \in B_0(t) \cap \mathcal{W}_{P, K}$. Moreover, given $P$, $K$, and $t$, we can compute $\mathcal{I}_P(K)$ in time $\poly(n)$.
\end{lemma}

\begin{proofsketch}
We define the set \[\mathcal{I}_P(K) := \left\{a \in [n] \setminus K : \sup_{x \in \vspan\{P e_i : i \in K\} \setminus \{0\}} \frac{|x_a|}{\|x\|_2} \ge 1/(6t) \right\}.\]

It is clear from Lemma~\ref{lemma:heavy-coordinate-bound} (applied with parameters $V := \vspan\{Pe_i: i \in K\}$ and $\alpha := 1/(6t)$) that $|\mathcal{I}_P(K)| \leq 36t^2|K|$, and that $\mathcal{I}_P(K)$ can be computed in time $\poly(n)$. It remains to show that $|\mathcal{G}_P(v) \setminus (\mathcal{I}_P(K)\cup K)| \leq \tau-1$ for all $v \in B_0(t)$.

Consider any $v \in B_0(t) \cap \mathcal{W}_{P,K}$. Then $\norm{v_{K^c}}_2 > 3\norm{Pv}_2$. It's sufficient to show that $\mathcal{I}_P(K)$ contains some $j \in \supp(v) \setminus K$, i.e. that there is some $j \in \supp(v)\setminus K$ such that $e_j$ correlates with $\vspan\{P_i: i \in K\}$. We accomplish this by showing that $v_{K^c}$ correlates with $Pv_K = \sum_{i \in K} v_iP_i$.

At a high level, the reason for this is that $v_{K^c}$ is close to $Pv_{K^c}$ (since $P_i \approx e_i$ for $i \in K^c$), and $Pv = Pv_K + Pv_{K^c}$ is much smaller than $Pv_{K^c} \approx v_{K^c}$, so $Pv_K$ and $Pv_{K^c}$ must be highly correlated.
%has a large coefficient on $j$. At a high level, the reason is that $\sum_{i \in [n]} v_i P_i$ has a much smaller coefficient on $j$ than $v_jP_j$ (since $v_j$ is large and $P_j \approx e_j$), but also $\sum_{i \in [n]\setminus K \setminus \{j\}} v_i P_i$ has a much smaller coefficient on $j$ than $v_jP_j$ (by maximality of $|v_j|$ and since $P_i \approx e_i$ for all $i \not \in K$). 
See Appendix~\ref{section:iterative-peeling} for the full proof.
\end{proofsketch}

We can now complete the proof of Lemma~\ref{lemma:iterpeel-guarantees-intro} by repeatedly invoking Lemma~\ref{lem:S-small-helper}.

\begin{namedproof}{Lemma~\ref{lemma:iterpeel-guarantees-intro}}
%We complete the proof by reverse induction on $\tau$. % not an induction really...
Let $\Sigma = \sum_{i=1}^n \lambda_i u_i u_i^\t$ be the eigendecomposition of $\Sigma$, and let $P := \sum_{i=d+1}^n u_i u_i^\t$ be the projection onto the top $n-d$ eigenspaces of $\Sigma$. Set 
$K_t = \{i \in [n]: P_{ii} < 1-1/(9t^2)\}.$
Because $\tr(P) = n - d$ and $P_{ii} \leq 1$ for all $i \in [n]$, it must be that $|K_t| \leq 9t^2 d$. Also, for any $v \in B_0(t) \cap \mathcal{W}_{P, K_t}$ we have trivially by $t$-sparsity that $|\supp(v)\setminus K_t| \leq t.$
%for any $v \in \mathcal{X}$. 

Define $K_{t - 1}$ to be $K_t \cup \mathcal{I}_P(K_t)$ where $\mathcal{I}_P(K_t)$ is as defined in Lemma~\ref{lem:S-small-helper}; we have the guarantees that $|K_{t-1}| \leq (1+36t^2) |K_t|$ and $|\mathcal{G}_P(v) \setminus K_t| \leq t-1$ for all $v \in B_0(t) \cap \mathcal{W}_{P, K_t}$. Since $K_{t-1} \supseteq K_t$, it holds that $\mathcal{W}_{P, K_{t-1}} \subseteq \mathcal{W}_{P, K_t}$, and thus $|\mathcal{G}_P(v) \setminus K_t| \leq t-1$ for all $v \in B_0(t) \cap \mathcal{W}_{P, K_{t-1}}$. Moreover, since $K_{t-1} \supseteq K_t$, it obviously holds that $P_{ii} \geq 1-1/(9t^2)$ for all $i \not \in K_{t-1}$. This means we can apply Lemma~\ref{lem:S-small-helper} with $\tau := t-1$ and $K := K_{t - 1}$ and so iteratively define sets $K_{t - 2} \subseteq \dots \subseteq K_1 \subseteq K_0 \subseteq [n]$ in the same way. In the end, we obtain the set $K_0 \subseteq [n]$ with $|K_0| \leq 9t^2 d (1+36t^2)^t$ and $\supp(v) \subseteq K_0$ for all $v \in B_0(t) \cap \mathcal{W}_{P, K_0}$. The latter guarantee means that in fact $B_0(t) \cap \mathcal{W}_{P,K_0} = \emptyset$. So for any $t$-sparse $v \in \RR^n$ it holds that \[\norm{v_{K_0^c}}_2 \leq 3\sqrt{v^\t P v} \leq 3\lambda_{d+1}^{-1/2} \sqrt{v^\t \Sigma v}\]
where the last inequality holds since $\lambda_{d+1}P \preceq \Sigma$.
\end{namedproof}

\subsection{Beyond weak learning}

So far, we have sketched a proof that if $\Sigma$ has few outlier eigenvalues, then there is an efficient algorithm to compute a good dictionary (as in Theorem~\ref{theorem:small-eig-intro}). This gives an efficient $\alpha$-weak learning algorithm (via Proposition~\ref{prop:dict-to-weak-learning}). However, our ultimate goal is to find a regressor $\hat{v}$ with prediction error going to $0$ as the number of samples increases. Definition~\ref{def:covering} is not strong enough to ensure this.\footnote{Moreover, standard notions of boosting weak learners (e.g. in distribution-free classification) do not apply in this setting.} However, it turns out that the dictionary constructed in Theorem~\ref{theorem:small-eig-intro} in fact satisfies a stronger guarantee\footnote{See Lemma~\ref{lemma:l1rep-to-covering} for a proof that the $\ell_1$-representation property implies the $(t,\alpha)$-dictionary property.} that \emph{is} sufficient to achieve vanishing prediction error:

\begin{definition}\label{def:l1-rep}
Let $\Sigma\in \RR^{n\times n}$ be a positive semi-definite matrix and let $t,B>0$. A set $\{D_1,\dots,D_N\} \subseteq \RR^n$ is a $(t,B)$-$\ell_1$-representation for $\Sigma$ if for any $t$-sparse $v \in \RR^n$ there is some $\alpha \in \RR^N$ with $v = \sum_{i=1}^N \alpha_i D_i$ and $\sum_{i=1}^N |\alpha_i| \cdot \norm{D_i}_\Sigma \leq B \cdot \norm{v}_\Sigma.$
\end{definition}

With this definition in hand, we can actually prove the following strengthening of Theorem~\ref{theorem:small-eig-intro}:

\begin{lemma}\label{lemma:small-ker-dictionary-intro}
Let $n,t,d \in \NN$. Let $\Sigma\in \RR^{n\times n}$ be a positive semi-definite matrix with eigenvalues $\lambda_1 \leq \dots \leq \lambda_n$. Then $\Sigma$ has a $(t, 7\sqrt{t}\sqrt{\lambda_n/\lambda_{d+1}})$-$\ell_1$-representation $\mathcal{D}$ of size at most $n + t(7t)^{2t^2+t}d^t$. Moreover, $\mathcal{D}$ can be computed in time $t^{O(t^2)} d^t \poly(n)$.
\end{lemma}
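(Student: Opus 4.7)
The plan is to build on Lemma~\ref{lemma:iterpeel-guarantees-intro} by using the set $S$ output by \iterpeel{} to choose the non-trivial part of the dictionary, and then verify the $\ell_1$-representation property by splitting any $t$-sparse vector into its restriction to $S$ and to $S^c$.

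Concretely, I would define the dictionary as $\mathcal{D} := \{e_1, \dots, e_n\} \cup \bigcup_{T} B_T$, where $T$ ranges over all $t$-element subsets of $S$ and $B_T$ is a $\Sigma$-orthonormal basis of $\vspan\{e_i : i \in T\}$. Since $|S| \le (7t)^{2t+1} d$, the number of subsets is at most $|S|^t$, each contributing at most $t$ vectors, so $|\mathcal{D}| \leq n + t \cdot ((7t)^{2t+1} d)^t = n + t(7t)^{2t^2+t} d^t$. Each $B_T$ can be computed in $\poly(n)$ time via Gram--Schmidt in the $\Sigma$-inner product, and \iterpeel{} is polynomial-time, so the total runtime is $t^{O(t^2)} d^t \poly(n)$ as required.

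To check the $\ell_1$-representation property, fix any $t$-sparse $v \in \RR^n$ and decompose $v = v_S + v_{S^c}$. The piece $v_{S^c}$ has at most $t$ nonzero coordinates lying in $S^c$; writing $v_{S^c} = \sum_{i \in \supp(v_{S^c})} v_i e_i$ and using $\norm{e_i}_\Sigma \le \sqrt{\lambda_n}$ together with Cauchy--Schwarz gives an $\ell_1$ cost bounded by $\sqrt{t}\sqrt{\lambda_n}\norm{v_{S^c}}_2$, and Lemma~\ref{lemma:iterpeel-guarantees-intro}(a) yields $\norm{v_{S^c}}_2 \le 3\lambda_{d+1}^{-1/2}\norm{v}_\Sigma$, so this contribution is at most $3\sqrt{t}\sqrt{\lambda_n/\lambda_{d+1}}\norm{v}_\Sigma$. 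For $v_S$, let $T = \supp(v) \cap S$, which has $|T| \le t$, so $B_T \subseteq \mathcal{D}$. Expanding $v_S = \sum_j \alpha_j q_j$ in the $\Sigma$-orthonormal basis $B_T = \{q_j\}$, we have $\sum_j \alpha_j^2 = \norm{v_S}_\Sigma^2$ and $\norm{q_j}_\Sigma = 1$, so the $\ell_1$ cost is at most $\sqrt{t}\,\norm{v_S}_\Sigma$.

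Finally, to bound $\norm{v_S}_\Sigma$ I would use the triangle inequality $\norm{v_S}_\Sigma \le \norm{v}_\Sigma + \norm{v_{S^c}}_\Sigma$ together with $\norm{v_{S^c}}_\Sigma \le \sqrt{\lambda_n}\,\norm{v_{S^c}}_2 \le 3\sqrt{\lambda_n/\lambda_{d+1}}\,\norm{v}_\Sigma$, which gives $\norm{v_S}_\Sigma \le 4\sqrt{\lambda_n/\lambda_{d+1}}\,\norm{v}_\Sigma$ (using $\lambda_n \ge \lambda_{d+1}$). Combining the two pieces, the total $\ell_1$-representation cost is at most $3\sqrt{t}\sqrt{\lambda_n/\lambda_{d+1}}\,\norm{v}_\Sigma + 4\sqrt{t}\sqrt{\lambda_n/\lambda_{d+1}}\,\norm{v}_\Sigma = 7\sqrt{t}\sqrt{\lambda_n/\lambda_{d+1}}\,\norm{v}_\Sigma$, as claimed. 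The only subtle point — and the reason Definition~\ref{def:l1-rep} upgrades cleanly — is in controlling $\norm{v_S}_\Sigma$ by $\norm{v}_\Sigma$ rather than by an arbitrary quantity depending on the individual coefficients of $v_S$; the triangle inequality suffices here precisely because Lemma~\ref{lemma:iterpeel-guarantees-intro}(a) gives a $\Sigma$-norm bound on $v_{S^c}$ (via the $\ell_2$ bound and the top eigenvalue $\lambda_n$). I do not anticipate any serious obstacle beyond bookkeeping the constants to land on the factor $7$.
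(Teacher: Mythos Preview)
Your proposal is correct and follows essentially the same approach as the paper: construct $\mathcal{D}$ as the standard basis together with a $\Sigma$-orthogonal basis for each $t$-subset of $S$, split $v = v_S + v_{S^c}$, bound the $v_{S^c}$ contribution via Lemma~\ref{lemma:iterpeel-guarantees-intro}(a) and $\norm{e_i}_\Sigma \le \sqrt{\lambda_n}$, and bound the $v_S$ contribution via $\sqrt{t}\,\norm{v_S}_\Sigma$ together with the triangle inequality $\norm{v_S}_\Sigma \le \norm{v}_\Sigma + \sqrt{\lambda_n}\,\norm{v_{S^c}}_2$. The only cosmetic differences are that the paper writes ``$\Sigma$-orthogonal'' rather than ``$\Sigma$-orthonormal'' (which sidesteps the degenerate case where $\Sigma$ is singular on $\vspan\{e_i:i\in T\}$) and that you should pad $T=\supp(v)\cap S$ to an arbitrary $t$-element subset of $S$ when $|T|<t$.
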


\begin{proofsketch}
Let $S$ be the output of \iterpeel{$\Sigma,d,t$}. The dictionary $\CD$ consists of the standard basis, together with a $\Sigma$-orthogonal basis for each set of $t$ coordinates from $S$. The bound on $|\mathcal{D}|$ comes from the guarantee $|S| \leq (7t)^{2t+1}d$. For any $t$-sparse vector $v \in \RR^n$, we know that $v_{S^c}$ is efficiently represented by the standard basis (because Theorem~\ref{theorem:iterpeel-guarantees} guarantees that $\norm{v_{S^c}}_2 \leq O(\lambda_{d+1}^{-1/2}\norm{v}_\Sigma)$), and $v_S$ is efficiently represented by one of the $\Sigma$-orthonormal bases. See Appendix~\ref{section:iterative-peeling} for the full proof.
\end{proofsketch}

Why is the above guarantee useful? If each $D_i$ is normalized to unit $\Sigma$-norm, then the condition of $(t,B)$-$\ell_1$-representability is equivalent to $\norm{\alpha}_1 \leq B \cdot \norm{v}_\Sigma$. That is, with respect to the new set of features, the regressor $\alpha$ has bounded $\ell_1$ norm. Thus, if we apply the Lasso with a set of features that is a $(t,B)$-$\ell_1$-representation for $\Sigma$, then standard ``slow rate'' guarantees hold (proof in Section~\ref{section:prelim}):

\begin{proposition}\label{prop:l1rep-to-alg-intro}
Let $n,m,N,t \in \NN$ and $B>0$. Let $\Sigma\in \RR^{n\times n}$ be a positive semi-definite matrix and let $\CD$ be a $(t,B)$-$\ell_1$-representation of size $N$ for $\Sigma$, normalized so that $\norm{v}_\Sigma = 1$ for all $v \in \CD$. Fix a $t$-sparse vector $v^* \in \RR^n$, let $X_1,\dots,X_m \sim N(0,\Sigma)$ be independent and let $y_i = \langle X_i,v^*\rangle + \xi_i$ where $\xi_i \sim N(0,\sigma^2)$. For any $R>0$, define \[\hat{w} \in \argmin_{w \in \RR^N: \norm{w}_1 \leq BR} \norm{\BX Dw - y}_2^2\]
where $D\in \RR^{n\times N}$ is the matrix with columns comprising the elements of $\CD$, and $\BX\in \RR^{m\times n}$ is the matrix with rows $X_1,\dots,X_m$. So long as $m = \Omega(\log(n/\delta))$ and $\norm{w^*}_\Sigma \in [R/2, R]$, it holds with probability at least $1-\delta$ that \[ \norm{D\hat{w} - w^*}_\Sigma^2 = O\left(B\norm{w^*}_\Sigma\sigma\sqrt{\frac{\log(2n/\delta)}{m}} + \frac{\sigma^2\log(4/\delta)}{m} + \frac{B^2 \norm{w^*}_\Sigma^2\log(n)}{m}\right).\]
\end{proposition}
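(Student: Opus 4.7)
The plan is to execute the standard ``slow rate'' analysis for constrained Lasso, using the $\ell_1$-representation property to supply a feasible coefficient vector representing $v^*$, and then leveraging that every dictionary element has unit $\Sigma$-norm to control the noise cross-term and to transfer from the empirical to the population $\Sigma$-norm.

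\textbf{Feasibility and the basic inequality.} First I would apply Definition~\ref{def:l1-rep} to $v^*$ to produce $w^* \in \RR^N$ with $Dw^* = v^*$ and (using unit $\Sigma$-normalization) $\norm{w^*}_1 \leq B\norm{v^*}_\Sigma \leq BR$, so $w^*$ is feasible for the optimization. Combining optimality of $\hat w$ with $y = \BX v^* + \xi$ and $\BX D w^* = \BX v^*$ yields the familiar basic inequality
\[ \norm{\BX(D\hat w - v^*)}_2^2 \leq 2\langle \BX(D\hat w - v^*), \xi\rangle. \]

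\textbf{Noise cross-term.} Writing $D\hat w - v^* = D(\hat w - w^*)$ and applying H\"older,
\[ \langle \BX D(\hat w - w^*), \xi\rangle \;\leq\; \norm{\hat w - w^*}_1 \cdot \norm{D^\t \BX^\t \xi}_\infty \;\leq\; 2BR \cdot \norm{D^\t \BX^\t \xi}_\infty. \]
Conditionally on $\BX$, the $j$-th coordinate of $D^\t \BX^\t \xi$ is centered Gaussian with variance $\sigma^2 \norm{\BX D_j}_2^2$, and because $\norm{D_j}_\Sigma = 1$ the quantity $\norm{\BX D_j}_2^2$ has mean $m$ and concentrates sharply. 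A union bound over the $N = \poly(n)$ dictionary elements gives $\norm{D^\t \BX^\t \xi}_\infty \leq O(\sigma\sqrt{m\log(n/\delta)})$ with probability $1-\delta/3$; multiplying by $2BR \leq 4B\norm{v^*}_\Sigma$ produces the first summand. The $\sigma^2 \log(1/\delta)/m$ term comes from the centering step that isolates $\norm{\BX \Delta}_2^2$ from the $\norm{\xi}_2^2$ terms, using $\chi^2$-concentration of $\norm{\xi}_2^2/m$.

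\textbf{Empirical-to-population transfer.} What the previous steps control is $\tfrac{1}{m}\norm{\BX\Delta}_2^2$ for $\Delta := D\hat w - v^*$, whereas the target is $\norm{\Delta}_\Sigma^2$. I would establish, uniformly over $g = D\alpha$ with $\norm{\alpha}_1 \leq 2BR$, a one-sided lower isometry of the form
\[ \tfrac{1}{m}\norm{\BX g}_2^2 \;\geq\; \tfrac{1}{2}\norm{g}_\Sigma^2 - O\!\left(\tfrac{B^2 R^2 \log n}{m}\right). \]
Crucially, each $\langle X, D_j\rangle$ is a unit-variance Gaussian because of the $\Sigma$-normalization, so the ill-conditioning of $\Sigma$ does not enter this step at all; a standard Rademacher or Dudley argument for Gaussian matrices over the $\ell_1$-ball then produces the $B^2 R^2 \log(n)/m$ slack. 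Combining with Step~2, using $R \leq 2\norm{v^*}_\Sigma$, and solving the resulting quadratic in $\norm{\Delta}_\Sigma$ yields the stated three-term rate.

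\textbf{Main obstacle.} The hard part is the uniform lower isometry in Step~3: since $\hat w$ depends on the data, pointwise concentration does not suffice, and one must control an empirical process indexed by the $\ell_1$-ball $\{D\alpha : \norm{\alpha}_1 \leq 2BR\}$. This is where the $\log n$ factor in the final term enters, and it is only as mild as $\log n$ because feature adaptation has already reduced the problem to a dictionary of unit-$\Sigma$-norm features; without this normalization, no analogous bound could avoid a condition-number penalty. Once this concentration is in hand, the remainder is bookkeeping.
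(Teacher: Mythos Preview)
Your proposal is correct and follows the same route as the paper. The paper's own proof is a two-line reduction: observe that $\Gamma := D^\t \Sigma D$ has unit diagonal (by the normalization $\norm{D_j}_\Sigma = 1$), then invoke a black-box slow-rate Lasso guarantee (Theorem~14 of \cite{kelner2019learning}) on the transformed samples $(D^\t X_i, y_i)$. Your Steps~1--4 are precisely an unfolding of that black box, and your identification of the ``main obstacle'' --- the uniform lower isometry over the $\ell_1$-ball, made tractable only because $\Gamma_{jj}=1$ --- is exactly the point the paper is exploiting when it notes $\max_i \Gamma_{ii}=1$ before citing the result.

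One minor remark: your explanation for the middle term $\sigma^2\log(4/\delta)/m$ (``from the centering step that isolates $\norm{\BX\Delta}_2^2$ from the $\norm{\xi}_2^2$ terms'') is not quite right --- in your basic inequality the $\norm{\xi}_2^2$ terms cancel exactly. In fact your Steps~2--4 as written yield only the first and third terms, which already suffice (and are, if anything, a slightly sharper bound). The middle term is an artifact of the particular black-box formulation the paper cites and is dominated by the other two under the hypothesis $\norm{v^*}_\Sigma \in [R/2,R]$, so its absence from your derivation is not a gap.
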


%Theorem~\ref{theorem:main-alg-intro} shows that whenever $\Sigma$ has few small eigenvalues, then this method peels few coordinates and the subsequent estimation step succeeds with few samples. As stated, it requires knowledge of $\Sigma$, but it can be applied whenever a good estimate of $\Sigma$ is available (e.g. whenever there is a large amount of unlabelled data, even if labels are unavailable). 
%, but it can easily be heuristically applied in any sparse regression setting where $\Sigma$ is known \--- i.e. whenever there is a large amount of unlabelled data, even if little data is labelled. 

Combining Proposition~\ref{prop:l1rep-to-alg-intro} with Lemma~\ref{lemma:small-ker-dictionary-intro} shows that there is an algorithm with time complexity $t^{O(t^2)}d^t \poly(n)$ and sample complexity $O(\poly(t)(\lambda_n/\lambda_{d+1})\log(n)\log(d))$ for finding a regressor with squared prediction error $o(\sigma^2 + \norm{v^*}_\Sigma^2)$. This is a simplified version of Theorem~\ref{theorem:brute-alg-intro}. The full proof involves additional technical details (e.g. more careful analysis to take care of large eigenvalues, and to avoid needing an estimate $R$ for $\norm{w^*}_\Sigma$) but the above exposition contains the central ideas. Theorem~\ref{theorem:main-alg-intro} similarly computes the set $S$ from Lemma~\ref{lemma:iterpeel-guarantees-intro} but uses it to construct a different dictionary: the standard basis, plus a $\Sigma$-orthonormal basis for $S$.\footnote{More precisely, the algorithm just skips regularizing $S$, which is morally equivalent. As it is simpler to implement, that is shown in Algorithm~\ref{alg:intro}, and analyzed for the proofs.} See Appendix~\ref{section:outlier-algorithm} for the full proofs and pseudocode.

\section{Additional Results}\label{section:feature-adaptation}

We now return to Question~\ref{question:cn} and ask whether there are other families of ill-conditioned $\Sigma$ for which we can prove non-trivial bounds on $\CN_{t,\alpha}(\Sigma)$. 

First, we ask what can be shown for \emph{arbitrary} covariance matrices. We prove that \emph{every} covariance matrix $\Sigma$ satisfies a non-trivial bound $\CN_{t,1/O(t^{3/2}\log n)}(\Sigma) \leq O(n^{t-1/2})$. In fact, building on tools from computational geometry, we show the stronger result that $\Sigma$ has a $(t, O(t^{3/2}\log n))$-$\ell_1$-representation that of size $O(n^{t-1/2})$, that is computable from samples in time $\tilde{O}(n^{t - \Omega(1/t)})$ for any constant $t > 1$ (Theorem~\ref{theorem:arb-sigma-alg}). %By Proposition~\ref{lemma:l1rep-to-covering}, this improves upon the trivial bound for $\CN_{t,1/O(t^{3/2}\log n)}(\Sigma)$. 
As a corollary, we provide the first sparse linear regression algorithm with time complexity  that is a polynomial-factor better than brute force, and with near-optimal sample complexity, for any constant $t$ and arbitrary $\Sigma$ (proof in Section~\ref{section:extremal-bound}):

\begin{theorem}\label{theorem:arbitrary-alg-intro}
Let $n,m,t,B \in \NN$ and $\sigma>0$, and let $\Sigma\in \RR^{n\times n}$ be a positive-definite matrix. Let $w^* \in \RR^n$ be $t$-sparse, and suppose $\norm{w^*}_\Sigma \in [B/2, B]$. Suppose $m \geq \Omega(t\log n)$. Let $(X_i,y_i)_{i=1}^m$ be independent samples where $X_i \sim N(0,\Sigma)$ and $y_i = \langle X_i,w^* \rangle + N(0,\sigma^2)$. Then there is an $O(m^2n^{t-1/2} + n^{t-\Omega(1/t)} \log^{O(t)} n)$-time algorithm that, given $(X_i,y_i)_{i=1}^m$, $B$, and $\sigma^2$, produces an estimate $\hat{w} \in \RR^n$ satisfying, with probability $1-o(1)$,
\[ \norm{\hat{w} - w^*}_\Sigma^2 \leq \tilde{O}\left(\frac{\sigma^2}{\sqrt{m}} + \frac{\sigma \norm{w^*}_\Sigma t^{3/2}}{\sqrt{m}} + \frac{\norm{w^*}_\Sigma^2 t^3}{m}\right).\]
\end{theorem}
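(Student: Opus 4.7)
The plan is to prove Theorem~\ref{theorem:arbitrary-alg-intro} as an essentially black-box composition of two results already established: Theorem~\ref{theorem:arb-sigma-alg}, which from the samples alone produces a $(t, B_{\mathcal{D}})$-$\ell_1$-representation $\mathcal{D}$ of $\Sigma$ of size $N = O(n^{t-1/2})$ with $B_{\mathcal{D}} = O(t^{3/2}\log n)$, in time $\tilde{O}(n^{t-\Omega(1/t)})$; and Proposition~\ref{prop:l1rep-to-alg-intro}, which converts any such representation into a ball-constrained Lasso enjoying a slow-rate excess-risk bound, provided one already has a constant-factor estimate of $\norm{w^*}_\Sigma$.

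Concretely, given $(X_i,y_i)_{i=1}^m$ together with $B$ and $\sigma^2$, the algorithm first calls Theorem~\ref{theorem:arb-sigma-alg} to compute $\mathcal{D} = \{D_1, \dots, D_N\}$, then rescales each $D_i$ so that $\norm{D_i}_\Sigma = 1$; this preserves the $(t, B_{\mathcal{D}})$-$\ell_1$-representation property, since both sides of its defining inequality scale in the same way. With $D \in \RR^{n \times N}$ the feature matrix and $\BX \in \RR^{m \times n}$ the design matrix, solve
\[
\hat{w}_{\mathrm{lasso}} \in \argmin_{w \in \RR^N:\, \norm{w}_1 \leq B_{\mathcal{D}} \cdot B} \norm{\BX D w - y}_2^2,
\]
and return $\hat{w} := D\hat{w}_{\mathrm{lasso}}$. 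The radius choice $R := B$ in Proposition~\ref{prop:l1rep-to-alg-intro} is justified directly by the hypothesis $\norm{w^*}_\Sigma \in [B/2, B]$, so no grid search over $R$ is required.

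To conclude, I would substitute $B_{\mathcal{D}} = \tilde{O}(t^{3/2})$ and $\norm{w^*}_\Sigma \leq B$ into the bound of Proposition~\ref{prop:l1rep-to-alg-intro}: the cross term becomes $\tilde{O}(\sigma\norm{w^*}_\Sigma t^{3/2}/\sqrt{m})$, the pure-noise term is $\tilde{O}(\sigma^2/m)$ (absorbed into the looser $\tilde{O}(\sigma^2/\sqrt{m})$ in the stated theorem), and the quadratic term becomes $\tilde{O}(t^3\norm{w^*}_\Sigma^2/m)$, exactly matching the claim. The hypothesis $m \geq \Omega(t\log n)$ subsumes the $m = \Omega(\log(n/\delta))$ needed by the proposition at, say, $\delta = o(1)$, which suffices for the $1-o(1)$ probability in the theorem. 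On the runtime side, dictionary construction dominates at $n^{t-\Omega(1/t)}\log^{O(t)} n$, while solving the Lasso over $N = O(n^{t-1/2})$ variables with $m$ samples using standard interior-point methods costs $O(m^2 N) = O(m^2 n^{t-1/2})$, and forming $\BX D$ is no slower. The only obstacle at this step is bookkeeping; the real difficulty --- the existential and algorithmic construction of a subpolynomial-size $\ell_1$-representation for arbitrary $\Sigma$ --- has been pushed into Theorem~\ref{theorem:arb-sigma-alg} and its underlying computational geometry.
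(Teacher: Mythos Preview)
Your high-level decomposition --- build the dictionary via Theorem~\ref{theorem:arb-sigma-alg}, then run a ball-constrained Lasso over the new features --- matches the paper's. But there is a genuine gap at the normalization step: you rescale each $D_i$ so that $\norm{D_i}_\Sigma = 1$, yet $\Sigma$ is \emph{not} an input to the algorithm in Theorem~\ref{theorem:arbitrary-alg-intro} (only $(X_i,y_i)$, $B$, and $\sigma^2$ are). So you cannot compute $\norm{D_i}_\Sigma$, and Proposition~\ref{prop:l1rep-to-alg-intro} as stated cannot be invoked directly. The paper's fix is to normalize instead by the empirical norm $\norm{D_i}_{\hat{\Sigma}}$ computed from a held-out half of the samples, and then use the crucial fact (guaranteed by Theorem~\ref{theorem:arb-sigma-alg}) that every $D_i$ is $t$-sparse: with $m \geq \Omega(t\log n)$ samples, concentration of $t$-sparse quadratic forms gives $\tfrac{1}{2}\norm{x}_\Sigma \leq \norm{x}_{\hat{\Sigma}} \leq 2\norm{x}_\Sigma$ uniformly over $t$-sparse $x$, so the empirically normalized features have bounded true variance and the slow-rate Lasso analysis goes through.

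A secondary point: your runtime of $O(m^2 N)$ for the Lasso is not delivered by ``standard interior-point methods,'' which typically scale polynomially in $N$ rather than linearly. The paper obtains $\tilde{O}(N m^2)$ by running $m$ iterations of mirror descent (Theorem~\ref{theorem:linear-time-lasso}), each costing $O(Nm)$; this is what makes the Lasso-solving cost $O(m^2 n^{t-1/2})$ rather than something like $n^{O(t)}$.
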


Second, one goal is to improve ``sample complexity'' (i.e. obtain $\alpha$ without dependence on condition number) without paying too much in ``time complexity'' (i.e. retain bounds on $\CN_{t,\alpha}$ that are better than $n^t$). To this end, we prove that the dependence on $\kappa$ in the correlation level (see Fact~\ref{fact:kappa-dictionary}) can actually be replaced by dependence on $\kappa$ in the dictionary size (proof in Appendix~\ref{section:fpt}):
\begin{theorem}\label{theorem:fpt-kappa}
Let $n,t \in \NN$. Let $\Sigma\in \RR^{n\times n}$ be a positive-definite matrix with condition number $\kappa$. Then $\CN_{t, 1/3^{t+1}}(\Sigma) \leq 2^{O(t^2)}\kappa^{2t+1} \cdot n.$
\end{theorem}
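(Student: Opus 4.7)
The plan is an inductive construction on the sparsity $t$ that boosts the standard-basis guarantee of Fact~\ref{fact:kappa-dictionary} (which provides correlation only $\Omega(1/(\sqrt{t}\kappa))$) up to the $\kappa$-free correlation $1/3^{t+1}$, at the cost of a $\kappa^{O(1)}$ multiplicative blow-up in dictionary size per iteration. For the base case $t=1$, the normalized standard basis $\{e_i/\norm{e_i}_\Sigma\}_{i\in[n]}$ is trivially a $(1,1)$-dictionary of size $n$, far exceeding the required correlation of $1/9$.

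For the inductive step, suppose we have a $(t-1, 1/3^t)$-dictionary $\mathcal{D}_{t-1}$ of size $N_{t-1} := 2^{O((t-1)^2)}\kappa^{2t-1}n$. Given any $t$-sparse unit-$\Sigma$-norm vector $v$, the pigeonhole argument underlying Fact~\ref{fact:kappa-dictionary} produces a \emph{principal coordinate} $i \in \supp(v)$ with $|\langle v, e_i\rangle_\Sigma|/\norm{e_i}_\Sigma \geq \norm{v}_\Sigma/(\sqrt{t}\kappa)$. Writing $v = v_i e_i + v^{\mathrm{tail}}$, where $v^{\mathrm{tail}}$ is $(t-1)$-sparse, the inductive hypothesis supplies some $D' \in \mathcal{D}_{t-1}$ correlating with $v^{\mathrm{tail}}$ at level $1/3^t$. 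I would then add to $\mathcal{D}_t$ composite vectors of the form $e_i/\norm{e_i}_\Sigma + c\, D'$, where $(i, c, D')$ ranges over a carefully restricted set. A suitable choice of sign and magnitude of $c$ causes the two contributions to $\langle v, e_i/\norm{e_i}_\Sigma + c D'\rangle_\Sigma$ to reinforce rather than cancel, giving a correlation that loses only a constant factor below the better of the two constituent correlations --- so level $1/3^{t+1}$ after taking care of cross-terms and normalizations.

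The crucial accounting step is controlling the size of $\mathcal{D}_t$: pairing every $(i, D') \in [n]\times \mathcal{D}_{t-1}$ naively produces $n|\mathcal{D}_{t-1}|$ composite elements, which after $t$ iterations blows up by $n^t$ and misses the target $\kappa^{2t+1}n$. I would address this by arguing that for each coordinate $i$, only those $D' \in \mathcal{D}_{t-1}$ whose effective support lies within a $\Sigma$-neighborhood of $i$ of size $\kappa^{O(1)}$ can possibly serve as the tail of a $t$-sparse unit-$\Sigma$-norm vector with principal coordinate $i$; anything farther away would violate the unit-$\Sigma$-norm condition via the $\lambda_{\max}/\lambda_{\min} \leq \kappa$ bound. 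Combined with a multiplicative net of $\kappa^{O(1)}$ magnitudes for the scalar $c$, this keeps the per-step blow-up to $\kappa^{O(1)} 2^{O(t)}$ and yields $|\mathcal{D}_t| \leq 2^{O(t^2)}\kappa^{2t+1}n$ after $t$ iterations.

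The hard part will be making the $\Sigma$-neighborhood argument rigorous: pinning down precisely which $(t-1)$-sparse tails can be paired with a given principal coordinate $i$, bounding their count by $\kappa^{O(1)}$ using only the condition-number assumption, and verifying that the composite element with the net-chosen scalar $c$ genuinely realizes the claimed correlation (the two contributions can partially cancel, and only a careful net on the right multiplicative scale avoids this). This is the step that generates the exponent $2t+1$ on $\kappa$ after unwinding the induction.
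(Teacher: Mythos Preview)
Your inductive-construction approach has a genuine gap at the ``$\Sigma$-neighborhood'' step, and this is exactly the place where the argument carries all the weight. Consider the simplest case $\Sigma = I$, so $\kappa = 1$. Given a $t$-sparse unit vector $v$ with principal coordinate $i$, the tail $v^{\mathrm{tail}}$ is an arbitrary $(t-1)$-sparse vector supported on $[n]\setminus\{i\}$; nothing about the unit-norm or principal-coordinate condition constrains its support to lie near $i$ in any sense. Consequently the element $D' \in \mathcal{D}_{t-1}$ that correlates with $v^{\mathrm{tail}}$ can be \emph{any} element of $\mathcal{D}_{t-1}$, and you genuinely need to pair every $i$ with every $D'$. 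Your recursion then multiplies by $n$ at each step and produces a dictionary of size $n^{\Omega(t)}$, no better than brute force. The condition-number bound gives no leverage here, since $\kappa = 1$ already.

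The paper sidesteps this obstruction entirely by bounding the packing number $\CP_{t,1/3^{t+1}}(\Sigma)$ rather than constructing a dictionary. The key is a purely combinatorial, $\Sigma$-free statement (Lemma~\ref{lemma:binary-to-real}, built on Theorem~\ref{theorem:l2-approx}): among any $N \gtrsim 2^{O(t^2)}\kappa^{2t} n$ unit-$\ell_2$-norm $t$-sparse vectors, some $v_i$ is $\delta$-approximated in $\ell_2$ by a combination $\sum_{j\ne i} x_j v_j$ with $\|x\|_1 \le 3^t$ and $\delta \lesssim 1/\sqrt{\kappa}$. This is then fed into a dual-certificate argument (Lemma~\ref{lemma:packing-dual-bound}): if $v_1,\dots,v_N$ were a $(t,1/3^{t+1})$-packing under $\Sigma$, the vector $\Sigma v_i$ would witness that any representation $v_i \approx \sum_j x_j v_j$ must have $\|x\|_1 \ge 3^{t+1}$ up to an error controlled by $\kappa$, contradicting the combinatorial lemma. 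The induction on $t$ happens inside the combinatorial lemma (on binary vectors, via an averaging/peeling argument), not in the dictionary construction; this is what lets the $\kappa$ dependence enter only at the very end rather than compounding per level.
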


In particular, for any constant $t=1/\epsilon$, our result shows that there is a nearly-linear size dictionary with \emph{constant} correlations even for covariance matrices with \emph{polynomially-large} condition number $\kappa \leq n^{\epsilon/100}$. While we are not currently aware of an efficient algorithm for computing the dictionary, the above bound nonetheless raises the interesting possibility that there may be a sample-efficient and computationally-efficient weak learning algorithm under a super-constant bound on $\kappa$.

\section{Related work}\label{section:related-work}

\paragraph{Dealing with correlated covariates.} There is considerable work on improving the performance of Lasso in situations where some clusters of covariates are highly correlated \cite{zou2005regularization, huang2011variable, buhlmann2013correlated, wauthier2013comparative, jia2015preconditioning, figueiredo2016ordered, li2018graph}. These methods can work well for two-sparse dependencies, but generally do not work as well for higher-order dependencies --- hence they cannot be used to prove our main result. 
%However, these methods do not generalize beyond the case of two-sparse dependencies. 
The approach of \cite{buhlmann2013correlated} is perhaps the closest in spirit to ours. They
%For example, one approach most relevant to ours 
perform agglomerative clustering of correlated covariates, 
orthonormalize the clusters with respect to $\Sigma$, and apply Lasso (or solve an essentially equivalent group Lasso problem). % \cite{buhlmann2013correlated}. 
This method fails, for example, when there is a single three-sparse dependency, and the remaining covariates have some mild correlations. Depending on the correlation threshold, their method will either aggressively merge all covariates into a single cluster, or fail to merge the dependent covariates. %But due to parameter identifiability issues when e.g. two covariates may be nearly identical, these methods must make additional structural assumptions. Common methods perform implicit or explicit clustering of highly-correlated covariates \cite{zou2005regularization, buhlmann2013correlated} which relies on an assumption that there are no sparse dependencies \emph{between} clusters.

\paragraph{Feature adaptation and preconditioning.} Generalizations of Lasso via a preliminary change-of-basis (or explicitly altering the regularization term) have been studied in the past, but largely not to solve sparse linear regression per se; instead the goal has been using $\ell_1$ regularization to encourage other structural properties such as piecewise continuity (e.g. in the ``fused lasso'', see \cite{tibshirani2005sparsity, tibshirani2011solution, hutter2016optimal, dalalyan2017prediction} for some more examples). An exception is recent work showing that a ``sparse preconditioning'' step can enable Lasso to be statistically efficient for sparse linear regression when the covariates have a certain Markovian structure \cite{kelner2021power}. Our notion of feature adaptation via dictionaries generalizes sparse preconditioning, which corresponds to choosing a non-standard basis in which $\Sigma$ becomes well-conditioned and the sparsity of the signal is preserved. 

\paragraph{Statistical query (SQ) model; sparse halfspaces.} From the complexity standpoint, $\CN_{t,\alpha}(\Sigma)$ is a covering number and therefore closely corresponds to a packing number $\CP_{t,\alpha}(\Sigma)$ (see Section~\ref{section:covering-defs} for the definition). This packing number is essentially the \emph{(correlational) statistical dimension}, which governs the complexity of sparse linear regression with covariates from $N(0,\Sigma)$ in the (correlational) SQ model (see e.g. \cite{goel2020superpolynomial} for exposition of this model). Whereas strong $n^{\Omega(t)}$ SQ lower bounds are known for related problems such as sparse parities with noise \cite{mossel2003learning}, no non-trivial (i.e. super-linear) lower bounds are known for sparse linear regression. Relatedly, in a COLT open problem, Feldman asked whether any non-trivial bounds can be shown for the complexity of weak learning sparse halfspaces in the SQ model \cite{feldman2014open}. Our results also yield improved bounds for weakly SQ-learning sparse halfspaces over certain families of multivariate Gaussian distributions.

%Relatedly, in a COLT open problem, Feldman asked whether any non-trivial bounds can be shown for the complexity of weak learning sparse halfspaces in the SQ model \cite{feldman2014open}. Feldman focused on learning over distributions on $\{0,1\}^n$. However, when the covariate distribution $\CD$ is a multivariate Gaussian $N(0,\Sigma)$, then learning sparse halfspaces is closely connected with sparse linear regression: the statistical dimension $\text{SQ-DIM}(\CH_t, \CD, \tau)$ of $t$-sparse halfspaces with tolerance $\tau$ is at most $\CP_{t,2\tau}(\Sigma)$, which can be upper bounded by $(2\tau)^{-1}\CN_{t,2\sqrt{\tau}}(\Sigma)$ (Lemma~\ref{lemma:covering-packing}). Thus, our results also yield improved bounds for weakly SQ-learning sparse halfspaces over certain families of multivariate Gaussian distributions.

\paragraph{Improving brute-force for arbitrary $\Sigma$.} Several prior works have suggested improvements on brute-force search for variants of $t$-sparse linear regression \cite{har2018approximate, gupte2020fine, price2022hardness, cardinal2022algorithms}. However, all of these have limitations preventing their application to the general setting we address in Theorem~\ref{theorem:arbitrary-alg-intro}. Specifically, \cite{har2018approximate} requires $\Omega(n^t)$ preprocessing time on the covariates; \cite{gupte2020fine, price2022hardness} require noiseless responses; and \cite{cardinal2022algorithms} has time complexity scaling with $\log^m n$ (since our random-design setting necessitates $m \geq \Omega(t\log n)$, their algorithm has time complexity much larger than $n^t$).

\bibliographystyle{plain}
\bibliography{bib}

\begin{thebibliography}{10}

\bibitem{bickel2009simultaneous}
Peter~J Bickel, Ya’acov Ritov, Alexandre~B Tsybakov, et~al.
\newblock Simultaneous analysis of lasso and dantzig selector.
\newblock {\em The Annals of statistics}, 37(4):1705--1732, 2009.

\bibitem{buhlmann2013correlated}
Peter B{\"u}hlmann, Philipp R{\"u}timann, Sara Van De~Geer, and Cun-Hui Zhang.
\newblock Correlated variables in regression: clustering and sparse estimation.
\newblock {\em Journal of Statistical Planning and Inference},
  143(11):1835--1858, 2013.

\bibitem{candes2007dantzig}
Emmanuel Candes, Terence Tao, et~al.
\newblock The dantzig selector: Statistical estimation when p is much larger
  than n.
\newblock {\em Annals of statistics}, 35(6):2313--2351, 2007.

\bibitem{candes2006stable}
Emmanuel~J Candes, Justin~K Romberg, and Terence Tao.
\newblock Stable signal recovery from incomplete and inaccurate measurements.
\newblock {\em Communications on Pure and Applied Mathematics: A Journal Issued
  by the Courant Institute of Mathematical Sciences}, 59(8):1207--1223, 2006.

\bibitem{candes2005decoding}
Emmanuel~J Candes and Terence Tao.
\newblock Decoding by linear programming.
\newblock {\em IEEE transactions on information theory}, 51(12):4203--4215,
  2005.

\bibitem{cardinal2022algorithms}
Jean Cardinal and Aur{\'e}lien Ooms.
\newblock Algorithms for approximate sparse regression and nearest induced
  hulls.
\newblock {\em Journal of Computational Geometry}, 13(1):377--398, 2022.

\bibitem{chen1994basis}
Shaobing Chen and David Donoho.
\newblock Basis pursuit.
\newblock In {\em Proceedings of 1994 28th Asilomar Conference on Signals,
  Systems and Computers}, volume~1, pages 41--44. IEEE, 1994.

\bibitem{dalalyan2017prediction}
Arnak~S Dalalyan, Mohamed Hebiri, Johannes Lederer, et~al.
\newblock On the prediction performance of the lasso.
\newblock {\em Bernoulli}, 23(1):552--581, 2017.

\bibitem{das2011submodular}
Abhimanyu Das and David Kempe.
\newblock Submodular meets spectral: Greedy algorithms for subset selection,
  sparse approximation and dictionary selection.
\newblock {\em arXiv preprint arXiv:1102.3975}, 2011.

\bibitem{donoho1989uncertainty}
David~L Donoho and Philip~B Stark.
\newblock Uncertainty principles and signal recovery.
\newblock {\em SIAM Journal on Applied Mathematics}, 49(3):906--931, 1989.

\bibitem{feldman2014open}
Vitaly Feldman.
\newblock Open problem: The statistical query complexity of learning sparse
  halfspaces.
\newblock In {\em Conference on Learning Theory}, pages 1283--1289. PMLR, 2014.

\bibitem{figueiredo2016ordered}
Mario Figueiredo and Robert Nowak.
\newblock Ordered weighted l1 regularized regression with strongly correlated
  covariates: Theoretical aspects.
\newblock In {\em Artificial Intelligence and Statistics}, pages 930--938.
  PMLR, 2016.

\bibitem{foster1994risk}
Dean~P Foster and Edward~I George.
\newblock The risk inflation criterion for multiple regression.
\newblock {\em The Annals of Statistics}, pages 1947--1975, 1994.

\bibitem{goel2020superpolynomial}
Surbhi Goel, Aravind Gollakota, Zhihan Jin, Sushrut Karmalkar, and Adam
  Klivans.
\newblock Superpolynomial lower bounds for learning one-layer neural networks
  using gradient descent.
\newblock In {\em International Conference on Machine Learning}, pages
  3587--3596. PMLR, 2020.

\bibitem{gupte2021fine}
Aparna Gupte and Vinod Vaikuntanathan.
\newblock The fine-grained hardness of sparse linear regression.
\newblock {\em arXiv preprint arXiv:2106.03131}, 2021.

\bibitem{gupte2020fine}
Aparna~Ajit Gupte and Kerri Lu.
\newblock Fine-grained complexity of sparse linear regression.
\newblock 2020.

\bibitem{gurobi}
{Gurobi Optimization, LLC}.
\newblock {Gurobi Optimizer Reference Manual}, 2023.

\bibitem{har2018approximate}
Sariel Har-Peled, Piotr Indyk, and Sepideh Mahabadi.
\newblock Approximate sparse linear regression.
\newblock In {\em 45th International Colloquium on Automata, Languages, and
  Programming (ICALP 2018)}. Schloss Dagstuhl-Leibniz-Zentrum fuer Informatik,
  2018.

\bibitem{huang2011variable}
Jim~C Huang and Nebojsa Jojic.
\newblock Variable selection through correlation sifting.
\newblock In {\em RECOMB}, volume 6577, pages 106--123. Springer, 2011.

\bibitem{hutter2016optimal}
Jan-Christian H{\"u}tter and Philippe Rigollet.
\newblock Optimal rates for total variation denoising.
\newblock In {\em Conference on Learning Theory}, pages 1115--1146. PMLR, 2016.

\bibitem{jia2015preconditioning}
Jinzhu Jia, Karl Rohe, et~al.
\newblock Preconditioning the lasso for sign consistency.
\newblock {\em Electronic Journal of Statistics}, 9(1):1150--1172, 2015.

\bibitem{kelner2019learning}
Jonathan Kelner, Frederic Koehler, Raghu Meka, and Ankur Moitra.
\newblock Learning some popular gaussian graphical models without condition
  number bounds.
\newblock In {\em Proceedings of Neural Information Processing Systems
  (NeurIPS)}, 2020.

\bibitem{kelner2021power}
Jonathan Kelner, Frederic Koehler, Raghu Meka, and Dhruv Rohatgi.
\newblock On the power of preconditioning in sparse linear regression.
\newblock {\em 62nd Annual IEEE Symposium on Foundations of Computer Science},
  2021.

\bibitem{kelner2022distributional}
Jonathan~A Kelner, Frederic Koehler, Raghu Meka, and Dhruv Rohatgi.
\newblock Distributional hardness against preconditioned lasso via
  erasure-robust designs.
\newblock {\em arXiv preprint arXiv:2203.02824}, 2022.

\bibitem{lecue2013learning}
Guillaume Lecu{\'e} and Shahar Mendelson.
\newblock Learning subgaussian classes: Upper and minimax bounds.
\newblock {\em arXiv preprint arXiv:1305.4825}, 2013.

\bibitem{lee2015faster}
Yin~Tat Lee, Aaron Sidford, and Sam Chiu-wai Wong.
\newblock A faster cutting plane method and its implications for combinatorial
  and convex optimization.
\newblock In {\em 2015 IEEE 56th Annual Symposium on Foundations of Computer
  Science}, pages 1049--1065. IEEE, 2015.

\bibitem{li2018graph}
Yuan Li, Benjamin Mark, Garvesh Raskutti, and Rebecca Willett.
\newblock Graph-based regularization for regression problems with
  highly-correlated designs.
\newblock In {\em 2018 IEEE Global Conference on Signal and Information
  Processing (GlobalSIP)}, pages 740--742. IEEE, 2018.

\bibitem{mendelson2003performance}
Shahar Mendelson et~al.
\newblock On the performance of kernel classes.
\newblock 2003.

\bibitem{mossel2003learning}
Elchanan Mossel, Ryan O'Donnell, and Rocco~P Servedio.
\newblock Learning juntas.
\newblock In {\em Proceedings of the thirty-fifth annual ACM symposium on
  Theory of computing}, pages 206--212, 2003.

\bibitem{mulzer2015approximate}
Wolfgang Mulzer, Huy~L Nguy{\^e}n, Paul Seiferth, and Yannik Stein.
\newblock Approximate k-flat nearest neighbor search.
\newblock In {\em Proceedings of the forty-seventh annual ACM symposium on
  Theory of Computing}, pages 783--792, 2015.

\bibitem{price2022hardness}
Eric Price, Sandeep Silwal, and Samson Zhou.
\newblock Hardness and algorithms for robust and sparse optimization.
\newblock In {\em International Conference on Machine Learning}, pages
  17926--17944. PMLR, 2022.

\bibitem{raskutti2010restricted}
Garvesh Raskutti, Martin~J Wainwright, and Bin Yu.
\newblock Restricted eigenvalue properties for correlated gaussian designs.
\newblock {\em The Journal of Machine Learning Research}, 11:2241--2259, 2010.

\bibitem{rigollet2015high}
Phillippe Rigollet and Jan-Christian H{\"u}tter.
\newblock High dimensional statistics.
\newblock {\em Lecture notes for course 18S997}, 813:814, 2015.

\bibitem{srebro2010optimistic}
Nathan Srebro, Karthik Sridharan, and Ambuj Tewari.
\newblock Optimistic rates for learning with a smooth loss.
\newblock {\em arXiv preprint arXiv:1009.3896}, 2010.

\bibitem{tibshirani2005sparsity}
Robert Tibshirani, Michael Saunders, Saharon Rosset, Ji~Zhu, and Keith Knight.
\newblock Sparsity and smoothness via the fused lasso.
\newblock {\em Journal of the Royal Statistical Society: Series B (Statistical
  Methodology)}, 67(1):91--108, 2005.

\bibitem{tibshirani2011solution}
Ryan~J Tibshirani and Jonathan Taylor.
\newblock The solution path of the generalized lasso.
\newblock {\em The annals of statistics}, 39(3):1335--1371, 2011.

\bibitem{valiant2012finding}
Gregory Valiant.
\newblock Finding correlations in subquadratic time, with applications to
  learning parities and juntas.
\newblock In {\em 2012 IEEE 53rd Annual Symposium on Foundations of Computer
  Science}, pages 11--20. IEEE, 2012.

\bibitem{van2018tight}
Sara Van De~Geer.
\newblock On tight bounds for the lasso.
\newblock {\em Journal of Machine Learning Research}, 19:46, 2018.

\bibitem{van2000empirical}
Sara~A Van~de Geer.
\newblock {\em Empirical Processes in M-estimation}, volume~6.
\newblock Cambridge university press, 2000.

\bibitem{van2009conditions}
Sara~A Van De~Geer, Peter B{\"u}hlmann, et~al.
\newblock On the conditions used to prove oracle results for the lasso.
\newblock {\em Electronic Journal of Statistics}, 3:1360--1392, 2009.

\bibitem{vershynin2018high}
Roman Vershynin.
\newblock {\em High-dimensional probability: An introduction with applications
  in data science}, volume~47.
\newblock Cambridge university press, 2018.

\bibitem{wauthier2013comparative}
Fabian~L Wauthier, Nebojsa Jojic, and Michael~I Jordan.
\newblock A comparative framework for preconditioned lasso algorithms.
\newblock {\em Advances in Neural Information Processing Systems},
  26:1061--1069, 2013.

\bibitem{zhang2005learning}
Tong Zhang.
\newblock Learning bounds for kernel regression using effective data
  dimensionality.
\newblock {\em Neural Computation}, 17(9):2077--2098, 2005.

\bibitem{zhang2017optimal}
Yuchen Zhang, Martin~J Wainwright, Michael~I Jordan, et~al.
\newblock Optimal prediction for sparse linear models? lower bounds for
  coordinate-separable m-estimators.
\newblock {\em Electronic Journal of Statistics}, 11(1):752--799, 2017.

\bibitem{zhou2021optimistic}
Lijia Zhou, Frederic Koehler, Danica~J Sutherland, and Nathan Srebro.
\newblock Optimistic rates: A unifying theory for interpolation learning and
  regularization in linear regression.
\newblock {\em arXiv preprint arXiv:2112.04470}, 2021.

\bibitem{zhou2009restricted}
Shuheng Zhou.
\newblock Restricted eigenvalue conditions on subgaussian random matrices.
\newblock {\em arXiv preprint arXiv:0912.4045}, 2009.

\bibitem{zou2005regularization}
Hui Zou and Trevor Hastie.
\newblock Regularization and variable selection via the elastic net.
\newblock {\em Journal of the royal statistical society: series B (statistical
  methodology)}, 67(2):301--320, 2005.

\end{thebibliography}

\newpage
\appendix

\section{Preliminaries}\label{section:prelim}

Throughout, we use the following standard notation. For positive integers $n,m \in \NN$, we write $A: m \times n$ to denote a matrix with $m$ rows, $n$ columns, and real-valued entries. The standard inner product on $\RR^n$ is denoted $\langle u,v\rangle := u^\t v$. For a positive semi-definite matrix $\Sigma:n \times n$ we define the $\Sigma$-inner product on $\RR^n$ by $\langle u,v\rangle_\Sigma := u^\t \Sigma v$ and the $\Sigma$-norm by $\norm{u}_\Sigma = \sqrt{\langle u,u\rangle_\Sigma}$. For $n \in \NN$ (made clear by context) we let $e_1,\dots,e_n \in \RR^n$ be the standard basis vectors $e_i(j) := \mathbbm{1}[j = i]$. For a vector $v \in \RR^n$ and set $S \subseteq [n]$ we write $v_S$ to denote the restriction of $v$ to coordinates in $S$. For symmetric matrices $A,B: n \times n$ we write $A \preceq B$ to denote that $B - A$ is positive semi-definite.

\subsection{Covering, packing, and $\ell_1$-representability}\label{section:covering-defs}

We previously defined the covering number of $t$-sparse vectors with respect to a covariance matrix $\Sigma$. We next define the packing number (i.e. correlational statistical dimension) and $\ell_1$-representability, and discuss the connections between these quantities as well as their algorithmic implications.
\begin{definition}
Let $\Sigma: n \times n$ be a positive semi-definite matrix and let $t,\alpha > 0$. A set $\{v_1,\dots,v_N\} \subseteq \RR^n$ is a $(t,\alpha)$-packing for $\Sigma$ if every $v_i$ is $t$-sparse, and \[|\langle v_i,v_j\rangle_\Sigma| < \alpha \norm{v_i}_\Sigma\norm{v_j}_\Sigma\] for all $i,j \in [N]$ with $i \neq j$. The (correlational) statistical dimension of $t$-sparse vectors with maximum correlation $\alpha$, under the $\Sigma$-inner product, is denoted $\mathcal{P}_{t,\alpha}(\Sigma)$ and defined as the size of the largest $(t,\alpha)$-packing.
\end{definition}

We will make use of the following connections between packing, covering, and $\ell_1$-representability.

\begin{lemma}[Covering $\Leftrightarrow$ packing]\label{lemma:covering-packing}
For any positive semi-definite matrix $\Sigma: n \times n$ and $t,\alpha>0$, it holds that $(\alpha^2/3)\CP_{t,\alpha^2/2}(\Sigma) \leq \CN_{t,\alpha}(\Sigma) \leq \CP_{t,\alpha}(\Sigma)$.
\end{lemma}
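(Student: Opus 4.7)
The plan is to prove the two inequalities separately. The right-hand bound follows from a standard maximality argument, while the left-hand bound is a Gram-matrix / Cauchy-Schwarz style estimate that transfers a packing into a covering with a constant-factor loss.

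For $\CN_{t,\alpha}(\Sigma) \leq \CP_{t,\alpha}(\Sigma)$, I would take a maximum-size $(t,\alpha)$-packing $P = \{v_1, \dots, v_M\}$ and argue $P$ is itself a $(t,\alpha)$-dictionary. If not, then some $t$-sparse $v$ would satisfy $|\langle v, v_i\rangle_\Sigma| < \alpha \norm{v}_\Sigma \norm{v_i}_\Sigma$ for all $i$, so $P \cup \{v\}$ would still be a $(t,\alpha)$-packing, contradicting maximality.

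For $(\alpha^2/3)\,\CP_{t,\alpha^2/2}(\Sigma) \leq \CN_{t,\alpha}(\Sigma)$, let $\CD = \{D_1,\dots,D_N\}$ be a minimum-size $(t,\alpha)$-dictionary and $\{v_1,\dots,v_M\}$ a maximum-size $(t,\alpha^2/2)$-packing. By the dictionary property, each $v_j$ admits some $D_{i(j)}$ with $|\langle v_j, D_{i(j)}\rangle_\Sigma| \geq \alpha\norm{v_j}_\Sigma\norm{D_{i(j)}}_\Sigma$. By pigeonhole, some fixed $D_i$ is assigned to at least $k \geq M/N$ packing elements, so it remains to show $k \leq 3/\alpha^2$. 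After normalizing $w := D_i/\norm{D_i}_\Sigma$ and $u_j := \epsilon_j v_j / \norm{v_j}_\Sigma$ with $\epsilon_j := \sign\langle v_j, D_i\rangle_\Sigma$, one has $\langle u_j,w\rangle_\Sigma \geq \alpha$ and (since the packing bound is on absolute values) still $|\langle u_j,u_l\rangle_\Sigma| < \alpha^2/2$ for $j \neq l$. Then
\[
k^2 \alpha^2 \leq \bigl(\langle \textstyle\sum_j u_j, w\rangle_\Sigma\bigr)^2 \leq \norm{\textstyle\sum_j u_j}_\Sigma^2 \leq k + k(k-1)\alpha^2/2,
\]
where the left inequality is Cauchy--Schwarz against $w$ and the right inequality is expansion of the square. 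Rearranging gives $k \leq 2/\alpha^2 - 1 < 3/\alpha^2$, so $M \leq (3/\alpha^2)N$, as required.

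The only real subtlety is the sign-flipping step: a priori each $v_j$ could be positively or negatively correlated with $D_i$, and Cauchy--Schwarz only produces a useful lower bound on $\norm{\sum_j u_j}_\Sigma$ once all correlations with $w$ have been aligned to the same sign. Because the packing controls $|\langle v_j, v_l\rangle_\Sigma|$ rather than the signed inner product, replacing $v_j$ by $\epsilon_j v_j$ preserves the packing condition and resolves this. Aside from that observation, everything else is an elementary Hilbert-space calculation, and the constant $3$ simply absorbs the lower-order $-1$ for convenience.
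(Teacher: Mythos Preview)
Your proof is correct and follows the same high-level structure as the paper's: the right-hand bound via maximality is identical, and the left-hand bound proceeds by pigeonholing packing elements onto dictionary elements and then bounding how many packing vectors can be simultaneously $\alpha$-correlated with a single direction.

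The one difference is in how that last bound is established. The paper projects each packing element $D_i$ onto the orthogonal complement of the common direction $v$, obtaining residuals $R_i := D_i - \frac{\langle D_i,v\rangle_\Sigma}{\norm{v}_\Sigma^2} v$, shows that $\langle R_i,R_j\rangle_\Sigma \leq -\tfrac{\alpha^2}{2}\norm{D_i}_\Sigma\norm{D_j}_\Sigma$ for $i\neq j$, and then uses nonnegativity of $\norm{\sum_i R_i/\norm{R_i}_\Sigma}_\Sigma^2$ to conclude $|S(v)| \leq 1 + 2/\alpha^2$. Your argument instead sign-aligns the $u_j$ and lower-bounds $\norm{\sum_j u_j}_\Sigma^2$ via Cauchy--Schwarz against $w$, then upper-bounds it by expanding the square; you obtain $k \leq 2/\alpha^2 - 1$. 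Both are standard Hilbert-space tricks and both land comfortably under $3/\alpha^2$; your version is arguably a touch more direct since it avoids introducing the residuals, while the paper's version makes the underlying geometric reason (residuals are pairwise obtuse) more explicit.
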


\begin{proof}
\textbf{First inequality.} Let $\{D_1,\dots,D_N\}$ be any maximum-size $(t,\alpha^2/2)$-packing. Since the $D_i$'s are all $t$-sparse, each must be correlated with some element of a $(t,\alpha)$-dictionary. Thus, it suffices to show that for any $v \in \RR^n$, the set $S(v) := \{i \in [N]: |\langle D_i,v\rangle_\Sigma| \geq \alpha\norm{D_i}_\Sigma\norm{v}_\Sigma\}$ has size $|S(v)| \leq 3/\alpha^2$. Indeed, for any $i,j \in S(v)$ with $i \neq j$, we have by the definition of a packing that
\begin{align*}
\left\langle D_i - \frac{\langle D_i,v\rangle_\Sigma}{\norm{v}_\Sigma^2}v, D_j - \frac{\langle D_j, v\rangle_\Sigma}{\norm{v}_\Sigma^2}v\right\rangle_\Sigma
&= \langle D_i,D_j\rangle - \frac{\langle D_i,v\rangle_\Sigma \langle D_j,v\rangle_\Sigma}{\norm{v}_\Sigma^2} \\
&\leq -\frac{\alpha^2}{2}\norm{D_i}_\Sigma\norm{D_j}_\Sigma.
\end{align*}
For each $i \in S(v)$ define $R_i = D_i - \langle D_i,v\rangle_\Sigma v / \norm{v}_\Sigma^2$. Then \[0 \leq \norm{\sum_{i \in S(v)} \frac{R_i}{\norm{R_i}_\Sigma}}_\Sigma^2 = |S(v)| + \sum_{i,j\in S(v): i\neq j} \frac{\langle R_i,R_j\rangle_\Sigma}{\norm{R_i}_\Sigma\norm{R_j}_\Sigma} \leq |S(v)| - |S(v)| (|S(v)| - 1) \cdot \frac{\alpha^2}{2}\]
where the last inequality uses the bound $\norm{R_i}_\Sigma \leq \norm{D_i}_\Sigma$. Rearranging gives $|S(v)| \leq 1 + (2/\alpha^2)$.

\textbf{Second inequality.} Let $\{D_1,\dots,D_N\}$ be any maximal $(t,\alpha)$-packing. Then for any $t$-sparse $v \in \RR^n$, maximality implies that there must be some $i \in [N]$ with $|\langle D_i,v\rangle_\Sigma| \geq \alpha \norm{D_i}_\Sigma\norm{v}_\Sigma$. So $\{D_1,\dots,D_N\}$ is also a $(t,\alpha)$-dictionary.
\end{proof}

\begin{lemma}[$\ell_1$-representation $\Longrightarrow$ covering]\label{lemma:l1rep-to-covering}
Let $\Sigma: n \times n$ be a positive semi-definite matrix and let $t,B>0$. If $\{D_1,\dots,D_N\}\subseteq \RR^n$ is a $(t,B)$-$\ell_1$-representation for $\Sigma$, then it is also a $(t,1/B)$-dictionary for $\Sigma$.
\end{lemma}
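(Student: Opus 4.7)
The plan is to unpack the $\ell_1$-representation of a $t$-sparse vector $v$ and expand $\langle v, v\rangle_\Sigma$ by linearity, then bound each $\langle D_i, v\rangle_\Sigma$ by the best dictionary correlation to extract a lower bound on that correlation. The argument is essentially a one-line duality calculation: $\ell_1$ representability (bounded $\ell_1$ coefficients) is dual to $\ell_\infty$ correlation (some dictionary element correlates well).

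Concretely, fix a $t$-sparse $v \in \RR^n$. Without loss of generality $\norm{v}_\Sigma > 0$ (otherwise the dictionary property is vacuous). Using the $(t,B)$-$\ell_1$-representation, write $v = \sum_i \alpha_i D_i$ with $\sum_i |\alpha_i|\cdot\norm{D_i}_\Sigma \le B\norm{v}_\Sigma$. Let $\beta := \max_{i} |\langle D_i, v\rangle_\Sigma|/(\norm{D_i}_\Sigma \norm{v}_\Sigma)$ be the best correlation. Then I would compute
\[
\norm{v}_\Sigma^2 \;=\; \langle v, v\rangle_\Sigma \;=\; \sum_i \alpha_i \langle D_i, v\rangle_\Sigma \;\le\; \sum_i |\alpha_i|\cdot \beta \norm{D_i}_\Sigma \norm{v}_\Sigma \;\le\; \beta \cdot B \norm{v}_\Sigma \cdot \norm{v}_\Sigma,
\]
which rearranges to $\beta \ge 1/B$. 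By definition of $\beta$, some $D_i$ satisfies $|\langle D_i,v\rangle_\Sigma| \ge (1/B)\norm{D_i}_\Sigma\norm{v}_\Sigma$, proving the $(t,1/B)$-dictionary property.

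There is no real obstacle here; the only minor subtlety is the degenerate case $v \in \ker(\Sigma)$, which I handle by noting that both sides of the dictionary inequality are zero in that case. The key conceptual point, worth stating explicitly in the proof, is that $\ell_1$-representability is a strictly stronger condition than the dictionary (covering) property, which is why Lemma~\ref{lemma:small-ker-dictionary-intro} can be used in Proposition~\ref{prop:l1rep-to-alg-intro} to obtain true prediction error guarantees, not just weak learning.
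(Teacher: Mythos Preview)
Your proof is correct and essentially identical to the paper's: both expand $\norm{v}_\Sigma^2 = \sum_i \alpha_i \langle D_i, v\rangle_\Sigma$, pull out the maximum normalized correlation, and use the $\ell_1$ bound $\sum_i |\alpha_i|\norm{D_i}_\Sigma \le B\norm{v}_\Sigma$ to conclude. Your explicit handling of the degenerate case $\norm{v}_\Sigma = 0$ is a small refinement the paper omits.
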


\begin{proof}
Pick any $t$-sparse $v \in \RR^n$. By $\ell_1$-representability, there is some $\alpha \in \RR^N$ with $v = \sum_{i=1}^N \alpha_i D_i$ and $\sum_{i=1}^N |\alpha_i| \cdot \norm{D_i}_\Sigma \leq B\cdot \norm{v}_\Sigma$. Hence
\begin{align*}
\norm{v}_\Sigma^2 
&= \sum_{i=1}^N \alpha_i \langle v,D_i\rangle_\Sigma \\
&\leq \sum_{i=1}^N |\alpha_i| \norm{v}_\Sigma\norm{D_i}_\Sigma \cdot \max_{j \in [N]} \frac{|\langle v,D_j\rangle_\Sigma|}{\norm{v}_\Sigma\norm{D_j}_\Sigma} \\ 
&\leq B\norm{v}_\Sigma^2 \cdot \max_{j \in [N]} \frac{|\langle v,D_j\rangle_\Sigma|}{\norm{v}_\Sigma\norm{D_j}_\Sigma}
\end{align*}
and thus $\max_{j \in [N]} \frac{|\langle v,D_j\rangle_\Sigma|}{\norm{v}_\Sigma\norm{D_j}_\Sigma} \geq 1/B$.
\end{proof}

We can now easily prove that the standard basis is a good dictionary for well-conditioned $\Sigma$. 

\begin{fact}\label{fact:kappa-dictionary}
Let $\Sigma$ be a positive definite matrix with condition number $\frac{\lambda_\text{max}(\Sigma)}{\lambda_\text{min}(\Sigma)} \leq \kappa$. Under the $\Sigma$-inner product, every $t$-sparse vector is at least $1/(\sqrt{\kappa t})$-correlated with some standard basis vector.
\end{fact}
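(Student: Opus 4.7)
The plan is to unpack the claim into a concrete inequality and then extract the correlation via Cauchy–Schwarz. Writing $\langle v, e_i\rangle_\Sigma = (\Sigma v)_i$ and $\|e_i\|_\Sigma = \sqrt{\Sigma_{ii}}$, the claim reduces to showing that for every $t$-sparse $v \in \RR^n$ there exists $i \in [n]$ with
\[ \frac{|(\Sigma v)_i|}{\sqrt{\Sigma_{ii}}} \;\ge\; \frac{\|v\|_\Sigma}{\sqrt{\kappa t}}. \]
I would restrict the search to $i$ in the support $S := \supp(v)$, which has size at most $t$. The identity $\|v\|_\Sigma^2 = v^\t \Sigma v = \sum_{i \in S} v_i (\Sigma v)_i$ then expresses the $\Sigma$-norm as an inner product over a set of size $\le t$, which is the right leverage point.

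Applying Cauchy–Schwarz to this identity gives
\[ \|v\|_\Sigma^2 \;\le\; \|v_S\|_2 \cdot \sqrt{\sum_{i \in S} (\Sigma v)_i^2} \;\le\; \|v\|_2 \cdot \sqrt{t}\,\max_{i \in S}|(\Sigma v)_i|. \]
Using $\|v\|_2^2 \le \|v\|_\Sigma^2/\lambda_{\min}(\Sigma)$ and squaring yields $\max_{i \in S}(\Sigma v)_i^2 \ge \lambda_{\min}(\Sigma)\|v\|_\Sigma^2/t$. Finally, bounding the denominator by $\Sigma_{ii} \le \lambda_{\max}(\Sigma)$ and invoking $\lambda_{\max}/\lambda_{\min} \le \kappa$ gives
\[ \max_{i \in S}\frac{|(\Sigma v)_i|}{\sqrt{\Sigma_{ii}}} \;\ge\; \sqrt{\frac{\lambda_{\min}(\Sigma)}{t\,\lambda_{\max}(\Sigma)}}\;\|v\|_\Sigma \;\ge\; \frac{\|v\|_\Sigma}{\sqrt{\kappa t}}, \]
which is exactly the desired correlation bound.

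There is no real obstacle here — the argument is a short Cauchy–Schwarz coupled with the two trivial eigenvalue inequalities. The only ``design choice'' is restricting the candidate index set to $\supp(v)$ before applying Cauchy–Schwarz; taking the sum over all of $[n]$ would produce a $\sqrt{n}$ in place of $\sqrt{t}$, losing the sparsity advantage.
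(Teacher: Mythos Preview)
Your proof is correct. The paper takes a slightly more modular route: it first shows that the standard basis is a $(t,\sqrt{\kappa t})$-$\ell_1$-representation for $\Sigma$ via the chain $\sum_i |v_i|\,\|e_i\|_\Sigma \le \sqrt{\lambda_{\max}}\,\|v\|_1 \le \sqrt{\lambda_{\max}}\sqrt{t}\,\|v\|_2 \le \sqrt{\kappa t}\,\|v\|_\Sigma$, and then invokes the general implication ``$\ell_1$-representation $\Rightarrow$ dictionary'' (Lemma~\ref{lemma:l1rep-to-covering}). Your argument inlines that implication for the standard basis: the identity $\|v\|_\Sigma^2 = \sum_{i\in S} v_i(\Sigma v)_i$ followed by Cauchy--Schwarz is exactly what the proof of Lemma~\ref{lemma:l1rep-to-covering} does when specialized to $D_i = e_i$ and $\alpha_i = v_i$. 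So the two proofs use the same three ingredients (sparsity, $\|v\|_2 \le \lambda_{\min}^{-1/2}\|v\|_\Sigma$, and $\Sigma_{ii} \le \lambda_{\max}$) in the same way; the paper's version just packages the first step as the stronger $\ell_1$-representation property, which it reuses elsewhere.
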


\begin{proof}
By Lemma~\ref{lemma:l1rep-to-covering}, it suffices to show that the standard basis $\{e_1,\dots,e_n\}$ is a $(t,\sqrt{\kappa t})$-$\ell_1$-representation for $\Sigma$. Indeed, for any $t$-sparse $v \in \RR^n$, 
\begin{align*}
\sum_{i=1}^n |v_i| \cdot \norm{e_i}_\Sigma \leq \sum_{i=1}^n |v_i| \cdot \sqrt{\lambda_\text{max}(\Sigma)} \norm{e_i}_2 
&= \sqrt{\lambda_\text{max}(\Sigma)} \norm{v}_1 \\
&\leq \sqrt{\lambda_\text{max}(\Sigma)}\sqrt{t} \norm{v}_2 \leq \sqrt{\frac{\lambda_\text{max}(\Sigma)}{\lambda_\text{min}(\Sigma)}}\sqrt{t} \norm{v}_\Sigma
\end{align*}
as desired.
\end{proof}

\subsection{Algorithmic implications}\label{section:alg-implications}

An existential proof that $\CN_{t,\alpha}(\Sigma)$ is small unfortunately does not in general give an efficient algorithm for constructing a concise dictionary. However, with the caveat that the dictionary must be given to the algorithm as advice, bounds on $\CN_{t,\alpha}$ do imply weak learning algorithms with sample complexity $O(\alpha^{-2}\log(n))$:

\begin{proposition}\label{prop:dict-to-weak-learning}
Let $\Sigma:n \times n$ be a positive semi-definite matrix and let $\CD$ be a $(t,\alpha)$-dictionary for $\Sigma$, for some $t \in \NN$ and $\alpha \in (0,1)$. For $m \in \NN$ and $t$-sparse $v^* \in \RR^n$, let $X_1,\dots,X_m \sim N(0,\Sigma)$ be independent and let $y_i = \langle X_i, v^*\rangle + \xi_i$ where $\xi_i \sim N(0,\sigma^2)$. Define the estimator \[\hat{v} = \argmin_{\substack{v \in \CD \\ \beta \in \RR}} \norm{\beta\mathbb{X}v - y}_2^2\] where $\mathbb{X}: m \times n$ is the matrix with rows $X_1,\dots,X_m$. For any $\delta>0$, if $m \geq C\alpha^{-2}\log(32|\mathcal{D}|/\delta)$ for a sufficiently large absolute constant $C$, then with probability at least $1-\delta$, \[\norm{\hat{\beta}\hat{w} - w^*}_\Sigma^2 \leq (1-\alpha^2/4)\norm{w^*}_\Sigma^2 + \frac{400\sigma^2 \log(4|\mathcal{D}|/\delta)}{\alpha^2 m}.\]
\end{proposition}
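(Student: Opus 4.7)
The plan is a Pythagoras-plus-ERM argument over the finite class $\CD$. After rescaling each $v \in \CD$ so that $\norm{v}_\Sigma = 1$ (absorbed into $\beta$), I would write $\beta^*_v := \langle v, w^*\rangle_\Sigma$ and $\tau_v := \beta^*_v/\norm{w^*}_\Sigma$. Pythagoras in the $\Sigma$-inner product then gives
\[
  \norm{\beta v - w^*}_\Sigma^2 \;=\; (\beta - \beta^*_v)^2 \;+\; (1-\tau_v^2)\norm{w^*}_\Sigma^2,
\]
and the $(t,\alpha)$-dictionary property applied to $w^*$ yields some $v^\star \in \CD$ with $\tau_{v^\star}^2 \geq \alpha^2$. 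Hence the ``best-in-class'' pair $(\beta^*_{v^\star}, v^\star)$ already achieves excess risk at most $(1-\alpha^2)\norm{w^*}_\Sigma^2$; the task is to show that the ERM $(\hat\beta, \hat v)$ does essentially as well.

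Next, I would do pointwise concentration of 1D OLS. For each fixed $v \in \CD$ set $Z_i := \langle X_i, v\rangle \sim N(0,1)$ and $\eta_i := \langle X_i, w^* - \beta^*_v v\rangle + \xi_i$; since $w^* - \beta^*_v v$ is $\Sigma$-orthogonal to $v$, the $\eta_i$ are i.i.d.\ $N(0,\tilde\sigma_v^2)$ with $\tilde\sigma_v^2 := \sigma^2 + (1-\tau_v^2)\norm{w^*}_\Sigma^2$ and independent of the $Z_i$. The 1D OLS estimator satisfies $\hat\beta_v - \beta^*_v = (\sum_i Z_i \eta_i)/(\sum_i Z_i^2)$. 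Combining $\chi^2_m$-concentration with a Gaussian tail bound (conditional on the $Z_i$'s) and union-bounding over $|\CD|$ gives, with probability $\geq 1-\delta/2$, simultaneously for all $v \in \CD$: $\hat A_v := \tfrac{1}{m}\sum_i Z_i^2 \in [\tfrac{1}{2},\tfrac{3}{2}]$, $(\hat\beta_v - \beta^*_v)^2 \lesssim \tilde\sigma_v^2 \log(|\CD|/\delta)/m$, and $\tfrac{1}{m}|\langle \xi, \BX v\rangle| \lesssim \sigma\sqrt{\log(|\CD|/\delta)/m}$.

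The final step is a basic-inequality argument. Optimality of $(\hat\beta, \hat v)$ and $y = \BX w^* + \xi$ give
\[
  \tfrac{1}{m}\norm{\BX u}_2^2 \;\leq\; \tfrac{1}{m}\norm{\BX u^*}_2^2 \;+\; \tfrac{2}{m}\langle \xi, \BX(u^* - u)\rangle,
\]
with $u := \hat\beta\hat v - w^*$ and $u^* := \beta^*_{v^\star}v^\star - w^*$. Because $u^*$ is a fixed vector with $\norm{u^*}_\Sigma^2 \leq (1-\alpha^2)\norm{w^*}_\Sigma^2$, the first RHS term is at most $(1+O(\epsilon))(1-\alpha^2)\norm{w^*}_\Sigma^2$ for $\epsilon := \sqrt{\log(|\CD|/\delta)/m}$; uniform-in-$v$ concentration of $\hat A_v$ plus fixed concentration of $\tfrac{1}{m}\norm{\BX w^*}_2^2$ and $\tfrac{1}{m}\langle \BX v, \BX w^*\rangle$ gives the matching lower bound $\tfrac{1}{m}\norm{\BX u}_2^2 \geq (1-O(\epsilon))\norm{u}_\Sigma^2$. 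The noise cross-term expands as $\hat\beta\cdot\tfrac{1}{m}\langle\xi,\BX\hat v\rangle - \beta^*_{v^\star}\cdot\tfrac{1}{m}\langle\xi,\BX v^\star\rangle$, and is controlled via the uniform $|\tfrac{1}{m}\langle\xi,\BX v\rangle|$ bound together with a crude $|\hat\beta| \lesssim \norm{w^*}_\Sigma + \sigma$ (obtained from feasibility of $\beta = 0$ in the ERM), giving $\lesssim \sigma\norm{w^*}_\Sigma\epsilon + \sigma^2\epsilon$. AM-GM then splits $\sigma\norm{w^*}_\Sigma\epsilon \leq (\alpha^2/8)\norm{w^*}_\Sigma^2 + O(\sigma^2\epsilon^2/\alpha^2)$, and the hypothesis $m \geq C\log(|\CD|/\delta)/\alpha^2$ (for $C$ large enough) absorbs the remaining $O(\epsilon)\norm{w^*}_\Sigma^2$ perturbations into the $(1-\alpha^2/4) - (1-\alpha^2) = 3\alpha^2/4$ slack, yielding the stated bound.

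The main obstacle is the constant-tracking at the sample-complexity threshold: a naive pass produces an $O(\epsilon)\norm{w^*}_\Sigma^2$ perturbation of the leading $(1-\alpha^2)\norm{w^*}_\Sigma^2$ term which, at $m = \Theta(\log(|\CD|/\delta)/\alpha^2)$, is $\Theta(\alpha)\norm{w^*}_\Sigma^2$ -- of larger order than the $O(\alpha^2)$ slack. To close this gap I would bootstrap: first prove a cruder weak-learning bound $\tau_{\hat v}^2 \geq c_0$ for some absolute $c_0 > 0$ (via a looser version of the same basic inequality), then re-enter the argument using the tighter $|\hat\beta| \lesssim \norm{w^*}_\Sigma$ this gives, which kills the $\sigma^2\epsilon$ contribution and lets the AM-GM splitting above recover the stated $(1-\alpha^2/4)\norm{w^*}_\Sigma^2 + O(\sigma^2\log(|\CD|/\delta)/(\alpha^2 m))$ rate.
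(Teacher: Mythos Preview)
Your overall skeleton (Pythagoras in the $\Sigma$-inner product, basic inequality, uniform concentration over the finite class $\CD$) matches the paper's, and you correctly locate the crux: at the sample threshold $m \asymp \alpha^{-2}\log(|\CD|/\delta)$ the concentration error on the quadratic terms is of order $\epsilon\|w^*\|_\Sigma^2 \asymp \alpha\|w^*\|_\Sigma^2$, larger than the $\alpha^2$ slack between $(1-\alpha^2)$ and $(1-\alpha^2/4)$. However, your proposed bootstrap does not close this gap. First, the claim ``$\tau_{\hat v}^2 \geq c_0$ for an absolute constant $c_0$'' is false in general: the dictionary property only lower-bounds the \emph{best} correlation, and nothing prevents every $v\in\CD$ from satisfying $|\tau_v|\leq \alpha\ll c_0$. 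Second --- and this is the real issue --- even granting $|\hat\beta|\lesssim\|w^*\|_\Sigma$ (which already follows from feasibility of $\beta=0$, up to the $+\sigma$), the problematic term coming from concentration of $\hat A_{\hat v}$ and of $\tfrac1m\langle\BX\hat v,\BX w^*\rangle$ is $O(\epsilon)(\hat\beta^2+|\hat\beta|\|w^*\|_\Sigma)$, which is still $O(\epsilon)\|w^*\|_\Sigma^2$. Your bootstrap removes only the $\sigma^2\epsilon$ piece of the noise cross-term; it does nothing to the $\|w^*\|_\Sigma^2$-scale concentration error you yourself flagged.

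The paper resolves this differently. It reduces to a helper lemma (Lemma~\ref{lemma:weak-learning-error}) whose parameter is not the dictionary lower bound $\alpha$ but the \emph{maximum} absolute correlation $\alpha':=\max_{v\in\CD}|\tau_v|\geq\alpha$. Two ingredients replace your bootstrap. First, the noise is handled uniformly in $\beta$: for each fixed $v$, the direction $\BX(\beta v-w^*)/\|\BX(\beta v-w^*)\|_2$ lies in the $2$-dimensional span of $\{\BX v,\BX w^*\}$, so a union bound over $v\in\CD$ gives $|\langle\xi,\BX(\beta v-w^*)\rangle|\leq \|\BX(\beta v-w^*)\|_2\cdot\sigma\sqrt{O(\log(|\CD|/\delta))}$ with no control on $|\hat\beta|$ needed. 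Second, and this is the decisive step you are missing: since $\alpha'$ is the \emph{max}, one has $|\langle\hat\beta\hat v,w^*\rangle_\Sigma|\leq\alpha'\|\hat\beta\hat v\|_\Sigma\|w^*\|_\Sigma$, hence the lower bound
\[
\|\hat\beta\hat v-w^*\|_\Sigma^2 \;\geq\; \|\hat\beta\hat v\|_\Sigma^2+\|w^*\|_\Sigma^2-2\alpha'\|\hat\beta\hat v\|_\Sigma\|w^*\|_\Sigma.
\]
Comparing this with the upper bound from the basic inequality yields a quadratic in $\|\hat\beta\hat v\|_\Sigma$ whose solution is $\|\hat\beta\hat v\|_\Sigma\leq O(\alpha')\|w^*\|_\Sigma+(\text{noise})$ --- sharper by a factor of $\alpha'$ than anything your bootstrap can give. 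This makes the concentration error $O(\epsilon)\|\hat\beta\hat v\|_\Sigma^2=O(\epsilon(\alpha')^2)\|w^*\|_\Sigma^2$, which now fits comfortably inside the $(\alpha')^2$ slack; applying the lemma with $\alpha'\geq\alpha$ then yields the Proposition as stated.
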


\begin{proof}
Since $\CD$ is a $(t,\alpha)$-dictionary, we know that there is some $\tilde{v} \in \CD$ with $|\langle \tilde{v}, v^*\rangle_\Sigma| \geq \alpha \norm{\tilde{v}}_\Sigma\norm{v^*}_\Sigma.$
We then apply Lemma~\ref{lemma:weak-learning-error}.
\end{proof}

The above guarantee is essentially of the form ``at least 1\% of the signal variance can be explained''. Under the $\ell_1$-representability condition, something much stronger is true:
\begin{proposition}\label{prop:l1rep-to-alg}
Let $n,m,N,t \in \NN$ and $B>0$. Let $\Sigma: n \times n$ be a positive semi-definite matrix and let $\CD$ be a $(t,B)$-$\ell_1$-representation of size $N$ for $\Sigma$, normalized so that $\norm{v}_\Sigma = 1$ for all $v \in \CD$. Fix a $t$-sparse vector $v^* \in \RR^n$, let $X_1,\dots,X_m \sim N(0,\Sigma)$ be independent and let $y_i = \langle X_i,v^*\rangle + \xi_i$ where $\xi_i \sim N(0,\sigma^2)$. For any $R>0$, define \[\hat{w} \in \argmin_{w \in \RR^N: \norm{w}_1 \leq BR} \norm{\BX Dw - y}_2^2\]
where $D: n \times N$ is the matrix with columns comprising the elements of $\CD$, and $\BX: m \times n$ is the matrix with rows $X_1,\dots,X_m$. So long as $m = \Omega(\log(n/\delta))$ and $R \geq \norm{v^*}_\Sigma$, it holds with probability at least $1-\delta$ that \[ \norm{D\hat{w} - w^*}_\Sigma^2 = O\left(BR\sigma\sqrt{\frac{\log(2n/\delta)}{m}} + \frac{\sigma^2\log(4/\delta)}{m} + \frac{B^2 R^2\log(n)}{m}\right).\]
\end{proposition}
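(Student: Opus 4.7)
The plan is a standard constrained-Lasso ``slow-rate'' analysis adapted to the dictionary formulation: the $(t,B)$-$\ell_1$-representation property gives us a feasible oracle point; comparing $\hat w$ to it yields an in-sample error bound; and we then transfer this to the $\Sigma$-norm via concentration of the empirical Gram matrix. First, by Definition~\ref{def:l1-rep}, since $v^*$ is $t$-sparse there exists $w^* \in \RR^N$ with $Dw^* = v^*$ and $\sum_i |w^*_i|\,\norm{D_i}_\Sigma \le B\norm{v^*}_\Sigma \le BR$; using the normalization $\norm{D_i}_\Sigma = 1$, this gives $\norm{w^*}_1 \le BR$, so $w^*$ is feasible. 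Substituting $y = \mathbb{X}Dw^* + \xi$ into the optimality inequality $\norm{\mathbb{X}D\hat w - y}_2^2 \le \norm{\mathbb{X}Dw^* - y}_2^2$ then yields the basic inequality $\norm{\mathbb{X}D\Delta}_2^2 \le 2\langle \xi,\mathbb{X}D\Delta\rangle$, where $\Delta := \hat w - w^*$.

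Next, I would apply H\"older's inequality in $\ell_1/\ell_\infty$ to the noise cross-term:
\[ \langle \xi,\mathbb{X}D\Delta\rangle = \langle D^\t \mathbb{X}^\t \xi, \Delta\rangle \le \norm{D^\t \mathbb{X}^\t \xi}_\infty \cdot \norm{\Delta}_1 \le 2BR\,\norm{D^\t \mathbb{X}^\t \xi}_\infty. \]
To control $\norm{D^\t \mathbb{X}^\t \xi}_\infty$, note that each coordinate $(D^\t \mathbb{X}^\t \xi)_j = \sum_{i=1}^m \langle X_i, D_j\rangle\xi_i$ is, conditionally on $\mathbb{X}$, a zero-mean Gaussian of variance $\sigma^2\norm{\mathbb{X}D_j}_2^2$; by standard Gaussian concentration $\norm{\mathbb{X}D_j}_2^2 \le 2m\norm{D_j}_\Sigma^2 = 2m$ simultaneously over $j$ with high probability (using $m \gtrsim \log(N/\delta)$), and a further union bound over the $N$ coordinates produces
\[ \norm{D^\t \mathbb{X}^\t \xi}_\infty \le O\bigl(\sigma\sqrt{m\log(2N/\delta)}\bigr). \]
Combining these two pieces gives the in-sample bound $\tfrac{1}{m}\norm{\mathbb{X}D\Delta}_2^2 \le O(BR\sigma \sqrt{\log(2N/\delta)/m})$.

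The main obstacle is the final step: transferring this in-sample bound to a population $\Sigma$-norm bound. Decomposing
\[ \norm{D\Delta}_\Sigma^2 = \tfrac{1}{m}\norm{\mathbb{X}D\Delta}_2^2 + \Delta^\t(\Sigma^D - \hat\Sigma^D)\Delta, \]
with $\Sigma^D := D^\t \Sigma D$ and $\hat\Sigma^D := \tfrac{1}{m}D^\t \mathbb{X}^\t \mathbb{X} D$, H\"older again yields $|\Delta^\t(\Sigma^D - \hat\Sigma^D)\Delta| \le 4B^2R^2\norm{\hat\Sigma^D - \Sigma^D}_\infty$. Each entry of $\hat\Sigma^D - \Sigma^D$ is a centered empirical average of products of unit-variance jointly Gaussian variables (sub-exponential), so Bernstein plus a union bound over $N^2$ entries gives $\norm{\hat\Sigma^D - \Sigma^D}_\infty \le O(\sqrt{\log(N/\delta)/m} + \log(N/\delta)/m)$. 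A cleaner route that actually produces the stated fast $B^2R^2\log n/m$ term is to invoke a Mendelson/Raskutti--Wainwright--Yu-type uniform restricted-eigenvalue bound for Gaussian designs \--- of the form $\tfrac{1}{m}\norm{\mathbb{X}u}_2^2 \ge c\,\norm{u}_\Sigma^2 - C\,\norm{\alpha}_1^2 \log N/m$ valid simultaneously for every $u = D\alpha$ \--- which directly converts the in-sample bound into the target population bound. The residual $\sigma^2\log(4/\delta)/m$ contribution comes from carrying a sharper version of the basic inequality (using $\chi^2$-concentration of $\norm{\xi}_2^2$ rather than the naive drop). A final union bound across the concentration events, together with the observation that our dictionary constructions satisfy $N = \poly(n)$ so that $\log N = O(\log n)$, yields the claimed bound.
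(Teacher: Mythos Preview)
Your proposal is correct and follows the same approach as the paper: the paper's proof simply observes that $w^*$ is feasible with $\norm{w^*}_1 \le BR$, that the transformed covariance $\Gamma = D^\t \Sigma D$ has unit diagonal, and then black-boxes the entire slow-rate analysis by citing a standard result (Theorem~14 of \cite{kelner2019learning}) applied to the samples $(D^\t X_i, y_i)$. Your argument is essentially an unrolling of that cited result, and your observation that the natural bound involves $\log N$ rather than $\log n$ is right \--- the proposition as written implicitly relies on $N = \poly(n)$, which holds in all of the paper's applications.
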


\begin{proof}
By $\ell_1$-representability and normalization of $\CD$, there is some $w^* \in \RR^N$ such that $v^* = Dw^*$ and $\norm{w^*}_1 \leq B\norm{v^*}_\Sigma \leq BR$. Let $\Gamma = D^\t \Sigma D$. Also, by normalization, $\max_i \Gamma_{ii} = 1$. Thus, we can apply standard ``slow rate'' Lasso guarantees to the samples $(D^\t X_i, y_i)_{i=1}^m$ to get the claimed bound (see e.g. Theorem 14 of \cite{kelner2019learning}).
\end{proof}

\subsection{Optimizing the Lasso in near-linear time}

\SetKwFunction{mirrordescent}{MirrorDescentLasso}

\begin{theorem}[see e.g. Corollary~4 and Section~5.3 in \cite{srebro2010optimistic}]\label{theorem:mirror-descent-lasso}
Let $n,m,B,H,T \in \NN$ and $\sigma>0$. Fix $X_1,\dots,X_m \in \RR^n$ with $\norm{X_i}_\infty \leq H$ for all $i$, and fix $w^* \in \RR^n$ with $\norm{w^*}_1 \leq B$. For $i \in [m]$ define $y_i = \langle X_i, w^*\rangle + \xi_i$ where $\xi_i \sim N(0,\sigma^2)$ are independent random variables. Given $(X_i,y_i)_{i=1}^m$ as well as $B$, $T$, and $\sigma^2$, there is an algorithm \mirrordescent{$(X_i,y_i)_{i=1}^m$, $B$, $T$, $\sigma^2$}, which optimizes the Lasso objective via $T$ iterations of mirror descent, that produces an estimate $\hat{w} \in \RR^n$ satisfying $\norm{\hat{w}}_1 \leq B$ and, with probability $1-o(1)$,
\[ \frac{1}{m}\norm{X\hat{w} - y}_2^2 \leq \frac{1}{m}\norm{Xw^* - y}_2^2 + \tilde{O}\left(\frac{H^2B^2}{T} + \sqrt{\frac{H^2B^2 \sigma^2}{T}}\right).\]
Moreover, the time complexity of $\mirrordescent{}$ is $\tilde{O}(nmT)$.
\end{theorem}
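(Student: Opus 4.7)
The plan is to invoke the optimistic-rate stochastic mirror descent framework of Srebro, Sridharan, and Tewari, applied to the empirical squared-loss objective $F(w) = \frac{1}{m}\sum_{i=1}^m (\langle X_i, w\rangle - y_i)^2$ restricted to the $\ell_1$-ball of radius $B$. First I would perform the standard reduction that turns $\ell_1$-constrained optimization into simplex-constrained optimization: split each coordinate $w_j = w_j^+ - w_j^-$ with $w_j^\pm \ge 0$, so that the feasible set becomes the scaled simplex $\{z \in \RR^{2n}_{\ge 0} : \sum_j z_j \le B\}$. On this set, $\mirrordescent{}$ will run exponentiated-gradient updates (entropic mirror descent), outputting the Polyak average $\hat{w}$ of the iterates. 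In each iteration, the update uses the gradient $\nabla \ell_i(w) = 2X_i(\langle X_i, w\rangle - y_i)$; since $\|X_i\|_\infty \le H$, this gradient has $\ell_\infty$-norm at most $2H|\langle X_i,w\rangle - y_i|$, which is crucial because the entropic regularizer delivers regret bounds scaled by the dual $\ell_\infty$ norm of the gradients, with only a $\log n$ dependence on dimension.

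Next I would apply the optimistic rate bound for smooth, non-negative losses. The squared loss is non-negative and $\beta$-smooth with respect to the $\ell_1$-norm with $\beta = O(H^2)$ (since each Hessian $2X_i X_i^\t$ has $\ell_\infty \to \ell_1$ operator norm at most $2H^2$). The key self-bounding property \[\EE_i \norm{\nabla \ell_i(w)}_\infty^2 \le O(H^2) \cdot \EE_i \ell_i(w)\] follows directly from the identity $\|\nabla \ell_i(w)\|_\infty^2 \le 4H^2 \ell_i(w)$. Plugging into Corollary~4 of Srebro-Sridharan-Tewari yields the ``best of both worlds'' rate
\[ F(\hat{w}) - F(w^*) \le \tilde{O}\!\left(\frac{H^2 B^2}{T} + \sqrt{\frac{H^2 B^2 \, F(w^*)}{T}}\right),\]
with the $\tilde{O}$ absorbing the $\log n$ factor from the entropic regularizer's strong-convexity modulus and a standard $\log T$ from horizon-adaptive step sizes. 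Since $F(w^*) = \frac{1}{m}\sum_i \xi_i^2 = O(\sigma^2)$ with high probability by $\chi^2$ concentration (valid as long as $m$ is at least logarithmic, which is already implicit), I can replace $F(w^*)$ by $O(\sigma^2)$ to match the statement verbatim.

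The runtime bound $\tilde{O}(nmT)$ follows because each iteration of mirror descent requires a matrix-vector product $Xw$ (which uses the full batch when one prefers deterministic rates, or a minibatch for variance control), costing $O(nm)$, while the entropic update itself is $O(n)$; the $\log$ factors are absorbed in $\tilde{O}$. The main obstacle is really just bookkeeping: verifying that the optimistic-rate bound in Srebro-Sridharan-Tewari applies in the dual $\ell_\infty$ geometry we need (rather than the $\ell_2$ geometry in some of their statements), and confirming that the self-bounding inequality for the squared loss lines up with their smoothness hypothesis. Both are standard, and this is why the statement is cited as a direct consequence of their Corollary~4 together with the translation in their Section~5.3 rather than proved from scratch.
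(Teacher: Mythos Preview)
Your proposal is correct and matches the paper's approach exactly: the paper does not prove this statement at all but simply cites it as a direct consequence of Corollary~4 and Section~5.3 of Srebro--Sridharan--Tewari, and your write-up is a faithful unpacking of precisely that citation (simplex reduction, entropic mirror descent, the self-bounding inequality $\|\nabla\ell_i(w)\|_\infty^2 \le 4H^2\ell_i(w)$ giving the optimistic rate, and $\chi^2$-concentration to replace $F(w^*)$ by $O(\sigma^2)$). There is nothing to add.
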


\begin{theorem}\label{theorem:linear-time-lasso}
Let $n,m,B,H \in \NN$ and $\sigma>0$. Let $\Sigma: n \times n$ be positive semi-definite with $\max_{j \in [n]} \Sigma_{jj} \leq H^2$. Fix $w^* \in \RR^n$ with $\norm{w^*}_1 \leq B$. Let $(X_i,y_i)_{i=1}^m$ be independent draws where $X_i \sim N(0,\Sigma)$ and $y_i = \langle X_i,w^*\rangle + N(0,\sigma^2)$. Then \mirrordescent{$(X_i,y_i)_{i=1}^m$, $B$, m, $\sigma^2$} computes, in time $\tilde{O}(nm^2)$, an estimate $\hat{w}$ satisfying, with probability $1-o(1)$,
\[\norm{\hat{w}  -w^*}_\Sigma^2 \leq \tilde{O}\left(\frac{\sigma^2}{\sqrt{m}} + \frac{\sigma HB}{\sqrt{m}} + \frac{H^2 B^2}{m}\right)\]
\end{theorem}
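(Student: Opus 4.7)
The plan is to invoke Theorem~\ref{theorem:mirror-descent-lasso} with $T = m$ iterations, which immediately gives the runtime $\tilde{O}(nm^2)$, and then convert the resulting \emph{empirical} excess squared-loss bound into a population-level excess risk bound $\|\hat{w} - w^*\|_\Sigma^2$. To apply Theorem~\ref{theorem:mirror-descent-lasso}, I first need to control the per-sample $\ell_\infty$ radius. Since $X_{ij} \sim N(0, \Sigma_{jj})$ with $\Sigma_{jj} \leq H^2$, a union bound over the $nm$ Gaussian tails gives $\max_i \|X_i\|_\infty \leq H\sqrt{C\log(nm)} =: \tilde{H}$ with probability $1 - o(1)$, so the hypothesis of Theorem~\ref{theorem:mirror-descent-lasso} holds with $\tilde{H}$ replacing $H$, contributing only polylogarithmic factors hidden inside $\tilde{O}(\cdot)$.

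Applying Theorem~\ref{theorem:mirror-descent-lasso} with $T = m$ yields $\frac{1}{m}\|\BX\hat{w} - y\|_2^2 - \frac{1}{m}\|\BX w^* - y\|_2^2 \leq \tilde{O}(H^2 B^2/m + HB\sigma/\sqrt{m})$. Expanding $y = \BX w^* + \xi$, the left-hand side equals $\frac{1}{m}\|\BX(\hat{w} - w^*)\|_2^2 - \frac{2}{m}\langle \BX(\hat{w} - w^*), \xi\rangle$. I then control the cross term by $\ell_1/\ell_\infty$ duality: $|\langle \BX(\hat{w} - w^*), \xi\rangle| \leq \|\hat{w} - w^*\|_1 \cdot \|\BX^\t \xi\|_\infty \leq 2B \|\BX^\t \xi\|_\infty$. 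Each coordinate of $\BX^\t\xi$, conditional on $\BX$, is Gaussian with variance $\sigma^2 \sum_i X_{ij}^2 \lesssim \sigma^2 m H^2$ with high probability (by $\chi^2$ concentration); a further union bound over the $n$ coordinates gives $\|\BX^\t \xi\|_\infty \leq \tilde{O}(\sigma H\sqrt{m})$. Combining, the empirical bound becomes
\[
\tfrac{1}{m}\|\BX(\hat w - w^*)\|_2^2 \;\leq\; \tilde{O}\!\left(\tfrac{\sigma H B}{\sqrt{m}} + \tfrac{H^2 B^2}{m}\right).
\]

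The main obstacle is the final \emph{empirical-to-population} conversion: I need to pass from $\tfrac{1}{m}\|\BX v\|_2^2 = v^\t \hat{\Sigma} v$ to $\|v\|_\Sigma^2 = v^\t \Sigma v$ for $v = \hat{w} - w^*$ with $\|v\|_1 \leq 2B$, without incurring a term larger than $H^2 B^2/m$ on the quadratic side. The naive entrywise concentration $\|\hat{\Sigma} - \Sigma\|_\infty \leq \tilde{O}(H^2/\sqrt{m})$ combined with $|v^\t(\Sigma - \hat{\Sigma}) v| \leq \|v\|_1^2 \|\hat{\Sigma} - \Sigma\|_\infty$ gives only $\tilde{O}(H^2 B^2/\sqrt{m})$, which is too slow. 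To recover the correct rate, I will instead use a symmetrization/Rademacher complexity argument for the $\ell_1$-bounded squared-loss class $\{x \mapsto \langle x,v\rangle^2 : \|v\|_1 \leq 2B\}$, exploiting the smoothness of the squared loss and Srebro–Sridharan–Tewari-style optimistic rates: this gives a uniform deviation of the form $\tilde{O}(HB\|v\|_\Sigma/\sqrt m + H^2B^2/m)$, after which AM-GM ($HB\|v\|_\Sigma/\sqrt m \leq \tfrac{1}{2}\|v\|_\Sigma^2 + O(H^2B^2/m)$) absorbs the $\|v\|_\Sigma$ factor back into the LHS. Adding all contributions yields the claimed bound
\[
\|\hat{w} - w^*\|_\Sigma^2 \;\leq\; \tilde{O}\!\left(\tfrac{\sigma^2}{\sqrt{m}} + \tfrac{\sigma H B}{\sqrt{m}} + \tfrac{H^2 B^2}{m}\right),
\]
where the $\sigma^2/\sqrt m$ term appears from a comparable Rademacher/noise term on the residual side (or equivalently, from the $\sqrt{H^2B^2\sigma^2/T}$ contribution in Theorem~\ref{theorem:mirror-descent-lasso} after re-balancing). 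The central technical point that deserves the most care is thus the localized uniform-convergence step, since the naive union-bound concentration for $\hat{\Sigma}$ is insufficient to preserve the fast $1/m$ rate on the quadratic-in-$B$ term.
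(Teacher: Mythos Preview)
Your plan is sound and the overall structure---invoke Theorem~\ref{theorem:mirror-descent-lasso} with $T=m$, then perform an empirical-to-population conversion via an optimistic/localized uniform-convergence bound over the $\ell_1$ ball---is correct. The paper, however, packages the conversion differently: rather than expanding $\|\BX\hat w - y\|_2^2 - \|\BX w^* - y\|_2^2$ into a prediction-error term plus a noise cross term and handling each separately, it applies Theorem~\ref{theorem:zkss} directly to the full noisy residual. Concretely, since $\sup_{\|w\|_1 \le B}\langle w - w^*, x\rangle \le 2B\|x\|_\infty = \tilde O(HB)$ with high probability over $x\sim N(0,\Sigma)$, Theorem~\ref{theorem:zkss} yields the one-shot bound
\[
\|\hat w - w^*\|_\Sigma^2 + \sigma^2 \;\le\; \frac{1+\tilde O(1/\sqrt m)}{m}\bigl(\|\BX\hat w - y\|_2 + \tilde O(HB)\bigr)^2,
\]
after which one just substitutes the mirror-descent guarantee on $\|\BX\hat w - y\|_2$ together with $\chi^2$-concentration for $\|\BX w^* - y\|_2^2$. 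This avoids both your separate $\ell_1/\ell_\infty$ bound on $\langle\xi,\BX(\hat w-w^*)\rangle$ and the explicit localized-Rademacher step for $v^\t(\hat\Sigma-\Sigma)v$; Theorem~\ref{theorem:zkss} is precisely the Srebro--Sridharan--Tewari-style optimistic rate you were about to re-derive, so the two routes are morally the same but the paper's is shorter.

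One small correction: your explanation of the $\sigma^2/\sqrt m$ term is off. It does not come from the $\sqrt{H^2B^2\sigma^2/T}$ term of Theorem~\ref{theorem:mirror-descent-lasso} (that is $\sigma HB/\sqrt m$, already accounted for). In the paper's route it arises because the $(1+\tilde O(1/\sqrt m))$ multiplicative factor in Theorem~\ref{theorem:zkss} hits the $\sigma^2$ part of $\frac{1}{m}\|\BX\hat w - y\|_2^2$ before $\sigma^2$ is subtracted from both sides. In your decomposition that term simply would not appear, which is harmless---you would prove a slightly stronger inequality than stated.
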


\begin{proof}
Since $\max_j \Sigma_{jj} \leq H$ we have that $\max_i \norm{X_i}_\infty \leq O(H\log n)$ with probability $1-o(1)$. Applying Theorem~\ref{theorem:mirror-descent-lasso} with this bound and with $T = m$, we obtain some $\hat{w} \in \RR^n$ with $\norm{\hat{w}}_1 \leq B$ and, with probability $1-o(1)$,
\[
 \frac{1}{m}\norm{X\hat{w} - y}_2^2 \leq \frac{1}{m}\norm{Xw^* - y}_2^2 + \epsilon\]
where $\epsilon = \tilde{O}(H^2B^2/m) + \sqrt{H^2B^2 \sigma^2 / m})$. By $\chi^2$-concentration, we have $\frac{1}{m}\norm{Xw^* - y}_2^2 \leq \sigma^2(1 + O(1/\sqrt{m}))$ with probability $1 - o(1)$. Thus,
\[ \norm{X\hat{w} - y}_2 \leq \norm{Xw^* - y}_2 + \sqrt{\epsilon m} \leq \sigma\sqrt{m} + O(\sigma m^{1/4}) + \sqrt{\epsilon m}\]
and
\[ \norm{X\hat{w} - y}_2^2 \leq \norm{Xw^*-y}_2^2 + m\epsilon \leq \sigma^2 m + O(\sigma^2 \sqrt{m}) + \epsilon m.\]
Next, since $\sup_{w \in \RR^n: \norm{w}_1 \leq B} \langle w-w^*, x\rangle \leq 2B\norm{x}_\infty \leq O(HB\log n)$ with probability $1-o(1)$ over $x \sim N(0,\Sigma)$, we can apply Theorem~\ref{theorem:zkss} to get that with probability $1-o(1)$,
\[\norm{\hat{w} - w^*}_\Sigma^2 + \sigma^2 \leq \frac{1 + \tilde{O}(1/\sqrt{m})}{m}(\norm{X\hat{w} -y}_2 + \tilde{O}(HB))^2.\]
Substituting the bounds on $\norm{X\hat{w} - y}_2$ and $\norm{X\hat{w} - y}_2^2$ gives
\[ \norm{\hat{w} - w^*}_\Sigma^2 + \sigma^2 \leq \sigma^2 + O(\sigma^2 m^{-1/2} + \epsilon) + \tilde{O}(\sigma HB m^{-1/2} + HB \sqrt{\epsilon/m}) + \tilde{O}(H^2 B^2 / m).\]
Substituting in the value of $\epsilon$ and simplifying, we get
\[\norm{\hat{w}  -w^*}_\Sigma^2 \leq \tilde{O}\left(\frac{\sigma^2}{\sqrt{m}} + \frac{\sigma HB}{\sqrt{m}} + \frac{H^2 B^2}{m}\right)\]
as claimed.
\end{proof}

\allowdisplaybreaks
\section{Iterative Peeling}\label{section:iterative-peeling}

In this section we give the complete proof of Lemma~\ref{lemma:iterpeel-guarantees-intro}, restated below as Theorem~\ref{theorem:iterpeel-guarantees}, which describes the guarantees of \iterpeel{} (see Algorithm~\ref{alg:intro}). This is a key ingredient in the proofs of Theorems~\ref{theorem:main-alg-intro} and~\ref{theorem:brute-alg-intro}. We also use it to formally prove Theorem~\ref{theorem:small-eig-intro}, as well as Lemma~\ref{lemma:small-ker-dictionary-intro}.

%The main technical result of this section is the following guarantee for \iterpeel{} (see Algorithm~\ref{alg:intro} for the pseudocode), which is a key ingredient in the proofs of Theorems~\ref{theorem:main-alg-intro} and~\ref{theorem:brute-alg-intro}. 
%However, in Section~\ref{section:warmup}, as a simplified application of Theorem~\ref{theorem:iterpeel-guarantees}, we use it to prove that covariance matrices with few small eigenvalues have good $\ell_1$-representations (Definition~\ref{def:l1-rep}), which immediately (a) proves Theorem~\ref{theorem:small-eig-intro} (via Lemma~\ref{lemma:l1rep-to-covering}), and (b) gives an efficient sparse linear regression algorithm in that setting (which achieves ``slow rate'' estimation, via Proposition~\ref{prop:l1rep-to-alg}).

\begin{theorem}\label{theorem:iterpeel-guarantees}
Let $n,t,d \in \NN$. Let $\Sigma: n\times n$ be a positive semi-definite matrix with eigenvalues $\lambda_1 \leq \dots \leq \lambda_n$. Given $\Sigma$, $d$, and $t$, there is a polynomial-time algorithm \iterpeel{} producing a set $S \subseteq [n]$ with the following guarantees:
\begin{itemize}
    \item For every $t$-sparse $v \in \RR^n$, it holds that $\norm{v_{[n]\setminus S}}_2 \leq 3\lambda_{d+1}^{-1/2}\norm{v}_\Sigma$.
    \item $|S| \leq (7t)^{2t+1}d.$
\end{itemize}
\end{theorem}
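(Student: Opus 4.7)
The plan is to implement the iterative peeling strategy from Algorithm~\ref{alg:intro} and argue correctness by induction. First, I would compute the eigendecomposition $\Sigma = \sum_{i=1}^n \lambda_i u_i u_i^\t$ and the projection $P = \sum_{i=d+1}^n u_i u_i^\t$ onto the top $n-d$ eigenspaces. Since $\lambda_{d+1} P \preceq \Sigma$, it suffices to produce a set $S$ such that every $t$-sparse vector $v$ satisfies $\norm{v_{S^c}}_2 \leq 3\sqrt{v^\t P v}$, i.e.\ $v \notin \mathcal{W}_{P, S}$.

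Next, I would initialize $K_t = \{i \in [n] : P_{ii} < 1 - 1/(9t^2)\}$. Since $\tr(P) = n - d$ and every $P_{ii} \leq 1$, the total deficit $\sum_i (1 - P_{ii}) = d$, which bounds $|K_t| \leq 9 t^2 d$. Moreover, any $t$-sparse $v$ trivially satisfies $|\supp(v) \setminus K_t| \leq t$, establishing the hypotheses of Lemma~\ref{lem:S-small-helper} with $\tau = t$. I would then set $K_{j-1} := K_j \cup \mathcal{I}_P(K_j)$ for $j = t, t-1, \dots, 1$, and maintain the invariant that $|\supp(v) \setminus K_j| \leq j$ for every $v \in B_0(t) \cap \mathcal{W}_{P, K_j}$. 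The invariant is propagated via Lemma~\ref{lem:S-small-helper}, together with the observation that $K_{j-1} \supseteq K_j$ implies $\mathcal{W}_{P, K_{j-1}} \subseteq \mathcal{W}_{P, K_j}$ and preserves the diagonal condition $P_{ii} \geq 1 - 1/(9t^2)$ for $i \notin K_{j-1}$. After $t$ iterations, $|\supp(v) \setminus K_0| \leq 0$ for all $v \in B_0(t) \cap \mathcal{W}_{P, K_0}$, forcing $B_0(t) \cap \mathcal{W}_{P, K_0} = \emptyset$. (Indeed, $\supp(v) \subseteq K_0$ means $\norm{v_{K_0^c}}_2 = 0$, contradicting $v \in \mathcal{W}_{P, K_0}$ unless $v = 0$.) Consequently every $t$-sparse $v$ satisfies $\norm{v_{K_0^c}}_2 \leq 3\sqrt{v^\t P v} \leq 3\lambda_{d+1}^{-1/2} \norm{v}_\Sigma$, as required.

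For the size bound, each iteration inflates $|K_j|$ by a factor of at most $1 + 36 t^2$, so $|K_0| \leq 9 t^2 d (1 + 36 t^2)^t$, which is easily checked to be at most $(7t)^{2t+1} d$. The polynomial running time is immediate from the fact that eigendecomposition, computing $\mathcal{I}_P(K)$ via the algorithm of Lemma~\ref{lemma:heavy-coordinate-bound} (using Gram--Schmidt on a basis of $\vspan\{Pe_i : i \in K\}$), and iterating $t$ times are all polynomial.

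The main obstacle is the invocation of Lemma~\ref{lem:S-small-helper}, whose proof sketch in the text shows that $\mathcal{I}_P(K)$ must contain a new support coordinate of every problematic $v$. The intuition to rigorously pin down is: for $v \in B_0(t) \cap \mathcal{W}_{P, K}$, one has $\norm{v_{K^c}}_2 > 3 \norm{Pv}_2$ while simultaneously $v_{K^c}$ is close to $P v_{K^c}$ (because $P_{ii} \geq 1 - 1/(9t^2)$ on $K^c$ forces $\norm{(I-P)v_{K^c}}_2$ to be small). Combining these with $Pv = Pv_K + Pv_{K^c}$ shows $-Pv_K$ must be strongly correlated with $Pv_{K^c}$, which in turn forces at least one coordinate of $\supp(v_{K^c})$ to correlate with $\vspan\{Pe_i : i \in K\}$ at level $1/(6t)$. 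That coordinate then lies in $\mathcal{I}_P(K) \cap (\supp(v) \setminus K)$, reducing $|\supp(v) \setminus (\mathcal{I}_P(K) \cup K)|$ by at least one. Getting the constants $1/(9t^2)$, $3$, and $1/(6t)$ to line up correctly in this chain of inequalities is the delicate part; everything else is bookkeeping around Lemma~\ref{lemma:heavy-coordinate-bound}.
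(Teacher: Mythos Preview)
Your proposal is correct and follows essentially the same route as the paper's own proof: compute the projection $P$, initialize $K_t$ via the diagonal threshold, iterate Lemma~\ref{lem:S-small-helper} $t$ times using the monotonicity $\mathcal{W}_{P,K_{j-1}} \subseteq \mathcal{W}_{P,K_j}$, and conclude via $\lambda_{d+1} P \preceq \Sigma$; the size bound $9t^2 d(1+36t^2)^t \leq (7t)^{2t+1}d$ is identical. Your sketch of the correlation argument inside Lemma~\ref{lem:S-small-helper} also matches the paper's full proof of that lemma.
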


Essentially, the set $S$ contains every coordinate $i \in [n]$ that ``participates'' in an approximate sparse dependency, in the sense that there is some sparse linear combination of the covariates with small variance compared to the coefficient on $i$. To compute $S$, the algorithm \iterpeel{} first computes the orthogonal projection matrix $P$ that projects onto the subspace spanned by the top $n-d$ eigenvectors of $\Sigma$. Starting with the set of coordinates that correlate with $\ker(P)$, the procedure then iteratively grows $S$ in such a way that at each step, a new participant of each approximate sparse dependency is discovered, but $S$ does not become too much larger. 

The following lemma will be needed to bound how much $S$ grows at each iteration:

\begin{lemma}\label{lemma:heavy-coordinate-bound}
Let $V \subseteq \RR^n$ be a subspace with $d := \dim V$. For some $\alpha > 0$ define \[S = \left\{ i \in [n]: \sup_{x \in V \setminus \{0\}} \frac{x_i}{\norm{x}_2} \geq \alpha\right\}.\]
Then $|S| \leq d/\alpha^2$. Moreover, given a set of vectors that span $V$, we can compute $S$ in time $\poly(n)$.
\end{lemma}

\begin{proof}
Let $k := |S|$ and without loss of generality suppose $S = \{1,\dots,k\}$. Define a matrix $A \in \RR^{n \times n}$ as follows. For $1 \leq i \leq k$ let row $A_i \in V$ be some vector such that $\norm{A_i}_2 = 1$ and $A_{ii} \geq \alpha$. For $k+1 \leq i \leq n$ let $A_i = 0$. Then $\tr(A) \geq k\alpha$ and $\norm{A}_F = \sqrt{k}$. However, $\vrank(A) \leq d$, so the singular values $\sigma_1 \geq \sigma_2 \geq \dots \geq \sigma_n \geq 0$ of $A$ satisfy $\sigma_{d+1} = 0$. Thus, \[k\alpha \leq \tr(A) \leq \sum_{i=1}^n \sigma_i \leq \sqrt{d} \sqrt{\sum_{i=1}^n \sigma_i^2} = \sqrt{d} \norm{A}_F = \sqrt{dk}\]
where the second inequality is by e.g. Von Neumann's trace inequality, and the third inequality is by $d$-sparsity of the vector $\sigma$. It follows that $k \leq d/\alpha^2$ as claimed.

Let $A$ be the matrix with columns consisting of the given spanning set for $V$. By Gram-Schmidt, we may transform the spanning set into an orthonormal basis for $V$, so that $A$ has $d$ columns, and $A^\t A = I_d$. Fix $i \in [n]$. Then $\sup_{x \in V\setminus \{0\}} x_i/\norm{x}_2 \geq \alpha$ if and only if $(Av)_i^2 - \alpha^2 \norm{Av}_2^2 \geq 0$ for some nonzero $v \in \RR^d$. Equivalently, $(Av)_i^2 \geq \alpha^2$ for some unit vector $v$. This is possible if and only if $\norm{A_i}_2 \geq \alpha$ (where $A_i$ is the $i$-th row of $A$), which can be checked in polynomial time.
\end{proof}

For notational convenience, we also define the set $\mathcal{W}_{P,S}$ of vectors $v$ with unusually large norm outside the set $S$.
%and vector $v$:

\begin{definition}
For any matrix $P: n \times n$ and subset $S \subseteq [n]$, define $\mathcal{W}_{P,S} := \{v \in \RR^n: \norm{v_{S^c}}_2 > 3\sqrt{v^\t P v}\}.$
\end{definition}

We then formalize the guarantee of each iteration of \iterpeel{} as follows:

\iffalse
\begin{lemma}\label{lem:S-small}
Let $n,t \in \NN$. Let $P: n \times n$ be an orthogonal projection matrix with $d := \dim \ker(P)$.
Define $S \subseteq [n]$ by \[S := \bigcup_{v \in B_0(t)} \mathcal{G}_P(v)\]
where $B_0(t) \subseteq \RR^n$ is the $\ell_0$ ball of $t$-sparse real vectors. %, and define $S \subseteq [n]$ by \[S := \bigcup_{v \in \mathcal{X}} \supp(v).\]
Then given $P$ and $t$, there is a polynomial-time algorithm that computes a superset of $S$ of size at most $(7t)^{2t+1}d$.
\end{lemma}
This lemma is proved by iterating the following one.
\fi

\begin{lemma}\label{lem:S-small-helper}
Let $n,t \in \NN$ and let $P: n \times n$ be an orthogonal projection matrix. Suppose $\tau \ge 1 $ and $K \subseteq [n]$ satisfy
\begin{enumerate}[label=(\alph*)]
    \item $P_{ii} \geq 1 - 1/(9t^2)$ for all $i \not \in K$,
    \item $|\supp(v) \setminus K| \leq \tau$ for every $v \in B_0(t) \cap \mathcal{W}_{P,K}$.
\end{enumerate} 
Then there exists a set $\mathcal{I}_P(K)$ with $|\mathcal{I}_P(K)| \le 36 t^2 |K|$ such that
\[ |\supp(v) \setminus (\mathcal{I}_P(K) \cup K)| \leq \tau - 1 \] %\max(0, \tau-1) \]
for all $v \in B_0(t) \cap \mathcal{W}_{P, K}$. Moreover, given $P$, $K$, and $t$, we can compute $\mathcal{I}_P(K)$ in time $\poly(n)$.
\end{lemma}

\begin{proof}
%Let $K \subseteq [n]$ be any set such that (1) $P_{ii} \geq 1 - 1/(9t^2)$ for all $i \not \in K$, and (2) $|\supp(v) \setminus K| \leq \tau$ for every $v \in \mathcal{X}$.
We define the set \[\mathcal{I}_P(K) := \left\{a \in [n] \setminus K : \sup_{x \in \vspan\{P e_i : i \in K\} \setminus \{0\}} \frac{|x_a|}{\|x\|_2} \ge 1/(6t) \right\}.\]

It is clear from Lemma~\ref{lemma:heavy-coordinate-bound} (applied with parameters $V := \vspan\{Pe_i: i \in K\}$ and $\alpha := 1/(6t)$) that $|\mathcal{I}_P(K)| \leq 36t^2|K|$, and that $\mathcal{I}_P(K)$ can be computed in time $\poly(n)$. It remains to show that $|\supp(v) \setminus (\mathcal{I}_P(K)\cup K)| \leq \tau-1$ for all $v \in B_0(t) \cap \mathcal{W}_{P,K}$.

Consider any $v \in B_0(t) \cap \mathcal{W}_{P,K}$. Then $\norm{v_{K^c}}_2 > 3\sqrt{v^\t P v}$. We have \begin{equation}
\frac{\norm{v_{K^c}}_2^2}{9} > v^\t P v = \norm{Pv}_2^2 = \norm{\sum_{i=1}^n v_i P_i}_2^2
\label{eq:pv-total-norm}
\end{equation}
where the first equality uses the fact that $P$ is a projection matrix.
We also know that
\begin{equation}\norm{\sum_{i \in [n]\setminus K} v_i (P_i-e_i)}_2 \leq \sum_{i \in [n]\setminus K} |v_i| \norm{P_i-e_i}_2 \leq \frac{1}{3\sqrt{t}}\norm{v_{K^c}}_1 \leq \frac{1}{3}\norm{v_{K^c}}_2
\label{eq:pv-diff-norm}
\end{equation}
by the triangle inequality, the bound $\norm{P_i-e_i}_2^2 = (I-P)_{ii} = 1-P_{ii} \leq 1/(9t)$ (since $i \not \in K$), and $t$-sparsity of $v$. Moreover, (\ref{eq:pv-diff-norm}) implies that
\begin{equation}\norm{\sum_{i \in [n]\setminus K} v_i P_i}_2 \leq \norm{\sum_{i \in [n]\setminus K} v_i (P_i - e_i)}_2 + \norm{v_{K^c}}_2 \leq \frac{4}{3}\norm{v_{K^c}}_2.
\label{eq:pv-outside-norm}
\end{equation}

Combining (\ref{eq:pv-total-norm}) and (\ref{eq:pv-outside-norm}), the triangle inequality gives \begin{equation}\norm{\sum_{i \in K} v_i P_i}_2 \leq \norm{\sum_{i \in [n]\setminus K} v_i P_i}_2 + \norm{\sum_{i=1}^n v_i P_i}_2 \le \frac{5}{3}\norm{v_{K^c}}_2.
\label{eq:pv-inside-norm}
\end{equation}
Next, observe that
\begin{align}
\frac{\norm{v_{K^c}}_2^2}{3}
&> \norm{\sum_{i=1}^n v_i P_i}_2 \norm{v_{K^c}}_2 \tag{by (\ref{eq:pv-total-norm})}\\
&\geq \left|\left\langle \sum_{i=1}^n v_iP_i, v_{K^c}\right\rangle\right| \tag{by Cauchy-Schwarz}\\
&\geq \left|\left\langle \sum_{i\in [n]\setminus K} v_iP_i, v_{K^c}\right\rangle\right| - \left|\left\langle \sum_{i\in K} v_iP_i, v_{K^c}\right\rangle\right| \tag{by triangle inequality} \\
&\geq \left|\left\langle \sum_{i\in [n]\setminus K} v_ie_i, v_{K^c}\right\rangle\right| - \left|\left\langle \sum_{i\in [n]\setminus K} v_i(P_i-e_i), v_{K^c}\right\rangle\right| - \left|\left\langle \sum_{i\in K} v_iP_i, v_{K^c}\right\rangle\right| \tag{by triangle inequality} \\
&\geq \norm{v_{K^c}}_2^2 - \norm{\sum_{i \in [n]\setminus K} v_i(P_i - e_i)}_2 \norm{v_{K^c}}_2 - \left|\left\langle \sum_{i\in K} v_iP_i, v_{K^c}\right\rangle\right| \tag{by Cauchy-Schwarz} \\
&\geq \norm{v_{K^c}}_2^2 - \frac{1}{3} \norm{v_{K^c}}_2^2 - \left|\left\langle \sum_{i\in K} v_iP_i, v_{K^c}\right\rangle\right| \tag{by (\ref{eq:pv-diff-norm})}
\end{align}
and hence
\[ \left|\left\langle \sum_{i\in K} v_iP_i, v_{K^c}\right\rangle\right| > \frac{1}{3}\norm{v_{K^c}}_2^2 \geq \frac{1}{5}\norm{v_{K^c}}_2 \norm{\sum_{i \in K} v_i P_i}_2\]
where the last inequality is by (\ref{eq:pv-inside-norm}). On the other hand, observe that
\[ \left|\left\langle \sum_{i \in K} v_i P_i, v_{K^c}\right\rangle\right| \leq \sum_{j \in [n]\setminus K} |v_j| \cdot \left|\left\langle \sum_{i \in K} v_i P_i, e_j\right\rangle\right| \leq \sqrt{t}\norm{v_{K^c}}_2 \max_{j \in \supp(v) \setminus K} \left|\left\langle \sum_{i \in K} v_i P_i, e_j\right\rangle\right|.\]
Hence, there is some $j \in \supp(v) \setminus K$ such that
\[ \left|\left\langle \sum_{i \in K} v_i P_i, e_j\right\rangle\right| > \frac{1}{5\sqrt{t}} \norm{\sum_{i \in K} v_i P_i}_2.\]
So the vector $x(v) := \sum_{i \in K} v_i P_i \in \vspan\{P_i: i \in K\}$ satisfies $|x(v)_j| > \norm{x(v)}_2/(5\sqrt{t})$. Moreover, $x(v)$ is nonzero since $|x(v)_j|>0$. Thus, $j \in \mathcal{I}_P(K)$. Since we chose $j$ to be in $\supp(v) \setminus K$, it follows that \[|\supp(v) \setminus (\mathcal{I}_P(K) \cup K)| \leq |\supp(v) \setminus K| - 1 \leq \tau - 1\] where the last inequality is by assumption (b) in the lemma statement.
\end{proof}

We can now complete the proof of Theorem~\ref{theorem:iterpeel-guarantees} by repeatedly invoking Lemma~\ref{lem:S-small-helper} (this proof was given in Section~\ref{section:constructing-overview} and is duplicated here for completeness).

\begin{namedproof}{Theorem~\ref{theorem:iterpeel-guarantees}}
%We complete the proof by reverse induction on $\tau$. % not an induction really...
Let $\Sigma = \sum_{i=1}^n \lambda_i u_i u_i^\t$ be the eigendecomposition of $\Sigma$, and let $P := \sum_{i=d+1}^n u_i u_i^\t$ be the projection onto the top $n-d$ eigenspaces of $\Sigma$. Set 
$K_t = \{i \in [n]: P_{ii} < 1-1/(9t^2)\}.$
Because $\tr(P) = n - d$ and $P_{ii} \leq 1$ for all $i \in [n]$, it must be that $|K_t| \leq 9t^2 d$. Also, for any $v \in B_0(t) \cap \mathcal{W}_{P, K_t}$ we have trivially by $t$-sparsity that $|\supp(v)\setminus K_t| \leq t.$
%for any $v \in \mathcal{X}$. 

Define $K_{t - 1}$ to be $K_t \cup \mathcal{I}_P(K_t)$ where $\mathcal{I}_P(K_t)$ is as defined in Lemma~\ref{lem:S-small-helper}; we have the guarantees that $|K_{t-1}| \leq (1+36t^2) |K_t|$ and $|\mathcal{G}_P(v) \setminus K_t| \leq t-1$ for all $v \in B_0(t) \cap \mathcal{W}_{P, K_t}$. Since $K_{t-1} \supseteq K_t$, it holds that $\mathcal{W}_{P, K_{t-1}} \subseteq \mathcal{W}_{P, K_t}$, and thus $|\mathcal{G}_P(v) \setminus K_t| \leq t-1$ for all $v \in B_0(t) \cap \mathcal{W}_{P, K_{t-1}}$. Moreover, since $K_{t-1} \supseteq K_t$, it obviously holds that $P_{ii} \geq 1-1/(9t^2)$ for all $i \not \in K_{t-1}$. This means we can apply Lemma~\ref{lem:S-small-helper} with $\tau := t-1$ and $K := K_{t - 1}$ and so iteratively define sets $K_{t - 2} \subseteq \dots \subseteq K_1 \subseteq K_0 \subseteq [n]$ in the same way. In the end, we obtain the set $K_0 \subseteq [n]$ with $|K_0| \leq 9t^2 d (1+36t^2)^t$ and $\supp(v) \subseteq K_0$ for all $v \in B_0(t) \cap \mathcal{W}_{P, K_0}$. The latter guarantee means that in fact $B_0(t) \cap \mathcal{W}_{P,K_0} = \emptyset$. So for any $t$-sparse $v \in \RR^n$ it holds that \[\norm{v_{K_0^c}}_2 \leq 3\sqrt{v^\t P v} \leq 3\lambda_{d+1}^{-1/2} \sqrt{v^\t \Sigma v}\]
where the last inequality holds since $\lambda_{d+1}P \preceq \Sigma$.
\end{namedproof}

%\subsection{A warm-up algorithmic application \& proof of Theorem~\ref{theorem:small-eig-intro}}\label{section:warmup}

%We now use Theorem~\ref{theorem:iterpeel-guarantees} to show that any covariance matrix with few \emph{small} eigenvalues has an efficient $\ell_1$-representation. Together with Lemma~\ref{lemma:l1rep-to-covering}, this completes the proof of Theorem~\ref{theorem:small-eig-intro}. Together with Proposition~\ref{prop:l1rep-to-alg}, it already proves a substantial special case of Theorem~\ref{theorem:brute-alg-intro}: an efficient algorithm for sparse linear regression (at the slow rate) when $\Sigma$ has few small eigenvalues.
%\begin{lemma}\label{lemma:small-ker-dictionary}
%Let $n,t,d \in \NN$. Let $\Sigma: n \times n$ be a positive semi-definite matrix with eigenvalues $\lambda_1 \leq \dots \leq \lambda_n$. Then $\Sigma$ has a $(t, 7\sqrt{t}\sqrt{\lambda_n/\lambda_{d+1}})$-$\ell_1$-representation $\mathcal{D}$ of size at most $n + t(7t)^{2t^2+t}d^t$. Moreover, $\mathcal{D}$ can be computed in time $t^{O(t^2)} d^t \poly(n)$.
%\end{lemma}

\begin{namedproof}{Lemma~\ref{lemma:small-ker-dictionary-intro}}
By Theorem~\ref{theorem:iterpeel-guarantees}, there is a polynomial-time computable set $S \subseteq [n]$ such that $\norm{v_{S^c}}_2 \leq 3\sqrt{t}\lambda_{d+1}^{-1/2}\norm{v}_\Sigma$ for all $v \in B_0(t)$, and $|S| \leq (7t)^{2t+1}d$. Let the dictionary $\mathcal{D}$ consist of the standard basis $\{e_1,\dots,e_n\}$ together with a $\Sigma$-orthogonal basis for each subspace spanned by $t$ vectors in $\{e_i: i \in S\}$. Let $v \in \RR^n$ be $t$-sparse. Let $v_S$ denote the restriction of $v$ to $S$, i.e. $v_S := v - \sum_{i \in [n] \setminus S} v_i e_i$. By construction of the dictionary, there is a $\Sigma$-orthogonal basis for $\{e_i: i \in S \cap \supp(v)\}$, so there are $d_1,\dots,d_t \in \mathcal{D}$ and coefficients $b_{d_1},\dots,b_{d_t} \in \RR$ with $v_S = \sum_{i=1}^t b_{d_i} d_i$ and $\langle d_i,d_j\rangle_\Sigma = 0$ for all $i,j \in [t]$ with $i \neq j$. Note that $\norm{v_S}_\Sigma^2 = \sum_{i=1}^t b_{d_i}^2 \norm{d_i}_\Sigma^2$, so \[\sum_{i=1}^t |b_{d_i}| \norm{d_i}_\Sigma \leq \sqrt{t}\sqrt{\sum_{i=1}^t b_{d_i}^2 \norm{d_i}_\Sigma^2} = \sqrt{t}\norm{v_S}_\Sigma.\]

Now, we claim that the desired coefficient vector $\{\alpha_d: d \in \mathcal{D}\}$ for $v$ is defined by $\alpha_d = b_d + \sum_{i \in [n]\setminus S} v_i \mathbbm{1}[d = e_i]$. We can check that $\sum_{d \in \mathcal{D}} \alpha_d d = \sum_{i=1}^t b_{d_i} + \sum_{i \in [n]\setminus S} v_i e_i = v.$
Also,
\begin{align*}
\norm{v_S}_\Sigma
&\leq \norm{v}_\Sigma + \norm{v_{S^c}}_\Sigma \\
&\leq \norm{v}_\Sigma + \sqrt{\lambda_n} \norm{v_{S^c}}_2 \\
&\leq (1 + 3\sqrt{\lambda_n/\lambda_{d+1}})\norm{v}_\Sigma
\end{align*}
by the guarantee of set $S$.

It follows that \[\sum_{i=1}^t |b_{d_i}| \norm{d_i}_\Sigma \leq (1+3\sqrt{\lambda_n/\lambda_{d+1}})\sqrt{t}\norm{v}_\Sigma\sqrt{\lambda_n/\lambda_{d+1}}.\]

Thus,
\begin{align*}
\sum_{d \in \mathcal{D}} |c_d| \norm{d}_\Sigma
&\leq (1+3\sqrt{\lambda_n/\lambda_{d+1}})\sqrt{t}\norm{v}_\Sigma + \sum_{i \in [n]\setminus S} |v_i| \norm{e_i}_\Sigma \\
&\leq (1+3\sqrt{\lambda_n/\lambda_{d+1}})\sqrt{t}\norm{v}_\Sigma + \sqrt{t}\norm{v_{S^c}}_2 \sqrt{\lambda_n} \\
&\leq (1+3\sqrt{\lambda_n/\lambda_{d+1}})\sqrt{t}\norm{v}_\Sigma + 3\sqrt{t}\norm{v}_\Sigma \sqrt{\lambda_n/\lambda_{d+1}} \\
&\leq 7\sqrt{t}\sqrt{\lambda_n/\lambda_{d+1}}\norm{v}_\Sigma
\end{align*}
which completes the proof.
\end{namedproof}

\begin{namedproof}{Theorem~\ref{theorem:small-eig-intro}}
Immediate from Lemma~\ref{lemma:small-ker-dictionary-intro} and Lemma~\ref{lemma:l1rep-to-covering}.
\end{namedproof}

\section{An efficient algorithm for handling outlier eigenvalues}\label{section:outlier-algorithm}

\SetKwFunction{arlasso}{AdaptivelyRegularizedLasso}
\SetKwFunction{boar}{BOAR-Lasso}

In this section we describe and provide error guarantees for a novel sparse linear regression algorithm \boar{} (see Algorithm~\ref{alg:main} for pseudocode), completing the proof of Theorem~\ref{theorem:main-alg-intro}; in Section~\ref{section:alternative-alg} we then analyze a modified algorithm to prove Theorem~\ref{theorem:brute-alg-intro}. 

The key subroutine of \boar{} is the procedure \arlasso{}, which (like the simplified procedure \adabp{} from Section~\ref{section:feature-adaptation}) first invokes procedure \iterpeel{} to compute the set of coordinates that participate in sparse approximate dependencies, and second computes a modified Lasso estimate where those coordinates are not regularized.

\begin{algorithm2e}[t]
  \SetAlgoLined\DontPrintSemicolon
\caption{Solve sparse linear regression when covariate eigenspectrum has few outliers}
\label{alg:main}
\SetKwComment{Comment}{/* }{ */}
\myalg{\arlasso{$\Sigma$, $(X_i,y_i)_{i=1}^m$, $t$, $d_l$, $\delta$}}{
\KwData{Covariance matrix $\Sigma: n \times n$, samples $(X_i,y_i)_{i=1}^m$, sparsity $t$, small eigenvalue count $d_l$, failure probability $\delta$}
\KwResult{Estimate $\hat{v}$ of unknown sparse regressor, satisfying Theorem~\ref{theorem:unreg-alg}}
$\sum_{i=1}^n \lambda_i u_i u_i^\t \gets$ eigendecomposition of $\Sigma$\;
$S \gets$ \iterpeel{$\Sigma, d_l, t$} \Comment*[r]{See Algorithm~\ref{alg:intro}}
Return $$\hat{v} \gets \argmin_{v \in \RR^n} \sum_{i=1}^m (\langle X_i,v\rangle-y)^2 + 8\lambda_{n-d_h}\log(12n/\delta)\norm{v_{S^c}}_1^2 + 2\sqrt{2\lambda_{n-d_h} \log(12n/\delta)} \norm{v_{S^c}}_1.$$
}
\myalg{\boar{$\Sigma$, $(Y_i,y_i)_{i=1}^m$, $t$, $d_l$, $L$, $\delta$}}{
\KwData{Covariance matrix $\Sigma: n \times n$, samples $(X_i,y_i)_{i=1}^m$, sparsity $t$, small eigenvalue count $d_l$, repetition count $L$, failure probability $\delta$}
\KwResult{Estimate $\hat{v}$ of unknown sparse regressor, satisfying Theorem~\ref{theorem:main-alg-poly}}
$\hat{s}^{(0)} \gets 0 \in \RR^n$\;
\For{$0 \leq j < L$}{
    Set $$\Sigma^{(j)} \gets \begin{bmatrix} \Sigma & (\hat{s}^{(j)})^\t \Sigma \\ \Sigma \hat{s}^{(j)} & (\hat{s}^{(j)})^\t \Sigma \hat{s}^{(j)} \end{bmatrix}.$$\;
    Set $A^{(j)} := \{mj+1,\dots,m(j+1)\}$\;
    $\hat{w}^{(j+1)} \gets$ \arlasso{$\Sigma^{(j)}$, $((X_i, \langle X_i,\hat{s}^{(j)}\rangle), y_i - \langle X_i,\hat{s}^{(j)}\rangle)_{i \in A^{(j)}}$, $t+1$, $d_l+1$, $\delta/L$}\;
    $\hat{v}^{(j+1)} \gets \hat{w}^{(j+1)}_{[n]} + \hat{w}^{(j+1)}_{n+1} \hat{s}^{(j)}$\;
    $\hat{s}^{(j+1)} \gets \hat{s}^{(j)} + \hat{v}^{(j+1)}$\;
}
\Return $\hat{s}^{(L)}$\;
}
\end{algorithm2e}

We start with Theorem~\ref{theorem:unreg-alg}, which shows that, in the setting where $\Sigma$ has few outlier eigenvalues, the procedure \arlasso{} estimates the sparse ground truth regressor at the ``slow rate'' (e.g. in the noiseless setting, the excess risk is at most $O(\norm{v^*}_\Sigma^2 r_\text{eff}/m)$). Typical excess risk analyses for Lasso proceed by applying some general-purpose machinery for generalization bounds, such as the following result which only requires understanding $\langle w-w^*,X\rangle$ for $X \sim N(0,\Sigma)$.

\begin{theorem}[Theorem~1 in \cite{zhou2021optimistic}]\label{theorem:zkss}
Let $n,m \in \NN$ and $\epsilon,\delta,\sigma > 0$. Let $\Sigma: n \times n$ be a positive semi-definite matrix and fix $w^* \in \RR^n$. Let $X: m \times n$ have i.i.d. rows $X_1,\dots,X_m \sim N(0,\Sigma)$, and let $y = Xw^* + \xi$ where $\xi \sim N(0,\sigma^2 I_m)$. Let $F: \RR^d \to [0,\infty]$ be a continuous function such that $$\Pr_{x \sim N(0,\Sigma)} [\sup_{w \in \RR^n} \langle w-w^*, x\rangle - F(w) > 0] \leq \delta.$$
If $m \geq 196\epsilon^{-2}\log(12/\delta)$, then with probability at least $1-4\delta$ it holds that for all $w \in \RR^d$, $$\norm{w-w^*}_\Sigma^2 + \sigma^2 \leq \frac{1+\epsilon}{m}\left(\norm{Xw-y}_2 + F(w)\right)^2.$$
\end{theorem}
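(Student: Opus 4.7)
The plan is to invoke Gordon's Gaussian minimax inequality (in the ``escape through the mesh'' style) after absorbing the noise $\xi$ into the covariates, and then to use the hypothesis on $F$ to bound the resulting Gaussian width. First, I would perform the following augmentation: set $\tilde{X}_i := (X_i, \xi_i) \in \RR^{n+1}$, which is Gaussian with covariance $\tilde{\Sigma} := \diag(\Sigma, \sigma^2)$, and for each candidate $w$ define $\tilde{\Delta}_w := (w - w^*, -1)$. Then $Xw - y = \tilde{X}\tilde{\Delta}_w$ and $\|\tilde{\Delta}_w\|_{\tilde{\Sigma}}^2 = \|w-w^*\|_\Sigma^2 + \sigma^2$, so (after squaring) the target inequality reduces to showing, uniformly in $w$ with high probability,
\[ \|\tilde{X}\tilde{\Delta}_w\|_2 + F(w) \;\geq\; \sqrt{m/(1+\epsilon)}\cdot \|\tilde{\Delta}_w\|_{\tilde{\Sigma}}. \]
Whitening $\tilde{X} = Z \tilde{\Sigma}^{1/2}$ with $Z \in \RR^{m \times (n+1)}$ having i.i.d.\ $N(0,1)$ entries and writing $u_w := \tilde{\Sigma}^{1/2}\tilde{\Delta}_w$ (so $\|u_w\|_2 = \|\tilde{\Delta}_w\|_{\tilde{\Sigma}} \geq \sigma$), the task becomes $\|Z u_w\|_2 + F(w) \geq \sqrt{m/(1+\epsilon)}\,\|u_w\|_2$ for all $w$.

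Second, using $\|Zu\|_2 = \sup_{\|v\|_2 = 1} v^\top Z u$, I would apply Gordon's comparison lemma. For any $w$, it dominates the primary process $\|Zu_w\|_2$ stochastically from below by the auxiliary process $\|u_w\|_2 \|g\|_2 + \langle h, u_w\rangle$, where $g \in \RR^m$ and $h \in \RR^{n+1}$ are independent standard Gaussians. By $\chi^2$ concentration, $\|g\|_2 \geq \sqrt{m/(1+\epsilon/2)}$ with probability $\geq 1-\delta$, so the first term leaves a nonnegative slack of order $\epsilon\sqrt{m}\,\|u_w\|_2$ against the $\sqrt{m/(1+\epsilon)}\,\|u_w\|_2$ subtraction. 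For the linear term, decompose $h = (h_\Sigma, h_\xi) \in \RR^n \times \RR$; since $-\Sigma^{1/2}h_\Sigma \sim N(0,\Sigma)$, applying the hypothesis to this vector gives $\inf_w [F(w) + \langle \Sigma^{1/2}h_\Sigma, w - w^*\rangle] \geq 0$ with probability $\geq 1 - \delta$, while $|\sigma h_\xi| \leq O(\sigma\sqrt{\log(1/\delta)})$ with high probability and is absorbed into the slack using $\|u_w\|_2 \geq \sigma$ together with $m \gtrsim \epsilon^{-2}\log(1/\delta)$. Thus the auxiliary objective is nonnegative uniformly in $w$ with the claimed probability.

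The main obstacle, I expect, is the conversion step: Gordon's comparison natively compares expectations (or, in the escape-through-mesh form, one-sided tail probabilities of the min-max value), but we need a uniform high-probability lower bound on the original process. This is delicate because the constraint set $\{\tilde{\Delta}_w : w \in \RR^n\}$ is affine and noncompact, and $F$ may take the value $+\infty$, so one cannot directly apply Gaussian Lipschitz concentration to the global minimum over $w$. The standard remedy is to localize to sublevel sets $\{w : \|u_w\|_2 \in [r, 2r]\}$ for a dyadic sequence of scales $r$, apply Gordon's lemma on each set, use Gaussian Lipschitz concentration (Borell--TIS) on the resulting restricted minimum, union-bound over the scales, and dispatch the boundary cases $r \to 0$ (trivial, since both sides vanish) and $r \to \infty$ (either $F = \infty$ on the tail, or a truncation argument on the set where $F$ is finite). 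The explicit constant $196\epsilon^{-2}$ in the sample complexity then falls out of combining the $\chi^2$ deviation of $\|g\|_2$ with the Lipschitz concentration slack, with an additional logarithmic factor absorbed by the dyadic union bound.
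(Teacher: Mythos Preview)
This theorem is not proved in the present paper: it is quoted verbatim as Theorem~1 of \cite{zhou2021optimistic} and used as a black box (in Theorem~\ref{theorem:linear-time-lasso} and Lemma~\ref{lemma:gen-bound-seminorm}), so there is no ``paper's own proof'' to compare against. Your sketch is, however, the right approach and essentially matches how the cited result is established: the noise-augmentation trick $\tilde{\Delta}_w = (w-w^*,-1)$ reducing to a noiseless lower isometry, followed by Gordon's min-max comparison and localization over dyadic shells, is exactly the machinery behind the optimistic-rate bounds in \cite{zhou2021optimistic}. The one place where your outline is loose is the transfer from the comparison-in-expectation version of Gordon to a high-probability uniform bound; as you correctly flag, this requires either the probabilistic form of Gordon's inequality (which directly controls tails of the min-max) or a Borell--TIS concentration step on each localized shell, and the cited paper handles this with the former.
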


In classical settings, e.g. (a) where $\norm{v^*}_1$ is bounded and $\max_i\Sigma_{ii} \leq 1$ (see Proposition~\ref{prop:l1rep-to-alg}) or (b) where $\Sigma$ satisfies the compatibility condition (see Definition~\ref{def:compatability-condition}), the above result can be applied together with the straightforward bound $\langle v-v^*,X\rangle \leq \norm{v-v^*}_1\norm{X}_\infty$. To prove Theorem~\ref{theorem:unreg-alg} we follow the same general recipe as (a), with several modifications.

First, since $\max_i \Sigma_{ii}$ could be arbitrarily large, we need to treat the (few) large eigenspaces of $\Sigma$ separately when bounding $\langle v-v^*,X\rangle$. Similarly, since Theorem~\ref{theorem:iterpeel-guarantees} only gives bounds on $v^*$ for coordinates outside $S$, we separately bound $\langle (v-v^*)_S, X\rangle$ using that $|S|$ is small. Second, to achieve the optimal rate of $\sigma^2 n_\text{eff}/m$ rather than $\sigma^2\sqrt{n_\text{eff}/m}$, we do not directly apply Theorem~\ref{theorem:zkss} to the noisy samples $(X_i,y_i)$; instead, we derive a modification of that result (Lemma~\ref{lemma:gen-bound-seminorm}) that only invokes Theorem~\ref{theorem:zkss} on the noiseless samples $(X_i, \langle X_i, v^*\rangle)$, and separately bounds the in-sample prediction error $\norm{\BX(\hat{v}-v^*)}_2$. A similar technique is used in \cite{zhou2021optimistic} for constrained least-squares programs (see their Lemma~15); our Lemma~\ref{lemma:gen-bound-seminorm} applies to a broad family of additively regularized programs, which obviates the need to independently estimate $\norm{v^*}_\Sigma$ but otherwise achieves comparable bounds.

\begin{theorem}\label{theorem:unreg-alg}
Let $n,t,d_l,d_h,m \in \NN$ and $\sigma,\delta>0$. Let $\Sigma: n \times n$ be a positive semi-definite matrix with eigenvalues $\lambda_1 \leq \dots \leq \lambda_n$. Let $(X_i,y_i)_{i=1}^m$ be independent samples where $X_i \sim N(0,\Sigma)$ and $y_i = \langle X_i,v^*\rangle + \xi_i$, for $\xi_i \sim N(0,\sigma^2)$ and a fixed $t$-sparse vector $v^* \in \RR^n$. Let $\hat{v}$ be the output of \arlasso{$\Sigma,(X_i,y_i)_{i=1}^m, t, d_l$,$\delta$}.
Let $n_\text{eff} := (7t)^{2t+1}d_l + d_h + \log(48/\delta)$ and let $r_\text{eff} := t(\lambda_{n-d_h}/\lambda_{d_l+1})\log(12n/\delta)$. There are absolute constants $c,C > 0$ so that the following holds. If $m \geq Cn_\text{eff}$, then with probability at least $1-\delta$, $$\norm{\hat{v} - v^*}_\Sigma^2 \leq c\left(\frac{\sigma^2 n_\text{eff}}{m} + \frac{(\sigma+\norm{v^*}_\Sigma)\norm{v^*}_\Sigma \sqrt{r_\text{eff}}}{\sqrt{m}} + \frac{\norm{v^*}_\Sigma^2 r_\text{eff}}{m}\right).$$
\end{theorem}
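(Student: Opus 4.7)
The plan is to combine three ingredients: (i) the structural guarantee from Theorem~\ref{theorem:iterpeel-guarantees}, which yields $|S| \le (7t)^{2t+1}d_l$ and $\norm{v^*_{S^c}}_1 \le 3\sqrt{t/\lambda_{d_l+1}}\norm{v^*}_\Sigma$, so that $\mathrm{reg}(v^*) = O(r_\mathrm{eff}\norm{v^*}_\Sigma^2 + \sqrt{r_\mathrm{eff}}\norm{v^*}_\Sigma)$; (ii) the basic inequality for the regularized least-squares estimator; and (iii) the noiseless-data generalization bound Lemma~\ref{lemma:gen-bound-seminorm}, which invokes Theorem~\ref{theorem:zkss} on the samples $(X_i,\langle X_i,v^*\rangle)$ to convert an in-sample bound on $\norm{\BX(\hat v - v^*)}_2$ into one on $\norm{\hat v - v^*}_\Sigma$. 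The ``large'' outliers $d_h$ are handled by splitting $\Sigma = P_h\Sigma P_h + P_r\Sigma P_r$, where $P_h$ is the orthogonal projection onto the top $d_h$ eigenspaces; since each coordinate of $P_r X$ has variance at most $\lambda_{n-d_h}$, the regularizer scale is calibrated exactly to $\ell_1$--$\ell_\infty$ duality against $P_r X$.

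For the in-sample bound, I would start from the basic inequality
\[\tfrac{1}{m}\norm{\BX(\hat v - v^*)}_2^2 \le \tfrac{2}{m}\langle \hat v - v^*, \BX^\t\xi\rangle + \tfrac{1}{m}\mathrm{reg}(v^*) - \tfrac{1}{m}\mathrm{reg}(\hat v),\]
and decompose both $\BX^\t\xi = P_h\BX^\t\xi + P_r\BX^\t\xi$ and $\hat v - v^* = (\hat v - v^*)_S + (\hat v - v^*)_{S^c}$. The pairing $\langle(\hat v - v^*)_{S^c}, P_r\BX^\t\xi\rangle$ is absorbed by the linear part of $\mathrm{reg}$ after bounding $\norm{P_r\BX^\t\xi}_\infty = O(\sigma\sqrt{m\lambda_{n-d_h}\log(n/\delta)})$ via Gaussian tails. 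The remaining pairings $\langle(\hat v - v^*)_S, P_r\BX^\t\xi\rangle$ and $\langle \hat v - v^*, P_h\BX^\t\xi\rangle$ lie in subspaces of dimension at most $|S|$ and $d_h$ respectively, so conditioning on $\BX$ and applying a $\chi^2$ tail bound yields contributions of order $\sigma\sqrt{m(|S|+\log(1/\delta))}\,\norm{(\hat v - v^*)_S}_\Sigma$ and $\sigma\sqrt{m(d_h+\log(1/\delta))}\,\norm{\hat v - v^*}_\Sigma$. Folding these into $\tfrac{1}{m}\norm{\BX(\hat v - v^*)}_2^2$ via AM--GM will produce an in-sample bound of order $\sigma^2 n_\mathrm{eff}/m + r_\mathrm{eff}\norm{v^*}_\Sigma^2/m$ up to lower-order cross terms.

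For the generalization step, I would invoke Lemma~\ref{lemma:gen-bound-seminorm} with
\[F(w) := O\bigl(\sqrt{\lambda_{n-d_h}\log(n/\delta)}\bigr)\,\norm{(w-v^*)_{S^c}}_1 + O\bigl(\sqrt{|S|+d_h+\log(1/\delta)}\bigr)\,\norm{w-v^*}_\Sigma,\]
verifying the hypothesis $\Pr_X[\sup_w \langle w - v^*, X\rangle > F(w)] \le \delta$ by the same $P_h/P_r$ splitting applied to $X \sim N(0,\Sigma)$ and the same two concentration arguments. The resulting bound $\norm{\hat v - v^*}_\Sigma^2 \lesssim \tfrac{1}{m}(\norm{\BX(\hat v - v^*)}_2 + F(\hat v))^2$ would then be combined with the in-sample estimate by absorbing the $\norm{\hat v - v^*}_\Sigma^2/m$ contribution from $F(\hat v)^2$ into the LHS (using $m \ge C n_\mathrm{eff}$) and controlling $\norm{(\hat v - v^*)_{S^c}}_1$ by rearranging the basic inequality one more time.

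The main obstacle, and the reason for the unusual \emph{quadratic} $8\lambda_{n-d_h}\log(12n/\delta)\norm{v_{S^c}}_1^2$ term in the regularizer, is that expanding $F(\hat v)^2$ produces an $\lambda_{n-d_h}\log(n/\delta)\norm{(\hat v - v^*)_{S^c}}_1^2$ contribution which would otherwise degrade the desired $\sigma^2 n_\mathrm{eff}/m$ noise rate to a slower $\sigma^2 r_\mathrm{eff}/m$ rate; a purely linear regularizer suffices for the slow-rate bias term but not for the fast-rate noise term. The quadratic piece is sized precisely to dominate this $\ell_1^2$ contribution at the optimum. The remainder is standard AM--GM and Cauchy--Schwarz bookkeeping on cross-terms between noise and signal, producing the three-term rate stated in the theorem.
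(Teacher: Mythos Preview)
Your proposal is correct and uses the same decomposition and concentration arguments as the paper. The main difference is organizational: the paper verifies the single hypothesis of Lemma~\ref{lemma:gen-bound-seminorm} --- that $\langle v, G\rangle \le \tfrac{1}{2}\Phi(v) + \sqrt{p}\,\norm{v}_\Sigma$ with high probability over $G \sim N(0,\Sigma)$, via exactly your $P_h/P_r$ and $S/S^c$ split --- and then applies that lemma as a black box. The lemma already packages both of your steps internally (it uses the hypothesis once for $\BX^\t\xi/\norm{\xi}_2 \sim N(0,\Sigma)$ to control the noise, and once via Theorem~\ref{theorem:zkss} on noiseless data for generalization), so its conclusion is the final excess-risk bound, not the intermediate inequality $\norm{\hat v - v^*}_\Sigma^2 \lesssim \tfrac{1}{m}(\norm{\BX(\hat v - v^*)}_2 + F(\hat v))^2$ that you wrote down; your separate in-sample derivation is therefore redundant with the lemma's proof.

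One technical wrinkle to watch: your low-dimensional subspace argument should produce $\norm{\hat v - v^*}_\Sigma$, not $\norm{(\hat v - v^*)_S}_\Sigma$, since the restricted $\Sigma$-norm is not in general dominated by the full one and would not close against the left-hand side. The paper handles this by writing $\langle (v-v^*)_S, X\rangle = \langle v - v^*, X_S\rangle = \langle \Sigma^{1/2}(v-v^*), \Sigma^{-1/2} X_S\rangle$ and then bounding $\norm{\Sigma^{-1/2} X_S}_2$ as a rank-$|S|$ chi-square (using $\Sigma^{-1/2}\Sigma_{SS}\Sigma^{-1/2} \preceq I$), which cleanly yields the full $\norm{v-v^*}_\Sigma$ factor.
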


\begin{proof}
Define projection matrix $P := \sum_{i=1}^{n-d_h} u_iu_i^\t$, so that $\vrank(P^\perp) = d_h$ and $\lambda_\text{max}(P\Sigma P) \leq \lambda_{n-d_h}$. For any $v \in \RR^n$ and $X \sim N(0,\Sigma)$, we can bound
\begin{align*}
\langle v-v^*,X\rangle 
&= \langle (v-v^*)_{S^c}, PX\rangle + \langle (v-v^*)_{S^c}, P^\perp X\rangle + \langle (v-v^*)_S, X\rangle \\
&= \langle (v-v^*)_{S^c}, PX\rangle + \langle \Sigma^{1/2}(v-v^*), \Sigma^{-1/2} (P^\perp X)_{S^c}\rangle + \langle \Sigma^{1/2}(v-v^*), \Sigma^{-1/2} X_S \rangle \\
&\leq \norm{(v-v^*)_{S^c}}_1\norm{PX}_\infty + \norm{\Sigma^{1/2}(v-v^*)}_2(\norm{Z}_2 + \norm{W}_2)
\end{align*}
where $PX \sim N(0,P\Sigma P)$, $Z \sim N(0, \Sigma^{-1/2}(P^\perp \Sigma P^\perp)_{S^c S^c}\Sigma^{-1/2})$, and $W \sim N(0, \Sigma^{-1/2} \Sigma_{SS} \Sigma^{-1/2})$. First, since $\max_i(P\Sigma P)_{ii} \leq \lambda_\text{max}(P\Sigma P) \leq \lambda_{n-d_h}$, we have the Gaussian tail bound
$$\Pr\left[\norm{PX}_\infty > \sqrt{\lambda_{n-d_h} \cdot 2\log(12n/\delta)} \right] \leq \delta/12.$$
Second, since
\begin{align*}
\Sigma^{-1/2}(P^\perp\Sigma P^\perp)_{S^c S^c} \Sigma^{-1/2}
&\preceq \Sigma^{-1/2} (P^\perp\Sigma P^\perp)\Sigma^{-1/2} \tag{by Cauchy Interlacing Theorem} \\
&= P^\perp \tag{since $P^\perp$ commutes with $\Sigma$}
\end{align*}
we have that $\norm{Z}_2^2$ is stochastically dominated by $\chi_{d_h}^2$, and thus $$\Pr\left[\norm{Z}_2 > \sqrt{2d_h}\right] \leq e^{-m/4} \leq \delta/12.$$
Third, similarly, since $\Sigma^{-1/2}\Sigma_{SS}\Sigma^{-1/2} \preceq I$ (again by Cauchy Interlacing Theorem) and also $\vrank(\Sigma^{-1/2}\Sigma_{SS}\Sigma^{-1/2}) \leq |S|$, we have that $\norm{W}_2^2$ is stochastically dominated by $\chi_{|S|}^2$, and thus $$\Pr\left[\norm{W}_2^2 > \sqrt{2|S|}\right] \leq e^{-m/4} \leq \delta/12.$$
Combining the above bounds, we have that with probability at least $1-\delta/4$ over $X \sim N(0,\Sigma)$, for all $v \in \RR^n$,
\begin{align*}
\langle v-v^*, X\rangle
&\leq \norm{(v-v^*)_{S^c}}_1 \sqrt{\lambda_{n-d_h} \cdot 2\log(12n/\delta)} + \norm{\Sigma^{1/2}(v-v^*)}_2 (\sqrt{2d_h} + \sqrt{2|S|}).
\end{align*}
We can therefore apply Lemma~\ref{lemma:gen-bound-seminorm} with covariance $\Sigma$, seminorm $\Phi(v) := 2\sqrt{2\lambda_{n-d_h}\log(12n/\delta)}\norm{v_{S^c}}_1$, $p := 4(d_h+|S|)$, ground truth $v^*$, samples $(X_i,y_i)_{i=1}^m$, and failure probability $\delta/4$. By the bound on $|S|$ (Theorem~\ref{theorem:iterpeel-guarantees}) we have $|S|+d_h \leq (7t)^{2t+1}d_l+d_h \leq n_\text{eff}$, so it holds that $m \geq 16p + 196\log(48/\delta)$. Thus, with probability at least $1-2\delta$, we have $$\norm{\hat{v}-v^*}_\Sigma^2 \leq O\left(\frac{\sigma^2 n_\text{eff}}{m} + \frac{(\sigma+\norm{v^*}_\Sigma)\norm{v^*_{S^c}}_1 \sqrt{\lambda_{n-d_h}\log(12n/\delta)}}{\sqrt{m}} + \frac{\norm{v^*_{S^c}}_1^2 \lambda_{n-d_h}\log(12n/\delta)}{m}\right).$$

By the guarantee of $S$ (Theorem~\ref{theorem:iterpeel-guarantees}) and $t$-sparsity of $v^*$, we have $\norm{v^*_{S^c}}_2 \leq 3\lambda_{d_l+1}^{-1/2}\norm{v^*}_\Sigma$, and thus $\norm{v^*_{S^c}}_1 \leq 3\sqrt{t}\lambda_{d_l+1}^{-1/2} \norm{v^*}_\Sigma.$ Substituting into the previous bound, we get
$$\norm{\hat{v}-v^*}_\Sigma^2 \leq O\left(\frac{\sigma^2 n_\text{eff}}{m} + \frac{(\sigma+\norm{v^*}_\Sigma)\norm{v^*}_\Sigma \sqrt{r_\text{eff}}}{\sqrt{m}} + \frac{\norm{v^*}_\Sigma^2 r_\text{eff}}{m}\right)$$
as claimed.
\end{proof}

The limitation of \arlasso{} is that the excess risk bound depends on $\norm{v^*}_\Sigma^2$ rather than just $\sigma^2$. We next show that by a boosting approach, we can exponentially attenuate that dependence, essentially achieving the near-optimal rate of $\sigma^2 n_\text{eff}/m$. The key insight is that after producing an estimate $\hat{v}$ of $v^*$, we can augment the set of covariates with the feature $\langle \BX, \hat{v}\rangle$, and try to predict the response $y - \langle \BX, \hat{v}\rangle$, which is now a $(t+1)$-sparse combination of the features. In standard settings, this is typically a bad idea because it introduces a sparse linear dependence. However, by the Cauchy Interlacing Theorem it increases the number of outlier eigenvalues by at most one \--- so our algorithms still apply. Thus, if we have enough samples that the excess risk bound in Theorem~\ref{theorem:unreg-alg} is non-trivially smaller than $\norm{v^*}_\Sigma^2$, then we can iteratively achieve better and better estimates up to the noise limit. This is precisely what \boar{} does; the precise guarantees are stated in the following theorem, which completes the proof of Theorem~\ref{theorem:main-alg-intro}.

\begin{theorem}\label{theorem:main-alg-poly}
Let $n,t,d_l,d_h,m,L \in \NN$ and $\sigma,\delta>0$. Let $\Sigma: n \times n$ be a positive semi-definite matrix with eigenvalues $\lambda_1 \leq \dots \leq \lambda_n$. Let $(X_i,y_i)_{i=1}^m$ be independent samples where $X_i \sim N(0,\Sigma)$ and $y_i = \langle X_i,v^*\rangle + \xi_i$, for $\xi_i \sim N(0,\sigma^2)$ and a fixed $t$-sparse vector $v^* \in \RR^n$.

Then, given $\Sigma$, $(X_i,y_i)_{i=1}^m$, $t$, $d_l$, and $\delta$, the algorithm \boar{} outputs an estimator $\hat{v}$ with the following properties. 

Let $n_\text{eff} := (7t)^{2t+1}d_l + d_h + \log(48/\delta)$ and let $r_\text{eff} := t(\lambda_{n-d_h}/\lambda_{d_l+1})\log(12n/\delta)$.  There are absolute constants $c_0,C_0>0$ such that the following holds. If $m \geq C_0 L(n_\text{eff} + r_\text{eff})$, then with probability at least $1-\delta$, it holds that $$\norm{\hat{v}-v^*}_\Sigma^2 \leq c_0 \frac{\sigma^2(n_\text{eff}+r_\text{eff})}{m/L} + 2^{-L} \cdot \norm{v^*}_\Sigma^2.$$
Moreover, \boar{} has time complexity $\poly(n,m,t).$
\end{theorem}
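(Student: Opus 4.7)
The plan is to run a one-step induction on the iterates $\hat{s}^{(j)}$, showing that $B_j^2 := \norm{v^* - \hat{s}^{(j)}}_\Sigma^2$ contracts geometrically up to an additive noise floor. Throughout, write $m' := m/L$ for the per-iteration sample budget, since the batches $A^{(j)}$ partition the dataset into $L$ disjoint pieces of size $m'$ each.

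\textbf{Step 1: identify the ground-truth regressor in the augmented space.} At iteration $j$, AdaptivelyRegularizedLasso is run on covariates $(X_i,\langle X_i,\hat{s}^{(j)}\rangle)\in \RR^{n+1}$ with responses $y_i - \langle X_i,\hat{s}^{(j)}\rangle$. Define $w^* := (v^*,-1)\in \RR^{n+1}$; then $w^*$ is $(t+1)$-sparse, realizes the noise model exactly, and by definition of $\hat v^{(j+1)}$ and $\hat s^{(j+1)}$ one checks
\[
\norm{w^*}_{\Sigma^{(j)}}^2 = \norm{v^*-\hat s^{(j)}}_\Sigma^2 = B_j^2,\qquad \norm{\hat w^{(j+1)}-w^*}_{\Sigma^{(j)}}^2 = \norm{\hat s^{(j+1)}-v^*}_\Sigma^2 = B_{j+1}^2.
\]

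\textbf{Step 2: control the effective parameters after augmentation.} The matrix $\Sigma^{(j)}$ is obtained from $\Sigma$ by adding one row/column, so by the Cauchy Interlacing Theorem its eigenvalues $\mu_1\leq\dots\leq\mu_{n+1}$ satisfy $\mu_{d_l+2}\geq \lambda_{d_l+1}$ and $\mu_{n+1-d_h}\leq \lambda_{n-d_h}$. Therefore, calling ARLasso with parameters $(t+1, d_l+1, d_h)$ and failure probability $\delta/L$ triggers Theorem~\ref{theorem:unreg-alg} with effective quantities $n_\text{eff}^{(j)}$ and $r_\text{eff}^{(j)}$ that are only a $t$-dependent constant multiple of $n_\text{eff}$ and $r_\text{eff}$ (the increase from the exponent $(7(t+1))^{2(t+1)+1}$ and the logarithmic $L/\delta$ factor can be absorbed into $C_0, c_0$).

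\textbf{Step 3: derive a geometric recurrence for $B_j^2$.} Applying Theorem~\ref{theorem:unreg-alg} to iteration $j$ gives
\[
B_{j+1}^2 \;\leq\; c\left(\frac{\sigma^2 n_\text{eff}^{(j)}}{m'} + \frac{(\sigma+B_j)\,B_j\sqrt{r_\text{eff}^{(j)}}}{\sqrt{m'}} + \frac{B_j^2\, r_\text{eff}^{(j)}}{m'}\right).
\]
Split the cross term as $\sigma B_j\sqrt{r_\text{eff}^{(j)}/m'} + B_j^2\sqrt{r_\text{eff}^{(j)}/m'}$ and apply AM-GM ($ab\leq a^2+b^2/4$) to the first piece. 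Choosing $C_0$ large enough that $r_\text{eff}^{(j)}/m' \leq 1/(C_1 c)$ for an absolute constant $C_1$, the combined coefficient on $B_j^2$ falls below $1/2$, yielding
\[
B_{j+1}^2 \;\leq\; C\cdot \frac{\sigma^2(n_\text{eff}+r_\text{eff})}{m'} + \tfrac{1}{2}\,B_j^2.
\]
Unrolling and using $B_0 = \norm{v^*}_\Sigma$ (since $\hat s^{(0)} = 0$),
\[
B_L^2 \;\leq\; 2C\cdot\frac{\sigma^2(n_\text{eff}+r_\text{eff})}{m'} + 2^{-L}\norm{v^*}_\Sigma^2 \;=\; c_0\cdot\frac{\sigma^2(n_\text{eff}+r_\text{eff})\, L}{m} + 2^{-L}\norm{v^*}_\Sigma^2.
\]
A union bound over the $L$ calls to ARLasso (each with failure probability $\delta/L$) gives the overall $1-\delta$ guarantee. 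The runtime is $L$ iterations of: one eigendecomposition, one call to \iterpeel{}, and one convex program, all polynomial in $n,m,t$.

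\textbf{Main obstacle.} The crux is the contraction step: one must ensure that the cross term $(\sigma+B_j)B_j\sqrt{r_\text{eff}^{(j)}/m'}$ does not destroy contraction. This is precisely why the theorem requires $m \geq C_0 L(n_\text{eff}+r_\text{eff})$, so that $r_\text{eff}/m'$ is small enough to tame both the $\sigma B_j$ term (via AM-GM) and the $B_j^2\sqrt{r_\text{eff}/m'}$ term directly. A secondary technicality is that the sparsity and outlier counts grow by one per iteration if one naively re-augments; the Cauchy interlacing argument in Step~2 is what keeps $n_\text{eff}^{(j)}$ and $r_\text{eff}^{(j)}$ uniformly bounded in $j$ by constants (depending only on $t$) times $n_\text{eff}$ and $r_\text{eff}$, so that the geometric sum does not accumulate across iterations.
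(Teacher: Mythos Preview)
Your proposal follows essentially the same approach as the paper's proof: reduce to the augmented $(n{+}1)$-dimensional problem with ground truth $(v^*,-1)$, invoke Theorem~\ref{theorem:unreg-alg} at each iteration, control the augmented eigenvalue ratio via Cauchy interlacing, use AM--GM to convert the slow-rate bound into a geometric contraction $B_{j+1}^2 \leq \tfrac12 B_j^2 + O(\sigma^2(n_\text{eff}+r_\text{eff})/m')$, and unroll.

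One small slip to fix in Step~2: Cauchy interlacing gives $\mu_i \leq \lambda_i \leq \mu_{i+1}$, so while $\mu_{d_l+2}\geq\lambda_{d_l+1}$ is correct, the other inequality goes the wrong way --- you get $\mu_{n+1-d_h}\geq \lambda_{n-d_h}$, not $\leq$. The fix (which is what the paper does) is to take $d_h+1$ upper outliers in the augmented problem, so that the relevant eigenvalue is $\mu_{(n+1)-(d_h+1)}=\mu_{n-d_h}\leq \lambda_{n-d_h}$. With that correction your bounds $r_\text{eff}^{(j)}\leq O(1)\cdot r_\text{eff}$ and $n_\text{eff}^{(j)}\leq O_t(1)\cdot n_\text{eff}$ hold and the rest of the argument goes through unchanged.
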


\begin{proof}
Let $(A_0,\dots,A_{L-1})$ be an partition of $[m]$ into $L$ sets of size $m/L$. The idea of the algorithm is to compute vectors $\hat{v}^{(1)},\dots,\hat{v}^{(L)}$ where each $v^{(i)}$ is an estimate of $v^* - \sum_{j=1}^{i-1} \hat{v}^{(j)}$. Concretely, fix some $0 \leq j \leq L-1$ and suppose that we have computed some vectors $\hat{v}^{(1)},\dots,\hat{v}^{(j)}$. Set $\hat{s}^{(j)} := \hat{v}^{(1)}+\dots+\hat{v}^{(j)}$. Define a matrix $\Sigma^{(j)}: (n+1) \times (n+1)$ by $$\Sigma^{(j)} := \begin{bmatrix} \Sigma & (\hat{s}^{(j)})^\t \Sigma \\ \Sigma \hat{s}^{(j)} & (\hat{s}^{(j)})^\t \Sigma (\hat{s}^{(j)}) \end{bmatrix}.$$
Thus, for example, $\Sigma^{(0)}$ has zeroes in the last row and last column. Now for each $i \in A_j$, define $(X^{(j)}_i,y^{(j)}_i)$ by $$X^{(j)}_i := (X_i, \langle X_i, \hat{s}^{(j)}\rangle)$$
$$y^{(j)}_i := y_i - \langle X_i, \hat{s}^{(j)}\rangle.$$
By construction, the $m/L$ samples $(X_i^{(j)},y_i^{(j)})_{i \in A_j}$ are independent and distributed as $X_i^{(j)} \sim N(0, \Sigma^{(j)})$ and $y_i^{(j)} = \langle X_i^{(j)}, (v^*, -1)\rangle + \xi_i$. Let $\lambda_1^{(j)} \leq \dots \leq \lambda_{n+1}^{(j)}$ be the eigenvalues of $\Sigma^{(j)}$. 

Now we apply Theorem~\ref{theorem:unreg-alg} with covariance $\Sigma^{(j)}$, samples $(X_i^{(j)},y_i^{(j)})_{i \in A_j}$, sparsity $t+1$, outlier counts $d_l+1$ and $d_h+1$, and failure probability $\delta/L$; let $n^{(j)}_\text{eff}$ and $r^{(j)}_\text{eff}$ be the induced parameters defined in that theorem statement, and let $c,C$ be the constants. By the Cauchy Interlacing Theorem, we have $\lambda^{(j)}_{d_l+2} \geq \lambda_{d_l+1}$ and similarly $\lambda^{(j)}_{n+1-(d_h+1)} \leq \lambda_{n-d_h}$. Thus $r^{(j)}_\text{eff} \leq 2r_\text{eff}$. Also $n^{(j)}_\text{eff} \leq n_\text{eff}$. Thus, if the constant $C_0$ is chosen appropriately large, then $m/L \geq 16cr^{(j)}_\text{eff}$ and also $m/L \geq Cn^{(j)}_\text{eff}$. Hence (by the error guarantee of Theorem~\ref{theorem:unreg-alg}) with probability at least $1-\delta/L$ we obtain a vector $\hat{w}^{(j+1)}$ such that
\begin{align}
\norm{\hat{w}^{(j+1)} - (v^*,-1)}_{\Sigma^{(j)}}^2 
&\leq \frac{c\sigma^2 n_\text{eff}^{(j)}}{m/L} + c\norm{(v^*,-1)}_{\Sigma^{(j)}}^2 \sqrt{\frac{r_\text{eff}^{(j)}}{m/L}} + c\sigma \norm{(v^*,-1)}_{\Sigma^{(j)}} \sqrt{\frac{r_\text{eff}^{(j)}}{m/L}} \nonumber\\
&\leq \frac{2c\sigma^2 n_\text{eff}}{m/L} + \frac{\norm{(v^*,-1)}_{\Sigma^{(j)}}^2}{4} + \left(\frac{\norm{(v^*,-1)}_{\Sigma^{(j)}}^2}{4} + \frac{4c^2\sigma^2r_\text{eff}}{m/L} \right)\nonumber\\
&\leq \frac{c_0}{2}\frac{\sigma^2(n_\text{eff}+r_\text{eff})}{m/L} + \frac{\norm{(v^*,-1)}_{\Sigma^{(j)}}^2}{2}
\label{eq:decay-bound}
\end{align}
where the second inequality uses AM-GM to bound the third term, and the third inequality is by choosing $c_0 \geq 4c+8c^2$.

But now define $\hat{v}^{(j+1)} := \hat{w}^{(j+1)}_{[n]} + \hat{w}^{(j+1)}_{n+1} \hat{s}^{(j)}$. Then we observe that $\norm{(v^*,-1)}_{\Sigma^{(j)}}^2 = \norm{v^* - \hat{s}^{(j)}}_\Sigma^2$ and $\norm{\hat{w}^{(j+1)} - (v^*,-1)}_{\Sigma^{(j)}}^2 = \norm{\hat{v}^{(j+1)} - (v^* - \hat{s}^{(j)})}_\Sigma^2 = \norm{v^* - \hat{s}^{(j+1)}}_\Sigma^2$ where $\hat{s}^{(j+1)} = \hat{v}^{(1)} + \dots + \hat{v}^{(j+1)}$. So (\ref{eq:decay-bound}) is equivalent to $$\norm{v^* - \hat{s}^{(j+1)}}_\Sigma^2 \leq \frac{c_0}{2}\frac{\sigma^2(n_\text{eff}+r_\text{eff})}{m/L} + \frac{1}{2}\norm{v^* - \hat{s}^{(j)}}_\Sigma^2.$$
Inductively, we conclude that $$\norm{v^* - \hat{s}^{(L)}}_\Sigma^2 \leq c_0\frac{\sigma^2(n_\text{eff}+r_\text{eff})}{m/L}  + 2^{-L} \norm{v^*}_\Sigma^2$$ as desired. The time complexity (see Algorithm~\ref{alg:main} for full pseudocode) is dominated by $L$ eigendecompositions of $n \times n$ Hermitian matrices (each of which takes time $O(n^3)$ by e.g. the QR algorithm), as well as $L$ convex optimizations (each of which takes time $\tilde{O}(n^3)$ to solve to inverse-polynomial accuracy \cite{lee2015faster}, which is sufficient for the correctness proof).
\end{proof}

\subsection{An alternative algorithm (proof of Theorem~\ref{theorem:brute-alg-intro})}\label{section:alternative-alg}

In this section we prove Theorem~\ref{theorem:brute-alg-intro}, which essentially states that the sample complexity dependence on $d_l$ in \boar{} can be removed at the cost of a time complexity depending on $d_l^t$. See Algorithm~\ref{alg:alternate} for the pseudocode of how we modify \arlasso{}: essentially, we brute force search over all size-$t$ subsets of the set $S$ produced by \iterpeel{}, construct an appropriate dictionary for each of these $\binom{|S|}{t}$ subsets, and then perform a final model selection step (with fresh samples) to pick the best dictionary/estimator. The boosting step is exactly identical to that in \boar{}.

\begin{lemma}\label{lemma:large-small-l1-rep}
Let $n,t,d \in \NN$. Let $\Sigma: n \times n$ be a positive semi-definite matrix with eigenvalues $\lambda_1 \leq \dots \leq \lambda_n$. Then there is a family $\mathcal{D} \subseteq \RR^{n \times (n+t)}$ of size $|\mathcal{D}| \leq (7t)^{2t^2+t}(2d)^t$, consisting entirely of $n \times (n+t)$ matrices with the form $$D := \begin{bmatrix} I_n & d_1 & \dots & d_t \end{bmatrix},$$ with the following property. For any $t$-sparse $v \in \RR^n$, there is some $D \in \mathcal{D}$ and $w \in \RR^{n+k}$ with $v = Dw$ and  $$\norm{w}_1 \leq \frac{7t^{1/2}}{\sqrt{\lambda_{d+1}}} \sqrt{v^\t\Sigma v}.$$
\end{lemma}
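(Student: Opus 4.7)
The plan is to parallel the proof of Lemma~\ref{lemma:small-ker-dictionary-intro}. First, I would run \iterpeel{} on inputs $\Sigma$, $d$, $t$ to obtain the set $S \subseteq [n]$ from Theorem~\ref{theorem:iterpeel-guarantees}: it has $|S| \leq (7t)^{2t+1}d$ and satisfies $\norm{v_{S^c}}_2 \leq 3\lambda_{d+1}^{-1/2}\norm{v}_\Sigma$ for every $t$-sparse $v \in \RR^n$.

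Second, I would form the family $\mathcal{D}$ by enumerating subsets $T \subseteq S$ with $|T| \leq t$. For each such $T$, define $D_T := \begin{bmatrix} I_n & d_1 & \dots & d_t \end{bmatrix}$, where $d_1,\dots,d_{|T|}$ are a $\Sigma$-orthonormal basis for $V_T := \vspan\{e_i : i \in T\}$ (embedded into $\RR^n$) and $d_{|T|+1} = \dots = d_t = 0$. The count is $\sum_{k=0}^t \binom{|S|}{k} \leq 2\,|S|^t \leq (7t)^{2t^2+t}(2d)^t$, matching the stated bound.

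Third, given a $t$-sparse $v \in \RR^n$, I would pick $T := \supp(v) \cap S$ so $|T| \leq t$ and $D_T \in \mathcal{D}$. Write $v = v|_T + v|_{T^c}$. The tail $v|_{T^c}$ is supported on $\supp(v) \cap S^c$ (at most $t$ coordinates), so by Theorem~\ref{theorem:iterpeel-guarantees} we get $\norm{v|_{T^c}}_1 \leq \sqrt{t}\,\norm{v|_{T^c}}_2 \leq 3\sqrt{t/\lambda_{d+1}}\,\norm{v}_\Sigma$; these entries go in the standard-basis block of $w$. For the head $v|_T \in V_T$, I use the $\Sigma$-orthonormal basis to write $v|_T = \sum_i \alpha_i d_i$, giving $\norm{\alpha}_1 \leq \sqrt{t}\,\norm{v|_T}_\Sigma$.

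The main obstacle is bounding $\norm{v|_T}_\Sigma$ in terms of $\norm{v}_\Sigma$ without a $\sqrt{\lambda_n}$ factor creeping in: the naive inequality $\norm{v|_T}_\Sigma \leq \norm{v}_\Sigma + \sqrt{\lambda_n}\,\norm{v|_{T^c}}_2$ would recover only Lemma~\ref{lemma:small-ker-dictionary-intro}'s weaker $\sqrt{\lambda_n/\lambda_{d+1}}$ guarantee. To bypass this, I would instead set $\alpha_i := \langle v, d_i\rangle_\Sigma$, so that $\pi := \sum_i \alpha_i d_i$ is the $\Sigma$-orthogonal projection of the entire vector $v$ onto $V_T$; then $\norm{\pi}_\Sigma \leq \norm{v}_\Sigma$ by the Pythagorean identity $\norm{v}_\Sigma^2 = \norm{\pi}_\Sigma^2 + \norm{v - \pi}_\Sigma^2$, and hence $\norm{\alpha}_1 \leq \sqrt{t}\,\norm{v}_\Sigma$ directly. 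The standard-basis block must then absorb the full residual $v - \pi$: its contribution on $T^c$ is still $\norm{v|_{T^c}}_1 \leq 3\sqrt{t/\lambda_{d+1}}\,\norm{v}_\Sigma$, while its contribution on $T$ is controlled using the fact that $P_{ii} \geq 1 - 1/(9t^2)$ for $i \in S^c$ (from \iterpeel{}) forces $\pi$ to closely approximate $v|_T$ in the relevant $\ell_1$ sense, with cost again absorbable into a $\sqrt{t/\lambda_{d+1}}\,\norm{v}_\Sigma$ budget. Summing the three contributions yields $\norm{w}_1 \leq 7\sqrt{t/\lambda_{d+1}}\,\norm{v}_\Sigma$, as required.
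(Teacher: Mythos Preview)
There is a genuine gap. Your bound $\norm{\alpha}_1 \leq \sqrt{t}\,\norm{v}_\Sigma$ for the $\Sigma$-orthonormal coefficients is correct, but it is \emph{not} dominated by the target $7\sqrt{t}\,\lambda_{d+1}^{-1/2}\norm{v}_\Sigma$: when $\lambda_{d+1}$ is large you are off by a factor of $\sqrt{\lambda_{d+1}}$. A two-line counterexample: take $n=2$, $t=1$, $d=1$, $\Sigma=\diag(L,\epsilon)$ with $L$ large and $\epsilon$ small. Then $\lambda_{d+1}=L$, \iterpeel{} returns $S=\{2\}$, and for $v=e_2$ your only useful dictionary element is $D_{\{2\}}=[\,I_2\mid e_2/\sqrt{\epsilon}\,]$. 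Every $w$ with $D_{\{2\}}w=e_2$ has $\norm{w}_1\geq \min(1,\sqrt{\epsilon})=\sqrt{\epsilon}$, whereas the lemma demands $\norm{w}_1\leq 7\sqrt{\epsilon/L}$. So the construction fails once $L>49$, regardless of how you choose $w$.

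The paper's fix is exactly at this point: instead of a $\Sigma$-orthonormal basis for $V_T$, it uses a $\overline{\Sigma}$-orthonormal basis, where $\overline{\Sigma}:=\lambda_{d+1}^{-1}\sum_i \min(\lambda_i,\lambda_{d+1})\,u_iu_i^\t$. This truncated covariance satisfies both $\overline{\Sigma}\preceq \lambda_{d+1}^{-1}\Sigma$ and $\overline{\Sigma}\preceq I_n$. Writing $v_S=\sum_i b_i d_i$ in this basis gives $\norm{b}_1\leq\sqrt{t}\,\norm{v_S}_{\overline{\Sigma}}$, and then
\[
\norm{v_S}_{\overline{\Sigma}} \;\leq\; \norm{v}_{\overline{\Sigma}} + \norm{v_{S^c}}_{\overline{\Sigma}} \;\leq\; \lambda_{d+1}^{-1/2}\norm{v}_\Sigma + \norm{v_{S^c}}_2 \;\leq\; 4\lambda_{d+1}^{-1/2}\norm{v}_\Sigma,
\]
where the middle step uses the two operator inequalities above and the last uses the \iterpeel{} guarantee. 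This is how the $\lambda_{d+1}^{-1/2}$ factor enters; it cannot be manufactured from a $\Sigma$-orthonormal basis and a projection trick. (Your attempt to control $\norm{(v-\pi)|_T}_1$ via $P_{ii}\geq 1-1/(9t^2)$ for $i\in S^c$ also does not go through, since the coordinates in question lie in $T\subseteq S$, not $S^c$; but the decisive issue is the one above.)
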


\begin{proof}
Let $u_1,\dots,u_n \in \RR^n$ be the eigenvectors of $\Sigma$ corresponding to eigenvalues $\lambda_1,\dots,\lambda_n$ respectively, so that $\Sigma = \sum_{i=1}^n \lambda_i u_iu_i^\t$. Define $\overline{\Sigma} := \lambda_{d+1}^{-1}\sum_{i=1}^n \min(\lambda_i, \lambda_{d+1}) u_i u_i^\t$. Let $S$ be the output of \iterpeel{$\Sigma, d_l, t$}, and let $\mathcal{D} := \{D(T): T \in \binom{S}{t}\}$, where for any $T \in \binom{S}{t}$, we let $\{d_1,\dots,d_t\}$ be a $\overline{\Sigma}$-orthonormal basis for $\vspan\{e_i:i \in T\}$, and let $D(T)$ be the $n \times (n+t)$ matrix with columns $e_1,\dots,e_n,d_1,\dots,d_t$. The bound on $|\mathcal{D}|$ follows from Theorem~\ref{theorem:iterpeel-guarantees}.

For any $t$-sparse $v \in \RR^n$, pick the matrix $D \in \mathcal{D}$ indexed by any $T \in \binom{S}{t}$ with $S\cap\supp(v) \subseteq T$. Let $d_1,\dots,d_t \in \RR^n$ be the last $t$ columns of $D$. Then there are coefficients $b_1,\dots,b_t$ so that we can write
$v_S = \sum_{i=1}^t b_i d_i$. Since $d_i^\t\overline{\Sigma} d_{i'} = \mathbbm{1}[i=i']$ for all $i,i' \in [t]$, we have $v_S^\t \overline{\Sigma} v_S = \sum_{i=1}^t b_i^2$. Hence, $\norm{b}_1 \leq \sqrt{t}\sqrt{v_S^\t\overline{\Sigma}v_S}.$ But we can bound
\begin{align*}
\sqrt{v_S^\t\overline{\Sigma}v_S}
&= \norm{\overline{\Sigma}^{1/2} v_S}_2 \\
&\leq \norm{\overline{\Sigma}^{1/2} v}_2 + \norm{\overline{\Sigma}^{1/2}v_{S^c}}_2 \tag{by triangle inequality} \\
&\leq \lambda_{d+1}^{-1/2}\norm{\Sigma^{1/2} v}_2 + \norm{v_{S^c}}_2 \tag{by $\overline{\Sigma} \preceq \lambda_{d+1}^{-1}\Sigma$ and $\overline{\Sigma} \preceq I_n$} \\
&\leq \lambda_{d+1}^{-1/2}\norm{\Sigma^{1/2}v}_2 + 3\lambda_{d+1}^{-1/2}\sqrt{v^\t \Sigma v} \tag{by Theorem~\ref{theorem:iterpeel-guarantees} and $t$-sparsity of $v$} \\
&\leq 4\lambda_{d+1}^{-1/2}\sqrt{v^\t\Sigma v}. \nonumber
\end{align*}
We conclude that $\norm{b}_1 \leq 4\sqrt{t}\lambda_{d+1}^{-1/2}\sqrt{v^\t\Sigma v}$. Thus, if we define $$w := \sum_{i \in [n]\setminus S} v_i e_i + \sum_{i=1}^t b_i e_{n+i},$$
where here $e_1,\dots,e_{n+k}$ refer to the standard basis vectors in $\RR^{n+t}$, then we have $Dw = v_{[n]\setminus S} + \sum_{i=1}^t b_i d_{i} = v$, and also $$\norm{w}_1 \leq \norm{b}_1 + \sum_{i \in [n]\setminus S} |v_i| \leq 4\sqrt{t}\lambda_{d+1}^{-1/2}\sqrt{v^\t\Sigma v} + \sqrt{t}\norm{v_{S^c}}_2 \leq \frac{7\sqrt{t}}{\lambda_{d+1}^{1/2}} \sqrt{v^\t\Sigma v}$$
as desired.
\end{proof}

\begin{algorithm2e}[t]
  \SetAlgoLined\DontPrintSemicolon
\caption{Alternative algorithm to solve sparse linear regression when covariate eigenspectrum has few outliers}
\label{alg:alternate}
\SetKwFunction{auglasso}{AugmentedDictionaryLasso}
\SetKwComment{Comment}{/* }{ */}
\myalg{\auglasso{$\Sigma$, $(X_i,y_i)_{i=1}^m$, $t$, $d_l$, $\delta$}}{
\KwData{Covariance matrix $\Sigma: n \times n$, samples $(X_i,y_i)_{i=1}^m$, sparsity $t$, small eigenvalue count $d_l$, failure probability $\delta$}
\KwResult{Estimate $\hat{v}$ of unknown sparse regressor, satisfying Theorem~\ref{theorem:small-large-alg-fast}}
$\sum_{i=1}^n \lambda_i u_i u_i^\t \gets$ eigendecomposition of $\Sigma$\;
$S \gets$ \iterpeel{$\Sigma, d_l, t$} \Comment*[r]{See Algorithm~\ref{alg:intro}}
$\overline{\Sigma} \gets \lambda_{d_l+1}^{-1}\sum_{i=1}^n \min(\lambda_i,\lambda_{d_l+1})u_iu_i^\t$\;
\For{$T \in \binom{S}{[t]}$}{
    $d_1^{(T)},\dots,d_t^{(T)} \gets$ $\overline{\Sigma}$-orthogonal basis for $\vspan\{e_i: i \in T\}$\;
    $D(T) \gets \begin{bmatrix} I_n & d_1^{(T)} & \dots & d_t^{(T)}\end{bmatrix}$\;
    Compute 
    \begin{align*}
    \hat{w}(T)
    &\gets \argmin_{w \in \RR^{n+t}} \Bigg[\sum_{i=1}^{m/2} \left(\langle X_i, D(T)w\rangle - y_{1:m/2}\right)^2 \\
    &+ 8\lambda_{n-d} \log(8n/\delta) \norm{w}_1^2 + 2\sqrt{2\lambda_{n-d}\log(8n/\delta)}\norm{y_{1:m/2}}_2\norm{w}_1\Bigg]
    \end{align*}
    
}
Select best hypothesis $$\hat{T} \gets \argmin_{T \in \binom{S}{[t]}} \sum_{i=m/2+1}^{m} \left(\langle X_i, D(T)\hat{w}(T)\rangle - y_i\right)^2$$\;
\Return $D(\hat{T})\hat{w}(\hat{T})$\;
}
\end{algorithm2e}

\begin{theorem}\label{theorem:small-large-alg-fast}
Let $n,t,d_l,d_h,m\in \NN$ and let $\Sigma: n \times n$ be a positive semi-definite matrix with eigenvalues $\lambda_1 \leq \dots \leq \lambda_n$. Let $(X_i,y_i)_{i=1}^m$ be independent samples where $X_i \sim N(0,\Sigma)$ and $y_i = \langle X_i,v^*\rangle + \xi_i$, for $\xi_i \sim N(0,\sigma^2)$ and a fixed $t$-sparse vector $v^* \in \RR^n$. Set $k := t(7t)^{2t^2+t}d_l^t$ and let $\mathcal{D}$ be the family of matrices (of size at most $k$) guaranteed by Lemma~\ref{lemma:large-small-l1-rep}.

Let $\delta>0$. For every $D \in \mathcal{D}$, define \begin{equation}
\hat{w}(D) \in \argmin_{w \in \RR^{n+t}} \norm{\mathbb{X}^{(1)}Dw - y_{1:m/2}}_2^2 + 8\lambda_{n-d} \log(8n/\delta) \norm{w}_1^2 + 2\sqrt{2\lambda_{n-d}\log(8n/\delta)}\norm{y_{1:m/2}}_2\norm{w}_1
\label{eq:D-lasso}
\end{equation}
where $\mathbb{X}^{(1)}: (m/2)\times n$ is the matrix with rows $X_1,\dots,X_{m/2}$, and define $\hat{v} = \hat{D}\hat{w}(\hat{D})$ where $$\hat{D} \in \argmin_{D \in \mathcal{D}} \norm{\mathbb{X}^{(2)}D\hat{w}(D) - y_{m/2+1:m}}_2^2$$ where $\mathbb{X}^{(2)}: (m/2)\times n$ is the matrix with rows $X_{m/2+1},\dots,X_m$. 

Let $n_\text{eff} := t^2\log(t) + t\log(d_l) + d_h + \log(48/\delta)$ and let $r_\text{eff} := t(\lambda_{n-d_h}/\lambda_{d_l+1}) \log(8n/\delta)$. There are absolute constants $c,C>0$ so that the following holds. If $m \geq Cn_\text{eff}$, then with probability at least $1-3\delta$ it holds that $$\norm{\hat{v} - v^*}_\Sigma^2 \leq c\left(\frac{\sigma^2 n_\text{eff}}{m} + \norm{v^*}_\Sigma^2\left(\frac{ r_\text{eff}}{m} + \sqrt{\frac{ r_\text{eff}}{m}}\right) + \sigma\norm{v^*}_\Sigma \sqrt{\frac{r_\text{eff}}{m}}\right).$$
%18816R^2 \cdot \frac{t^3\log(8n/\delta)}{m}\cdot \frac{\lambda_{n-d_h}}{\lambda_{d_l+1}}.$$

%$$\norm{\hat{v} - v^*}_\Sigma^2 \leq c\left(\frac{\sigma^2 n_\text{eff}}{m} + \frac{\sqrt{\gamma}\sigma \norm{v^*}_\Sigma}{\sqrt{m}} + \frac{\gamma\norm{v^*}_\Sigma^2}{m}\right).$$

\end{theorem}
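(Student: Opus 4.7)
The plan is to combine the augmented-dictionary $\ell_1$-representation from Lemma~\ref{lemma:large-small-l1-rep} with the ``slow rate'' Lasso analysis strategy developed in the proof of Theorem~\ref{theorem:unreg-alg}, and to use the held-out samples for model selection to handle the fact that we do not know a priori which dictionary $D \in \mathcal{D}$ yields an $\ell_1$-efficient representation of $v^*$. By Lemma~\ref{lemma:large-small-l1-rep}, there exists some $D^* \in \mathcal{D}$ and $w^* \in \RR^{n+t}$ with $D^* w^* = v^*$ and $\norm{w^*}_1 \leq 7\sqrt{t}\norm{v^*}_\Sigma/\sqrt{\lambda_{d_l+1}}$. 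The goal is first to show that $\hat{w}(D^*)$ is a good estimator (so that $D^*\hat{w}(D^*)$ is close to $v^*$ in $\Sigma$-norm), and then to transfer the guarantee to the actual output $\hat{v} = \hat{D}\hat{w}(\hat{D})$ via the model-selection step on the second half of the samples.

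For the Lasso step at $D = D^*$, I would repeat the decomposition used in the proof of Theorem~\ref{theorem:unreg-alg}: letting $P$ denote the orthogonal projection onto the top $n - d_h$ eigenspaces of $\Sigma$, decompose
\[
\langle w - w^*, D^{*\t} X\rangle = \langle w - w^*, D^{*\t} PX\rangle + \langle w - w^*, D^{*\t} P^\perp X\rangle.
\]
The first term is bounded by $\norm{w-w^*}_1 \cdot \norm{D^{*\t}PX}_\infty$, and each coordinate of $D^{*\t}PX$ is Gaussian with variance controlled by $\lambda_{n-d_h}$: for the $n$ standard-basis columns of $D^*$ this is immediate from $\lambda_\text{max}(P\Sigma P)\leq\lambda_{n-d_h}$, and for the $t$ added $\overline\Sigma$-orthonormal columns $d_j$ one uses the support restriction $\supp(d_j)\subseteq S$ together with $\lambda_{d_l+1}\overline\Sigma\preceq\Sigma$ to control the same variance. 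The second term is bounded by $\norm{D^*(w-w^*)}_\Sigma$ times the $\ell_2$-norm of a rank-$d_h$ Gaussian, exactly as in Theorem~\ref{theorem:unreg-alg}. Substituting the resulting seminorm upper bound $F(w)$ into Lemma~\ref{lemma:gen-bound-seminorm} produces precisely the regularizer shape $a\norm{w}_1^2 + b\norm{y_{1:m/2}}_2\norm{w}_1$ in (\ref{eq:D-lasso}), once one uses $\chi^2$-concentration to replace $\norm{y_{1:m/2}}_2$ by its typical value $\Theta(\sqrt{m}(\sigma+\norm{v^*}_\Sigma))$ (which plays the role of a data-driven proxy for the unknown $\norm{v^*}_\Sigma$). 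Plugging in the $\norm{w^*}_1$ bound then yields the desired slow-rate control on $\norm{D^*\hat{w}(D^*) - v^*}_\Sigma^2$: the $r_\text{eff}$ term arises from $\lambda_{n-d_h}\log(8n/\delta)\norm{w^*}_1^2 = O(r_\text{eff})\norm{v^*}_\Sigma^2$, and the $d_h + \log(1/\delta)$ piece of $n_\text{eff}$ comes from the rank-$d_h$ Gaussian and the failure-probability tail.

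For the model-selection step, condition on $\mathbb{X}^{(1)}$ and $y_{1:m/2}$, so that each predictor $D\hat{w}(D)$ is a fixed vector in $\RR^n$ and there are only $|\mathcal{D}| \leq t(7t)^{2t^2+t}(2d_l)^t$ candidates. For each candidate $\hat u$, $\chi^2$-type concentration of $\norm{\mathbb{X}^{(2)} \hat u - y_{m/2+1:m}}_2^2$ around its expectation $(m/2)(\norm{\hat u - v^*}_\Sigma^2 + \sigma^2)$ combined with a union bound over $\mathcal{D}$ shows that the empirical squared loss approximates the true excess risk uniformly over $D$, up to an additive slack of order $\sigma\norm{\hat u - v^*}_\Sigma\sqrt{\log|\mathcal{D}|/m} + \sigma^2\log|\mathcal{D}|/m$. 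Since $\hat{D}$ is by construction the empirical minimizer and $\log|\mathcal{D}| = O(t^2\log t + t\log d_l)$ is already absorbed into $n_\text{eff}$, the excess risk of $\hat{v}$ is within a constant factor of that of $D^*\hat{w}(D^*)$ plus this model-selection slack, which gives the claimed bound.

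I expect the main obstacle to be the Lasso step for $D^*$: verifying that the contributions from the $t$ added dictionary columns $d_j$ do not spoil the Gaussian-width and $\ell_\infty$-tail estimates that carried over so cleanly from Theorem~\ref{theorem:unreg-alg}. In particular, although the individual $\norm{d_j}_2$ need not be bounded by a constant, both $\norm{d_j^\t PX}$ (entering the H\"older bound on the first term) and $\norm{D^*(w - w^*)}_\Sigma$ (entering the low-rank Gaussian term) have to be controlled using the $\overline\Sigma$-orthonormality of the $d_j$ together with $\lambda_{d_l+1}\overline\Sigma\preceq\Sigma$ and $\supp(d_j)\subseteq S$, so that the only factor of $\lambda_{d_l+1}^{-1}$ in the final bound is the one already visible in $r_\text{eff}$. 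By comparison, the model-selection step should be routine.
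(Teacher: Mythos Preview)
Your two-step plan (Lasso on the ``correct'' dictionary $D^*$ via Lemma~\ref{lemma:gen-bound-seminorm}, then finite-class model selection on held-out data) is exactly the structure of the paper's proof, and your model-selection step matches the paper's invocation of Lemma~\ref{lemma:finite-model-selection} essentially verbatim.

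The one genuine difference is in how the seminorm bound (the analogue of Claim~\ref{claim:f-bound}) is obtained. You propose to mimic Theorem~\ref{theorem:unreg-alg} by pulling the spectral projection $P$ of $\Sigma$ back through $D^*$, and then arguing column-by-column that $\Var(d_j^\t PX)\le\lambda_{n-d_h}$ for the $t$ extra columns. This does work: the key observation (which you gesture at but do not state) is that $P\Sigma P\preceq\lambda_{n-d_h}\,\overline\Sigma$, since both sides are diagonal in the $u_i$-basis and one checks eigenvalue-by-eigenvalue; combined with $d_j^\t\overline\Sigma d_j=1$ this gives the required variance bound. The support restriction $\supp(d_j)\subseteq S$ is not actually needed here. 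The paper instead works entirely in the $(n+t)$-dimensional space: it sets $\Gamma=(D^*)^\t\Sigma D^*$, notes that $\Sigma$ is a principal submatrix of $\Gamma$ (since the first $n$ columns of $D^*$ are $I_n$), and uses Cauchy interlacing to conclude $\gamma_{n-d_h}\le\lambda_{n-d_h}$. The spectral projection is then taken with respect to $\Gamma$, which makes the $\ell_\infty$ bound uniform over all $n+t$ coordinates without any case analysis on the column type. The price is that the low-rank piece has rank $d_h+t$ rather than your $d_h$, but this is harmless since $t$ is already absorbed in $n_\text{eff}$. In short: your route is correct and slightly tighter in the rank term; the paper's route is cleaner and completely sidesteps the ``main obstacle'' you flag.
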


Let $D^* \in \mathcal{D}$ and $w^* \in \RR^{n+t}$ be the matrix and vector guaranteed by Lemma~\ref{lemma:large-small-l1-rep} for the $t$-sparse vector $v^*$. Let $\Gamma = (D^*)^\t\Sigma D^*$ with eigenvalues $\gamma_1 \leq \dots \leq \gamma_{n+t}$. We make the following claim:

\begin{claim}\label{claim:f-bound}
With probability at least $1-\delta/4$ over $G \sim N(0,\Gamma)$, it holds uniformly in $w \in \RR^{n+t}$ that $$\langle w-w^*,G\rangle \leq \norm{w-w^*}_1 \sqrt{\lambda_{n-d_h} \cdot 2\log(8n/\delta)} + \norm{w-w^*}_\Gamma \sqrt{2(d_h+t)}.$$
\end{claim}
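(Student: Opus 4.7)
The plan is to split $G$ orthogonally, using the eigenstructure of $\Gamma$, into a piece with controlled $\ell_\infty$-norm (to pair with $\norm{w-w^*}_1$ via H\"older) and a piece supported on a low-dimensional subspace (to pair with $\norm{w-w^*}_\Gamma$ via Cauchy-Schwarz). This mirrors the decomposition used to prove Theorem~\ref{theorem:unreg-alg}, but no ``peeled'' subset of coordinates is needed here, since the orthogonalization is already built into the dictionary $D^*$.

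The key step is to identify the right projection and bound its rank. Since the first $n$ columns of $D^*$ are $e_1, \dots, e_n$, the matrix $\Sigma$ is the top-left $n \times n$ principal submatrix of $\Gamma$. By Cauchy's interlacing theorem for principal submatrices, the $k$-th smallest eigenvalue $\gamma_k$ of $\Gamma$ satisfies $\gamma_k \leq \lambda_k$ for $k \in [n]$; in particular $\gamma_{n-d_h} \leq \lambda_{n-d_h}$. Let $\tilde{Q}$ be the orthogonal projection onto the span of the $n - d_h$ smallest eigenvectors of $\Gamma$. Then $\tilde{Q}^\perp$ has rank at most $d_h + t$, and $\lambda_{\max}(\tilde{Q}\Gamma\tilde{Q}) \leq \lambda_{n-d_h}$. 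Write $G = \tilde{Q}G + \tilde{Q}^\perp G$, a sum of two independent Gaussians (they are uncorrelated since they lie in orthogonal eigenspaces of $\Gamma$).

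From here the argument is routine Gaussian concentration. For the $\tilde{Q}G$ piece, each coordinate has variance at most $\lambda_{n-d_h}$, so a Gaussian tail bound and a union bound over the $n + t \leq 2n$ coordinates give $\norm{\tilde{Q}G}_\infty \leq \sqrt{\lambda_{n-d_h} \cdot 2\log(8n/\delta)}$ with probability at least $1 - \delta/8$; H\"older then bounds $\langle w-w^*, \tilde{Q}G\rangle$ as required. For the $\tilde{Q}^\perp G$ piece, decompose in the eigenbasis restricted to the top $d_h + t$ eigenvectors: writing $g_i = \langle v_i, G\rangle \sim N(0, \gamma_i)$ and $a_i = \langle v_i, w - w^*\rangle$, Cauchy-Schwarz applied to the factorization $a_i g_i = (a_i\sqrt{\gamma_i})(g_i/\sqrt{\gamma_i})$ yields $\langle w-w^*, \tilde{Q}^\perp G\rangle \leq \norm{w-w^*}_\Gamma \cdot \sqrt{\chi^2_{d_h+t}}$, which is at most $\norm{w-w^*}_\Gamma\sqrt{2(d_h+t)}$ with probability at least $1 - \delta/8$ by $\chi^2$ concentration (any lower-order $\log(1/\delta)$ slack from the $\chi^2$ tail is absorbed by the $\log(48/\delta)$ already present in $n_\text{eff}$). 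Union-bounding the two events gives the claim. The main obstacle is the eigenvalue-counting step: once Cauchy interlacing confirms that at most $d_h + t$ eigenvalues of $\Gamma$ lie strictly above the threshold $\lambda_{n-d_h}$, the rest of the proof reduces to standard concentration estimates.
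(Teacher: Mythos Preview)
Your proof is correct and follows essentially the same route as the paper: both use Cauchy interlacing to get $\gamma_{n-d_h}\le\lambda_{n-d_h}$, project onto the bottom $n-d_h$ eigenspace of $\Gamma$, pair the low piece with $\norm{w-w^*}_1$ via a Gaussian $\ell_\infty$ tail bound, and pair the high-rank-$(d_h+t)$ piece with $\norm{w-w^*}_\Gamma$ via Cauchy--Schwarz and a $\chi^2$ tail. The only cosmetic difference is that you carry out the Cauchy--Schwarz step coordinatewise in the eigenbasis, whereas the paper writes it as $\langle \Gamma^{1/2}(w-w^*),\,P^\perp\Gamma^{-1/2}G\rangle$ using that $P^\perp$ and $\Gamma^{-1/2}$ commute; these are the same computation.
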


\begin{proof}
Since $\Sigma$ is a principal submatrix of $\Gamma$, we have $\gamma_{n-d_h} \leq \lambda_{n-d_h}$ (by the Cauchy Interlacing Theorem). 
Suppose that $\Gamma$ has eigendecomposition $\Gamma = \sum_{i=1}^{n+t} \gamma_i g_ig_i^\t$, and define projection matrix $P: (n+t) \times (n+t)$ by $P := \sum_{i=1}^{n-d_h} g_ig_i^\t$, so that $\vrank(P^\perp) = d_h+t$ and $\lambda_\text{max}(P\Gamma P) \leq \gamma_{n-d_h} \leq \lambda_{n-d_h}$. Then for any $w \in \RR^{n+t}$ and $G \sim N(0,\Gamma)$, we can bound
\begin{align*} 
\langle w-w^*, G\rangle
&= \langle w-w^*, PG\rangle + \langle w-w^*, P^\perp G\rangle \\
&\leq \norm{w-w^*}_1 \norm{PG}_\infty + \langle \Gamma^{1/2}(w-w^*), \Gamma^{-1/2}P^\perp G\rangle \\
&= \norm{w-w^*}_1 \norm{PG}_\infty + \langle \Gamma^{1/2}(w-w^*), P^\perp \Gamma^{-1/2} G\rangle \\
&\leq \norm{w-w^*}_1 \norm{PG}_\infty + \norm{\Gamma^{1/2}(w-w^*)}_2 \norm{Z}_2
\end{align*}
where $Z \sim N(0,P^\perp)$. The second equality above uses that $\Gamma^{-1/2}$ and $P^\perp$ are simultaneously diagonalizable (and therefore commute). But now for any $\delta > 0$, we have the Gaussian tail bounds $$\Pr\left[\norm{PG}_\infty > \sqrt{\max_i (P\Gamma P)_{ii} \cdot 2\log(8n/\delta)}\right] \leq \delta/8$$
and $$\Pr\left[\norm{Z}_2 > \sqrt{2\vrank(P^\perp)}\right] \leq e^{-m/8} \leq \delta/8.$$
Thus, with probability at least $1-\delta/4$ over $G\sim N(0,\Gamma)$, for any $w \in \RR^{n+t}$, we have
\begin{align*}
\langle w-w^*,G\rangle 
&\leq \norm{w-w^*}_1 \sqrt{\max_i(P\Gamma P)_{ii} \cdot 2\log(8n/\delta)} + \norm{\Gamma^{1/2}(w-w^*)}_2 \sqrt{2\vrank(P^\perp)} \\
&\leq \norm{w-w^*}_1 \sqrt{\lambda_{n-d_h} \cdot 2\log(8n/\delta)} + \norm{\Gamma^{1/2}(w-w^*)}_2 \sqrt{2(d_h+t)} \qquad = F(w)
\end{align*}
which proves the claim.
\end{proof}

We now proceed with proving the theorem.

\begin{namedproof}{Theorem~\ref{theorem:small-large-alg-fast}}
Applying Claim~\ref{claim:f-bound}, we can now invoke Lemma~\ref{lemma:gen-bound-seminorm} with covariance matrix $\Gamma$, seminorm $\Phi(v) := 2\sqrt{2\lambda_{n-d_h} \cdot \log(8n/\delta)} \norm{v}_1$, $p := 2(d_h+t)$, ground truth $w^*$, samples $((D^*)^\t X_i, y_i)_{i=1}^{m/2}$, and failure probability $\delta/4$. Since we chose $m$ sufficiently large that $m/2 \geq 16p + 196\log(12/\delta)$, we conclude that with probability at least $1-2\delta$ over the randomness of $(X_i,y_i)_{i=1}^{m/2}$, it holds that $$\norm{\hat{w}(D^*) - w^*}_\Gamma^2 \leq O\left(\frac{\sigma^2(d_h+t)}{m} + \frac{(\sigma + \norm{w^*}_\Gamma) \norm{w^*}_1\sqrt{\lambda_{n-d_h} \cdot \log(8n/\delta)}}{\sqrt{m}} + \frac{\norm{w^*}_1^2 \lambda_{n-d_h}\log(8n/\delta)}{m}\right).$$
Since $v^* = D^* w^*$ and $\norm{w^*}_1 \leq 7t^{1/2}\lambda_{d_l+1}^{-1/2}\norm{v^*}_\Sigma$ (the guarantees of Lemma~\ref{lemma:large-small-l1-rep}), it follows that
$$\norm{D^*\hat{w}(D^*) - v^*}_\Sigma^2 \leq O\left(\frac{\sigma^2(d_h+t)}{m} + \frac{(\sigma + \norm{v^*}_\Sigma) \norm{v^*}_\Sigma\sqrt{r_\text{eff}}}{\sqrt{m}} + \frac{\norm{v^*}_\Sigma^2 r_\text{eff}}{m}\right).$$

To complete the proof of the theorem, condition on any values of $(X_i,y_i)_{i=1}^{m/2}$ for which the above bound holds. By applying Lemma~\ref{lemma:finite-model-selection} with covariance matrix $\Sigma$, hypothesis set $\mathcal{W} := \{D\hat{w}(D): D \in \mathcal{D}\}$, and samples $(X_i,y_i)_{i=m/2+1}^m$ (which are independent of $\mathcal{W}$), since $m/2 \geq 32\log(2|\mathcal{D}|/\delta)$, we have with probability at least $1-2\delta$ over the samples $(X_i,y_i)_{i=m/2+1}^m$ that $$\norm{\hat{D}\hat{w}(\hat{D}) - v^*}_\Sigma^2 \leq 6\min_{D \in \mathcal{D}} \norm{D\hat{w}(D) - v^*}_\Sigma^2 + \frac{32\sigma^2\log(2|\mathcal{D}|/\delta)}{m}.$$
Hence, with probability at least $1-5\delta$ we have $$\norm{\hat{D}\hat{w}(\hat{D}) - v^*}_\Sigma^2 \leq O\left( \frac{\sigma^2(d_h+t+\log(2|\mathcal{D}|/\delta))}{m} + \frac{(\sigma + \norm{v^*}_\Sigma) \norm{v^*}_\Sigma\sqrt{r_\text{eff}}}{\sqrt{m}} + \frac{\norm{v^*}_\Sigma^2 r_\text{eff}}{m}\right)$$
which proves the theorem.
\end{namedproof}

We can use the above theorem (together with the previously discussed boosting approach) to get the following result, which proves Theorem~\ref{theorem:brute-alg-intro}.

\begin{theorem}\label{theorem:main-alg-brute}
Let $n,t,d_l,d_h,m,L \in \NN$ and $\sigma,\delta>0$. Let $\Sigma: n \times n$ be a positive semi-definite matrix with eigenvalues $\lambda_1 \leq \dots \leq \lambda_n$. Let $(X_i,y_i)_{i=1}^m$ be independent samples where $X_i \sim N(0,\Sigma)$ and $y_i = \langle X_i,v^*\rangle + \xi_i$, for $\xi_i \sim N(0,\sigma^2)$ and a fixed $t$-sparse vector $v^* \in \RR^n$.

Then, given $\Sigma$, $(X_i,y_i)_{i=1}^m$, $t$, $d_l$, and $\delta$, there is an estimator $\hat{v}$ with the following properties. 

Let $n'_\text{eff} := t^2\log(t) + t\log(d_l) + d_h + \log(48L/\delta)$ and let $r'_\text{eff} := t(\lambda_{n-d_h}/\lambda_{d_l+1}) \log(8nL/\delta)$. There are absolute constants $c_0,C_0>0$ such that the following holds. If $m \geq C_0 L(n'_\text{eff} + r'_\text{eff})$, then with probability at least $1-\delta$, it holds that $$\norm{\hat{v}-v^*}_\Sigma^2 \leq c_0 \frac{\sigma^2(n'_\text{eff}+r'_\text{eff})}{m/L} + 2^{-L} \cdot \norm{v^*}_\Sigma^2.$$
Moreover, $\hat{v}$ is computable in time $(t+1)^{O(t^2)} (d_l+1)^{t+1} \cdot \poly(n).$
\end{theorem}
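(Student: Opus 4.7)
\begin{namedproof}{Theorem~\ref{theorem:main-alg-brute}}
The plan is to apply exactly the same boosting procedure used in \boar{} (Theorem~\ref{theorem:main-alg-poly}), except that inside each boosting iteration we invoke \auglasso{} (Algorithm~\ref{alg:alternate}, analyzed in Theorem~\ref{theorem:small-large-alg-fast}) instead of \arlasso{}. The structural reason this works is that Theorem~\ref{theorem:small-large-alg-fast} gives a ``slow-rate'' bound of exactly the same shape as Theorem~\ref{theorem:unreg-alg}, namely an additive $\sigma^2 n_{\text{eff}}/m$ term plus cross terms scaling like $\sqrt{r_{\text{eff}}/m}\cdot\norm{v^*}_\Sigma(\sigma + \norm{v^*}_\Sigma)$. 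This is precisely the shape the boosting recursion in Theorem~\ref{theorem:main-alg-poly} was designed to collapse.

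Concretely, I would partition $[m]$ into $L$ blocks $A_0,\dots,A_{L-1}$ of size $m/L$, set $\hat{s}^{(0)} = 0$, and iterate: given $\hat{s}^{(j)}$, form the augmented covariance
\[
\Sigma^{(j)} := \begin{bmatrix} \Sigma & \Sigma\hat{s}^{(j)} \\ (\hat{s}^{(j)})^\t \Sigma & (\hat{s}^{(j)})^\t\Sigma\hat{s}^{(j)} \end{bmatrix}
\]
and augmented samples $X_i^{(j)} := (X_i, \langle X_i,\hat{s}^{(j)}\rangle)$, $y_i^{(j)} := y_i - \langle X_i,\hat{s}^{(j)}\rangle$ for $i \in A_j$. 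These are independent draws from $N(0,\Sigma^{(j)})$ with responses $\langle X_i^{(j)}, (v^*,-1)\rangle + \xi_i$, and $(v^*,-1)$ is $(t+1)$-sparse. By the Cauchy Interlacing Theorem, the eigenvalues of $\Sigma^{(j)}$ satisfy $\lambda^{(j)}_{d_l+2} \ge \lambda_{d_l+1}$ and $\lambda^{(j)}_{(n+1)-(d_h+1)} \le \lambda_{n-d_h}$, so feeding \auglasso{} the parameters sparsity $t+1$, small-outlier count $d_l+1$, large-outlier count $d_h+1$, and failure probability $\delta/L$ gives induced parameters $n^{(j)}_{\text{eff}} \le n'_{\text{eff}}$ and $r^{(j)}_{\text{eff}} \le 2 r'_{\text{eff}}$. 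I then set $\hat{v}^{(j+1)} := \hat{w}^{(j+1)}_{[n]} + \hat{w}^{(j+1)}_{n+1}\hat{s}^{(j)}$ and $\hat{s}^{(j+1)} := \hat{s}^{(j)} + \hat{v}^{(j+1)}$, and return $\hat{s}^{(L)}$.

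The core calculation is identical to that in the proof of Theorem~\ref{theorem:main-alg-poly}. Observing that $\norm{(v^*,-1)}_{\Sigma^{(j)}}^2 = \norm{v^*-\hat{s}^{(j)}}_\Sigma^2$ and $\norm{\hat{w}^{(j+1)}-(v^*,-1)}_{\Sigma^{(j)}}^2 = \norm{v^*-\hat{s}^{(j+1)}}_\Sigma^2$, the error bound from Theorem~\ref{theorem:small-large-alg-fast} reads
\[
\norm{v^*-\hat{s}^{(j+1)}}_\Sigma^2 \le c\!\left(\frac{\sigma^2 n'_{\text{eff}}}{m/L} + \norm{v^*-\hat{s}^{(j)}}_\Sigma^2 \sqrt{\tfrac{r'_{\text{eff}}}{m/L}} + \sigma\norm{v^*-\hat{s}^{(j)}}_\Sigma \sqrt{\tfrac{r'_{\text{eff}}}{m/L}} + \norm{v^*-\hat{s}^{(j)}}_\Sigma^2\tfrac{r'_{\text{eff}}}{m/L}\right).
\]
Choosing $C_0$ large enough that $\sqrt{r'_{\text{eff}}/(m/L)} \le 1/(4c)$ and $r'_{\text{eff}}/(m/L) \le 1/(4c)$, and applying AM-GM to $\sigma\norm{v^*-\hat{s}^{(j)}}_\Sigma\sqrt{r'_{\text{eff}}/(m/L)}$ exactly as in the proof of Theorem~\ref{theorem:main-alg-poly}, one obtains the contraction
\[
\norm{v^*-\hat{s}^{(j+1)}}_\Sigma^2 \le \tfrac{c_0}{2}\cdot\frac{\sigma^2(n'_{\text{eff}}+r'_{\text{eff}})}{m/L} + \tfrac{1}{2}\norm{v^*-\hat{s}^{(j)}}_\Sigma^2.
\]
Iterating $L$ times and summing the geometric series yields the claimed bound. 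A union bound over the $L$ invocations of \auglasso{} (each with failure probability $\delta/L$) handles the probability of failure.

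The runtime is $L$ eigendecompositions and $L$ calls to \auglasso{}, each of which brute-forces over the $\binom{|S|}{t+1}$ size-$(t+1)$ subsets of the output of \iterpeel{}, where $|S| \le (7(t+1))^{2(t+1)+1}(d_l+1)$ by Theorem~\ref{theorem:iterpeel-guarantees}. This gives the stated runtime $(t+1)^{O(t^2)}(d_l+1)^{t+1}\poly(n)$. There is no substantial obstacle beyond verifying that the Cauchy Interlacing bounds continue to yield $n^{(j)}_{\text{eff}} \le n'_{\text{eff}}$ and $r^{(j)}_{\text{eff}} \le 2r'_{\text{eff}}$ at every step, and that the $+1$'s in sparsity and outlier counts do not propagate across iterations (they do not, because each iteration starts from the original $\Sigma$, not from $\Sigma^{(j)}$).
\end{namedproof}
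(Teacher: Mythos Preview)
Your proposal is correct and takes essentially the same approach as the paper, which proves Theorem~\ref{theorem:main-alg-brute} in one line: ``Identical to that of Theorem~\ref{theorem:main-alg-poly}, except using Theorem~\ref{theorem:small-large-alg-fast} instead of Theorem~\ref{theorem:unreg-alg}.'' You have in fact written out more of the details than the paper does, including the Cauchy Interlacing verification, the contraction calculation, and the runtime accounting, all of which match the intended argument.
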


\begin{proof}
Identical to that of Theorem~\ref{theorem:main-alg-poly}, except using Theorem~\ref{theorem:small-large-alg-fast} instead of Theorem~\ref{theorem:unreg-alg}.
\end{proof}

\section{Faster sparse linear regression for arbitrary $\Sigma$}\label{section:extremal-bound}

In this section we prove Theorem~\ref{theorem:arbitrary-alg-intro}. The approach is via feature adaptation: in Theorem~\ref{theorem:arb-sigma-alg}, we show that any covariance matrix $\Sigma$ has a $(t, O(t^{3/2}\log n)$-$\ell_1$-representation of size $O(n^{t-1/2})$ that is computable in time $n^{t - \Omega(1/t)} \log^{O(t)} n$, using $O(t\log n)$ samples from $N(0,\Sigma)$. The algorithm for computing this representation is described in Algorithm~\ref{alg:l1-arbitrary}. One of the key tools is the following result from computational geometry:

\begin{theorem}[\cite{mulzer2015approximate}]\label{theorem:geo}
Let $n,d,k \in \NN$ and $\delta>0$. Given points $p_1,\dots,p_n \in \RR^d$, query dimension $k$, and failure probability $\delta$, there an algorithm $\ds{(p_1,\dots,p_n), k, \delta}$ with time complexity $n^{k+1} (\log n)^{O(k)} \poly(d) \log(1/\delta)$, that constructs a data structure $\mathcal{N}$ that answers queries of the following form. Given a $k$-dimensional subspace $F \subseteq \RR^d$, the output $\mathcal{N}(F)$ is some $i^* \in [n]$. With probability at least $1-\delta$, the query time complexity is $n^{1 - 1/(2k)} \poly(d) \log(1/\delta)$, and it holds that \[\min_{q \in F} \norm{p_{i^*} - q}_2 \leq O(\log n) \cdot \min_{i \in [n]} \min_{q \in F} \norm{p_i - q}_2.\]
\end{theorem}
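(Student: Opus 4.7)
The plan is to reduce approximate subspace nearest-neighbor to a classical range-searching problem over a \emph{lifted} representation of the input points, then invoke Matousek-style partition trees.

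The starting observation is that the squared distance from $p_i$ to a $k$-dimensional subspace $F \subseteq \RR^d$ can be written as a linear functional of the rank-one matrix $p_i p_i^\t$: specifically, $\dist(p_i, F)^2 = \tr((I - \Pi_F) p_i p_i^\t) = \langle I - \Pi_F, p_i p_i^\t\rangle$, where $\Pi_F$ is orthogonal projection onto $F$. Thus, after preprocessing the lifted points $\phi(p_i) := p_i p_i^\t$, the query for a subspace $F$ becomes: given a matrix $M_F = I - \Pi_F$ of rank $d-k$, approximately minimize $\langle M_F, \phi(p_i)\rangle$ over $i$. Factoring $M_F = \sum_{j=1}^{d-k} v_j v_j^\t$ in an orthonormal basis of $F^\perp$, the objective becomes $\sum_{j} \langle v_j, p_i\rangle^2$, which is the squared $\ell_2$ norm of the point $V p_i \in \RR^{d-k}$ (with $V$ the matrix of $v_j$'s). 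So the goal is approximate $\ell_2$-nearest-neighbor to the origin in a query-dependent projection of the points.

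To handle the fact that the projection $V$ itself depends on the query, I would precompute an $\epsilon$-net of candidate subspaces and, for each candidate, build a nearest-neighbor data structure on the corresponding projected points. The net is realized implicitly: for each ordered $(k+1)$-tuple of input points (there are $O(n^{k+1})$ of them), consider the $k$-dimensional affine flat they determine, which gives a rich enough covering of the Grassmannian modulo the data. For each such flat $\tilde F$, compute the projections $V_{\tilde F} p_i$ and build a randomized partition tree supporting approximate slab/halfspace queries in time $\tilde O(n^{1-1/(2k)})$ per query — the exponent $1/(2k)$ arising because the effective ambient dimension for the projected problem is $O(k)$, but queries also need to bucket over the $O(\log n)$ scales of $\|V p_i\|$.

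At query time, given $F$, first locate the candidate $\tilde F$ whose principal angles best match $F$ (itself a smaller recursive subspace-query subproblem), then invoke the precomputed structure. The $O(\log n)$ approximation factor arises by combining two sources of slack: the approximation of the partition tree query, and the mismatch between $F$ and the precomputed $\tilde F$, controlled by an inequality roughly of the form $\dist(p_i, F)^2 \le 2\dist(p_i, \tilde F)^2 + 2 \|\Pi_F - \Pi_{\tilde F}\|_{\mathrm{op}}^2 \|p_i\|^2$, with points bucketed by norm to avoid blow-ups. The failure probability $\delta$ is boosted by running $O(\log(1/\delta))$ independent copies of the randomized partition trees and returning the best candidate. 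The main obstacle, and the delicate part of the proof, is simultaneously controlling these two error sources while keeping both the preprocessing and query time within the stated bounds — in particular, arguing that the net of $(k+1)$-tuples of input points is fine enough (in the data-dependent sense) to give an $O(\log n)$ approximation without needing a geometric $\epsilon$-net of size $(1/\epsilon)^{\Omega(k(d-k))}$.
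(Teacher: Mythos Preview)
This theorem is not proved in the paper: it is quoted verbatim as a black-box result from \cite{mulzer2015approximate} and then invoked inside \repvec{}. There is therefore no ``paper's own proof'' to compare your proposal against.

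As to the proposal itself, the high-level shape --- linearize $\dist(p_i,F)^2$ as an inner product in a lifted space, then reduce to range-searching with partition trees --- is indeed the flavor of the Mulzer--Werner argument, but your sketch has a real gap at the step you yourself flag as delicate. You propose to cover query subspaces by the flats spanned by $(k{+}1)$-tuples of \emph{input} points, and then bound the mismatch via $\|\Pi_F - \Pi_{\tilde F}\|_{\mathrm{op}}^2 \|p_i\|^2$. But that additive error term is not controlled by $\dist(p_i,F)$: if the optimal distance is tiny and $\|p_i\|$ is not, any constant mismatch in principal angles destroys the multiplicative $O(\log n)$ guarantee. The actual construction does not try to approximate the query flat by a data-spanned flat; instead it builds, for each cluster of points, a projection onto a low-dimensional space determined by that cluster, and handles the query flat exactly within that projection. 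The $n^{k+1}$ preprocessing and $n^{1-1/(2k)}$ query time come from recursing on these projected subproblems, not from enumerating candidate flats. So while your linearization step is sound, the ``net of input-spanned flats'' idea would need to be replaced by something closer to the clustering/projection scheme in the cited work to close the argument.
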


%\begin{corollary}
%Let $n,d,k \in \NN$ and $\delta>0$. Let $p_1,\dots,p_n \in \RR^d$. There is a data structure with preprocessing time $n^{(k+1)/2} (\log n)^{O(k)} \poly(d) \log(1/\delta)$ that, with probability at least $1-\delta$, answers the following queries in time $n^{1 - 1/(4k)} \poly(d) \log(1/\delta)$: given a $k$-dimensional subspace $F$, find some $i^* \in [n]$ such that $$\min_{q \in F} \norm{p_{i^*} - q}_2 \leq O(\log n) \cdot \min_{i \in [n]} \min_{q \in F} \norm{p_i - q}_2.$$
%\end{corollary}

\LinesNumbered
\SetKwProg{myalg}{Procedure}{}{}
\begin{algorithm2e}[h!]
  \SetAlgoLined\DontPrintSemicolon
  \caption{$\ell_1$-representation for arbitrary $\Sigma$}
  \label{alg:l1-arbitrary}

\SetKwFunction{repvec}{RepresentVectors}
\SetKwComment{Comment}{/* }{ */}

\SetKwFunction{ortho}{FindOrthonormalization}

\myalg{\ortho{$p_1,\dots,p_t$}}{
\KwData{Nonzero vectors $p_1,\dots,p_t \in \RR^m$}
\KwResult{$\alpha^{(1)},\dots,\alpha^{(t)} \in \RR^t$ such that $\vspan\{\alpha^{(1)},\dots,\alpha^{(t)}\} = \vspan\{e_1,\dots,e_t\}$ and $\langle \sum_\ell \alpha^{(i)}_\ell p_\ell, \sum_\ell \alpha^{(j)}_\ell p_\ell \rangle = 0$ for all $i \neq j$}\;
\For{$i = 1,\dots,t$}{
    $\alpha^{(i)} \gets e_i/\norm{p_i}_2 \in \RR^k$\;
    \For{$j = 1,\dots,i-1$}{
        \If{$\sum_\ell \alpha^{(j)}_\ell p_\ell \neq 0$}{
            $\alpha^{(i)} \gets \alpha^{(i)} - \frac{\langle \sum_\ell \alpha^{(i)}_\ell p_\ell, \sum_\ell \alpha^{(j)}_\ell p_\ell\rangle}{\norm{\sum_\ell \alpha^{(j)}_\ell p_\ell}_2} \alpha^{(j)}$\;
        }
    }
}
\Return $\alpha^{(1)},\dots,\alpha^{(k)}$\;

}

\myalg{\repvec{$\{p_1,\dots,p_n\}$,$t$,$\delta$}}{
\KwData{Unit vectors $p_1,\dots,p_n \in \RR^m$, sparsity parameter $t$, failure probability $\delta$}
\KwResult{Set $\mathcal{D} \subseteq \RR^n$ of size $O(n^{t-1/2})$, where all elements $d \in \mathcal{D}$ are $t$-sparse (and represented succinctly)}\;
%\KwData{Spanning set $\{v_1,\dots,v_k\} \subseteq \RR^n$ and threshold $\alpha>0$}
%\KwResult{$S = \{i \in [n]: \sup_{x \in V\setminus \{0\}} x_i/\norm{x}_2 \geq \alpha\}$ where $V = \vspan\{v_1,\dots,v_k\}$}

Compute partition $I_1 \sqcup \dots \sqcup I_{\sqrt{n}} = [n]$ where $|I_i| \leq \lceil \sqrt{n}\rceil$ for all $i$\;

Initialize $\mathcal{D} \gets \emptyset$\;

\For{$j = 1,\dots,\sqrt{n}$}{
    Construct data structure $\CN^j \gets \ds{(p_i: i \in I_j), t-1, \delta/n^t}$ \Comment*[r]{Theorem~\ref{theorem:geo}}
    \For{$T \subseteq \binom{[n]}{t-1}$}{
        $h(T,j) \gets \CN^j(\vspan\{p_i:i \in T\})$ \Comment*[r]{Theorem~\ref{theorem:geo}}
        Find $\gamma \in \RR^T$ such that $\sum_{i\in T} \gamma_i p_i = \Proj_{\vspan\{p_i:i \in T\}} p_{h(T,j)}$\;
        Write $\gamma$ as a sparse vector in $\RR^n$ (supported on $T$)\;
        Add $\gamma - e_{h(T,j)}$ to $\mathcal{D}$\;
    }
    \For{$T \subseteq \binom{[n]}{t-2}$}{
        \For{$a,b \in I_j$}{
            $\gamma^{(1)},\dots,\gamma^{(t)} \gets$ \ortho{$(p_i: i \in T \cup \{a,b\})$}\;
            Write $\gamma^{(1)},\dots,\gamma^{(t)}$ as sparse vectors in $\RR^n$ (supported on $T \cup \{a,b\}$)\;
            Add $\gamma^{(1)},\dots,\gamma^{(t)}$ to $\mathcal{D}$\;
        }
    }
}
\Return $\mathcal{D}$\;
}

\SetKwFunction{comprep}{ComputeL1Representation}

\myalg{\comprep{$\{X_1,\dots,X_m\}, t$}}{

Let $\mathbb{X}: m \times n$ be the matrix with rows $X_1,\dots,X_m$\;
Let $q_1,\dots,q_n$ be the columns of $\mathbb{X}$, and let $p_i := q_i/\norm{q_i}_2$ for $i \in [n]$\;
$\tilde{\CD} \gets$ \repvec{$\{p_1,\dots,p_n\}, t, e^{-m}$}\;
$\hat{D} \gets \diag(\norm{q_1}_2,\dots,\norm{q_n}_2)$\;
$\CD \gets \{\hat{D} d: d \in \tilde{\CD}\}$\;
\Return $\CD$\;
}
\end{algorithm2e}

How do we use the above theorem to efficiently construct the $\ell_1$-representation? The intuition is as follows. Let $\mathbb{X}$ be the $m \times n$ matrix where each row is a sample from $N(0,\Sigma)$. Then each column is a vector $p_i$ representing a particular covariate. To find the $\ell_1$-representation, it essentially suffices to find a dictionary $\CD$ of $O(n^{t-1/2})$ sparse combinations of $\{p_1,\dots,p_n\}$ so that \emph{every} $t$-sparse combination of $\{p_1,\dots,p_n\}$ can be written in terms of the chosen combinations, with a coefficient vector that has bounded $\ell_1$ norm.

For notational ease, we define $C(x)$ to be the ``cost'' of a particular linear combination $x \in \RR^n$ with respect to the set $\mathcal{D}$ of chosen combinations:

\begin{definition}
    For a subset $\mathcal{D} \subseteq \RR^n$, define $C_\mathcal{D}: \RR^n \to [0,\infty]$ by
    \[C_\mathcal{D}(x) := \min_{\alpha \in \RR^\mathcal{D}: \sum_{d \in \mathcal{D}} \alpha_d d = x} \sum_{d \in \mathcal{D}} |\alpha_d| \cdot \norm{\sum_{i=1}^n d_i p_i}_2.
    \]
\end{definition}

With this notation, we want to construct a set $\CD$ of size $O(n^{t-1/2})$, consisting of $t$-sparse vectors, such that 
\[C_\CD(x) \leq \poly(t,\log n) \cdot \norm{\sum x_i p_i}_2\]
for all $t$-sparse $x \in \RR^n$.

The construction is quite simple: divide the set $\{p_1,\dots,p_n\}$ into $\sqrt{n}$ equal-sized groups. For each set $T$ of $t-1$ vectors and each of the $\sqrt{n}$ groups, find the closest vector in the group to the subspace spanned by $T$ (using Theorem~\ref{theorem:geo} to achieve sublinear time complexity). Then add the difference between the vector and its projection (onto the subspace) to the dictionary. Finally, for each set of $t$ vectors where two of the vectors lie in the same group, add an orthonormal basis for those vectors to the dictionary. See the procedure \repvec{} in Algorithm~\ref{alg:arbitrary} for pseudocode.

By construction, the dictionary clearly has size $O(n^{t-1/2})$. At a high level, the reason it satisfies the representational property is the following. Consider some $t$-sparse combination, such as $p_1 + \dots + p_t$. If $-p_t$ is not very close to $p_1+\dots+p_{t-1}$, then we can bound $C(p_1+\dots+p_t)$ by $C(p_1+\dots+p_{t-1})$ and $C(p_t)$, which are $O(\sqrt{t}\norm{p_1+\dots+p_{t-1}}_2)$ and $O(\sqrt{t}\norm{p_t}_2)$ respectively, since the dictionary contains an orthonormal basis for both terms. The only case where these bounds are not good enough is when $\norm{p_1+\dots+p_t}_2$ is much smaller than $\norm{p_1+\dots+p_{t-1}}_2$ and $\norm{p_t}_2$. In this case, $p_t$ is very close to $\vspan\{p_1,\dots,p_{t-1}\}$. However, in the construction we found some (potentially different) $p_j$ which is just as close to $\vspan\{p_1,\dots,p_{t-1}\}$, and moreover is in the same group as $p_t$. Letting $q$ be the projection of $p_j$ onto $\vspan\{p_1,\dots,p_{t-1}\}$, we have the crucial fact that $\norm{p_j-q}_2$ is as small as $\norm{p_1+\dots+p_t}_2$.

Now, bounding $C(p_1+\dots+p_t)$ proceeds as follows. We can subtract some appropriate (bounded) multiple of $p_j - q$ from $p_1+\dots+p_t$ to zero out at least one of the coefficients. This residual then is a $t$-sparse combination of $\{p_1,\dots,p_t,p_j\}$ where two of the vectors $\{p_t,p_j\}$ are in the same group; thus it has small cost with respect to $\CD$. Moreover, $p_j - q$ is contained in $\CD$ and thus has small cost (specifically, not much more than $\norm{p_j-q}_2$, which crucially is not much more than $\norm{p_1+\dots+p_t}_2$). It follows that $p_1+\dots+p_t$ has small cost.

Formalizing this argument, we start by proving one of the facts that we freely used above: that the cost function $C$ satisfies the triangle inequality.

\begin{fact}\label{fact:c-triangle}
For any $\mathcal{D}\subseteq \RR^n$ and $x,y \in \RR^n$, it holds that $C(x+y) \leq C(x)+C(y)$.
\end{fact}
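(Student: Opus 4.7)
The plan is to unfold the definition of $C_\mathcal{D}$ and combine near-optimal representations of $x$ and $y$ into a representation of $x+y$, then apply the triangle inequality for $|\cdot|$ coefficient-wise. This is the standard argument showing that any ``min weighted $\ell_1$-cost'' quantity is subadditive.

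First I would dispose of trivial cases: if either $C_\mathcal{D}(x) = \infty$ or $C_\mathcal{D}(y) = \infty$, the inequality holds vacuously. Otherwise, both infima are finite, so for any $\epsilon>0$ I can choose $\alpha,\beta \in \RR^\mathcal{D}$ with $\sum_{d \in \mathcal{D}}\alpha_d d = x$ and $\sum_{d \in \mathcal{D}}\beta_d d = y$ such that
\[
\sum_{d \in \mathcal{D}} |\alpha_d|\cdot \bigl\lVert\textstyle\sum_{i=1}^n d_i p_i\bigr\rVert_2 \leq C_\mathcal{D}(x)+\epsilon, \qquad \sum_{d \in \mathcal{D}} |\beta_d|\cdot \bigl\lVert\textstyle\sum_{i=1}^n d_i p_i\bigr\rVert_2 \leq C_\mathcal{D}(y)+\epsilon.
\]

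Next, observe that $\alpha+\beta$ is a feasible representation of $x+y$, since $\sum_{d \in \mathcal{D}}(\alpha_d+\beta_d)d = x+y$. Applying the triangle inequality $|\alpha_d+\beta_d|\leq |\alpha_d|+|\beta_d|$ termwise and weighting by the (nonnegative) quantities $\lVert\sum_i d_ip_i\rVert_2$ gives
\[
C_\mathcal{D}(x+y) \leq \sum_{d \in \mathcal{D}} |\alpha_d+\beta_d|\cdot \bigl\lVert\textstyle\sum_{i=1}^n d_i p_i\bigr\rVert_2 \leq C_\mathcal{D}(x)+C_\mathcal{D}(y)+2\epsilon.
\]
Letting $\epsilon \to 0$ yields the claim.

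There is no real obstacle here: the only subtlety is that the infimum defining $C_\mathcal{D}$ may not be attained (e.g.\ some elements $d \in \mathcal{D}$ could have $\sum_i d_ip_i=0$, making the cost of $\alpha_d d$ zero regardless of magnitude), which is why I pass to an $\epsilon$-approximate minimizer rather than an exact one. Otherwise the argument is purely the standard seminorm-style triangle inequality for a weighted $\ell_1$-cost over an affine feasibility set.
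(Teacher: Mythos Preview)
Your proof is correct and follows essentially the same route as the paper: combine feasible representations $\alpha$ of $x$ and $\beta$ of $y$ into the representation $\alpha+\beta$ of $x+y$, then apply the coordinatewise triangle inequality to the weighted $\ell_1$ cost. Your version is slightly more careful in passing to an $\epsilon$-approximate minimizer (and handling the $C=\infty$ case), whereas the paper simply writes $\min$ and assumes attainment, but the argument is the same.
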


\begin{proof}
For any $\alpha,\beta \in \RR^\mathcal{D}$ with $\sum_d \alpha_d d = x$ and $\sum_d \beta_d d= y$, the vector $\alpha+\beta$ satisfies $\sum_d (\alpha+\beta)_d d = x+y$. Applying the triangle inequality to $\sum_d |(\alpha+\beta)_d| \cdot \norm{\sum_i d_i p_i}_2$ completes the proof.
\end{proof}

We now prove the key lemma, formalizing the above intuition.

\begin{lemma}\label{lemma:repvec-guarantees}
    Let $n,m,t \in \NN$, with $t \geq 2$, and $\delta>0$. Fix $p_1,\dots,p_n \in \RR^m$ with $\norm{p_i}_2 = 1$ for all $i \in [n]$. Let $\mathcal{D}$ be the output of \repvec{$\{p_1,\dots,p_n\},t,\delta$}. Then $|\mathcal{D}| = O(n^{t-1/2})$, and every element of $\mathcal{D}$ is $t$-sparse. Also, with probability at least $1-\delta$, the following guarantees hold. The time complexity of computing $\mathcal{D}$ is $O(n^{t - \Omega(1/t)}(\log n)^{O(t)} m^{O(1)} \log(1/\delta))$. Moreover, for every $t$-sparse $x \in \RR^n$ it holds that
    \[C_\mathcal{D}(x) \leq O(t^{3/2}\log n) \cdot \norm{\sum x_i p_i}_2.
    \]
\end{lemma}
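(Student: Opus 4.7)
The plan is to verify size, sparsity, running time, and the representation bound separately; the substantive work is the representation bound. For size and sparsity: the first inner loop adds $\sqrt{n}\binom{n}{t-1}$ vectors and the second adds at most $\sqrt{n}\binom{n}{t-2}\binom{\lceil\sqrt{n}\rceil}{2} t \leq O(t \cdot n^{t-1/2})$, so $|\mathcal{D}| = O(n^{t-1/2})$, and each added vector is supported on a set of size at most $t$ by construction. For running time, summing Theorem~\ref{theorem:geo}: $\sqrt{n}$ constructions of \ds{} on $\sqrt{n}$ points contribute $n^{(t+1)/2}(\log n)^{O(t)}\poly(m)\log(1/\delta)$; the $\sqrt{n}\binom{n}{t-1}$ queries contribute $n^{t-1/(4(t-1))}(\log n)^{O(t)}\poly(m)\log(1/\delta)$; and the orthonormalizations contribute $O(n^{t-1/2}t\poly(m))$. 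Union-bounding over the $O(n^{t-1/2})$ queries, each failing with probability at most $\delta/n^t$, all outputs are accurate with probability at least $1-\delta$.

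For the representation bound, write $v := \sum_i x_i p_i$; we show $C_\mathcal{D}(x) \leq O(t^{3/2}\log n)\norm{v}_2$ by case analysis on $S := \supp(x)$. In \textbf{Case 1} ($|S| \leq t-1$, or $|S| = t$ with two indices of $S$ in a common group $I_{j^*}$), choose $T' \in \binom{[n]}{t-2}$ and $a,b \in I_{j^*}$ with $S \subseteq T' \cup \{a,b\}$; the stored orthonormalization $\gamma^{(1)},\dots,\gamma^{(t)}$ spans $\vspan\{e_i : i \in T'\cup\{a,b\}\}$ with $\{\gamma^{(\ell)} \cdot p\}_\ell$ an orthogonal family of unit-or-zero vectors, so writing $x = \sum_\ell \beta_\ell \gamma^{(\ell)}$ (uniquely) gives $\norm{v}_2^2 = \sum_{\ell:\gamma^{(\ell)}\cdot p\neq 0}\beta_\ell^2$, and Cauchy--Schwarz yields cost $\leq \sqrt{t}\norm{v}_2$. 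In \textbf{Case 2} ($|S| = t$, all groups distinct), let $i^* := \argmax_{i \in S}|x_i|$, $T := S \setminus \{i^*\}$, $j^*$ the group of $i^*$, $k^* := h(T,j^*)$, $\gamma \in \RR^T$ as in the algorithm, $q_2 := \sum_i \gamma_i p_i$, $r_2 := p_{k^*} - q_2$. Because $-\tfrac{1}{x_{i^*}}\sum_{i \in T} x_i p_i \in \vspan\{p_i : i \in T\}$ is within $\norm{v}_2/|x_{i^*}|$ of $p_{i^*}$, Theorem~\ref{theorem:geo} gives $\norm{r_2}_2 \leq O(\log n) \norm{v}_2 / |x_{i^*}|$.

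If $k^* = i^*$, the element $d := \gamma - e_{i^*} \in \mathcal{D}$ has $\norm{d \cdot p}_2 = \norm{r_2}_2$, and $y := x + x_{i^*} d = x_T + x_{i^*}\gamma$ is $(t-1)$-sparse with $y \cdot p = v - x_{i^*} r_2$, so $\norm{y \cdot p}_2 \leq 2\norm{v}_2$ and Case 1 gives $C(x) \leq C(y) + |x_{i^*}|\norm{r_2}_2 = O(\sqrt{t})\norm{v}_2$. If $k^* \neq i^*$, split on $\norm{q_2}_2$. When $\norm{q_2}_2 \leq 1/2$, $\norm{r_2}_2 \geq \sqrt{3}/2$ forces $|x_{i^*}| \leq O(\log n)\norm{v}_2$; decomposing $x = \sum_{i \in S} x_i e_i$ and applying Case 1 to each one-sparse piece gives $C(x) \leq \sqrt{t}\norm{x}_1 \leq t^{3/2}\norm{x}_\infty = O(t^{3/2}\log n)\norm{v}_2$. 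When $\norm{q_2}_2 > 1/2$, since $\sum_{i \in T}\gamma_i \langle p_i, q_2\rangle = \norm{q_2}_2^2 \geq 1/4$ and $|\langle p_i, q_2\rangle| \leq 1$, we get $\norm{\gamma}_1 \geq 1/4$ so some $|\gamma_{i_\ell}| \geq 1/(4(t-1))$. Setting $\beta := x_{i_\ell}/\gamma_{i_\ell}$ and $d := \gamma - e_{k^*}$, the vector $w := x - \beta d$ is supported on $(T\setminus\{i_\ell\}) \cup \{i^*,k^*\}$ (size $t$, with $i^*, k^*$ sharing group $j^*$), and a direct computation gives $w \cdot p = v + \beta r_2$. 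Since $|\beta| \leq 4(t-1)|x_{i^*}|$, we obtain $\norm{w \cdot p}_2 = O(t\log n)\norm{v}_2$; Case 1 on $w$ plus the cost $|\beta|\norm{r_2}_2$ of $\beta d$ yields $C(x) = O(t^{3/2}\log n)\norm{v}_2$.

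The main obstacle is the $k^* \neq i^*$ sub-case of Case 2: unit vectors close to the same subspace need not be close to each other, so a naive substitution of $p_{i^*}$ by $p_{k^*}$ fails. The dichotomy on $\norm{q_2}_2$ resolves this---either $q_2$ is small, so $p_{k^*}$ being close to $\vspan\{p_i : i \in T\}$ already forces $|x_{i^*}|$ (hence $\norm{x}_\infty$) to be small, or some coefficient of $\gamma$ is non-negligible, letting us use the stored orthonormalization of $(T\setminus\{i_\ell\})\cup\{i^*,k^*\}$ together with $d = \gamma - e_{k^*}$ to simultaneously cancel the $i_\ell$ and $k^*$ coordinates.
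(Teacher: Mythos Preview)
Your proof is correct and follows essentially the same strategy as the paper's: handle the case where two support indices share a group (or $|S|\le t-1$) directly via the stored orthonormalizations, and for the remaining case pick the heaviest coordinate $i^*$, query the approximate nearest-subspace data structure in its group, and use the stored vector $\gamma-e_{k^*}$ to cancel a coordinate and land back in the first case. One small slip: the Gram--Schmidt outputs $\gamma^{(\ell)}\cdot p$ are orthogonal but not in general unit, so $\|v\|_2^2=\sum_\ell \beta_\ell^2\|\gamma^{(\ell)}\cdot p\|_2^2$ rather than $\sum_\ell \beta_\ell^2$---the Cauchy--Schwarz step and the conclusion $C_{\mathcal D}(x)\le\sqrt{t}\,\|v\|_2$ are unaffected.
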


\begin{proof}
Since the algorithm \repvec{} makes less than $n^t$ queries to the data structures $\CN^j$, and each query has failure probability at most $\delta' = \delta/n^t$, the probability that any query fails is at most $1-\delta$. We henceforth assume that all queries succeed, i.e. satisfy the correctness guarantee and time complexity bound stated in Theorem~\ref{theorem:geo}.

\paragraph{Time complexity.} We start by analyzing the time complexity of \repvec{$\{p_1,\dots,p_n\}, t,\delta$}. For any fixed $j \in [\sqrt{n}]$, the construction time of $\CN^j$ (with $|I_j| = O(\sqrt{n})$ points in $\RR^m$, query dimension $t-1$, and failure probability $\delta/n^t$) is $O(n^{t/2}(\log n)^{O(t)} m^{O(1)} \log(1/\delta))$. We make $\binom{n}{t-1} + |I_j|^2 \binom{n}{t-2} = O(n^{t-1})$ queries to $\CN^j$, each with time complexity $n^{1/2 - 1/(4(t-1))}m^{O(1)}\log(1/\delta)$. Each projection step and each orthonormalization step has time complexity $\poly(t,m)$. Thus, since $t \geq 2$, the time complexity for any fixed $j$ is bounded by $n^{t-1/2 - 1/(8t)} (\log n)^{O(t)} m^{O(1)} \log(1/\delta)$. Summing over $j$, the overall time complexity to compute $\mathcal{D}$ is at most $n^{t-1/(8t)}(\log n)^{O(t)} m^{O(1)}\log(1/\delta)$ as claimed.

\paragraph{Correctness.} The bound on $|\mathcal{D}|$ and the fact that all elements of $\mathcal{D}$ are $t$-sparse are immediate from the algorithm definition. It remains to bound $C_\CD(x)$ for $t$-sparse vectors $x$. First, note that for any $(t-1)$-sparse $y \in \RR^n$, because of step (4), the dictionary contains vectors $\gamma^1,\dots,\gamma^{t-1}$ that span $\supp(y)$ and satisfy $\langle \sum_{i=1}^n \gamma^k_i p_i, \sum_{i=1}^n \gamma^\ell_i p_i\rangle = 0$ for all $k \neq \ell$. Thus, letting $\alpha_1,\dots,\alpha_{t-1} \in \RR$ be such that $y = \alpha_1 \gamma^1 + \dots + \alpha_{t-1} \gamma^{t-1}$, we get
\begin{equation} 
C_\mathcal{D}(y) \leq \sum_{j=1}^{t-1} |\alpha_j| \cdot \norm{\sum_{i=1}^n \gamma^j_i p_i}_2 \leq \sqrt{t}\sqrt{\sum_{j=1}^{t-1}\alpha_j^2 \norm{\sum_{i=1}^n \gamma^j_i p_i}_2^2} = \sqrt{t}\norm{\sum_{i=1}^n y_i p_i}_2.
\label{eq:tminusone-cost}
\end{equation}

Now fix any nonzero $t$-sparse $x \in \RR^n$. Fix any $a \in \argmax_{i \in [n]} |x_i|$, and let $j \in [\sqrt{n}]$ be such that $a \in I_j$. Let $T 
 = \supp(x)\setminus \{a\}$. Let $q := \Proj_{\vspan\{p_i: i \in T\}} p_{h(T,j)}$. Then by the correctness guarantee of $\CN^j$ on query $\vspan\{p_i:i \in T\}$,
 \begin{equation} \norm{p_{h(T, j)} - q}_2 \leq O(\log n) \cdot \norm{p_a + \sum_{i \neq a} \frac{x_i}{x_a} p_i}_2 = O(\log n) \cdot \frac{\norm{\sum_i x_i p_i}_2}{|x_a|}.
 \label{eq:pq-bound}
 \end{equation}

\paragraph{Case I.} Suppose that $\norm{p_{h(T,j)} - q}_2 \geq 1/2$. Then by (\ref{eq:pq-bound}), we have $|x_a| \leq O(\log n) \cdot \norm{\sum_i x_i p_i}_2$. Thus, by the triangle inequality, 
\[\norm{\sum_{i \neq a} x_i p_i}_2 \leq |x_a| + \norm{\sum_i x_i p_i}_2 \leq O(\log n) \cdot \norm{\sum_i x_i p_i}_2.\]
It follows from Fact~\ref{fact:c-triangle} and (\ref{eq:tminusone-cost}) that 
\[C_\mathcal{D}(x) \leq C_\mathcal{D}(x_ae_a) + C_\mathcal{D}(x - x_a e_a) \leq \sqrt{t} |x_a| + \sqrt{t} \norm{\sum_{i \neq a} x_i p_i}_2 \leq O(\sqrt{t}\log n) \cdot \norm{\sum_i x_i p_i}_2
\]
as desired. 

\paragraph{Case II.} It remains to consider the case that $\norm{p_{h(T,j)} - q}_2 \leq 1/2$. In this case we have $\norm{q}_2 \geq \norm{p_{h(T,j)}}_2 - 1/2 \geq 1/2$. By step (3) of the algorithm, the dictionary contains some vector $\gamma - e_{h(T,j)}$ such that $\supp(\gamma) \subseteq T$ and $q = \sum_{i\in T}\gamma_i p_i$. Fix any $b \in \argmax_i |\gamma_i|$. Since $q = \sum \gamma_i p_i$ we get $|\gamma_b| \geq \frac{\norm{q}_2}{t} \geq 1/(2t)$. Now, by Fact~\ref{fact:c-triangle},
\[C_\mathcal{D}(x)
\leq C_\mathcal{D}\left(-\frac{x_b}{\gamma_b} (e_{h(T,j)} - \gamma)\right) + C_\mathcal{D}\left(x + \frac{x_b}{\gamma_b} (e_{h(T,j)} - \gamma)\right).
\]
By construction, $e_{h(T,j)} - \gamma$ is an element of the dictionary, so we can bound the first term as
\begin{align*} 
C_\mathcal{D}\left(-\frac{x_b}{\gamma_b} (e_{h(T,j)} - \gamma)\right)
&\leq \frac{|x_b|}{|\gamma_b|}\norm{\sum_{i=1}^n (e_{h(T,j)} - \gamma)_i p_i}_2 \\
&= \frac{|x_b|}{|\gamma_b|}\norm{p_{h(T,j)} - q}_2 \\
&\leq 2t|x_a| \norm{p_{h(T,j)} - q}_2 \\
&\leq O(t\log n) \norm{\sum_{i=1}^n x_i p_i}_2
\end{align*}
where the equality uses that $q = \sum_{i =1}^n \gamma_i p_i$, the second inequality uses that $|x_b| \leq |x_a|$ and $|\gamma_b| \geq 1/(2t)$, and the final inequality uses (\ref{eq:pq-bound}). 

Finally, observe that 
\[z := x + \frac{x_b}{\gamma_b} (e_{h(T,j)} - \gamma) = x_a e_a + \frac{x_b}{\gamma_b} e_{h(T,j)} + \sum_{i \in T \setminus \{a,b\}} \left(x_i-\frac{x_b\gamma_i}{\gamma_b}\right)e_i
\]
since the coefficients on $e_b$ cancel out. Thus, $z$ is a linear combination of two elements of $\{p_i: i \in I_j\}$ together with $t-2$ elements of $\{p_i: i \in [n]\}$. Because of step (4) of the algorithm, the dictionary contains vectors $\gamma^1,\dots,\gamma^t$ that span $\supp(z)$ and satisfy $\langle \sum_{i=1}^n \gamma^k_i p_i, \sum_{i=1}^n \gamma^\ell_i p_i\rangle = 0$ for all $k \neq \ell$. The same argument as for (\ref{eq:tminusone-cost}) gives that
\begin{align*}
C_\mathcal{D}\left(x + \frac{x_b}{\gamma_b} (e_{h(T,j)} - \gamma)\right) 
&\leq \sqrt{t}\norm{\sum_{i=1}^n x_i p_i + \frac{x_b}{\gamma_b} (p_{h(T,j)} - q)}_2 \\
&\leq \sqrt{t}\norm{\sum_{i=1}^n x_i p_i}_2 + O(\sqrt{t}\log n) \frac{|x_b|}{|\gamma_b||x_a|} \norm{\sum_{i=1}^n x_i p_i}_2 \\
&\leq O(t^{3/2}\log n) \norm{\sum_{i=1}^n x_i p_i}_2
\end{align*}
where the second inequality uses the triangle inequality and (\ref{eq:pq-bound}), and the final inequality uses that $|x_b| \leq |x_a|$ and $|\gamma_b| \geq 1/(2t)$. Putting everything together, we conclude that
\[ C_\mathcal{D}(x) \leq O(t^{3/2}\log n) \norm{\sum_{i=1}^n x_i p_i}_2\]
as claimed.
\end{proof}

We now show that \repvec{} can be applied to the columns of the sample matrix to obtain a $\ell_1$-representation for $\Sigma$ (procedure \comprep{} in Algorithm~\ref{alg:arbitrary}). Up to an appropriate rescaling of the covariates, Lemma~\ref{lemma:repvec-guarantees} immediately implies that $\CD$ gives a $\ell_1$-representation for the empirical covariance $\hat{\Sigma}$. The main result then follows from concentration of $\hat{\Sigma}$ and sparsity of the elements of the dictionary.

\begin{theorem}\label{theorem:arb-sigma-alg}
    Let $n,m,t \in \NN$ and let $\Sigma: n \times n$ be a positive-definite matrix. Suppose $m \geq Ct\log n$ for a sufficiently large constant $C$. Let $X_1,\dots,X_m \sim N(0,\Sigma)$ be independent samples, and let $\CD$ be the output of \comprep{$\{X_1,\dots,X_m\}, t$}. Then $|\CD| \leq O(n^{t-1/2})$, and every element of $\CD$ is $t$-sparse. Also, with probability at least $1 - e^{-\Omega(m)}$, the time complexity of the algorithm is $O(n^{t-\Omega(1/t)} (\log n)^{O(t)} m^{O(1)})$, and $\CD$ is a $(t, C_\textsf{l1rep} t^{3/2}\log n)$-$\ell_1$-representation for $\Sigma$, for some universal constant $C_\textsf{l1rep}$.% in time $\tilde{O}(n^{t - \Omega(1/t)} \log^{O(t)} n)$, with probability $1 - o(1)$.
\end{theorem}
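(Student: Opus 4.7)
The plan is to derive Theorem~\ref{theorem:arb-sigma-alg} from Lemma~\ref{lemma:repvec-guarantees}, which carries the combinatorial and geometric content, by adding a layer of sample concentration. First, I would invoke Lemma~\ref{lemma:repvec-guarantees} on the unit-normalized sample columns $p_i = q_i/\|q_i\|_2$ with failure probability $\delta = e^{-m}$. This immediately yields $\tilde{\CD}$ of size $O(n^{t-1/2})$ with $t$-sparse elements, the claimed runtime bound, and the representation guarantee
\[
C_{\tilde{\CD}}(\tilde x) \leq O(t^{3/2}\log n)\,\norm{\sum_i \tilde x_i p_i}_2
\]
for every $t$-sparse $\tilde x \in \RR^n$. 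The final dictionary $\CD$ is obtained from $\tilde{\CD}$ by an appropriate diagonal rescaling built from $\|q_1\|_2,\dots,\|q_n\|_2$; this preserves sparsity and cardinality, so the remaining task is to transfer the above bound from the $p_i$-geometry into the $\Sigma$-geometry.

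Next, I would establish two standard concentration events, both holding with probability at least $1 - e^{-\Omega(m)}$ once $m \geq Ct\log n$. \textbf{(i) Diagonal concentration:} $\|q_i\|_2^2 \in [m\Sigma_{ii}/2,\, 2m\Sigma_{ii}]$ for every $i \in [n]$, by $\chi^2$-tail bounds and a union bound across the $n$ coordinates; thus $\hat D^2 \approx m \cdot \diag(\Sigma)$. \textbf{(ii) Sparse restricted eigenvalue:} uniformly over all $t$-sparse $u \in \RR^n$, $\|u\|_{\hat\Sigma}^2 \in [\|u\|_\Sigma^2/2,\, 2\|u\|_\Sigma^2]$. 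This is the standard Gaussian RE property, proved by a net argument over the union of $\binom{n}{t}$ sparse subspaces.

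With (i)--(ii) in hand, the pivotal identity connecting the two geometries is
\[
\norm{\sum_i x_i p_i}_2^2 \;=\; x^\t \hat D^{-1}\bigl(m\hat\Sigma\bigr)\hat D^{-1} x \;=\; m\,\|\hat D^{-1}x\|_{\hat\Sigma}^2,
\]
which lets me pass from the $\ell_2$-norm of a $p_i$-combination to a $\hat\Sigma$-norm and then, via (ii), to a $\Sigma$-norm. Given any $t$-sparse target $v$, I would rescale $v$ diagonally to form a vector $\tilde v$ matched to the dictionary construction (so that the $\CD$-combination corresponding to a $\tilde{\CD}$-representation of $\tilde v$ reconstructs $v$), apply Lemma~\ref{lemma:repvec-guarantees} to $\tilde v$ to obtain coefficients $\alpha_d$, and then translate the Lemma's bound $\sum_d |\alpha_d|\,\norm{\sum_i d_i p_i}_2 \leq O(t^{3/2}\log n)\,\norm{\sum_i \tilde v_i p_i}_2$ term-by-term, using the identity and (i)--(ii), into $\sum_d |\alpha_d|\,\|D\|_\Sigma \leq C_{\textsf{l1rep}}\, t^{3/2}\log n \cdot \|v\|_\Sigma$. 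Both $v$ and every $D \in \CD$ are $t$-sparse, so (ii) is invoked only on $t$-sparse vectors, and the runtime and cardinality bounds just pass through unchanged from Lemma~\ref{lemma:repvec-guarantees}.

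The main obstacle, and the reason the uniform restricted-eigenvalue property in (ii) is needed instead of a pointwise estimate, is that $\CD$ is itself a random function of the samples $X_1,\dots,X_m$: the $\ell_1$-representation bound must hold for \emph{every} $t$-sparse $v$ simultaneously, and the $D \in \CD$ used in its representation are a priori arbitrary $t$-sparse vectors whose supports depend on the samples. One therefore needs uniform comparability of $\|\cdot\|_{\hat\Sigma}$ and $\|\cdot\|_\Sigma$ across the entire $t$-sparse cone, which is exactly what pins the sample complexity at $m = \Theta(t\log n)$ and no larger.
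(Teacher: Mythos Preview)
Your proposal is correct and follows essentially the same route as the paper: invoke Lemma~\ref{lemma:repvec-guarantees} on the normalized columns with failure probability $e^{-m}$, use the identity $\norm{\sum_i x_i p_i}_2^2 = m\,\|\hat D^{-1} x\|_{\hat\Sigma}^2$ to pass to the empirical covariance, and then apply the uniform $t$-sparse restricted-eigenvalue bound (your event (ii)) to convert $\|\cdot\|_{\hat\Sigma}$ to $\|\cdot\|_\Sigma$ on both sides. One minor point: your event (i), the diagonal concentration $\|q_i\|_2^2 \approx m\Sigma_{ii}$, is not actually needed---once the rescaling $\tilde v = \hat D v$ is set so that $\hat D^{-1}\tilde v = v$ (and likewise $\hat D^{-1}\tilde d = D$ for the dictionary elements), the identity already delivers $\norm{\sum_i \tilde d_i p_i}_2 = \sqrt{m}\,\|D\|_{\hat\Sigma}$ and $\norm{\sum_i \tilde v_i p_i}_2 = \sqrt{m}\,\|v\|_{\hat\Sigma}$ exactly, with no residual diagonal factors to control; the paper's proof uses only (ii).
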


\begin{proof}
Let $\hat{\Sigma} = \frac{1}{m}\mathbb{X}^\t \mathbb{X}$. Let $\tilde{\CD}$ denote the intermediary dictionary constructed by the algorithm using \repvec{}. With probability at least $1-e^{-m}$, the successful event of Lemma~\ref{lemma:repvec-guarantees} holds. By standard concentration bounds (see e.g. Exercise 4.7.3 in \cite{vershynin2018high}), it holds that $\frac{1}{2} \norm{x}_\Sigma \leq \norm{x}_{\hat{\Sigma}} \leq 2\norm{x}_\Sigma$ for all $t$-sparse $x \in \RR^n$, with probability at least $1-e^{-\Omega(m)}$. Henceforth assume that both of these events hold.

\paragraph{Time complexity.} The time complexity of the algorithm is dominated by the call to \repvec{}. By the guarantee of Lemma~\ref{lemma:repvec-guarantees}, this takes time $O(n^{t-\Omega(1/t)}(\log n)^{O(t)}m^{O(1)})$.

\paragraph{Correctness.} The bounds on $|\CD|$ and sparsity of elements of $\CD$ follow from identical bounds for $\tilde{\CD}$ (see Lemma~\ref{lemma:repvec-guarantees}), and the fact that every element of $\CD$ is obtained by rescaling the coordinates of some element of $\tilde{\CD}$. It remains to show that $\CD$ is a $(t,O(t^{3/2}\log n))$-$\ell_1$ representation for $\Sigma$.

%Recall that $\mathbb{X}: m \times n$ is the matrix with rows $X_1,\dots,X_m$, and $q_1,\dots,q_n \in \RR^m$ are the columns of $\mathbb{X}$. The dictionary $\tilde{D} \subseteq \RR^n$ Apply the above algorithm to the vectors $p_i := q_i / \norm{q_i}_2$, and let $\mathcal{D} \subseteq \RR^n$ be the resulting dictionary. Note that $|\mathcal{D}| \leq O(n^{t-1/2})$, and every element is $t$-sparse (so it can be represented in space $O(n^{t-1/2})$). 
Fix any $t$-sparse $v \in \RR^n$, and define $\tilde{v} = \hat{D} v$. By the guarantee of Lemma~\ref{lemma:repvec-guarantees}, since $\tilde{v}$ is also $t$-sparse, there is some $\alpha \in \RR^{\tilde{\CD}}$ such that $\tilde{v} = \sum_{\tilde{d} \in \tilde{\CD}} \alpha_{\tilde{d}} \tilde{d}$ and
\[ \sum_{\tilde{d} \in \tilde{\CD}} |\alpha_{\tilde{d}}| \cdot \norm{\sum_{i=1}^n \tilde{d}_i \frac{q_i}{\norm{q_i}_2}}_2 \leq O(t^{3/2}\log n) \cdot \norm{\sum_{i=1}^n \tilde{v}_i \frac{q_i}{\norm{q_i}_2}}_2.\]
But note that $\tilde{v}_i = \hat{D}_{ii} v_i = \norm{q_i}_2 v_i$ for all $i$. Similarly, every $\tilde{d} \in \tilde{\CD}$ corresponds to some $d \in \CD$ with $\tilde{d}_i = \norm{q_i}_2 d_i$ for all $i$. Thus, reindexing $\alpha$ according to $\CD$ in the natural way, we have that $v = \sum_{d \in \CD} \alpha_d d$ and
\[ \sum_{d \in \CD} |\alpha_d| \cdot \norm{\sum_{i=1}^n d_i q_i}_2 \leq O(t^{3/2}\log n) \cdot \norm{\sum_{i=1}^n v_i q_i}_2.\]
But now let $\hat{\Sigma} = \frac{1}{m} \mathbb{X}^\t \mathbb{X}$. For any $i,j \in [n]$ we have $\langle q_i,q_j\rangle = m\hat{\Sigma}_{ii}$. Hence,
\[ \norm{\sum_{i=1}^n v_i q_i}_2^2 = \sum_{i,j \in [n]} v_i v_j \hat{\Sigma}_{ij} = v^\t \hat{\Sigma} v\]
and similarly for $\norm{\sum_{i=1}^n d_i q_i}_2^2$. Thus, we get
\[ \sum_{d \in \mathcal{D}} |\alpha_d| \cdot \norm{d}_{\hat{\Sigma}} \leq O(t^{3/2}\log n) \cdot \norm{v}_{\hat{\Sigma}}.\]
But as shown above, we know that $\frac{1}{2} \norm{x}_\Sigma \leq \norm{x}_{\hat{\Sigma}} \leq 2\norm{x}_\Sigma$ for all $t$-sparse $x \in \RR^n$. Since $v$ and all $d \in \mathcal{D}$ are $t$-sparse, we conclude that
\[ \sum_{d \in \mathcal{D}} |\alpha_d| \cdot \norm{d}_{\Sigma} \leq O(t^{3/2}\log n) \cdot \norm{v}_\Sigma\]
as desired.
\end{proof}

\begin{algorithm2e}[t]
  \SetAlgoLined\DontPrintSemicolon
  \caption{Sparse linear regression for arbitrary $\Sigma$}
  \label{alg:arbitrary}
\SetKwFunction{slr}{SparseLinearRegression}

\myalg{\slr{$(X_i,y_i)_{i=1}^m$, $t$, $B$, $\sigma^2$}}{
$\CD \gets$ \comprep{$\{X_1,\dots,X_{100t\log n}\}, t$}\;
\For{$j = m/2+1,\dots,m$}{
    \For{$d \in \CD$}{
        $\tilde{X}_{j,d} \gets \left\langle X_j, d/\sqrt{(2/m)\sum_{i=1}^{m/2} \langle X_i, d\rangle^2} \right\rangle$\;
    }
}
\SetKwComment{Comment}{/* }{ */}
\Comment*[l]{See Theorem~\ref{theorem:mirror-descent-lasso} for definition of \mirrordescent{}, and Theorem~\ref{theorem:arb-sigma-alg} for definition of $C_\textsf{l1rep}$}
$\hat{\beta} \gets$ \mirrordescent{$(\tilde{X}_i,y_i)_{i=m/2+1}^m$, $2C_\textsf{l1rep}t^{3/2}B\log(n)$, $m/2$, $\sigma^2$}\;
$\hat{w} \gets \sum_{d \in \CD} \hat{\beta}_d d/\sqrt{(2/m)\sum_{i=1}^{m/2} \langle X_i, d\rangle^2}$\;
\Return $\hat{w}$\;
}

\end{algorithm2e}

We finally restate and prove Theorem~\ref{theorem:arbitrary-alg-intro}, as a corollary of Theorem~\ref{theorem:arb-sigma-alg} and the well-known fact that standard ``slow rate'' guarantees for Lasso (i.e. based on the $\ell_1$ norm of the regressor) can be achieved in near-linear time (Theorem~\ref{theorem:linear-time-lasso}). The pseudocode for the main algorithm is given in Algorithm~\ref{alg:arbitrary}.

\begin{corollary}
Let $n,m,t,B \in \NN$ and $\sigma>0$, and let $\Sigma: n \times n$ be a positive-definite matrix. Let $w^* \in \RR^n$ be $t$-sparse with $\norm{w^*}_\Sigma \leq B$. Suppose $m \geq Ct\log n$ for a sufficiently large constant $C$. Let $(X_i,y_i)_{i=1}^m$ be independent samples where $X_i \sim N(0,\Sigma)$ and $y_i = \langle X_i,w^* \rangle + N(0,\sigma^2)$. Then there is an $O(m^2n^{t-1/2} + n^{t-\Omega(1/t)} \log^{O(t)} n)$-time algorithm (Algorithm~\ref{alg:arbitrary}) that, given $(X_i,y_i)_{i=1}^m$, $t$, $B$, $\sigma^2$, produces an estimate $\hat{w} \in \RR^n$ satisfying, with probability $1-o(1)$,
\[ \norm{\hat{w} - w^*}_\Sigma^2 \leq \tilde{O}\left(\frac{\sigma^2}{\sqrt{m}} + \frac{\sigma B t^{3/2}}{\sqrt{m}} + \frac{B^2 t^3}{m}\right).\]
\end{corollary}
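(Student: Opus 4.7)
The plan is to combine Theorem~\ref{theorem:arb-sigma-alg} (which produces a small $\ell_1$-representation $\mathcal{D}$ for $\Sigma$) with the near-linear-time Lasso guarantee of Theorem~\ref{theorem:linear-time-lasso}, using sample splitting. First, I would set aside the first $100t\log n$ samples and run \comprep{} on them. By Theorem~\ref{theorem:arb-sigma-alg}, with probability $1-o(1)$ this takes $n^{t-\Omega(1/t)}(\log n)^{O(t)}$ time and returns a set $\mathcal{D}$ of $t$-sparse vectors of size $N = O(n^{t-1/2})$ that forms a $(t, C_\textsf{l1rep} t^{3/2}\log n)$-$\ell_1$-representation for $\Sigma$. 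In particular, since $w^*$ is $t$-sparse with $\norm{w^*}_\Sigma \leq B$, there exists $\beta^* \in \RR^N$ with $w^* = \sum_{d \in \mathcal{D}} \beta^*_d d$ and $\sum_d |\beta^*_d|\cdot\norm{d}_\Sigma \leq C_\textsf{l1rep} t^{3/2} B \log n$.

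Next, I would apply Lasso in the new feature space, which is the point of the algorithm \slr{} from Algorithm~\ref{alg:arbitrary}. On the second half of samples, for each $d \in \mathcal{D}$ estimate a scale $\hat{s}_d := \sqrt{(2/m)\sum_{i=1}^{m/2}\langle X_i, d\rangle^2}$ from a fresh sub-batch, and define rescaled features $\tilde{X}_{j,d} := \langle X_j, d/\hat{s}_d\rangle$ for $j > m/2$. Since each $d$ is $t$-sparse, standard chi-squared concentration on $(m/2)$ Gaussian samples (applied via a union bound over the at most $N = O(n^{t-1/2})$ elements of $\mathcal{D}$, using $m \geq \Omega(t\log n)$) gives $\hat{s}_d \in [\tfrac{1}{2}\norm{d}_\Sigma, 2\norm{d}_\Sigma]$ with probability $1-o(1)$. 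Consequently $\Var \tilde{X}_{j,d} = \norm{d}_\Sigma^2/\hat{s}_d^2 \in [\tfrac14, 4]$, and the $\ell_1$-representability gives a coefficient vector $\tilde{\beta}^*$ in the rescaled basis with $\norm{\tilde{\beta}^*}_1 \leq 2C_\textsf{l1rep} t^{3/2} B \log n =: B'$ and $y_j = \langle \tilde{X}_j, \tilde{\beta}^*\rangle + \xi_j$ (exactly, since rescaling is a change of basis).

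I would then invoke Theorem~\ref{theorem:linear-time-lasso} on the data $(\tilde{X}_j, y_j)_{j=m/2+1}^m$ with $\ell_1$-radius $B'$ and $H = O(1)$ (since $\max_d \Var \tilde{X}_{j,d} \leq 4$). This runs in time $\tilde{O}(Nm^2) = \tilde{O}(n^{t-1/2} m^2)$ and returns $\hat{\beta}$ satisfying, with probability $1-o(1)$,
\[\bigl\|\hat{\beta} - \tilde{\beta}^*\bigr\|_{\Sigma'}^2 \leq \tilde{O}\!\left(\frac{\sigma^2}{\sqrt{m}} + \frac{\sigma B'}{\sqrt{m}} + \frac{(B')^2}{m}\right),\]
where $\Sigma'$ is the covariance of $\tilde{X}_j$. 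Finally I would set $\hat{w} := \sum_d \hat{\beta}_d \cdot d/\hat{s}_d$; this is a linear change of basis under which $\norm{\hat{w} - w^*}_\Sigma = \norm{\hat{\beta} - \tilde{\beta}^*}_{\Sigma'}$. Substituting $B' = O(t^{3/2}B\log n)$ yields the claimed bound $\tilde O(\sigma^2/\sqrt m + \sigma B t^{3/2}/\sqrt m + B^2 t^3/m)$. The total runtime is $\tilde{O}(n^{t-1/2} m^2) + n^{t-\Omega(1/t)}(\log n)^{O(t)}$, matching the theorem.

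The main obstacle is book-keeping rather than conceptual: I need to check that the transition between the population dictionary and the empirically-normalized dictionary preserves both the $\ell_1$-representation bound (up to constants) and the hypotheses of Theorem~\ref{theorem:linear-time-lasso}, in particular that $\max_d \Var \tilde{X}_{j,d}$ is $O(1)$ uniformly across the exponentially many $d \in \mathcal{D}$. The $t$-sparsity of each dictionary element is exactly what makes the union bound feasible with only $m \geq \Omega(t\log n)$ samples; without sparsity this would fail. A secondary subtlety is that Theorem~\ref{theorem:arb-sigma-alg} already uses $\Omega(t\log n)$ samples for the dictionary, so sample splitting between dictionary construction, scale estimation, and Lasso fitting must all fit within the same sample budget $m \geq Ct\log n$ (by taking $C$ a sufficiently large constant, a constant fraction suffices for each stage).
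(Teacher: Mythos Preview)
Your proposal is correct and follows essentially the same route as the paper: build the dictionary via Theorem~\ref{theorem:arb-sigma-alg}, empirically normalize each dictionary atom, run \mirrordescent{} on the rescaled features, and translate the Lasso error back via the linear change of basis. One small technical point: you justify $\hat{s}_d \in [\tfrac12\norm{d}_\Sigma, 2\norm{d}_\Sigma]$ by a union bound over the $N=O(n^{t-1/2})$ atoms, but as written the atoms $d$ are random (they depend on the first batch of samples, which overlaps the batch defining $\hat{s}_d$), so a per-$d$ chi-squared bound plus union bound is not literally valid. The paper sidesteps this by invoking the uniform bound $\tfrac12\norm{x}_\Sigma \le \norm{x}_{\hat\Sigma} \le 2\norm{x}_\Sigma$ simultaneously for \emph{all} $t$-sparse $x$ (Vershynin, Exercise~4.7.3), which holds with probability $1-e^{-\Omega(m)}$ when $m \gtrsim t\log n$; since every $d\in\mathcal{D}$ is $t$-sparse, this covers them regardless of how they were chosen. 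This is exactly the role of $t$-sparsity you identified, just phrased as a uniform statement rather than a union bound over the (data-dependent) dictionary.
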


\begin{proof}
By Theorem~\ref{theorem:arb-sigma-alg} it holds with probability $1-n^{-100t}$ that $\CD$ is a $(t, C_\textsf{l1rep}t^{3/2}\log n)$-$\ell_1$-representation for $\Sigma$. Also, by standard concentration bounds (e.g. Exercise 4.7.3 in \cite{vershynin2018high}), we have $\frac{1}{2} \norm{x}_\Sigma \leq \norm{x}_{\hat{\Sigma}} \leq 2\norm{x}_\Sigma$ for all $t$-sparse $x \in \RR^n$ (where $\hat{\Sigma} = \frac{2}{m}\sum_{i=1}^{m/2}X_iX_i^\t$) with probability at least $1-\exp(-\Omega(m))$. Suppose that both of these events occur.

For each of the remaining $m/2$ samples $X_j$, compute $\tilde{X}_j \in \RR^\mathcal{D}$ where the entry $\tilde{X}_{j,d}$ corresponding to $d \in \mathcal{D}$ is $\langle X_j, d/\norm{d}_{\hat{\Sigma}}\rangle$ (where $\hat{\Sigma} = \frac{2}{m} \sum_{i=1}^{m/2} X_i X_i^\t$ is not explicitly computed; since $d$ is sparse, both $\langle X_j,d\rangle$ and $\norm{d}_{\hat{\Sigma}}$ can be computed in $\poly(t,m)$ time). Let $N(0,\Gamma)$ denote the distribution of each $\tilde{X}_j$. For each $d \in \CD$, since $d$ is $t$-sparse, we have that $\EE_{x\sim N(0,\Sigma)} \langle x, d/\norm{d}_{\hat{\Sigma}}\rangle^2 = \norm{d}_\Sigma^2/\norm{d}_{\hat{\Sigma}}^2 \leq 4$. Thus, $\Gamma_{dd} \leq 4$ for all $d$. 

Moreover, since $w^*$ is $t$-sparse, there is some $\alpha \in \RR^\mathcal{D}$ with $w^* = \sum_d \alpha_d d$ and $\sum_d |\alpha_d| \norm{d}_\Sigma \leq C_\textsf{l1rep} t^{3/2}\log(n) \cdot \norm{w^*}_\Sigma$. Define $\beta \in \RR^d$ by $\beta_d = \alpha_d \norm{d}_{\hat{\Sigma}}$. Then $w^* = \sum_d \beta_d d/\norm{d}_{\hat{\Sigma}}$ and \[\sum_d |\beta_d| \leq 2 \cdot \sum_d |\alpha_d| \norm{d}_\Sigma \leq 2C_\textsf{l1rep} t^{3/2}\log(n) \cdot \norm{w^*}_\Sigma.\]
But now for any of the remaining $m/2$ samples, we have that
\[\langle \tilde{X}_j, \beta\rangle = \sum_d \langle X_j, d/\norm{d}_{\hat{\Sigma}}\rangle \alpha_d \norm{d}_{\hat{\Sigma}} = \langle X_j, \sum_d \alpha_d d\rangle = \langle X_j, w^*\rangle,
\]
and thus $y - \langle \tilde{X}_j,\beta\rangle \sim N(0,\sigma^2)$. So we can apply Theorem~\ref{theorem:linear-time-lasso} to samples $(\tilde{X}_j,y_j)_{j=m/2+1}^m$ to compute an estimator $\hat{\beta}$ satisfying
\[ \norm{\hat{\beta} - \beta}_\Gamma^2 \leq \tilde{O}\left(\frac{\sigma^2}{m} + \frac{\sigma B t^{3/2}}{\sqrt{m}} + \frac{B^2 t^3}{m}\right)\]
using that $\norm{\beta}_1 \leq 2C_\textsf{l1rep}t^{3/2}\log(n) \cdot \norm{w^*}_\Sigma \leq 2C_\textsf{l1rep}t^{3/2}B\log(n)$, and using the bound $\max_d \Gamma_{dd} \leq 4$. The time complexity of this step is $\tilde{O}(|\mathcal{D}| m^2) = \tilde{O}(m^2 n^{t-1/2})$. Finally, compute $\hat{w} := \sum_d \hat{\beta}_d d/\norm{d}_{\hat{\Sigma}}$. We have that $\norm{\hat{w} - w^*}_\Sigma = \norm{\hat{\beta} - \beta}_\Gamma$, which completes the proof.
\end{proof}

\section{Fixed-parameter tractability in $\kappa$ and $t$}\label{section:fpt}

In this section we prove Theorem~\ref{theorem:fpt-kappa}, which shows we can achieve upper bounds on $\CN_{t,\alpha}(\Sigma)$ for $\alpha$ independent of $\kappa$ and $n$, if we are willing to incur dependence on $\kappa$ in the resulting bound. In fact, we actually prove an upper bound on the packing number $\CP_{t,\alpha}(\Sigma)$.

To achieve this, the first key idea is to consider the dual certificates for a packing. Suppose that $v_1,\dots,v_N$ are unit vectors (in the $\Sigma$-norm) with $|\langle v_i,v_j\rangle_\Sigma| \leq \alpha$ for all $i \neq j$. Then $|\langle v_i, \Sigma v_i\rangle| \geq \alpha^{-1} \max_{j \neq i} |\langle v_j, \Sigma v_i\rangle|$, so $\Sigma v_i$ certifies that any linear combination $v_i = \sum_{j \neq i} x_j v_j$ must have the property that $\norm{x}_1 \geq \alpha^{-1}$. Thus, to show that there cannot be a large packing of sparse vectors in the $\Sigma$-norm, it would suffice to prove that any large set of sparse vectors must have one vector that can be written as a linear combination of the remaining vectors, where the coefficient vector has small $\ell_1$ norm. In fact, this would give an upper bound on $\CN_{t,\alpha}(\Sigma)$ for all $\Sigma$.

We do not know if such a statement is true. However, we can prove an \emph{approximate} analogue. The following lemma shows that under a condition number bound on $\Sigma$, the dual certificate argument can be generalized to require only a weaker property: that any large set of sparse vectors must have one vector that can be \emph{approximately} written as a linear combination of the remaining vectors, with low $\ell_1$ cost. The approximation error determines how small the condition number must be:

\begin{lemma}\label{lemma:packing-dual-bound}
Let $n,N,t,T \in \NN$ and let $\delta>0$. Suppose that for all $t$-sparse vectors $v_1,\dots,v_N \in \RR^n$, there exists some $i \in [N]$ and $x \in \RR^N$ such that $\norm{x}_1 \leq T$ and \[\norm{v_i - \sum_{j \neq i} x_j v_j}_2 \leq \delta \cdot \max_{j \in [N]} \norm{v_j}_2.\]
Then for every positive-definite matrix $\Sigma: n \times n$ with $\kappa(\Sigma) < 1/(4\delta^2)$ it holds that $\CP_{t, 1/(3T)}(\Sigma) \leq N\log_2 \kappa(\Sigma)$.
\end{lemma}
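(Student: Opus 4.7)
The plan is a dual-certificate argument. Let $v_1, \ldots, v_M$ be a $(t, 1/(3T))$-packing for $\Sigma$, normalized so that $\norm{v_i}_\Sigma = 1$ for all $i$. I will show that $M < N$ under the hypothesis, which implies the stated bound $N\log_2\kappa$ in the non-degenerate regime $\kappa \geq 2$.

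The core identity, valid for any $i \in [M]$ and any $x \in \RR^N$ supported off index $i$, is
\[
1 \;=\; \norm{v_i}_\Sigma^2 \;=\; \langle v_i - \textstyle\sum_{j\neq i} x_j v_j,\, v_i\rangle_\Sigma \;+\; \sum_{j \neq i} x_j \langle v_j, v_i\rangle_\Sigma.
\]
The second term is at most $\norm{x}_1 \cdot \max_{j \neq i}|\langle v_j, v_i\rangle_\Sigma| \leq T \cdot \tfrac{1}{3T} = \tfrac{1}{3}$ by the packing condition and Hölder's inequality. The first term is bounded, via Cauchy-Schwarz in the $\Sigma$-inner product followed by the norm comparison $\norm{w}_\Sigma \leq \sqrt{\lambda_{\max}}\norm{w}_2$, by $\sqrt{\lambda_{\max}} \cdot \norm{v_i - \sum_{j\neq i} x_j v_j}_2$. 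Now assuming $M \geq N$, apply the hypothesis to any $N$ packing vectors: it produces some $i$ and $x \in \RR^N$ with $\norm{x}_1 \leq T$ and $\norm{v_i - \sum x_j v_j}_2 \leq \delta \max_j \norm{v_j}_2 \leq \delta/\sqrt{\lambda_{\min}}$, where the last step uses $\norm{v_j}_\Sigma = 1 \Rightarrow \norm{v_j}_2 \leq 1/\sqrt{\lambda_{\min}}$. Combining gives $1 \leq \delta\sqrt{\kappa} + \tfrac{1}{3}$, which under the hypothesis $\kappa < 1/(4\delta^2)$ (i.e.\ $\delta\sqrt{\kappa} < 1/2$) yields the contradiction $1 < 5/6$. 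Thus $M < N$.

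The main obstacle will be the conversion from $\ell_2$-norm control of the residual $v_i - \sum x_j v_j$ to $\Sigma$-norm control: bounding $\norm{\cdot}_\Sigma$ by $\sqrt{\lambda_{\max}}\norm{\cdot}_2$ and then using $\max_j\norm{v_j}_2 \leq 1/\sqrt{\lambda_{\min}}$ loses a factor $\sqrt{\kappa}$, which is precisely what the condition number bound $\kappa < 1/(4\delta^2)$ is there to absorb. The stated bound $N\log_2\kappa$ is slightly weaker than the $M < N$ that the above argument actually delivers; the $\log_2\kappa$ factor presumably arises from a bucketing refinement (partitioning the packing vectors into $O(\log\kappa)$ scale classes by $\ell_2$-norm, so that the hypothesis can be applied within each bucket with $\max_j\norm{v_j}_2$ only a constant factor larger than $\norm{v_i}_2$) that would yield the bound under a milder formulation of the argument, but under the stated assumption on $\kappa$ the single-application version above already suffices.
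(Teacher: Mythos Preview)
Your proof is correct and in fact cleaner than the paper's. The paper normalizes the packing vectors to have unit $\ell_2$-norm, so that the hypothesis directly yields $\norm{v_i - \sum_{j\neq i} x_j v_j}_2 \leq \delta$; but then the packing inequality only gives $|\langle v_i,v_j\rangle_\Sigma| \leq \tfrac{1}{3T}\norm{v_i}_\Sigma\norm{v_j}_\Sigma$, and the $\Sigma$-norms of different $v_j$ can range over a factor $\kappa$. The paper handles this by bucketing the vectors into $\log_2\kappa$ groups with comparable $\Sigma$-norm, which is exactly where the $\log_2\kappa$ factor in the stated bound comes from. Your choice to normalize $\norm{v_i}_\Sigma = 1$ instead makes the packing inequality give $|\langle v_i,v_j\rangle_\Sigma| \leq 1/(3T)$ directly, and pushes the $\kappa$-dependence entirely into the residual term, where it is absorbed by the hypothesis $\kappa < 1/(4\delta^2)$ in one shot. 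The net effect is that you prove the strictly stronger conclusion $\CP_{t,1/(3T)}(\Sigma) < N$, with no bucketing needed; your remark that the $\log_2\kappa$ factor is an artifact of a bucketing refinement is exactly right, and in fact the paper's version is unnecessarily lossy.
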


\begin{proof}
Fix a positive-definite matrix $\Sigma: n \times n$ and suppose that $K := \CP_{t,1/(3T)}(\Sigma) > N\log_2 \kappa(\Sigma)$. By definition, there are nonzero $t$-sparse vectors $v_1,\dots,v_K \in \RR^N$ such that 
\[|\langle v_i, v_j\rangle_\Sigma| \leq \frac{1}{3T} \norm{v_i}_\Sigma\norm{v_j}_\Sigma\]
for all $i \neq j$. Without loss of generality, assume that $\norm{v_i}_2 = 1$ for all $i \in [K]$, so that \[\lambda_\text{min}(\Sigma) \leq \norm{v_i}_\Sigma^2 \leq \lambda_\text{max}(\Sigma).\]
So we can partition $[K]$ into $\log_2 \kappa(\Sigma)$ buckets such that $\max_{i \in B} \norm{v_i}_\Sigma^2 / \min_{i \in B} \norm{v_i}_\Sigma^2 \leq 2$ for each bucket $B \subseteq [K]$. There must be some bucket $B$ with $|B| \geq N$. By assumption, there is some $i \in B$ and $x \in \RR^N$ such that $\norm{x}_1 \leq T$ and \[\norm{v_i - \sum_{j \in B: j \neq i} x_j v_j}_2 \leq \delta.\]
Now
\begin{align*}
\langle v_i, v_i\rangle_\Sigma
&= \left\langle\Sigma v_i, \sum_{j \in B: j \neq i} x_j v_j \right\rangle + \left\langle\Sigma v_i, v_i - \sum_{j \in B: j \neq i} x_j v_j \right\rangle \\
&= \sum_{j \in B: j \neq i} x_j \langle v_i, v_j \rangle_\Sigma + \left\langle\Sigma v_i, v_i - \sum_{j \in B: j \neq i} x_j v_j \right\rangle \\
&\leq \norm{x}_1 \max_{j \in B: j \neq i} |\langle v_i,v_j\rangle_\Sigma| + \norm{v_i^\t \Sigma}_2 \cdot \delta \\
&\leq \frac{\norm{x}_1}{3T} \max_{j \in B: j \neq i} \norm{v_i}_\Sigma\norm{v_j}_\Sigma + \delta \sqrt{\lambda_\text{max}(\Sigma) \cdot v_i^\t \Sigma v_i} \\
&\leq \frac{\sqrt{2}\norm{v_i}_\Sigma^2}{3} + \norm{v_i}_\Sigma \delta\sqrt{\lambda_\text{max}(\Sigma)}.
\end{align*}
Simplifying, we get $\norm{v_i}_\Sigma \leq 2\delta \sqrt{\lambda_\text{max}(\Sigma)}.$
Since also $\norm{v_i}_\Sigma \geq \sqrt{\lambda_\text{min}(\Sigma)}$, it follows that $\kappa(\Sigma) = \lambda_\text{max}(\Sigma)/\lambda_\text{min}(\Sigma) \geq 1/(4\delta^2)$.
\end{proof}

It remains to show that the precondition of Lemma~\ref{lemma:packing-dual-bound} can be satisfied for sub-constant $\delta$ without requiring $N$ to scale with $n^t$. We start by proving the desired property when the vectors are all $t$-sparse and \emph{binary}, i.e. $v_1,\dots,v_N \in \{0,1\}^n$, and afterwards we will black-box extend the result to the real-valued setting. Concretely, given sparse binary vectors $v_1,\dots,v_N \in \{0,1\}^n$ (with $N \gg n$), we want to find one that can be ``efficiently'' approximated (in $\ell_2$ norm) by the rest, where ``efficient'' means that the coefficients have small absolute sum. Thinking of each vector as the indicator vector of a subset of $[n]$, a first step towards an efficient approximation for $v_i = \mathbbm{1}[\cdot \in S_i]$ may be constructing an efficient approximation for a standard basis vector $e_j$ for some $j \in S_i$. 

Indeed, there is some $j \in [n]$ such that $\CS^j := \{i: v_{ij}=1\}$ is large, i.e. $|\CS^j| \geq N/n$. If the vectors $(v_i)_{i \in \CS^j}$ were in some sense random, then the average $\frac{1}{|\CS^j|} \sum_{i \in \CS^j} v_i$ would be a good approximation for $e_j$. It is also efficient, in that the absolute sum of coefficients is $1$. But of course the vectors are not random; it could be that many vectors in $\CS^j$ also contain some other coordinate $j'$. In this case we restrict to the set of vectors containing both $j$ and $j'$. Now we may hope to approximate the vector $\mathbbm{1}[\cdot \in \{j, j'\}]$. Completing this argument, we get the following lemma which states that there exists a subset of $[n]$ that is contained in many of the vectors, and that is well-approximated by the average of those vectors.

For notational convenience, for vectors $x,y \in \{0,1\}^n$ we say that $x \preceq y$ if $x_i \leq y_i$ for all $i \in [n]$.

\begin{lemma}\label{lemma:avg-approx}
Let $n,N,t,s \in \NN$ with $sn \leq N$, and let $v_1,\dots,v_N \in \{0,1\}^n$ be nonzero $t$-sparse binary vectors. Then there is some set $S \subseteq [N]$ of size $|S| \geq s$ and some nonzero vector $u \in \{0,1\}^n$ such that $u \preceq v_i$ for all $i \in S$, and \[\norm{u - \frac{1}{|S|} \sum_{i \in S} v_i}_2 \leq \sqrt{t(sn/N)^{1/t}}.\]
\end{lemma}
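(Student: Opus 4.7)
The plan is to build $u$ and $S$ via a greedy peeling that repeatedly selects the single most frequently present coordinate among the currently surviving family of vectors. Initialize $T_0 = \emptyset$ and $S_0 = [N]$. For $k = 1, \ldots, t$, let $j_k \in [n] \setminus T_{k-1}$ be a coordinate maximizing $|\{i \in S_{k-1} : (v_i)_{j_k} = 1\}|$, and then set $T_k = T_{k-1} \cup \{j_k\}$ and $S_k = \{i \in S_{k-1} : (v_i)_{j_k} = 1\}$. I will set the threshold $\beta := (sn/N)^{1/t}$ (which lies in $(0,1]$ since $sn \le N$) and use the following stopping rule: always execute step $1$, and for each $k \ge 2$, stop just before step $k$ and return $u := \mathbbm{1}_{T_{k-1}}$ with $S := S_{k-1}$ if the best remaining empirical frequency $\max_{j \notin T_{k-1}} \frac{1}{|S_{k-1}|} \sum_{i \in S_{k-1}} (v_i)_j$ is strictly less than $\beta$; otherwise continue through all $t$ iterations.

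Let $K \in \{1, \ldots, t\}$ be the number of iterations actually performed and let $p_K := \frac{1}{|S_K|}\sum_{i \in S_K} v_i$, so that $\|u - p_K\|_2^2 = \sum_{j \notin T_K}(p_K)_j^2$. If $K = t$, then for each $i \in S_t$ we have $T_t \subseteq \supp(v_i)$ with $|T_t| = t \ge |\supp(v_i)|$, forcing $v_i = \mathbbm{1}_{T_t} = u$ and making the error zero. If $K < t$, the stopping rule guarantees $(p_K)_j < \beta$ for every $j \notin T_K$, while $t$-sparsity together with $T_K \subseteq \supp(v_i)$ for $i \in S_K$ gives $\sum_{j \notin T_K}(p_K)_j \le t - K \le t$. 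Multiplying these two bounds gives $\|u - p_K\|_2^2 \le \beta \sum_{j \notin T_K}(p_K)_j \le \beta t$, which is precisely the claimed $\sqrt{t(sn/N)^{1/t}}$.

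The size bound $|S_K| \ge s$ follows by tracking the multiplicative shrinkage $|S_k|/|S_{k-1}|$. Step $1$ is unconditional, but a pigeonhole argument using that each $v_i$ is nonzero (hence $\sum_{i,j}(v_i)_j \ge N$) forces $|S_1| \ge N/n$; each subsequent step $k \ge 2$ was admitted only because its empirical frequency was at least $\beta$, so $|S_k| \ge \beta |S_{k-1}|$. Composing gives $|S_K| \ge (N/n)\beta^{K-1} \ge (N/n)\beta^{t-1} = (N/n)(sn/N)^{(t-1)/t} \ge s$, where the final inequality is equivalent to $(sn/N)^{1/t} \le 1$, i.e. to $sn \le N$. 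The main subtlety to confront is that $u$ must be nonzero: naively gating every step by the threshold $\beta$ can fire zero iterations, since the initial maximum frequency is only guaranteed to be $\ge 1/n$, and $\beta$ can exceed $1/n$. Mandating step $1$ and reinstating the gate only from step $2$ onward circumvents this at the cost of a mild $N/n$ loss in the size bound, which the hypothesis $sn \le N$ is strong enough to absorb. The remaining desiderata---$t$-sparsity of $u$ and the containment $u \preceq v_i$ for all $i \in S$---are immediate from the construction, since $|T_K| \le t$ and $T_K \subseteq \supp(v_i)$ is an invariant of the loop.
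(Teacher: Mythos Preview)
Your proof is correct and follows essentially the same approach as the paper: both greedily grow a coordinate set $J$ starting from a pigeonhole coordinate with frequency $\ge N/n$, continue adding coordinates while the conditional frequency stays above $(sn/N)^{1/t}$, and then bound the $\ell_2$ error via the same $\ell_\infty \cdot \ell_1$ combination. The only cosmetic difference is that you cap the loop at $t$ iterations and handle $K=t$ explicitly, whereas the paper observes the loop must self-terminate with $|J|\le t$ since $t$-sparsity forces all off-$J$ frequencies to vanish once $|J|=t$.
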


\begin{proof}
For each $J \subseteq [n]$, define $\CS^{J} := \{i \in [N]: v_{ij} = 1 \quad \forall j \in J\}$. Since all $v_i$ are nonzero, there is some $j^* \in [n]$ with $|\CS^{\{j^*\}}| \geq N/n$. We iteratively construct a set $J \subseteq [n]$ as follows. Initially, set $J = \{j^*\}$. While there exists some $a \in [n] \setminus J$ such that $|\CS^{J \cup \{a\}}| > (sn/N)^{1/t} |\CS^J|$, update $J$ to $J \cup \{a\}$ (if there are multiple such $a$, pick any one of them arbitrarily). At termination of this process, we have $|\CS^J| > 0$. Since every $v_i$ is $t$-sparse, it must be that $|J| \leq t$. Thus, $|\CS^J| \geq (N/n) \cdot (sn/N)^{(t-1)/t} \geq s$. Set $S := \CS^J$ and $u := \mathbbm{1}_J \in \{0,1\}^n$. By definition of $\CS^J$, we have that $u \preceq v_i$ for all $i \in S$.

For any $j \in J$, we have $u_j = 1 = \frac{1}{|S|}\sum_{i \in S} v_{ij}$. For any $j \not \in J$, we have $u_j = 0$ and \[\left|\frac{1}{|S|} \sum_{i \in S} v_{ij}\right| = \frac{|\{i \in S: v_{ij} = 1\}|}{|S|} = \frac{|\CS^{J \cup \{j\}}|}{|\CS^J|} \leq (sn/N)^{1/t}\] by construction of $J$. Thus, \[\norm{u - \frac{1}{|S|} \sum_{i \in S} v_i}_\infty \leq (sn/N)^{1/t}.\]
Additionally, \[\norm{u - \frac{1}{|S|} \sum_{i \in S} v_i}_1 \leq \norm{\frac{1}{|S|} \sum_{i \in S} v_i}_1 \leq \frac{1}{|S|} \sum_{i \in S} \norm{v_i}_1 \leq t.\]
By the inequality $\norm{x}_2^2 \leq \norm{x}_1 \norm{x}_\infty$, we conclude that \[\norm{u - \frac{1}{|S|} \sum_{i \in S} v_i}_2 \leq \sqrt{t(sn/N)^{1/t}}\] as claimed.
\end{proof}

We now use Lemma~\ref{lemma:avg-approx} to show that if $N$ is sufficiently large, then at least one of the vectors $v_i$ can be efficiently approximated by the rest. The proof is by induction on $t$. As a first attempt, one might use Lemma~\ref{lemma:avg-approx} to find some $u \in \{0,1\}^n$ and some large set $S \subseteq [N]$ such that $u \preceq v_i$ for all $i \in S$, and the average of the $v_i$'s approximates $u$. Then, restrict to the vectors in $S$, and induct on the $(t-1)$-sparse residual vectors $\{v_i-u: i \in S\}$. If one of the $v_i-u$'s can be efficiently approximated by the other residuals, then since $u$ can also be efficiently approximated, we can derive an efficient approximation of $v_i$ by the remaining $v_j$'s.

This doesn't quite work, since at each step of the induction the set of vectors will become smaller by a factor of roughly $n$. However, instead of throwing away the vectors outside $S =: S^{(1)}$ we can iteratively re-apply Lemma~\ref{lemma:avg-approx} to get disjoint sets $S^{(1)}, S^{(2)},\dots,S^{(m)}$, where each $S^{(a)}$ has the same property as $S$ (for some potentially different vector $u^{(a)}$). We can then induct on the residual vectors $\cup_a \{v_i - u^{(a)}: i \in S^{(a)}\}$. This suffices to efficiently approximate some $v_i$. Since we throw away fewer vectors at each step of the induction, we do not need the initial number of vectors $N$ to be as large.

We formalize the above ideas in the following theorem.

\begin{theorem}\label{theorem:l2-approx}
Let $n,N,t \in \NN$ and let $v_1,\dots,v_N \in \{0,1\}^n$ be $t$-sparse binary vectors. Then there is some $i \in [N]$ and $x \in \RR^N$ such that $\norm{x}_1 \leq 3^t$ and \[\norm{v_i - \sum_{j \neq i} x_j v_j}_2 \leq 4^t \sqrt{9t(tn/N)^{1/t}}.\]
\end{theorem}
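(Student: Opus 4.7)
The plan is to induct on the sparsity $t$. For the base case $t = 1$, each $v_i$ is either $0$ or a standard basis vector. If some $v_i = 0$ then $x = 0$ suffices; if $N > n$, pigeonhole forces $v_i = v_j$ for some $i \neq j$, giving $\|x\|_1 = 1$ and zero error; and if $N \leq n$ the target error bound $4\sqrt{9n/N} \geq 12$ already exceeds $\|v_i\|_2 = 1$, so $x = 0$ works.

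For the inductive step, the main idea will be to iteratively apply Lemma~\ref{lemma:avg-approx}, each time extracting a disjoint subset $S^{(a)}$ of size at least $s$ together with a nonzero binary vector $u^{(a)} \preceq v_i$ (for $i \in S^{(a)}$) that approximates the empirical average $\bar v^{(a)} := |S^{(a)}|^{-1}\sum_{i \in S^{(a)}} v_i$ to $\ell_2$ error $\sqrt{t(sn/N)^{1/t}}$. I will iterate until fewer than $sn$ vectors remain, collecting $N' \geq N - sn$ indexed vectors in total. The parameter $s$ will be chosen of order $3^t$ (precisely, large enough that $(1 + 3^{t-1})/s \leq 1/3$), which will keep a later coefficient on $v_{i_*}$ bounded away from $1$. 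If some $u^{(a)}$ happens to have weight $t$ (so $v_i = u^{(a)}$ for all $i \in S^{(a)}$), then $v_i = v_j$ for any two distinct $i, j \in S^{(a)}$ and we finish trivially; otherwise each residual $v_i - u^{(a)}$ is a nonzero $(t-1)$-sparse binary vector, and applying the inductive hypothesis to the $N'$ residuals yields indices $(a_*, i_*)$ and coefficients $y$ with $\|y\|_1 \leq 3^{t-1}$ satisfying
\[
\Bigl\|(v_{i_*} - u^{(a_*)}) - \sum_{(a,j) \neq (a_*, i_*)} y_{(a,j)}\,(v_j - u^{(a)})\Bigr\|_2 \leq E_{t-1} := 4^{t-1}\sqrt{9(t-1)\bigl((t-1)n/N'\bigr)^{1/(t-1)}}.
\]

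To convert this into a representation of $v_{i_*}$ itself, I will substitute $u^{(a)} = \bar v^{(a)} - (\bar v^{(a)} - u^{(a)})$; by the triangle inequality this pays $(1 + \|y\|_1)\sqrt{t(sn/N)^{1/t}}$ in extra error. Expanding each $\bar v^{(a)}$ as a combination of the $v_k$'s yields an identity of the form $v_{i_*} = \alpha v_{i_*} + (\text{combination of $v_k$'s with $k \neq i_*$}) + (\text{error})$, where $\alpha := (1 - \sum_{j \in S^{(a_*)}\setminus\{i_*\}} y_{(a_*, j)})/|S^{(a_*)}|$ satisfies $|\alpha| \leq (1+\|y\|_1)/s \leq 1/3$. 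Solving for $v_{i_*}$ (dividing by $1-\alpha \geq 2/3$) gives the desired representation. Three contributions to $\|x\|_1$ need to be summed: $\bar v^{(a_*)}$ contributes $\ell_1$-mass $\leq 1$; the term $\sum y_{(a,j)} v_j$ contributes $\|y\|_1 \leq 3^{t-1}$; and the term $-\sum y_{(a,j)} \bar v^{(a)}$ also contributes $\|y\|_1$ after swapping the order of summation. The total, divided by $1-\alpha$, yields $\|x\|_1 \leq (1 + 2\cdot 3^{t-1})/(1-\alpha) \leq 3^t$. The error satisfies the recursion $E_t \leq (1-\alpha)^{-1}[E_{t-1} + (1+\|y\|_1)\sqrt{t(sn/N)^{1/t}}]$, which telescopes to the claimed bound $4^t\sqrt{9t(tn/N)^{1/t}}$ (using $N' \geq N/2$ in the nontrivial regime $N \gg 3^t n$; in the trivial regime the bound already exceeds $\sqrt{t} \geq \|v_{i_*}\|_2$ so the statement is vacuous).

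The hard part will be simultaneously calibrating the two exponential constants $3^t$ (on $\|x\|_1$) and $4^t$ (on the error), both of which depend on the choice of $s$ and on the $(1-\alpha)^{-1}$ factor. Small slack at each level of the induction could easily compound destructively across the $t$ levels, and $\|x\|_1 \leq 3^t$ is the tighter of the two bounds; the looser exponential in the error is precisely what absorbs the $(1-\alpha)^{-1}$ overhead and the substitution loss $u^{(a)} \to \bar v^{(a)}$ at each level, so the induction closes as long as $s$ is chosen a constant factor larger than $3^t$.
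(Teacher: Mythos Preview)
Your approach is essentially the same as the paper's, but there is one genuine gap in the iteration step.

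You claim that each application of Lemma~\ref{lemma:avg-approx} yields $\|u^{(a)} - \bar v^{(a)}\|_2 \leq \sqrt{t(sn/N)^{1/t}}$. This is only correct for the \emph{first} application, to all $N$ vectors. After removing $S^{(1)}, \ldots, S^{(a-1)}$ you are applying the lemma to only $M_a < N$ remaining vectors, and the lemma's guarantee is $\sqrt{t(sn/M_a)^{1/t}}$, not $\sqrt{t(sn/N)^{1/t}}$. With your stopping rule (``fewer than $sn$ remain''), $M_a$ can drop all the way to $sn$, making the per-application error as large as $\sqrt{t}$. That constant then gets multiplied by $(1+\|y\|_1) \geq 3^{t-1}$ in your substitution step and does not shrink with $N$, so the recursion cannot produce a bound tending to $0$ as $N \to \infty$. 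The paper fixes this by stopping the extraction when fewer than $N/t$ vectors remain (still a valid invocation of the lemma since the trivial-regime threshold $N \geq t\cdot 3^{t+1}n$ forces $N/t \geq sn$ with $s = 3^{t+1}$). Each application is then to at least $N/t$ vectors, giving a uniform error $\leq \sqrt{t(stn/N)^{1/t}} \leq \sqrt{9t(tn/N)^{1/t}}$, while still leaving $N' > N(1-1/t)$ residuals for the inductive call.

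A secondary point: with $\alpha \leq 1/3$ your arithmetic gives $\|x\|_1 \leq \tfrac{3}{2}(1 + 2\cdot 3^{t-1}) = \tfrac{3}{2} + 3^t > 3^t$, so the $\ell_1$ bound does not close as written. The paper's choice $s = 3^{t+1}$ pushes the self-coefficient down to roughly $1/9$, hence multiplier $\leq 9/8$, and then $(9/8)(1 + 2\cdot 3^{t-1}) \leq 3^t$ holds for $t \geq 2$. You already flag this calibration as the delicate part, so this is a tuning issue rather than a structural one.
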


\begin{proof}
We induct on $t$, observing that the case $t=0$ is immediate. Fix $t>0$ and $t$-sparse vectors $\{v_1,\dots,v_N\} \in \{0,1\}^n$, and suppose that the theorem statement holds for $t-1$. If any $v_i$ is identically zero, then the claim is trivially true with $x = 0$. If $N \leq t3^{t+1}n$ then the RHS of the desired norm bound exceeds $4^t \sqrt{t}$, so the claim is trivially true with $x=0$ and any $i \in [N]$. Thus, we may assume that all $v_i$ are nonzero, and $N \geq t3^{t+1}n$. Applying the previous lemma with $s := 3^{t+1} \leq N/n$ gives some $S^{(1)} \subseteq [N]$ and nonzero $u^{(1)} \in \{0,1\}^n$ such that $|S^{(1)}| \geq 3^{t+1}$ and $u^{(1)} \preceq v_i$ for all $i \in S^{(1)}$, and \[\norm{u^{(1)} - \frac{1}{|S^{(1)}|} \sum_{i \in S^{(1)}} v_i}_2 \leq \sqrt{9t(n/N)^{1/t}}.\]
If $|N|-|S^{(1)}| \geq N/t \geq 3^{t+1}n$ then we can reapply the lemma with vectors $(v_i)_{i \in [N]\setminus S^{(1)}}$ and $s := 3^{t+1}$ to get some $S^{(2)} \subseteq [N] \setminus S^{(1)}$ and $u^{(2)} \in \{0,1\}^n$. Continuing this process so long as there are at least $N/t \geq 3^{t+1}n$ remaining vectors, we can generate disjoint sets $S^{(1)},\dots,S^{(m)} \subseteq [N]$ and vectors $u^{(1)},\dots,u^{(m)} \in \{0,1\}^n$ with the following properties:
\begin{enumerate}[label=\textbf{(\roman*)}]
    \item $|S^{(1)}\cup\dots\cup S^{(m)}| > N - N/t$
    \item $|S^{(a)}| \geq 3^{t+1}$ for every $a \in [m]$
    \item For every $a \in [m]$, it holds that $u^{(a)}$ is nonzero and $u^{(a)} \preceq v_i$ for all $i \in S^{(a)}$
    \item For every $a \in [m]$, \[\norm{u^{(a)} - \frac{1}{|S^{(a)}|} \sum_{i \in S^{(a)}} v_i}_2 \leq \sqrt{9t(tn/N)^{1/t}}.\]
\end{enumerate}

For each $a \in [m]$ and $i \in S^{(a)}$, define $v'_i := v_i - u^{(a)}$. By Property \textbf{(iii)} we have that $v'_i \in \{0,1\}^N$ and $v'_i$ is $(t-1)$-sparse. By the inductive hypothesis applied to vectors $(v'_i)_{i \in S^{(1)}\cup\dots\cup S^{(m)}}$, there is some $i \in S^{(1)} \cup \dots \cup S^{(m)}$ and $x' \in \RR^N$ (supported on $S^{(1)} \cup \dots \cup S^{(m)}$) such that $\norm{x'}_1 \leq 3^{t-1}$ and \begin{align}
\norm{v'_i - \sum_{j \neq i} x'_j v'_j}_2 
&\leq 4^{t-1} \sqrt{9(t-1)((t-1)n/|S^{(1)}\cup\dots\cup S^{(m)}|)^{1/(t-1)}} \nonumber\\
&\leq 4^{t-1} \sqrt{9t(tn/N)^{1/t}}
\label{eq:induction-hyp-bound}
\end{align}
where the last inequality uses Property \textbf{(i)} and the bound $N \geq tn$.
Of course, without loss of generality $x'_i = 0$. Let $a \in [m]$ be the unique index such that $i \in S^{(a)}$. We define $x \in \{0,1\}^N$ (supported on $S^{(1)}\cup\dots\cup S^{(m)}$) as follows. For each $b \in [m]$ and each $r \in S^{(b)}$, set \[x_r = x'_r - \frac{1}{|S^{(b)}|}\sum_{j \in S^{(b)}} x'_j + \frac{\mathbbm{1}[b=a]}{|S^{(b)}|}.\]
Since $\norm{x'}_1 \leq 3^{t-1}$, we can see that
\begin{align*}
\norm{x}_1 
&\leq \norm{x'}_1 + \sum_{b \in [m]} \sum_{r \in S^{(b)}} \frac{1}{|S^{(b)}|} \sum_{j \in S^{(b)}} |x'_j| + \sum_{r \in S^{(a)}} \frac{1}{|S^{(a)}|}  \\
&\leq 2\norm{x'}_1 + 1 \\
&\leq 2 \cdot 3^{t-1} + 1.
\end{align*}
Next, we use $x$ to approximate $v_i$. The following bound is almost what we want:

\begin{claim}\label{claim:approx-bound}
$\norm{v_i - \sum_{j \in [N]} x_j v_j}_2 \leq 3 \cdot 4^{t-1}\sqrt{9t(tn/N)^{1/t}}$
\end{claim}

\begin{proof}[Proof of claim]
We have
\begin{align*}
\norm{v_i - \sum_{r \in [N]} x_r v_r}_2
&\leq \norm{u^{(a)} - \frac{1}{|S^{(a)}|} \sum_{r \in S^{(a)}} v_r}_2 + \norm{v_i' + \frac{1}{|S^{(a)}|} \sum_{r \in S^{(a)}} v_r - \sum_{r \in [N]} x_r v_r}_2 \\
&= \norm{u^{(a)} - \frac{1}{|S^{(a)}|} \sum_{r \in S^{(a)}} v_r}_2 + \norm{v_i' - \sum_{r\in [N]} x'_r v_r + \sum_{b \in [m]} \sum_{r \in S^{(b)}} \frac{1}{|S^{(b)}|} \sum_{j \in S^{(b)}} x'_j v_r}_2 \\
&\leq \norm{u^{(a)} - \frac{1}{|S^{(a)}|} \sum_{r \in S^{(a)}} v_r}_2 + \norm{v'_i - \sum_{r \in [N]} x'_r v'_r}_2 \\
&\qquad + \norm{-\sum_{b \in [m]} \sum_{r \in S^{(b)}} x'_r u^{(b)} + \sum_{b \in [m]}\sum_{r \in S^{(b)}} \frac{1}{|S^{(b)}|} \sum_{j \in S^{(b)}} x'_j v_r}_2 \\
&= \norm{u^{(a)} - \frac{1}{|S^{(a)}|} \sum_{r \in S^{(a)}} v_r}_2 + \norm{v'_i - \sum_{r \in [N]} x'_r v'_r}_2 \\
&\qquad+ \norm{\sum_{b \in [m]} \sum_{j\in S^{(b)}} x'_j\left(u^{(b)} - \frac{1}{|S^{(b)}|} \sum_{r \in S^{(b)}} v_r\right)}_2
\end{align*}
where the first and third inequalities use that $v_r = v'_r + u^{(b)}$ for all $r \in S^{(b)}$, and throughout we use that $x_r = x'_r = 0$ for $r \not \in S^{(1)}\cup\dots\cup S^{(m)}$. Applying Property \textbf{(iv)}, equation (\ref{eq:induction-hyp-bound}), and the bound $\norm{x'}_1 \leq 3^{t-1}$, we get
\begin{align*}
\norm{v_i - \sum_{r \in [N]} x_r v_r}_2
&\leq \sqrt{9t(tn/N)^{1/t}} + 4^{t-1}\sqrt{9t(tn/N)^{1/t}} + 3^{t-1}\sqrt{9t(tn/N)^{1/t}} \\
&\leq 3\cdot 4^{t-1} \sqrt{9t(tn/N)^{1/t}}
\end{align*}
as claimed.
\end{proof}

However, we wanted a bound on $v_i - \sum_{j \neq i} x_j v_j$, and unfortunately $x_i \neq 0$. Fortunately, it is enough that $x_i$ is bounded away from $1$. Since $x'_i = 0$, we have \[|x_i| \leq \frac{1}{|S^{(a)}|} \sum_{j \in S^{(a)}} |x'_j| + \frac{1}{|S^{(a)}|} \leq \frac{\norm{x'}_1+1}{|S^{(a)}|} \leq \frac{3^{t-1}}{3^{t+1}} = \frac{1}{9}.\] Thus, by Claim~\ref{claim:approx-bound}, \[\norm{v_i - \frac{1}{1-x_i} \sum_{j \neq i} x_j v_j}_2 \leq \frac{1}{1-x_i} \cdot 3 \cdot 4^{t-1} \sqrt{9t(tn/N)^{1/t}} \leq 4^t \sqrt{9t(tn/N)^{1/t}}.\]
Finally, we have $\norm{x/(1-x_i)}_1 \leq (9/8)(2\cdot 3^{t-1} + 1) \leq 3^t$, so $x/(1-x_i)$ satisfies all the desired conditions. This completes the induction.
\end{proof}

Finally, we extend Theorem~\ref{theorem:l2-approx} to real-valued sparse vectors via a discretization argument.

\begin{lemma}\label{lemma:binary-to-real}
Let $n,N,t \in \NN$ and let $v_1,\dots,v_N \in \RR^n$ be $t$-sparse vectors. Then there is some $i \in [N]$ and $x \in \RR^n$ such that $\norm{x}_1 \leq 3^t$ and \[\norm{v_i - \sum_{j \neq i} x_j v_j}_2 \leq 4^{t+2}\sqrt{t} (n/N)^{1/(4t)}\cdot \max_{j\in [N]} \norm{v_j}_\infty.\]
\end{lemma}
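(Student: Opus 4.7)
The plan is to reduce to the binary case (Theorem~\ref{theorem:l2-approx}) via a discretization-plus-encoding trick. Rescale so that $\max_j \norm{v_j}_\infty = 1$. Fix a grid resolution $M \in \NN$ (to be tuned below), set $\epsilon := 1/M$, and round each $v_j$ coordinate-wise to the nearest multiple of $\epsilon$ in $\{-M\epsilon, \dots, M\epsilon\}$, yielding a $t$-sparse $\tilde v_j$ with $\norm{v_j - \tilde v_j}_2 \leq \sqrt{t}\,\epsilon/2$. Then encode each $\tilde v_j$ as a $t$-sparse binary vector $b_j \in \{0,1\}^{n \cdot 2M}$, indexed by pairs $(i,c) \in [n]\times(\{-M,\dots,M\}\setminus\{0\})$, via $(b_j)_{i,c} := \mathbbm{1}[\tilde v_{ji} = c\epsilon]$. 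Each $b_j$ is $t$-sparse, with one nonzero per nonzero coordinate of $\tilde v_j$.

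Apply Theorem~\ref{theorem:l2-approx} to $\{b_j\}_{j \in [N]}$ in ambient dimension $2nM$ and sparsity $t$ to obtain some $i^* \in [N]$ and $x \in \RR^N$ with $\norm{x}_1 \leq 3^t$ and
\[\norm{b_{i^*} - \sum_{j \neq i^*} x_j b_j}_2 \leq 4^t \sqrt{9t\,(2tnM/N)^{1/t}}.\]
Consider the linear decoding map $\Phi: \RR^{2nM} \to \RR^n$ with $(\Phi(y))_i := \sum_{c \neq 0} c\epsilon \cdot y_{i,c}$; by construction $\Phi(b_j) = \tilde v_j$, and Cauchy-Schwarz yields $\norm{\Phi(y)}_2^2 \leq \epsilon^2 \cdot 2\sum_{c=1}^M c^2 \cdot \norm{y}_2^2 = O(M)\cdot\norm{y}_2^2$. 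Composing, and then using the triangle inequality together with $\norm{x}_1 \cdot \max_j \norm{v_j - \tilde v_j}_2 \leq 3^t \sqrt{t}\,\epsilon/2$ to pass from the $\tilde v_j$'s back to the $v_j$'s, gives
\[\norm{v_{i^*} - \sum_{j \neq i^*} x_j v_j}_2 \leq (1 + 3^t)\frac{\sqrt{t}\,\epsilon}{2} + O(4^t \sqrt{t}) \cdot M^{1/2 + 1/(2t)} (n/N)^{1/(2t)}.\]

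Finally, set $M := \lceil (N/n)^{1/(2(t+1))} \rceil$; a direct exponent calculation gives $M^{1/2 + 1/(2t)} (n/N)^{1/(2t)} = (n/N)^{1/(4t)}$, and for $t \geq 1$ we have $1/(2(t+1)) \geq 1/(4t)$, so $\epsilon = 1/M \leq (n/N)^{1/(2(t+1))} \leq (n/N)^{1/(4t)}$ and the discretization term is dominated by the main term. Rescaling back by $\max_j \norm{v_j}_\infty$ and bundling the constants (it suffices that $3^t + O(4^t) \leq 4^{t+2}$, which is easy) yields the claimed bound. The main delicate point will be keeping constants clean through the operator-norm bound on $\Phi$ and the two applications of the triangle inequality, so that the final prefactor fits inside $4^{t+2}\sqrt{t}$; the degenerate cases $N \leq n$ (where $M$ collapses to $1$) or $v_j = 0$ are trivial, since taking $x = 0$ already yields $\norm{v_{i^*}}_2 \leq \sqrt{t}\max_j \norm{v_j}_\infty$, which matches the RHS.
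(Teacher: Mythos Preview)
Your approach is essentially the same as the paper's: normalize, quantize each coordinate to a grid of size $O(M)$, one-hot encode to get $t$-sparse binary vectors in $\{0,1\}^{O(nM)}$, apply Theorem~\ref{theorem:l2-approx}, then decode via a linear map with operator norm $O(\sqrt{M})$ and absorb the rounding error using $\norm{x}_1 \leq 3^t$. The only cosmetic differences are that the paper keeps a bucket for the value $0$ (so the ambient dimension is $(2k+1)n$ rather than $2Mn$) and balances with $k = (N/n)^{1/(4t)}$ rather than your $M = \lceil (N/n)^{1/(2(t+1))}\rceil$; both choices land on the same $(n/N)^{1/(4t)}$ exponent, and your observation that $1/(2(t+1)) \geq 1/(4t)$ for $t\geq 1$ is exactly what makes the discretization term subordinate.
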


\begin{proof}
Without loss of generality assume that $\max_{j \in [N]} \norm{v_j}_\infty = 1$. Let $k \in \NN$ be fixed later. Define a map $\varphi: [-1,1] \to \{0,1\}^{2k+1}$ by
\[\varphi(c) = \begin{cases}
e_{k + 1 + \lfloor ck \rfloor} & \text{ if } c < 0 \\
e_{k+1} & \text{ if } c = 0 \\
e_{k + 1 + \lceil ck \rceil} & \text{ if } c > 0
\end{cases}.\]
Also let $\Phi: \RR^{2k+1} \to \RR$ be the linear map that sends $\Phi e_i \mapsto (i - k - 1)/k$ for each $i \in [2k+1]$. Note that $|\Phi\varphi(c) - c| \leq 1/k$ for all $c \in [-1,1]$ and $\Phi\varphi(0) = 0$. Define $\varphi^{\oplus n}: [-1,1]^n \to \{0,1\}^{(2k+1)n}$ by $\varphi(c_1,\dots,c_n) = (\varphi(c_1),\dots,\varphi(c_n))$, and define $\Phi^{\oplus n}: \{0,1\}^{(2k+1)n} \to \RR^n$ by $\Phi^{\oplus n}(x_1,\dots,x_n) = (\Phi(x_1),\dots,\Phi(x_n))$. For any $i \in [N]$, the vector $\varphi^{\oplus n}(v_i)$ is $t$-sparse and lies in $\{0,1\}^{(2k+1)n}$. Thus, applying Theorem~\ref{theorem:l2-approx} gives some $i \in [N]$ and $x \in \RR^N$ with $\norm{x}_1 \leq 3^t$ and \[\norm{\varphi^{\oplus n}(v_i) - \sum_{j \neq i} x_j \varphi^{\oplus n}(v_j)}_2 \leq 4^t \sqrt{9t(tn(2k+1)/N)^{1/t}}.\]
Since $\Phi^{\oplus n}$ is a linear map and $\norm{\Phi^{\oplus n}}_2 = \norm{\Phi}_2 \leq \sqrt{2k+1}$, we then get \[\norm{\Phi^{\oplus n}\varphi^{\oplus n}(v_i) - \sum_{j \neq i} x_j \Phi^{\oplus n}\varphi^{\oplus n}(v_j)}_2 \leq 4^t \sqrt{9t(2k+1)(tn(2k+1)/N)^{1/t}}.\]
But now for every $j \in [N]$, we know that \[\norm{v_j - \Phi^{\oplus n} \varphi^{\oplus n}(v_j)}_2^2 = \sum_{a \in \supp(v_j)} (v_{ja} - \Phi\varphi(v_{ja}))^2 \leq \frac{t}{k^2}.\]
We conclude that
\begin{align*}
\norm{v_i - \sum_{j\neq i} x_j v_j}_2
&\leq \norm{\Phi^{\oplus n}\varphi^{\oplus n}(v_i) - \sum_{j \neq i} x_j \Phi^{\oplus n}\varphi^{\oplus n}(v_j)}_2 + \norm{v_i - \Phi^{\oplus n}\varphi^{\oplus n}(v_i)}_2 \\
&\qquad + \sum_{j \neq i} |x_j| \cdot \norm{v_j - \Phi^{\oplus n}\varphi^{\oplus n}(v_j)}_2 \\
&\leq 4^t \sqrt{9t(2k+1)(tn(2k+1)/N)^{1/t}} + (1+3^t) \cdot \frac{\sqrt{t}}{k} \\
&\leq (2k+1) \cdot 4^{t+1}\sqrt{t} (n/N)^{1/(2t)} + \frac{4^t \sqrt{t}}{k}.
\end{align*}
Taking $k = (N/n)^{1/(4t)}$ gives the claimed bound.
\end{proof}

Combining Lemma~\ref{lemma:binary-to-real} with Lemma~\ref{lemma:packing-dual-bound} lets us prove Theorem~\ref{theorem:fpt-kappa}.

\begin{namedproof}{Theorem~\ref{theorem:fpt-kappa}}
Set $\delta := \sqrt{1/(4\kappa)}$ and $N = 4^{4t(t+3)}t^{2t}\kappa^{2t} n$. By Lemma~\ref{lemma:binary-to-real}, for any $t$-sparse vectors $v_1,\dots,v_N \in \RR^n$ with $\norm{v_i}_2 \leq 1$ for all $i \in [N]$, there is some $i \in [N]$ and $x \in \RR^n$ such that $\norm{x}_1 \leq 3^t$ and
\[\norm{v_i - \sum_{j \neq i} x_j v_j}_2 \leq 4^{t+2}\sqrt{t}(n/N)^{1/(4t)}
\leq \frac{1}{4\sqrt{\kappa}}
< \delta.
\]
It follows from Lemma~\ref{lemma:packing-dual-bound} that $\CP_{t,1/3^{t+1}}(\Sigma) \leq N\log_2 \kappa$. Finally, by Lemma~\ref{lemma:covering-packing}, we conclude that $\CN_{t,1/3^{t+1}}(\Sigma) \leq N\log_2\kappa$.
\end{namedproof}

\section{Generalization bounds}

\subsection{Finite-class model selection}

\begin{lemma}\label{lemma:model-selection}
Let $n,m,n_\text{eff}\in \NN$ and let $\Sigma$ be a positive semi-definite matrix. Fix a vector $w^* \in \RR^n$ and a closed set $\mathcal{W} \subseteq \RR^n$ and let $(X_i,y_i)_{i=1}^m$ be independent draws $X_i \sim N(0,\Sigma)$ and $y_i = \langle X_i,w^*\rangle + \xi_i$ where $\xi_i \sim N(0,\sigma^2)$. Pick \[\hat{w} \in \argmin_{w \in \mathcal{W}} \norm{\mathbb{X}w-y}_2^2\]
where $\BX: m \times n$ is the matrix with rows $X_1,\dots,X_m$. For any $\epsilon,\delta \in (0,1)$, suppose that with probability at least $1-\delta$, the following bounds hold uniformly over $w \in \mathcal{W}$:

\begin{enumerate}
\item $\left|\frac{1}{m}\norm{\BX(w-w^*)}_2^2 - \norm{w-w^*}_\Sigma^2\right| \leq \epsilon\norm{w-w^*}_\Sigma^2$
\item $\left|\left\langle \xi, \frac{\BX(w-w^*)}{\norm{\BX(w-w^*)}_2}\right\rangle\right| \leq \sigma\sqrt{n_\text{eff}}.$
\end{enumerate}
Then with probability at least $1-\delta$ it also holds that \[\norm{\hat{w} - w^*}_\Sigma \leq \sqrt{\frac{1+\epsilon}{1-\epsilon}}\inf_{w \in \mathcal{W}} \norm{w - w^*}_\Sigma + 2\sigma \sqrt{\frac{2n_\text{eff}}{m}}.\]
\end{lemma}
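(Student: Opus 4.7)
The plan is to apply a standard basic inequality argument. Fix any $w \in \mathcal{W}$ (we will take the infimum at the end). By optimality of $\hat{w}$ and the identity $y = \BX w^* + \xi$, we have
\[
\norm{\BX(\hat{w} - w^*) - \xi}_2^2 \leq \norm{\BX(w - w^*) - \xi}_2^2,
\]
which after expanding and cancelling $\norm{\xi}_2^2$ rearranges to
\[
\norm{\BX(\hat{w} - w^*)}_2^2 \leq \norm{\BX(w - w^*)}_2^2 + 2\langle \xi, \BX(\hat{w} - w^*)\rangle - 2\langle \xi, \BX(w - w^*)\rangle.
\]

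The next step is to control the two noise inner products using assumption (2). Applied to both $\hat{w}$ and $w$, that assumption bounds each by $\sigma\sqrt{n_\text{eff}}$ times the corresponding $\norm{\BX(\cdot - w^*)}_2$. Writing $A := \norm{\BX(\hat{w} - w^*)}_2$ and $B := \norm{\BX(w - w^*)}_2$, we obtain the quadratic inequality
\[
A^2 \leq B^2 + 2\sigma\sqrt{n_\text{eff}}\,(A + B).
\]
Completing the square on both sides (add $\sigma^2 n_\text{eff}$) yields $(A - \sigma\sqrt{n_\text{eff}})^2 \leq (B + \sigma\sqrt{n_\text{eff}})^2$, and hence $A \leq B + 2\sigma\sqrt{n_\text{eff}}$.

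Now I would convert this empirical-norm inequality to a $\Sigma$-norm inequality via assumption (1): we have $A \geq \sqrt{(1-\epsilon)m}\,\norm{\hat{w} - w^*}_\Sigma$ and $B \leq \sqrt{(1+\epsilon)m}\,\norm{w - w^*}_\Sigma$. Dividing through by $\sqrt{(1-\epsilon)m}$ gives
\[
\norm{\hat{w} - w^*}_\Sigma \leq \sqrt{\tfrac{1+\epsilon}{1-\epsilon}}\,\norm{w - w^*}_\Sigma + \frac{2\sigma\sqrt{n_\text{eff}}}{\sqrt{(1-\epsilon)m}}.
\]
For $\epsilon \leq 1/2$ the second term is at most $2\sigma\sqrt{2n_\text{eff}/m}$, matching the stated bound. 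Finally I would take the infimum over $w \in \mathcal{W}$: since $\mathcal{W}$ need not be compact, pick a sequence $w_k \in \mathcal{W}$ with $\norm{w_k - w^*}_\Sigma \to \inf_{w \in \mathcal{W}} \norm{w - w^*}_\Sigma$ and apply the bound to each $w_k$, then let $k \to \infty$.

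The proof is mostly routine; the only mildly delicate step is the quadratic manipulation in the third paragraph, since a naive use of Cauchy--Schwarz that drops the $\norm{\BX(\hat{w} - w^*)}_2$ factor would cost an extra square root and give the wrong rate. Completing the square (or equivalently using the quadratic formula) is what cleanly produces the additive $2\sigma\sqrt{n_\text{eff}}$ slack on the empirical side; everything else is bookkeeping.
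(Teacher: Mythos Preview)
Your proof is correct and follows essentially the same route as the paper: derive the basic inequality from optimality, bound the noise cross-terms via assumption~(2), extract $A\le B+2\sigma\sqrt{n_{\mathrm{eff}}}$, then convert empirical to population norms via assumption~(1). The only cosmetic difference is that you complete the square to pass from $A^2\le B^2+2\sigma\sqrt{n_{\mathrm{eff}}}(A+B)$ to $A\le B+2\sigma\sqrt{n_{\mathrm{eff}}}$, whereas the paper factors $A^2-B^2=(A-B)(A+B)$ and divides; both are equivalent one-line manipulations.
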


\begin{proof}
Consider the event in which both bounds hold. Let $w_\text{opt} \in \argmin_{w \in \mathcal{W}} \norm{w-w^*}_\Sigma^2$. Then
\begin{align*}
\norm{\BX(\hat{w} - w^*)}_2^2
&= \norm{\BX\hat{w} - y}_2^2 + 2\langle \xi, \BX(\hat{w}-w^*)\rangle - \norm{\xi}_2^2 \\
&\leq \norm{\BX w_\text{opt} - y}_2^2 + 2 \langle \xi, \BX(\hat{w} - w^*)\rangle - \norm{\xi}_2^2 \\
&= \norm{\BX(w_\text{opt} - w^*)}_2^2 + 2\langle \xi, \BX(\hat{w} - w^*)\rangle - 2\langle \xi, \BX(w_\text{opt} - w^*)\rangle \\
&\leq \norm{\BX(w_\text{opt} - w^*)}_2^2 + 2\left(\norm{\BX(\hat{w}-w^*)}_2 + \norm{\BX(w_\text{opt} - w^*)}_2\right)\sigma\sqrt{n_\text{eff}}.
\end{align*}
Subtracting $\norm{\BX(w_\text{opt} - w^*)}_2^2$ from both sides and dividing by $\norm{\BX(\hat{w}-w^*)}_2 + \norm{\BX(w_\text{opt} - w^*)}_2$, we get that \[\norm{\BX(\hat{w} - w^*)}_2 - \norm{\BX(w_\text{opt} - w^*)}_2 \leq 2\sigma\sqrt{n_\text{eff}}.\]
It follows that
\begin{align*}
\norm{\hat{w} - w^*}_\Sigma
&\leq \sqrt{\frac{1}{(1-\epsilon)m}} \norm{\BX(\hat{w}-w^*)}_2 \\
&\leq \sqrt{\frac{1}{(1-\epsilon)m}}\norm{\BX(w_\text{opt} - w^*)}_2 + 2\sigma\sqrt{\frac{(1+\epsilon)n_\text{eff}}{m}} \\
&\leq \sqrt{\frac{1+\epsilon}{1-\epsilon}} \norm{w_\text{opt} - w^*}_\Sigma + 2\sigma\sqrt{\frac{2n_\text{eff}}{m}}
\end{align*}
as desired.
\end{proof}

\begin{lemma}\label{lemma:finite-model-selection}
Let $n,m \in \NN$ and let $\Sigma$ be a positive semi-definite matrix. Fix a vector $w^* \in \RR^n$ and a finite set $\mathcal{W} \subseteq \RR^n$ and let $(X_i,y_i)_{i=1}^m$ be independent draws $X_i \sim N(0,\Sigma)$ and $y_i = \langle X_i,w^*\rangle + \xi_i$ where $\xi_i \sim N(0,\sigma^2)$. Pick \[\hat{w} \in \argmin_{w \in \mathcal{W}} \norm{\mathbb{X}w-y}_2^2.\]
For any $\epsilon,\delta \in (0,1)$, if $m \geq 8\epsilon^{-2}\log(2|\mathcal{W}|/\delta)$, then with probability at least $1-2\delta$, we have \[\norm{\hat{w} - w^*}_\Sigma \leq \sqrt{\frac{1+\epsilon}{1-\epsilon}}\inf_{w \in \mathcal{W}} \norm{w - w^*}_\Sigma + 4\sigma \sqrt{\frac{\log(2|\mathcal{W}|/\delta)}{m}}.\]
\end{lemma}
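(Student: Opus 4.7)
The plan is to reduce Lemma~\ref{lemma:finite-model-selection} directly to Lemma~\ref{lemma:model-selection} via a union bound over the finite set $\mathcal{W}$. Specifically, I will verify that the two uniform hypotheses of Lemma~\ref{lemma:model-selection} hold with probability at least $1-\delta$ each (so $1-2\delta$ combined), when $n_\text{eff}$ is chosen as $2\log(2|\mathcal{W}|/\delta)$. Substituting this choice into the conclusion of Lemma~\ref{lemma:model-selection} gives $2\sigma\sqrt{2n_\text{eff}/m} = 4\sigma\sqrt{\log(2|\mathcal{W}|/\delta)/m}$, matching the target bound.

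For hypothesis (1), fix $w \in \mathcal{W}$ and observe that $\mathbb{X}(w-w^*)$ has i.i.d. $N(0, \|w-w^*\|_\Sigma^2)$ entries, so $\|\mathbb{X}(w-w^*)\|_2^2 / \|w-w^*\|_\Sigma^2$ is distributed as $\chi_m^2$. Standard Bernstein-type $\chi^2$ concentration yields $\Pr[|\frac{1}{m}\chi_m^2 - 1| > \epsilon] \leq 2\exp(-m\epsilon^2/8)$, so the sample-size assumption $m \geq 8\epsilon^{-2}\log(2|\mathcal{W}|/\delta)$ combined with a union bound over $\mathcal{W}$ gives the uniform bound with probability at least $1-\delta$ (I should note that if $w=w^*$ the bound is trivial).

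For hypothesis (2), condition on $\mathbb{X}$. Since $\xi \sim N(0,\sigma^2 I_m)$ is independent of $\mathbb{X}$, the inner product $\langle \xi, \mathbb{X}(w-w^*)/\|\mathbb{X}(w-w^*)\|_2\rangle$ is distributed as $N(0,\sigma^2)$. A standard Gaussian tail bound gives $\Pr[|\cdot| > \sigma\sqrt{2\log(2|\mathcal{W}|/\delta)}] \leq \delta/|\mathcal{W}|$, so a second union bound over $\mathcal{W}$ (inside the conditional probability, then integrating out $\mathbb{X}$) shows hypothesis (2) holds uniformly with probability at least $1-\delta$, provided we set $n_\text{eff} := 2\log(2|\mathcal{W}|/\delta)$.

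Combining the two events via a final union bound gives both hypotheses simultaneously with probability at least $1-2\delta$, and Lemma~\ref{lemma:model-selection} then delivers exactly the claimed conclusion. There is no real obstacle here — the proof is essentially bookkeeping — and the main thing to be careful about is handling the degenerate case $w = w^*$ (where $\mathbb{X}(w-w^*) = 0$ and the second bound is vacuous) and verifying that the constants line up so that $2\sigma\sqrt{2n_\text{eff}/m}$ simplifies to the $4\sigma\sqrt{\log(2|\mathcal{W}|/\delta)/m}$ in the statement.
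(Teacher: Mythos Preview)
Your proposal is correct and essentially identical to the paper's proof: both verify the two hypotheses of Lemma~\ref{lemma:model-selection} via $\chi^2$ concentration plus a union bound for hypothesis~(1), and a Gaussian tail bound (conditional on $\mathbb{X}$) plus a union bound for hypothesis~(2), with $n_\text{eff} = 2\log(2|\mathcal{W}|/\delta)$. The constants line up exactly as you noted.
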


\begin{proof}
For any fixed $w \in \mathcal{W}$, the random variables $\langle X_i, w - w^*\rangle \sim N(0, \norm{w-w^*}_\Sigma^2)$ are independent, and therefore $\norm{\BX(w-w^*)}_2^2 \sim \norm{w-w^*}_\Sigma^2 \chi_m^2$. It follows that for any $\epsilon>0$, \[\Pr\left[\left|\frac{1}{m}\norm{\BX(w-w^*)}_2^2 - \norm{w-w^*}_\Sigma^2\right| > \epsilon\norm{w-w^*}_\Sigma^2\right] \leq 2e^{-m\epsilon^2/8}.\]
By the union bound, if $m \geq 8\epsilon^{-2}\log(2|\mathcal{W}|/\delta)$, then with probability at least $1-\delta$ it holds that for all $w \in \mathcal{W}$, \begin{equation}\left|\frac{1}{m}\norm{\BX(w-w^*)}_2^2 - \norm{w-w^*}_\Sigma^2 \right| \leq \epsilon\norm{w-w^*}_\Sigma^2.\label{eq:x-conc}\end{equation}
Also, for any fixed $w \in \mathcal{W}$, conditioned on $\BX$, the random variable $\langle \xi, \frac{\BX(w-w^*)}{\norm{\BX(w-w^*)}_2}\rangle$ has distribution $N(0,\sigma^2)$. Thus, by a Gaussian tail bound and the union bound, we have for any $t>0$ that \[\Pr\left[\max_{w \in \mathcal{W}} \left|\left\langle \xi, \frac{\BX(w-w^*)}{\norm{\BX(w-w^*)}}\right\rangle\right| \geq \sigma t\right] \leq 2|\mathcal{W}| \cdot e^{-t^2/2}.\]
In particular, with probability at least $1-\delta$ it holds that \begin{equation}
\max_{w \in \mathcal{W}} \left|\left\langle \xi, \frac{\BX(w-w^*)}{\norm{\BX(w-w^*)}}\right\rangle\right| \leq \sigma \sqrt{2\log(2|\mathcal{W}|/\delta)}.
\label{eq:dot-prod-conc}
\end{equation}
Using (\ref{eq:x-conc}) and (\ref{eq:dot-prod-conc}) we apply Lemma~\ref{lemma:model-selection} which gives the desired bound.
\end{proof}

\subsection{Weak learning}

\begin{lemma}\label{lemma:gaussian-quadratic-conc}
Let $n,m \in \NN$ and $\epsilon,\delta>0$. Let $\Sigma: n \times n$ be a positive semi-definite matrix and let $\BX: m \times n$ have independent rows $X_1,\dots,X_m \sim N(0,\Sigma)$. For any fixed $u,v \in \RR^n$, if $m \geq 8\epsilon^{-2}\log(8/\delta)$, then it holds with probability at least $1-\delta$ that \[\left| u^\t\left(\frac{1}{m}\BX^\t \BX - \Sigma\right)v\right| \leq 2\epsilon\norm{u}_\Sigma\norm{v}_\Sigma.\]
\end{lemma}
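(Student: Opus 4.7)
The plan is to reduce this bilinear concentration statement to a pair of quadratic concentration statements, each of which follows from the standard $\chi^2_m$ tail bound, via the polarization identity.

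First, I would handle scaling: if $\|u\|_\Sigma = 0$ or $\|v\|_\Sigma = 0$ then $u$ or $v$ lies in $\ker \Sigma$, which is also the null space of $\frac{1}{m}\BX^\t\BX$ almost surely (since the rows $X_i \in \mathrm{range}(\Sigma)$ with probability $1$), so both sides are zero. Otherwise, by the bilinear scaling $u \mapsto u/\|u\|_\Sigma$, $v \mapsto v/\|v\|_\Sigma$, it suffices to prove the claim under the normalization $\|u\|_\Sigma = \|v\|_\Sigma = 1$, and to show in that case that the left-hand side is at most $2\epsilon$.

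Next, set $M := \frac{1}{m}\BX^\t\BX - \Sigma$, which is symmetric, and apply the polarization identity
\[
u^\t M v \;=\; \tfrac{1}{4}\bigl((u+v)^\t M (u+v) - (u-v)^\t M (u-v)\bigr).
\]
For any fixed $w \in \RR^n$, the coordinates $\langle X_i, w\rangle$ are i.i.d.\ $N(0,\|w\|_\Sigma^2)$, so
\[
w^\t M w \;=\; \|w\|_\Sigma^2 \left(\tfrac{1}{m}\chi_m^2 - 1\right).
\]
The standard sub-exponential $\chi^2$ tail bound gives $\Pr[|\chi_m^2/m - 1| > \epsilon] \le 2 e^{-m\epsilon^2/8}$ for $\epsilon \in (0,1)$. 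I would apply this for $w = u+v$ and $w = u-v$; each failure event has probability at most $2e^{-m\epsilon^2/8} \le \delta/4$ whenever $m \ge 8\epsilon^{-2}\log(8/\delta)$, so by a union bound both of
\[
|(u+v)^\t M (u+v)| \le \epsilon \|u+v\|_\Sigma^2, \qquad |(u-v)^\t M (u-v)| \le \epsilon \|u-v\|_\Sigma^2
\]
hold with probability at least $1 - \delta/2 \ge 1-\delta$.

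On that event, polarization and the parallelogram identity give
\[
|u^\t M v| \;\le\; \tfrac{\epsilon}{4}\bigl(\|u+v\|_\Sigma^2 + \|u-v\|_\Sigma^2\bigr) \;=\; \tfrac{\epsilon}{2}\bigl(\|u\|_\Sigma^2 + \|v\|_\Sigma^2\bigr) \;=\; \epsilon \;\le\; 2\epsilon,
\]
which is the desired $2\epsilon \|u\|_\Sigma \|v\|_\Sigma$ under our normalization. There is no real obstacle here; the only mild nuisance is keeping the constants aligned (the factor of $2$ in the conclusion is slack that absorbs the $4/\text{(parallelogram)}$ bookkeeping and the union-bound factor), and implicitly assuming $\epsilon \le 1$ so that the $\chi^2$ tail bound applies in the stated form --- if $\epsilon > 1$ one can just invoke the lemma with $\epsilon' = 1$ and use $2 \le 2\epsilon$.
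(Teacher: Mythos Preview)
Your proof is correct and arguably cleaner than the paper's, but it takes a genuinely different route. The paper decomposes $u = av + w$ with $\langle v,w\rangle_\Sigma = 0$ and handles the two pieces separately: the parallel part $a\,v^\t M v$ via the same $\chi^2_m$ tail bound you use, and the orthogonal part $w^\t M v$ by writing $X_i = \Sigma^{1/2} Z_i$, observing that $\langle Z_i,\Sigma^{1/2}w\rangle$ and $\langle Z_i,\Sigma^{1/2}v\rangle$ are independent Gaussians, conditioning on the $v$-side, and applying a Gaussian tail bound to the resulting centered Gaussian sum. Your polarization-plus-parallelogram argument sidesteps the orthogonal decomposition and the conditioning entirely, reducing everything to two applications of the scalar $\chi^2$ bound; as a bonus you actually land on $\epsilon$ rather than $2\epsilon$ before relaxing. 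The paper's route is slightly more informative in that it isolates the ``diagonal'' and ``off-diagonal'' contributions (and its cross-term bound does not need $\epsilon\le 1$ for the $\chi^2$ tail in the same way), but for the purpose of this lemma your argument is shorter and just as rigorous. One small remark: your fallback for $\epsilon>1$ (``apply with $\epsilon'=1$'') does not quite work as stated, since the sample requirement $m\ge 8\log(8/\delta)$ is \emph{stronger} than the hypothesis $m\ge 8\epsilon^{-2}\log(8/\delta)$ when $\epsilon>1$; but this is a non-issue in context (the lemma is only ever invoked with small $\epsilon$, and the paper's proof has the same implicit restriction).
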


\begin{proof}
Decompose $u = av+w$ where $\langle v,w\rangle_\Sigma = 0$, so that $a = \langle u,v\rangle_\Sigma / \norm{v}_\Sigma^2$. Since $\norm{\BX v}_2^2 \sim \norm{v}_\Sigma^2 \chi^2_m$ and $m \geq 8\epsilon^{-2}\log(4/\delta)$ it holds with probability at least $1-\delta/2$ that \[\left| v^\t\left(\frac{1}{m}\BX^\t \BX - \Sigma\right)v\right| = \left|\frac{1}{m}\sum_{i=1}^m \langle X_i, v\rangle^2 - \norm{v}_\Sigma^2\right| \leq \epsilon \norm{v}_\Sigma^2.\]
Next, \[\left| w^\t\left(\frac{1}{m}\BX^\t \BX - \Sigma\right)v\right| = \left|\frac{1}{m}\sum_{i=1}^m \langle X_i, w\rangle \langle X_i, v\rangle\right| = \left|\frac{1}{m}\sum_{i=1}^m \langle Z_i, \Sigma^{1/2} w\rangle \langle Z_i, \Sigma^{1/2} v\rangle\right|\]
where we define independent random vectors $Z_1,\dots,Z_m \sim N(0,I_n)$ so that $X_i = \Sigma^{1/2} Z_i$. Since $m \geq 8\log(2/\delta)$, with probability at least $1-\delta/4$ we have $\sum_{i=1}^m \langle Z_i,\Sigma^{1/2} v\rangle^2 \leq 2m\norm{v}_\Sigma^2$. Condition on the value of this sum, and note that since $\Sigma^{1/2} v \perp \Sigma^{1/2} w$, the random variables $\langle Z_i,\Sigma^{1/2} w\rangle$ are still (independent and) distributed as $N(0,\norm{w}_\Sigma^2)$. Thus \[\frac{1}{m}\sum_{i=1}^m \langle Z_i, \Sigma^{1/2} w\rangle \langle Z_i, \Sigma^{1/2} v\rangle \sim N\left(0, \frac{1}{m^2} \sum_{i=1}^m \norm{w}_\Sigma^2 \langle Z_i,\Sigma^{1/2}v\rangle^2\right).\]
When the variance is at most $2\norm{w}_\Sigma^2\norm{v}_\Sigma^2/m$, we have with probability at least $1-\delta/4$ that the sum is at most $2\norm{w}_\Sigma\norm{v}_\Sigma\sqrt{2\log(8/\delta)/m}$ in magnitude. So, using $m \geq 8\epsilon^{-2}\log(8/\delta)$ it holds unconditionally with probability at least $1-\delta/2$ that \[\left|\frac{1}{m}\sum_{i=1}^m \langle Z_i, \Sigma^{1/2} w\rangle \langle Z_i, \Sigma^{1/2} v\rangle\right| \leq \epsilon \norm{w}_\Sigma \norm{v}_\Sigma.\]
In all, we have that \[\left| u^\t\left(\frac{1}{m}\BX^\t \BX - \Sigma\right)v\right| \leq |a|\epsilon \norm{v}_\Sigma^2 + \epsilon\norm{w}_\Sigma\norm{v}_\Sigma \leq 2\epsilon \norm{u}_\Sigma\norm{v}_\Sigma\] using that $|a| \leq \norm{u}_\Sigma/\norm{v}_\Sigma$ and $\norm{w}_\Sigma \leq \norm{u}_\Sigma$.
\end{proof}

\begin{lemma}\label{lemma:weak-learning-error}
Let $n,m \in \NN$ and let $\Sigma$ be a positive semi-definite matrix. Fix a vector $w^* \in \RR^n$ and a finite set $\mathcal{W} \subseteq \RR^n$ and let $(X_i,y_i)_{i=1}^m$ be independent draws $X_i \sim N(0,\Sigma)$ and $y_i = \langle X_i,w^*\rangle + \xi_i$ where $\xi_i \sim N(0,\sigma^2)$. Pick \[(\hat{w},\hat{\beta}) \in \argmin_{\substack{w \in \mathcal{W} \\ \beta \in \RR}} \norm{\beta\mathbb{X}w-y}_2^2.\]
Suppose $\alpha := \max_{w \in \mathcal{W}} \frac{\langle w,w^*\rangle_\Sigma}{\norm{w}_\Sigma\norm{w^*}_\Sigma} > 0$. For any $\delta>0$, if $m \geq C\alpha^{-2}\log(32|\mathcal{W}|/\delta)$ for a sufficiently large absolute constant $C$, then with probability at least $1-\delta$, \[\norm{\hat{\beta}\hat{w} - w^*}_\Sigma^2 \leq (1-\alpha^2/4)\norm{w^*}_\Sigma^2 + \frac{400\sigma^2 \log(4|\mathcal{W}|/\delta)}{\alpha^2 m}.\]
\end{lemma}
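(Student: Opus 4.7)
The plan is to compare the empirical optimum $(\hat w, \hat\beta)$ to the population optimum, which would achieve excess risk $(1-\alpha^2)\norm{w^*}_\Sigma^2$. Since $\hat\beta\hat w$ is invariant under rescaling any $w \in \mathcal{W}$ (with $\beta$ absorbing the scale) and $\alpha$ is also scale-invariant, we may assume without loss of generality that $\norm{w}_\Sigma = 1$ for every $w \in \mathcal{W}$; let $\tilde w \in \mathcal{W}$ be such that $\langle \tilde w, w^*\rangle_\Sigma = \alpha\norm{w^*}_\Sigma$. For each $w \in \mathcal{W}$ the inner minimum over $\beta$ is attained at $\hat\beta(w) = \hat c_y(w)/\norm{w}_{\Sigmahat}^2$, where $\hat c_y(w) := \tfrac{1}{m}\langle \BX w, y\rangle = \langle w, w^*\rangle_{\Sigmahat} + \tfrac{1}{m}\langle \BX w, \xi\rangle$, yielding empirical loss $\tfrac{1}{m}\norm{y}_2^2 - h(w)$ for $h(w) := \hat c_y(w)^2/\norm{w}_{\Sigmahat}^2$. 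Hence $\hat w$ maximizes $h$ over $\mathcal{W}$.

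Set $\epsilon := \alpha/C_1$ for a sufficiently large constant $C_1$. Applying Lemma~\ref{lemma:gaussian-quadratic-conc} to the pairs $(w,w)$ and $(w,w^*)$ for each $w \in \mathcal{W}$ and union bounding, then choosing $C$ in $m \geq C\alpha^{-2}\log(32|\mathcal{W}|/\delta)$ sufficiently large in terms of $C_1$, with probability at least $1-\delta/2$ we have $|\norm{w}_{\Sigmahat}^2 - 1|\leq \epsilon$ and $|\langle w, w^*\rangle_{\Sigmahat} - \langle w,w^*\rangle_\Sigma|\leq \epsilon\norm{w^*}_\Sigma$ uniformly in $w \in \mathcal{W}$. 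Conditional on $\BX$, $\tfrac{1}{m}\langle \BX w, \xi\rangle \sim N(0, \sigma^2\norm{w}_{\Sigmahat}^2/m)$, so a Gaussian tail bound combined with the previous variance estimate and a union bound gives $|\tfrac{1}{m}\langle \BX w, \xi\rangle| \leq N := 2\sigma\sqrt{\log(4|\mathcal{W}|/\delta)/m}$ uniformly in $w \in \mathcal{W}$, with probability at least $1-\delta/2$.

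On both events, $h(\tilde w) \geq (\alpha\norm{w^*}_\Sigma - \epsilon\norm{w^*}_\Sigma - N)_+^2/(1+\epsilon)$, while setting $c := \langle \hat w, w^*\rangle_\Sigma$ we have $h(\hat w) \leq (|c| + \epsilon\norm{w^*}_\Sigma + N)^2/(1-\epsilon)$. Combining these via $h(\hat w)\geq h(\tilde w)$ (take square roots, use $\sqrt{(1-\epsilon)/(1+\epsilon)}\geq 1-2\epsilon$, square, and apply AM-GM to the cross-term $\alpha N \norm{w^*}_\Sigma$) yields $c^2 \geq \alpha^2\norm{w^*}_\Sigma^2 - (\alpha^2/8)\norm{w^*}_\Sigma^2 - O(N^2)$ as long as $C_1$ is large enough.

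Since $\norm{\hat w}_\Sigma = 1$, the identity $\norm{\hat\beta\hat w - w^*}_\Sigma^2 = (\hat\beta - c)^2 + (\norm{w^*}_\Sigma^2 - c^2)$ holds. A direct calculation from the concentration bounds on $\norm{\hat w}_{\Sigmahat}^2$ and $\hat c_y(\hat w)$ gives $|\hat\beta - c| \leq O(\epsilon\norm{w^*}_\Sigma + N)$, and hence $(\hat\beta - c)^2 \leq (\alpha^2/8)\norm{w^*}_\Sigma^2 + O(N^2)$. Summing the two contributions yields $\norm{\hat\beta\hat w - w^*}_\Sigma^2 \leq (1-\alpha^2/4)\norm{w^*}_\Sigma^2 + O(\sigma^2\log(|\mathcal{W}|/\delta)/m)$, which is stronger than the stated bound since $\alpha \leq 1$ implies $1 \leq \alpha^{-2}$. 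The main difficulty is purely bookkeeping: each concentration error propagates into cross-terms such as $\epsilon\alpha\norm{w^*}_\Sigma^2$ and $\alpha N\norm{w^*}_\Sigma$, which must all be absorbed into the $(\alpha^2/4)\norm{w^*}_\Sigma^2$ slack while preserving a final noise term of order $\sigma^2\log|\mathcal{W}|/m$; choosing $\epsilon \asymp \alpha$ and using AM-GM with carefully weighted constants on the cross-term is what makes this balance work.
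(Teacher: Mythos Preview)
Your proof is correct and takes a genuinely different route from the paper's.

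The paper never solves for $\hat\beta$ explicitly. Instead it compares $(\hat w,\hat\beta)$ to the benchmark $(w_0,\beta_0)$ at the level of the empirical risk $\norm{\BX(\cdot)-y}_2^2$, controls the noise cross-term $\langle\xi,\BX(\beta w-w^*)\rangle$ uniformly in $\beta$ via a two-dimensional subspace argument (since $\BX(\beta w-w^*)\in\vspan\{\BX w,\BX w^*\}$), and then passes from empirical to population quantities by tracking the bilinear error $\Delta(u,v)=u^\t(\tfrac{1}{m}\BX^\t\BX-\Sigma)v$ through a fairly long expansion. After bounding $\norm{\hat\beta\hat w}_\Sigma$ separately and substituting back, it arrives at $(1-\alpha^2/2)\norm{w^*}_\Sigma^2+O(\sigma^2\alpha^{-2}\log(|\mathcal W|/\delta)/m)$.

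Your approach is more direct: normalizing to $\norm{w}_\Sigma=1$ and solving the inner minimization in closed form reduces everything to comparing the scalar ``explained variance'' $h(w)=\hat c_y(w)^2/\norm{w}_{\hat\Sigma}^2$, and the orthogonal decomposition $\norm{\hat\beta\hat w-w^*}_\Sigma^2=(\hat\beta-c)^2+(\norm{w^*}_\Sigma^2-c^2)$ isolates exactly the two quantities that need to be controlled. The concentration inputs are the same (Lemma~\ref{lemma:gaussian-quadratic-conc} for the quadratic forms, plus a Gaussian tail for the noise), but you avoid the $\Delta$-bookkeeping entirely. As you note, this even yields a noise term of order $\sigma^2\log(|\mathcal W|/\delta)/m$ without the $\alpha^{-2}$ factor that the paper's calculation incurs, so your bound is actually sharper than what is stated. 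The paper's route has the minor advantage of not requiring the normalization step and of handling $\beta$ uniformly rather than through its closed form, but in this setting that generality is not needed.
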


\begin{proof}
For any vectors $u,v \in \RR^n$, define $\Delta(u,v) = u^\t \left(\frac{1}{m}\BX^\t \BX - \Sigma\right) v$.

\begin{claim}\label{claim:model-selection-events}
With probability at least $1-\delta$, the following bounds hold uniformly over $w \in \mathcal{W}$ and $\beta \in \RR$:
\begin{enumerate}
\item $\left|\left\langle \xi, \frac{\BX(\beta w - w^*)}{\norm{\BX(\beta w - w^*)}_2}\right\rangle\right| \leq \sigma\sqrt{n_\text{eff}}$ where $n_\text{eff} := 2\log(32|\mathcal{W}|/\delta).$
\item $|\Delta(\beta w,w^*)| \leq \frac{\alpha}{100}\norm{\beta w}_\Sigma \norm{w^*}_\Sigma$
\item $|\Delta(\beta w, \beta w)| \leq \frac{\alpha}{100}\norm{\beta w}_\Sigma^2.$
\end{enumerate}
\end{claim}

\begin{proof}[Proof of claim]
For item (1), fix $w \in \mathcal{W}$. Let $\Phi^{(w)}: 2 \times m$ be a matrix whose rows form an orthonormal basis for $\vspan\{\BX w, \BX w^*\} \subseteq \RR^m$. Then (denoting the unit Euclidean ball in $\RR^2$ by $B_2$) we have for all $\beta \in \RR$ that
\[\left|\left\langle \xi, \frac{\BX(\beta w - w^*)}{\norm{\BX(\beta w - w^*)}_2}\right\rangle\right| \leq \sup_{u \in B_2} \left|\left\langle \xi, (\Phi^{(w)})^\t u\right\rangle\right| \leq \norm{\Phi^{(w)} \xi}_2 \leq \sqrt{2} \max_{i \in [2]} |\langle \Phi^{(w)}_i, \xi\rangle|.\]
Since $\langle \Phi^{(w)}_i, \xi\rangle \sim N(0,\sigma^2)$, we have $\Pr[|\langle \Phi^{(w)}_i, \xi\rangle| > \sigma \sqrt{2\log(4|\mathcal{W}|/\delta)}] \leq \delta/(4|\mathcal{W}|)$. A union bound over $i \in [2]$ and $w \in \mathcal{W}$ gives that condition (2) in Lemma~\ref{lemma:model-selection} is satisfied with probability at least $1-\delta/2$.

For items (2) and (3), note that $\Delta$ is bilinear, so it suffices to take $\beta = 1$. Applying Lemma~\ref{lemma:gaussian-quadratic-conc} and the union bound, so long as $m \geq C\alpha^{-2}\log(32|\mathcal{W}|/\delta)$ for a sufficiently large constant $C$, items (2) and (3) hold simultaneously with probability at least $1-\delta/2$.
\end{proof}

Henceforth we assume that all of the events in the above claim hold. Let $w_0 \in \mathcal{W}$ be such that $|\langle w_0,w^*\rangle_\Sigma| = \alpha \norm{w_0}_\Sigma \norm{w^*}_\Sigma$. Let $\beta_0 = \langle w_0,w^*\rangle_\Sigma / \norm{w_0}_\Sigma^2$. Then \[\norm{\beta_0 w_0 - w^*}_\Sigma^2 = (1-\alpha^2)\norm{w^*}_\Sigma^2.\]

\begin{claim}\label{claim:model-selection-emp-bound}
The excess empirical risk can be bounded as \[\norm{\BX(\hat{\beta}\hat{w} - w^*)}_2 \leq \norm{\BX(w_0 - w^*)}_2 + 2\sigma\sqrt{n_\text{eff}}.\]
\end{claim}

\begin{proof}[Proof of claim]
We have
\begin{align*}
\norm{\BX(\hat{\beta}\hat{w} - w^*)}_2^2
&= \norm{\BX\hat{\beta}\hat{w} - y}_2^2 + 2\langle \xi, \BX(\hat{\beta}\hat{w}-w^*)\rangle - \norm{\xi}_2^2 \\
&\leq \norm{\BX \beta_0 w_0 - y}_2^2 + 2 \langle \xi, \BX(\hat{\beta}\hat{w} - w^*)\rangle - \norm{\xi}_2^2 \\
&= \norm{\BX(\beta_0w_0 - w^*)}_2^2 + 2\langle \xi, \BX(\hat{\beta}\hat{w} - w^*)\rangle - 2\langle \xi, \BX(\beta_0w_0 - w^*)\rangle \\
&\leq \norm{\BX(\beta_0w_0 - w^*)}_2^2 + 2\left(\norm{\BX(\hat{\beta}\hat{w}-w^*)}_2 + \norm{\BX(\beta_0w_0 - w^*)}_2\right)\sigma\sqrt{n_\text{eff}}
\end{align*}
where the last bound is by item (1) of Claim~\ref{claim:model-selection-events}. Simplifying, we get the claimed bound.
\end{proof}

Now we have
\begin{align*}
\norm{\hat{\beta}\hat{w} - w^*}_\Sigma^2 
&= \frac{1}{m}\norm{\BX(\hat{\beta}\hat{w} - w^*)}_2^2 - \Delta(\hat{\beta}\hat{w}-w^*, \hat{\beta}\hat{w}-w^*) \\
&\leq \frac{1}{m}\left(\norm{\BX(\beta_0w_0 - w^*)}_2 + 2\sigma\sqrt{n_\text{eff}}\right)^2 - \Delta(\hat{\beta}\hat{w}-w^*, \hat{\beta}\hat{w}-w^*) \\
&\leq \frac{1 + \alpha^2/100}{m}\norm{\BX(\beta_0w_0-w^*)}_2^2 + (1+100\alpha^{-2})\frac{\sigma^2 n_\text{eff}}{m}- \Delta(\hat{\beta}\hat{w}-w^*, \hat{\beta}\hat{w}-w^*) \\
&= (1+\alpha^2/100)\norm{\beta_0w_0-w^*}_\Sigma^2 +(1+100\alpha^{-2})\frac{\sigma^2 n_\text{eff}}{m} \\
&\qquad -\Delta(\hat{\beta}\hat{w}-w^*, \hat{\beta}\hat{w}-w^*) + \left(1 + \frac{\alpha^2}{100}\right) \Delta(\beta_0w_0-w^*,\beta_0w_0-w^*) \\
&\leq (1+\alpha^2/100)\norm{\beta_0w_0-w^*}_\Sigma^2 +(1+100\alpha^{-2})\frac{\sigma^2 n_\text{eff}}{m} \\
&\qquad + |\Delta(\hat{\beta}\hat{w},\hat{\beta}\hat{w})| + 2|\Delta(\hat{\beta}\hat{w}, w^*)| \\
&\qquad + (1+\alpha^2/100)|\Delta(\beta_0w_0,\beta_0w_0)| + 2(1+\alpha^2/100)|\Delta(\beta_0w_0,w^*)| \\
&\qquad + (\alpha^2/100)|\Delta(w^*, w^*)|.
\end{align*}
where the first inequality is by Claim~\ref{claim:model-selection-emp-bound}, the second inequality is by AM-GM, and the final inequality is expanding out the terms $\Delta(\hat{\beta}\hat{w}-w^*, \hat{\beta}\hat{w}-w^*)$ and $\Delta(\beta_0w_0-w^*,\beta_0w_0-w^*)$ (via bilinearity) and cancelling out the common term $\Delta(w^*, w^*)$. Finally applying items (2) and (3) of Claim~\ref{claim:model-selection-events}, we get
\begin{align}
\norm{\hat{\beta}\hat{w} - w^*}_\Sigma^2
&\leq (1+\alpha^2/100)\norm{\beta_0w_0-w^*}_\Sigma^2 +(1+100\alpha^{-2})\frac{\sigma^2 n_\text{eff}}{m} \nonumber\\
&\qquad + \frac{\alpha}{100}\norm{\hat{\beta}\hat{w}}_\Sigma^2 + \frac{\alpha}{50} \norm{\hat{\beta}\hat{w}}_\Sigma \norm{w^*}_\Sigma \nonumber\\
&\qquad + \frac{\alpha}{50}\norm{\beta_0w_0}_\Sigma^2 + \frac{\alpha}{25}\norm{\beta_0w_0}_\Sigma\norm{w^*}_\Sigma + \frac{\alpha^3}{100}\norm{w^*}_\Sigma^2 \nonumber\\
&\leq (1 - 9\alpha^2/10)\norm{w^*}_\Sigma^2 + \frac{101\sigma^2 n_\text{eff}}{\alpha^2 m} \nonumber\\
&\qquad + \frac{\alpha}{100}\norm{\hat{\beta}\hat{w}}_\Sigma^2 + \frac{\alpha}{50} \norm{\hat{\beta}\hat{w}}_\Sigma \norm{w^*}_\Sigma
\label{eq:what-main-bound}
\end{align}
where the second inequality uses the bounds $\norm{\beta_0w_0 - w^*}_\Sigma^2 = (1-\alpha^2)\norm{w^*}_\Sigma^2$ and \[\norm{\beta_0w_0}_\Sigma = \frac{|\langle w_0,w^*\rangle_\Sigma|}{\norm{w_0}_\Sigma} = \alpha\norm{w^*}_\Sigma.\]
But now on the other hand, \[\norm{\hat{\beta}\hat{w} - w^*}_\Sigma^2 = \norm{\hat{\beta}\hat{w}}_\Sigma^2 + \norm{w^*}_\Sigma^2 - 2\langle \hat{\beta}\hat{w}, w^*\rangle_\Sigma \geq \norm{\hat{\beta}\hat{w}}_\Sigma^2 + \norm{w^*}_\Sigma^2 + 2\alpha\norm{\hat{\beta}\hat{w}}_\Sigma\norm{w^*}_\Sigma.\]
Comparing with (\ref{eq:what-main-bound}) gives \[\left(1 - \frac{\alpha}{100}\right)\norm{\hat{\beta}\hat{w}}_\Sigma^2 \leq \frac{101\sigma^2 n_\text{eff}}{\alpha^2 m} + 3\alpha\norm{\hat{\beta}\hat{w}}_\Sigma\norm{w^*}_\Sigma\]
and therefore \[\norm{\hat{\beta}\hat{w}}_\Sigma \leq 4\alpha\norm{w^*}_\Sigma + \sigma\sqrt{\frac{101n_\text{eff}}{\alpha^2 m}}.\]
Substituting into (\ref{eq:what-main-bound}) we finally get
\begin{align*}
\norm{\hat{\beta}\hat{w} - w^*}_\Sigma^2
&\leq (1 - \alpha^2/2)\norm{w^*}_\Sigma^2 + \frac{200\sigma^2n_\text{eff}}{\alpha^2 m}
\end{align*}
as desired.
\end{proof}

\subsection{Excess risk at optima of additively-regularized programs}

\begin{lemma}\label{lemma:gen-bound-seminorm}
Let $n \in \NN$, and let $\Sigma: n \times n$ be a positive semi-definite matrix. For some seminorm $\Phi: \RR^n \to [0,\infty)$ and some $p,\delta>0$, assume that with probability at least $1-\delta$ over $G \sim N(0,\Sigma)$ it holds uniformly over $v \in \RR^n$ that \[\langle v, G\rangle \leq \frac{1}{2}\Phi(v) + \sqrt{p}\norm{v}_\Sigma.\]

Fix a vector $v^* \in \RR^n$. For any $m \in \NN$ and $\sigma>0$ let $(X_i,y_i)_{i=1}^m$ be independent samples distributed as $X_i \sim N(0,\Sigma)$ and $y_i = \langle X_i, v^*\rangle + \xi_i$ where $\xi_i \sim N(0,\sigma^2)$. Define \[\hat{v} \in \argmin_{v \in \RR^n} \norm{\BX v - y}_2^2 + \Phi(v)^2 + \norm{y}_2 \Phi(v)\]
where $\BX: m \times n$ is the matrix with rows $X_1,\dots,X_m$.
Then with probability at least $1-7\delta$ over $(X_i,y_i)_{i=1}^m$, so long as $m \geq 16p + 196\log(12/\delta))$, it holds that 
\[\norm{\hat{v} - v^*}_\Sigma^2 \leq \frac{128\sigma^2 p}{m} + \frac{8(\sigma+\norm{v^*}_\Sigma)\Phi(v^*)}{\sqrt{m}} + \frac{8\Phi(w^*)^2}{m}.\]
\end{lemma}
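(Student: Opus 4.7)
The plan is to combine the optimistic-rates inequality of Theorem~\ref{theorem:zkss} with the basic inequality coming from the optimality of $\hat v$. The key observation is that the $+\sigma^2$ on the LHS of Theorem~\ref{theorem:zkss} will cancel the dominant $\norm{\xi}_2^2/m \approx \sigma^2$ on the RHS, leaving only lower-order terms of the claimed size. I would apply Theorem~\ref{theorem:zkss} with the continuous, nonnegative function
\[F(v) := \tfrac{1}{2}\Phi(v - v^*) + \sqrt{p}\,\norm{v - v^*}_\Sigma.\]
The hypothesis of the present lemma, applied to $v - v^*$ in place of $v$, says exactly $\Pr_G[\sup_v \langle v-v^*,G\rangle - F(v) > 0] \le \delta$, so Theorem~\ref{theorem:zkss} (with $\epsilon$ a small absolute constant, compatible with $m \ge 196\log(12/\delta)$) yields, with probability at least $1-4\delta$,
\[\norm{\hat v - v^*}_\Sigma^2 + \sigma^2 \;\le\; \tfrac{1+\epsilon}{m}\bigl(\norm{\BX\hat v - y}_2 + F(\hat v)\bigr)^2.\]

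Next, I would extract quantitative bounds on $\norm{\BX\hat v - y}_2$ and $F(\hat v)$ from the optimality of $\hat v$. Noticing that $\Phi(v)^2 + \norm{y}_2\Phi(v) = (\Phi(v) + \norm{y}_2/2)^2 - \norm{y}_2^2/4$, comparison of $\hat v$ against $v^*$ gives
\[\norm{\BX\hat v - y}_2^2 + (\Phi(\hat v) + \norm{y}_2/2)^2 \;\le\; \norm{\xi}_2^2 + (\Phi(v^*) + \norm{y}_2/2)^2.\]
Dropping each nonnegative term on the LHS in turn and using $\sqrt{a^2+b^2}\le a+b$ gives $\Phi(\hat v) \le \norm{\xi}_2 + \Phi(v^*)$ and $\norm{\BX\hat v - y}_2 \le \norm{\xi}_2 + \sqrt{R}$, where $R := \Phi(v^*)^2 + \norm{y}_2 \Phi(v^*)$. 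The triangle inequality then yields $F(\hat v) \le \norm{\xi}_2 + \Phi(v^*) + \sqrt{p}\,\norm{\hat v - v^*}_\Sigma$.

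Substituting back into the optimistic-rates bound, I would split via the AM-GM identity $(A + \sqrt{p}\,\norm{\hat v - v^*}_\Sigma)^2 \le (1+\eta)A^2 + (1+\eta^{-1})p\,\norm{\hat v - v^*}_\Sigma^2$, with $A := \norm{\BX\hat v - y}_2 + \norm{\xi}_2 + \Phi(v^*)$ and $\eta$ an absolute constant. The resulting coefficient of $\norm{\hat v - v^*}_\Sigma^2$ on the RHS is $O(p/m) \le 1/2$ by the hypothesis $m \ge 16 p$, so it can be absorbed into the LHS. What remains is an estimate of the form
\[\tfrac12\norm{\hat v - v^*}_\Sigma^2 + \sigma^2 \;\lesssim\; \tfrac{1}{m}\bigl(\norm{\xi}_2^2 + \Phi(v^*)^2 + \norm{y}_2 \Phi(v^*)\bigr).\]
Standard $\chi^2$ concentration ($\norm{\xi}_2^2 \le \sigma^2 m(1 + O(\sqrt{p/m}))$ and $\norm{y}_2 \le O(\sqrt{m}(\sigma + \norm{v^*}_\Sigma))$, each with probability $\ge 1 - \delta$) then allows one to subtract $\sigma^2$ from both sides and bound the residual fluctuations by $O(\sigma^2 p/m)$; a final AM-GM split on $\norm{y}_2 \Phi(v^*)/m$ yields the $(\sigma + \norm{v^*}_\Sigma)\Phi(v^*)/\sqrt{m}$ and $\Phi(v^*)^2/m$ terms.

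The main obstacle will be the bookkeeping around the noise cancellation: one needs to track constants carefully enough that, after cancelling the leading $\sigma^2$, the residual is of the allowed size $O(\sigma^2 p/m)$ rather than the slower $O(\sigma^2 \sqrt{p/m})$ that a naive Cauchy--Schwarz on the cross-term would produce. The relatively large numerical constant ($128$) in the conclusion absorbs this slack as well as the losses from using absolute-constant $\epsilon$ and $\eta$; sharper choices of these parameters would only affect constants.
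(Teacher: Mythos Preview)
Your proposal has a genuine gap at exactly the point you flag as the ``main obstacle'': the $\sigma^2$ cancellation cannot be made to work along the lines you sketch, and this is structural rather than bookkeeping. When you apply Theorem~\ref{theorem:zkss} to the \emph{noisy} samples with $\epsilon$ an absolute constant, the right-hand side carries a factor $(1+\epsilon)$ in front of $\tfrac{1}{m}\norm{\BX\hat v - y}_2^2$; moreover your bound $A \ge \norm{\BX\hat v - y}_2 + \tfrac12\norm{\xi}_2$ (after triangle-inequality on $\Phi(\hat v - v^*)$ and $\Phi(\hat v)\le \norm{\xi}_2 + \Phi(v^*)$) forces $A^2 \gtrsim \norm{\xi}_2^2$ with a constant strictly larger than $1$. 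Thus after absorption you are left with
\[
\tfrac12\norm{\hat v - v^*}_\Sigma^2 + \sigma^2 \;\le\; \tfrac{C}{m}\norm{\xi}_2^2 + (\text{lower order}),\qquad C>1,
\]
and since $\tfrac{1}{m}\norm{\xi}_2^2 = \sigma^2(1 + O(\sqrt{\log(1/\delta)/m}))$, subtracting $\sigma^2$ leaves a residual $(C-1)\sigma^2 + O(\sigma^2\sqrt{\log(1/\delta)/m})$, not $O(\sigma^2 p/m)$. Shrinking $\epsilon$ to $O(p/m)$ would fix the first constant but would require $m \gtrsim (m/p)^2\log(1/\delta)$, which is not among the hypotheses; and no choice of $\eta$ can undo the blow-up coming from adding $\norm{\xi}_2$ into $A$ via your bound on $\Phi(\hat v)$. (Also, your claimed $\chi^2$ fluctuation $O(\sqrt{p/m})$ is not correct without $p \gtrsim \log(1/\delta)$, which is not assumed.)

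The paper avoids the cancellation entirely by invoking Theorem~\ref{theorem:zkss} on the \emph{noiseless} problem $(X_i,\langle X_i,v^*\rangle)$, which yields $\norm{v-v^*}_\Sigma^2 \le \tfrac{2}{m}\norm{\BX(v-v^*)}_2^2 + \tfrac{2}{m}F(v)^2$ with no $\sigma^2$ on either side. The noise then enters only through the cross-term $\langle \xi,\BX(\hat v - v^*)\rangle$ when one expands $\norm{\BX(\hat v - v^*)}_2^2$ and uses optimality. The second key idea is that, conditionally on $\xi$, one has $\BX^\t\xi \sim N(0,\norm{\xi}_2^2\Sigma)$, so the \emph{same} hypothesis of the lemma bounds this cross-term by $\norm{\xi}_2 F(\hat v)\le \sqrt{2m}\,\sigma F(\hat v)$. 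The $\sqrt{p}\norm{\hat v - v^*}_\Sigma$ part of $F$ then produces a term $\sigma\sqrt{mp}\,\norm{\hat v - v^*}_\Sigma$, which (by AM--GM or a simple case split) gives the $\sigma^2 p/m$ rate directly; the $\Phi(\hat v)$ contributions are absorbed by the negative regularization terms $-\Phi(\hat v)^2 - \norm{y}_2\Phi(\hat v)$ retained from optimality, using $\norm{y}_2 \ge \sqrt{m/2}\,\sigma$. Your outline is missing both of these moves.
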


\begin{proof}

For notational convenience, define $F(v) := (1/2)\Phi(v-v^*) + \sqrt{p}\norm{v-v^*}_\Sigma$. We apply the lemma's assumption twice:

\begin{itemize}
\item For any fixed $\xi$, the random variable $\BX \xi$ has distribution $N(0, \norm{\xi}_2^2 \Sigma)$. By the above claim, with probability at least $1-\delta$ over $\mathbb{X}$, we have $\langle \xi, \BX(v - v^*)\rangle \leq \norm{\xi}_2 F(v)$ uniformly in $v \in \RR^n$.

Since $\norm{\xi}_2^2 \sim \sigma^2 \chi^2_m$ and $m \geq 8\log(2/\delta)$, it holds with probability at least $1-\delta$ that $\frac{1}{\sqrt{m}}\norm{\xi}_2 \leq \sqrt{2}\sigma$. Thus, with probability at least $1-2\delta$, we have
\begin{equation}
\langle \xi, \BX(v-v^*)\rangle \leq \sqrt{2m}\sigma F(v)
\label{eq:xi-dot-prod-bound}
\end{equation}
uniformly in $v \in \RR^n$.

\item The assumption means that we can apply Theorem~\ref{theorem:zkss} with (noiseless) samples 
$(X_i, \langle X_i, v^*\rangle)_{i=1}^m$ to get the following: since $m \geq 196\log(12/\delta)$, it holds with probability at least $1-4\delta$ over the randomness of $\mathbb{X}$ that for all $v \in \RR^n$, 
\begin{equation}
\norm{v-v^*}_\Sigma^2 \leq \frac{2}{m}\norm{\BX (v - v^*)}_2^2 + \frac{2}{m} F(v)^2.
\label{eq:gen-bound-0}
\end{equation}
\end{itemize}
We also observe that the entries of $y$ are independent and identically distributed as $N(0, \norm{v^*}_\Sigma^2 + \sigma^2)$, so by a $\chi^2$ tail bound, since $m \geq 32\log(2/\delta)$, it holds with probability at least $1-\delta$ that \begin{equation}\frac{1}{m}\norm{y}_2^2 \in \left[\frac{1}{2}(\norm{v^*}_\Sigma^2 + \sigma^2), \frac{3}{2}(\norm{v^*}_\Sigma^2 + \sigma^2)\right].
\label{eq:y-bound}
\end{equation}
We now condition on the event (which occurs with probability at least $1-7\delta$) that the bounds (\ref{eq:xi-dot-prod-bound}), (\ref{eq:gen-bound-0}), and (\ref{eq:y-bound}) all hold. Specifying (\ref{eq:gen-bound-0}) to $v := \hat{v}$, we get that
\begin{align*}
&\frac{m}{2}\norm{\hat{v} - v^*}_\Sigma^2  \\
&\leq \norm{\BX (\hat{v}-v^*)}_2^2 + F(\hat{v})^2 \\
&\leq \norm{\BX(\hat{v} - v^*)}_2^2 - \norm{\BX\hat{v} - y}_2^2 - \Phi(\hat{v})^2 - \norm{y}_2\Phi(\hat{v}) \\
&\qquad + \norm{\BX v^* - y}_2^2 + \Phi(v^*)^2 + \norm{y}_2 \Phi(v^*) + F(\hat{v})^2 \\
&= 2\langle \BX v^* - y, \BX(\hat{v} - v^*)\rangle \\
&\qquad - \Phi(\hat{v})^2 - \norm{y}_2\Phi(\hat{v}) + \Phi(v^*)^2 + \norm{y}_2\Phi(v^*) + F(\hat{v})^2 \\
&\leq \sqrt{2m}\sigma F(\hat{v}) - \Phi(\hat{v})^2 - \norm{y}_2\Phi(\hat{v}) + \Phi(v^*)^2 + \norm{y}_2\Phi(v^*) + F(\hat{v})^2
\end{align*}
where the first inequality is by (\ref{eq:gen-bound-0}), the second inequality is by optimality of $\hat{v}$, and the third inequality is by (\ref{eq:xi-dot-prod-bound}). We now expand $F(\hat{v})$ in the above expression. If $\sqrt{2mp}\sigma \norm{\hat{v}-v^*}_\Sigma$ exceeds $\frac{m}{8} \norm{\hat{v} - v^*}_\Sigma^2$ then the lemma immediately holds since

\[\norm{\hat{v} - v^*}_\Sigma^2 \leq \frac{128\sigma^2 p}{m}.\]

So we may assume that in fact $\sqrt{2mp}\sigma \norm{\hat{v}-v^*}_\Sigma \leq \frac{m}{8}\norm{\hat{v}-v^*}_\Sigma^2$. By the lemma assumptions, we also know that $m \geq 16p$. Thus, expanding $F(\hat{v})$ and applying these bounds,
\begin{align*}
\frac{m}{2}\norm{\hat{v} - v^*}_\Sigma^2
&\leq \sqrt{2m}\sigma \left(\frac{1}{2}\Phi(\hat{v}-v^*) + \sqrt{p}\norm{\hat{v}-v^*}_\Sigma \right) \\
&\qquad - \Phi(\hat{v})^2 - \norm{y}_2\Phi(\hat{v}) + \Phi(v^*)^2 + \norm{y}_2\Phi(v^*) \\
&\qquad + \frac{1}{2}\Phi(\hat{v} - v^*)^2 + 2p \norm{\hat{v}-v^*}_\Sigma^2 \\
&\leq \sqrt{\frac{m}{2}}\sigma \Phi(\hat{v} - v^*) + \frac{m}{8}\norm{\hat{v} - v^*}_\Sigma^2 \\
&\qquad - \Phi(\hat{v})^2 - \norm{y}_2 \Phi(\hat{v}) + \Phi(v^*)^2 + \norm{y}_2\Phi(v^*) \\
&\qquad + \frac{1}{2}\Phi(\hat{v} - v^*)^2 + \frac{m}{8} \norm{\hat{v}-v^*}_\Sigma^2.
\end{align*}
Simplifying, applying the triangle inequality $\Phi(\hat{v}-v^*) \leq \Phi(\hat{v})+\Phi(v^*)$, and grouping terms, we get
\begin{align*}
&\frac{m}{4}\norm{\hat{v} - v^*}_\Gamma^2  \\
&\leq \left(\sqrt{\frac{m}{2}}\sigma - \norm{y}_2\right)\Phi(\hat{v}) + \left(\sqrt{\frac{m}{2}}\sigma + \norm{y}_2\right)\Phi(v^*) + 2\Phi(v^*)^2 \\
&\leq 2(\sigma+\norm{v^*}_\Sigma)\sqrt{m}\Phi(v^*) + 2\Phi(v^*)^2
\end{align*}
where the last inequality uses both sides of the bound (\ref{eq:y-bound}).
\end{proof}

\section{Covering bounds from classical assumptions}\label{app:standard-covering}

In this section, we further motivate the definition of our covering number $\CN_{t,\alpha}(\Sigma)$ by showing that in all settings where efficient SLR algorithms are known, there is a straightforward \emph{linear} upper bound on the covering number. This lends weight to the need for stronger upper bounds on $\CN_{t,\alpha}$ as a stepping stone towards more efficient algorithms for sparse linear regression.

\subsection{Compatibility condition}

\begin{definition}[Compatibility Condition, see e.g. \cite{van2009conditions}]\label{def:compatability-condition}
For a positive semidefinite matrix $\Sigma : n \times n$, $L \ge 1$, and set $S \subset [n]$, we say $\Sigma$ has \emph{$S$-restricted $\ell_1$-eigenvalue}
\[ \phi^2(\Sigma,S) = \min_{w \in \mathcal{C}(S)} \frac{|S| \cdot \langle w, \Sigma w \rangle}{\|w_S\|^2_1}  \]
where the cone $\mathcal{C}(S)$ is defined as
\[ \mathcal{C}(S) = \{ w \ne 0 : \|w_{S^C}\|_1 \le L \|w_{S}\|_1 \}. \]
For $t \in \NN$, the $t$-restricted $\ell_1$-eigenvalue $\phi^2(\Sigma,t)$ is the minimum over all $S$ of size at most $t$.
\end{definition}

It is well-known that an upper bound on $\frac{\max_i \Sigma_{ii}}{\phi^2(\Sigma,t)}$ is sufficient for the success of Lasso (as well as nearly necessary; see e.g. the Weak Compatibility Condition defined in \cite{kelner2021power}):

\begin{theorem}[see e.g. Corollary~5 in \cite{zhou2021optimistic}]
Fix $n,m,t \in \NN$, $\sigma,\delta>0$, and a positive semi-definite matrix $\Sigma: n \times n$ with $\max_i \Sigma_{ii} \leq 1$. Fix a $t$-sparse vector $v^* \in \RR^n$ and let $(X_i,y_i)_{i=1}^m$ be independent samples distributed as $X_i \sim N(0,\Sigma)$ and $y_i = \langle X_i, v^*\rangle + \xi_i$ where $\xi_i \sim N(0,\sigma^2)$. Define \[\hat{v} \in \argmin_{v \in \RR^n: \norm{v}_1 \leq \norm{v^*}_1} \norm{\BX v - y}_2^2\]
where $\BX: m \times n$ is the matrix with rows $X_1,\dots,X_m$. If $m \geq 4\phi^2(\Sigma,t) \cdot t\log(16n/\delta)$, then with probability at least $1-\delta$, it holds that \[\norm{\hat{v} - v^*}_\Sigma^2 \leq O\left(\frac{\sigma^2 t\log(16n/\delta)}{\phi^2(\Sigma,t) m}\right).\]
\end{theorem}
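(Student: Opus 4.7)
The plan is to follow the now-classical ``basic inequality'' proof of Lasso consistency, customized to the constrained (rather than penalized) estimator used in the theorem statement. I would structure the argument in five steps.

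First, I would set up the basic inequality. Since $\hat{v}$ minimizes $\norm{\BX v - y}_2^2$ over the constraint set $\{v : \norm{v}_1 \leq \norm{v^*}_1\}$, and $v^*$ itself lies in this set, expanding $y = \BX v^* + \xi$ gives
\[ \norm{\BX(\hat{v} - v^*)}_2^2 \leq 2 \langle \xi, \BX(\hat{v}-v^*)\rangle. \]
Second, I would prove a cone membership claim: letting $S = \supp(v^*)$ and $h := \hat{v} - v^*$, the constraint $\norm{\hat{v}}_1 \leq \norm{v^*}_1$ combined with the triangle inequality $\norm{v^*}_1 \leq \norm{\hat{v}_S - h_S}_1 \leq \norm{\hat{v}_S}_1 + \norm{h_S}_1$ and $\norm{\hat{v}}_1 = \norm{\hat{v}_S}_1 + \norm{h_{S^c}}_1$ yields $\norm{h_{S^c}}_1 \leq \norm{h_S}_1$, so $h \in \mathcal{C}(S)$ (with $L=1$).

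Third, I would bound the noise term. By Hölder's inequality, $\langle \xi, \BX h\rangle \leq \norm{\BX^\t \xi}_\infty \cdot \norm{h}_1$. Conditional on $\BX$, each coordinate $(\BX^\t \xi)_j$ is a centered Gaussian with variance $\sigma^2 \norm{\BX_{\cdot j}}_2^2$, and standard $\chi^2$-concentration using $\Sigma_{jj} \leq 1$ gives $\norm{\BX_{\cdot j}}_2^2 \lesssim m$ uniformly in $j$ with high probability. A union bound then yields $\norm{\BX^\t \xi}_\infty \lesssim \sigma\sqrt{m \log(n/\delta)}$ with probability $1-\delta/2$. Fourth, I would apply the compatibility condition to $h \in \mathcal{C}(S)$:
\[ \norm{h_S}_1 \leq \sqrt{\frac{t \cdot \norm{h}_\Sigma^2}{\phi^2(\Sigma,t)}}, \qquad \text{so} \qquad \norm{h}_1 \leq 2\norm{h_S}_1 \leq 2\sqrt{\frac{t \cdot \norm{h}_\Sigma^2}{\phi^2(\Sigma,t)}}. \]

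The last step, and I expect the technical heart of the argument, is transferring the empirical $\BX$-norm back to the population $\Sigma$-norm for vectors in the cone. One shows that with high probability $\frac{1}{m}\norm{\BX h}_2^2 \geq \frac{1}{2}\norm{h}_\Sigma^2$ uniformly over $h \in \mathcal{C}(S)$; this is exactly a restricted-eigenvalue concentration statement that can be obtained either by invoking a standard result such as Theorem~\ref{theorem:zkss} (using the sub-Gaussian maximum over the cone intersected with an $\ell_2$ ball, which has Gaussian width $\tilde{O}(\sqrt{t\log n})$), or by direct chaining arguments available in the Lasso literature (e.g.~the results of Raskutti--Wainwright--Yu or van de Geer). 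Plugging the noise bound into step one, using steps two and four to express $\norm{h}_1$ in terms of $\norm{h}_\Sigma$, and then rearranging with step five gives
\[ \norm{h}_\Sigma^2 \lesssim \frac{\sigma^2 t \log(n/\delta)}{\phi^2(\Sigma,t) \cdot m}, \]
which is the claimed bound. The main obstacle is the uniform cone-concentration step, since the naive per-vector Gaussian bound is not enough; I would appeal to an off-the-shelf restricted-eigenvalue result (or to Theorem~\ref{theorem:zkss}) rather than reprove it from scratch.
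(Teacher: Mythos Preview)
The paper does not prove this theorem at all: it is stated as a known result with the attribution ``see e.g.\ Corollary~5 in \cite{zhou2021optimistic}'' and no proof is given. So there is nothing in the paper to compare against directly.

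Your proposal is correct and is precisely the classical ``basic inequality + cone + compatibility + restricted-eigenvalue concentration'' argument (as in Bickel--Ritov--Tsybakov or van de Geer). Each of your five steps is sound, and you correctly identify the only nontrivial ingredient as the uniform lower bound $\frac{1}{m}\norm{\BX h}_2^2 \gtrsim \norm{h}_\Sigma^2$ over the cone, which indeed requires an off-the-shelf restricted-eigenvalue result. One minor remark: the cited reference \cite{zhou2021optimistic} likely derives this corollary more directly from their Theorem~1 (the paper's Theorem~\ref{theorem:zkss}) by choosing $F(w)$ proportional to $\norm{w - v^*}_1$ and using the compatibility condition to absorb $F(\hat v)$ into $\norm{\hat v - v^*}_\Sigma$, rather than separating into a basic-inequality step and a restricted-eigenvalue step; that route avoids the explicit cone-concentration argument but is morally equivalent. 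Either way, your plan would prove the statement.
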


\begin{fact}
Let $n,t \in \NN$. For any positive semi-definite $\Sigma: n \times n$ with $\phi^2 := \phi^2(\Sigma,t)$ and $\max_i \Sigma_{ii} \leq 1$, it holds that $\CN_{t,\phi/\sqrt{t}}(\Sigma) \leq n$.
\end{fact}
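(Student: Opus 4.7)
The plan is to invoke Lemma~\ref{lemma:l1rep-to-covering} and reduce the covering bound to an $\ell_1$-representation bound. Specifically, I will show that under the hypotheses, the standard basis $\{e_1,\dots,e_n\}$ is a $(t, \sqrt{t}/\phi)$-$\ell_1$-representation for $\Sigma$. Since $|\{e_1,\dots,e_n\}| = n$, Lemma~\ref{lemma:l1rep-to-covering} then immediately yields $\CN_{t,\phi/\sqrt{t}}(\Sigma) \leq n$. This mirrors the structure of the proof of Fact~\ref{fact:kappa-dictionary}, where the condition number $\kappa$ is replaced by the (generally tighter) restricted $\ell_1$-eigenvalue notion.

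To establish $\ell_1$-representability, fix any nonzero $t$-sparse $v \in \RR^n$, and let $S = \supp(v)$, so $|S| \leq t$ and $v_{S^c} = 0$. The key observation is that $v$ trivially lies in the cone $\mathcal{C}(S)$ of Definition~\ref{def:compatability-condition}, since $\norm{v_{S^c}}_1 = 0 \leq L \norm{v_S}_1$ for any $L \geq 1$. Therefore the compatibility condition applies and gives
\[
\phi^2(\Sigma, S) \leq \frac{|S| \cdot \langle v, \Sigma v\rangle}{\norm{v_S}_1^2} \leq \frac{t \cdot \norm{v}_\Sigma^2}{\norm{v}_1^2},
\]
using $\norm{v_S}_1 = \norm{v}_1$. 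Since $\phi^2(\Sigma, S) \geq \phi^2(\Sigma, t) = \phi^2$ for any $|S| \leq t$, rearranging gives $\norm{v}_1 \leq (\sqrt{t}/\phi) \norm{v}_\Sigma$.

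It remains to convert this into the $\ell_1$-representation bound. With coefficients $\alpha_i := v_i$ (so that $v = \sum_{i=1}^n \alpha_i e_i$), and using $\norm{e_i}_\Sigma = \sqrt{\Sigma_{ii}} \leq 1$ from the hypothesis $\max_i \Sigma_{ii} \leq 1$, we get
\[
\sum_{i=1}^n |\alpha_i| \cdot \norm{e_i}_\Sigma \leq \norm{v}_1 \leq \frac{\sqrt{t}}{\phi} \norm{v}_\Sigma,
\]
which is exactly the definition of a $(t, \sqrt{t}/\phi)$-$\ell_1$-representation.

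There is no real obstacle here; the proof is essentially a one-line unpacking of the compatibility condition once one notes that $t$-sparse vectors are automatically in every cone $\mathcal{C}(S)$ containing their support. The main conceptual point worth emphasizing in the writeup is that, in contrast to Fact~\ref{fact:kappa-dictionary} where we pass through the crude bound $\norm{v}_1 \leq \sqrt{t}\norm{v}_2 \leq \sqrt{t/\lambda_\text{min}(\Sigma)} \norm{v}_\Sigma$, here the compatibility condition directly controls $\norm{v}_1^2$ by $\norm{v}_\Sigma^2$ along the relevant cone, giving the correct dependence $\phi^{-1}$ rather than $\lambda_\text{min}^{-1/2}$.
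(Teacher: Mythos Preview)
Your proposal is correct and follows essentially the same approach as the paper: reduce via Lemma~\ref{lemma:l1rep-to-covering} to showing the standard basis is a $(t,\sqrt{t}/\phi)$-$\ell_1$-representation, then use the compatibility condition on $t$-sparse $v$ (which automatically lies in $\mathcal{C}(\supp v)$) together with $\max_i \Sigma_{ii}\le 1$ to bound $\sum_i |v_i|\norm{e_i}_\Sigma \le \norm{v}_1 \le (\sqrt{t}/\phi)\norm{v}_\Sigma$. Your write-up just makes explicit the cone membership that the paper leaves implicit.
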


\begin{proof}
The proof is essentially the same as that of Fact~\ref{fact:kappa-dictionary}. By Lemma~\ref{lemma:l1rep-to-covering}, it suffices to show that the standard basis is a $(t,\sqrt{t}/\phi)$-$\ell_1$-representation for $\Sigma$. Indeed, for any $t$-sparse $v \in \RR^n$, we have
\[\sum_{i=1}^n |v_i| \cdot \norm{e_i}_\Sigma \leq \norm{v}_1 \cdot \max_i \sqrt{\Sigma_{ii}} \leq \frac{\sqrt{t}\norm{v}_\Sigma}{\phi}\] as claimed.
\end{proof}

\subsection{Submodularity ratio}

\begin{definition}[see e.g. \cite{das2011submodular}]
For a positive semi-definite matrix $\Sigma: n \times n$ and a set $L \subseteq [n]$ define the normalized residual covariance matrx $\Sigma^{(L)}: n \times n$ by \[\Sigma^{(L)} := (D^{1/2})^\pinv \left(\Sigma - \Sigma_L^\t \Sigma_{LL}^\pinv \Sigma_L\right) (D^{1/2})^\pinv\]
where $D := \diag\left(\Sigma - \Sigma_L^\t \Sigma_{LL}^\pinv \Sigma_L\right)$.
\end{definition}

\begin{definition}
Fix a positive semi-definite matrix $\Sigma: n \times n$, a positive integer $t \in \NN$, and any $v^*\in \RR^n$. Define the \emph{$t$-submodularity ratio} of $\Sigma$ with respect to $v^*$ by \[\gamma_t(\Sigma,v^*) := \min_{L,S \subseteq [n]: |L|,|S| \leq t, L\cap S = \emptyset} \frac{(v^*)^\t (\Sigma^{(L)})_S^\t (\Sigma^{(L)})_S v^*}{(v^*)^\t (\Sigma^{(L)})_S^\t (\Sigma^{(L)})_{SS}^\pinv (\Sigma^{(L)})_S v^*}.\]
\end{definition}

In any $t$-sparse linear regression model with true regressor $v^*$, when the above quantity $\gamma := \gamma_t(\Sigma,v^*)$ is bounded away from zero, it can be shown that the standard Forward Regression algorithm finds some $t$-sparse estimate $\hat{v} \in \RR^n$ such that $\norm{\hat{v} - v^*}_\Sigma^2 \leq e^{-\gamma} \norm{v^*}_\Sigma^2$ (see e.g. Theorem 3.2 in \cite{das2011submodular}; that result is for the model where the algorithm is given exact access to $\langle v,v^*\rangle_\Sigma$ for any $t$-sparse $v\in \RR^n$, but analogous finite-sample bounds can be obtained with $O(\gamma^{-O(1)}t\log(n))$ samples by applying the theorem to the empirical covariance matrix and using concentration of $t\times t$ submatrices). A similar guarantee is also known for Orthogonal Matching Pursuit (Theorem~3.7 in \cite{das2011submodular}).

Once again, it is simple to show that the standard basis is a good dictionary for matrices with a large submodularity ratio.

\begin{fact}
Let $n,t \in \NN$. For any positive semi-definite $\Sigma: n \times n$ with $\gamma := \min_{v^* \in \RR^n \cap B_0(t)} \gamma_t(\Sigma, v^*)$, it holds that $\CN_{t,\sqrt{\gamma/t}}(\Sigma) \leq n$.
\end{fact}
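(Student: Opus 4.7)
The plan is to show that the standard basis $\{e_1,\dots,e_n\}$ itself is a $(t, \sqrt{\gamma/t})$-dictionary, mirroring the strategy used for Fact~\ref{fact:kappa-dictionary}. Unpacking the $(t,\alpha)$-dictionary definition, what I need to prove is that for every $t$-sparse $v^* \in \RR^n$ with support $T$, there exists some $i \in [n]$ with
\[ \frac{|\langle v^*, e_i\rangle_\Sigma|^2}{\|e_i\|_\Sigma^2 \, \|v^*\|_\Sigma^2} \geq \frac{\gamma}{t}. \]

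First I would transfer the problem to the (normalized) correlation matrix $\tilde\Sigma := \Sigma^{(\emptyset)} = D^{-1/2}\Sigma D^{-1/2}$, where $D = \diag(\Sigma)$. Substituting $u_j := \sqrt{\Sigma_{jj}}\, v^*_j$ (so $u$ has the same support $T$ as $v^*$), a direct computation yields $|\langle v^*, e_i\rangle_\Sigma|^2 / \Sigma_{ii} = |(\tilde\Sigma u)_i|^2$ and $\|v^*\|_\Sigma^2 = u^\t \tilde\Sigma u$. So the desired bound becomes
\[ \max_{i \in [n]} |(\tilde\Sigma u)_i|^2 \;\geq\; \frac{\gamma}{t}\, u^\t \tilde\Sigma u. \]

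Next I would plug into the definition of the submodularity ratio with the specific choice $L = \emptyset$ and $S = T$. Since $u$ is itself $t$-sparse, we have $\gamma_t(\Sigma, u) \geq \gamma$. The matrix $\tilde\Sigma_T u$ equals $\tilde\Sigma_{TT} u_T$ because $u$ is supported on $T$, so the submodularity inequality reads
\[ u_T^\t \tilde\Sigma_{TT}^2 u_T \;\geq\; \gamma\, \bigl(\tilde\Sigma_{TT} u_T\bigr)^\t \tilde\Sigma_{TT}^\pinv \bigl(\tilde\Sigma_{TT} u_T\bigr) \;=\; \gamma\, u_T^\t \tilde\Sigma_{TT} u_T \;=\; \gamma\, u^\t \tilde\Sigma u, \]
where the middle equality uses the identity $M M^\pinv M = M$ on the PSD matrix $M := \tilde\Sigma_{TT}$.

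Finally I would close the argument by averaging: since $(\tilde\Sigma u)_i = (\tilde\Sigma_{TT} u_T)_i$ for $i \in T$,
\[ \max_{i \in [n]} |(\tilde\Sigma u)_i|^2 \;\geq\; \frac{1}{|T|}\sum_{i \in T} |(\tilde\Sigma u)_i|^2 \;=\; \frac{1}{|T|} u_T^\t \tilde\Sigma_{TT}^2 u_T \;\geq\; \frac{\gamma}{t} u^\t \tilde\Sigma u, \]
using $|T| \leq t$ and the submodularity bound from the previous step. Translating back through the substitution $u_j = \sqrt{\Sigma_{jj}} v^*_j$ yields the $(t,\sqrt{\gamma/t})$-dictionary property, and by Lemma~\ref{lemma:l1rep-to-covering} this gives $\CN_{t,\sqrt{\gamma/t}}(\Sigma) \le n$. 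There is no real obstacle; the only nontrivial move is recognizing that choosing $L=\emptyset$ and $S$ equal to the support of the test vector in the submodularity ratio immediately yields the spectral comparison $\tilde\Sigma_{TT}^2 \succeq \gamma\,\tilde\Sigma_{TT}$ on $\supp(u)$, which is exactly what a pigeonhole step needs.
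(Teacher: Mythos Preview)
Your proof is correct and follows essentially the same approach as the paper: both show the standard basis is a $(t,\sqrt{\gamma/t})$-dictionary by taking $L=\emptyset$, $S=\supp(v^*)$ in the submodularity ratio and then averaging over the at most $t$ support coordinates; the only cosmetic difference is that the paper assumes $\Sigma_{ii}=1$ without loss of generality, whereas you carry out the diagonal rescaling explicitly via $u=D^{1/2}v^*$. One minor slip: the final appeal to Lemma~\ref{lemma:l1rep-to-covering} is unnecessary, since you have directly established the $(t,\sqrt{\gamma/t})$-dictionary property, and $\CN_{t,\sqrt{\gamma/t}}(\Sigma)\le n$ follows immediately from Definition~\ref{def:covering}.
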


\begin{proof}
We show that the standard basis is a $(t,\gamma/t)$-dictionary for $\Sigma$. Without loss of generality assume that $\Sigma_{ii} = 1$ for all $i \in [n]$. Then $\Sigma^{(\emptyset)} = \Sigma$. Fix any $t$-sparse $v^* \in \RR^n$. Setting $S := \supp(v^*)$, we have that \[\sum_{i \in S} \langle e_i, v^*\rangle_\Sigma^2 = (v^*)^\t \Sigma_S^\t \Sigma_S v^* \geq \gamma (v^*)^\t \Sigma_S^\t (\Sigma_{SS})^\pinv \Sigma_S v^* = \gamma \norm{v^*}_\Sigma^2\]
where the inequality is by definition of $\gamma$, and the final equality uses that $\Sigma_S v^* = \Sigma_{SS} (v^*)_S$ (since $v^*$ is supported on $S$). It follows that $\max_{i \in S} \langle e_i, v^*\rangle_\Sigma^2 \geq (\gamma/t)\norm{v^*}_\Sigma^2$. Since $\norm{e_i}_\Sigma=1$ for all $i$, we conclude that
\[\max_{i \in [n]} \frac{|\langle e_i,v^*\rangle_\Sigma|}{\norm{e_i}_\Sigma\norm{v^*}_\Sigma} \geq \sqrt{\frac{\gamma}{t}}\] as claimed.
\end{proof}

\subsection{Sparse preconditioning}

Recent work \cite{kelner2021power} showed that if $\Sigma: n \times n$ is a positive definite matrix and the support of $\Theta := \Sigma^{-1}$ is the adjacency matrix of a graph with low \emph{treewidth}, then there is a polynomial-time, sample-efficient algorithm for sparse linear regression with covariates drawn from $N(0,\Sigma)$. The key to this result was a proof that such covariance matrices are \emph{sparsely preconditionable}: i.e., there is a matrix $S: n \times n$ such that $\Sigma = SS^\t$ and $S$ has sparse rows. We claim that this property also immediately enables succinct dictionaries.

Concretely, suppose that $S$ has $s$-sparse rows. By a change-of-basis argument, any $t$-sparse vector in the standard basis is $st$-sparse in the basis $\{(S^\t)^{-1}_1,\dots,(S^\t)^{-1}_n\}$. Moreover these vectors are orthonormal under $\Sigma$. Thus, by the same argument as for Fact~\ref{fact:kappa-dictionary}, it's easy to see that $\{(S^\t)^{-1}_1,\dots,(S^\t)^{-1}_n\}$ is a $(t, 1/\sqrt{st})$-dictionary for $\Sigma$.

\newpage
\section{Supplementary figure}\label{section:suppfig}

\begin{figure}[h]
    \centering
    \includegraphics[width=0.7\textwidth]{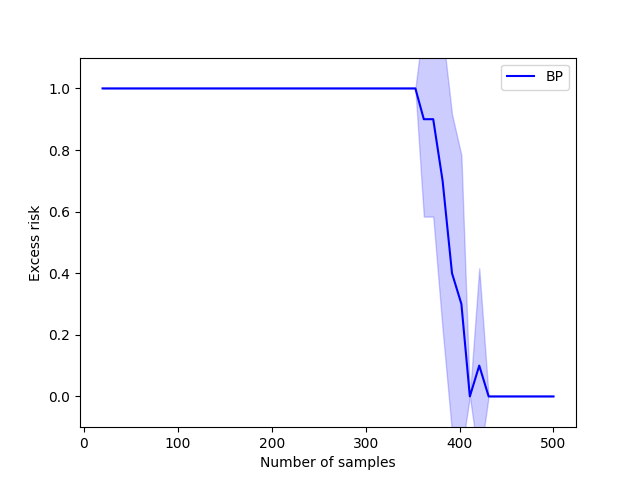}
    \caption{Performance of Basis Pursuit in a synthetic example with $n = 1000$ covariates. The covariates $X_{1:1000}$ are all independent $N(0,1)$ except for $(X_0,X_1,X_2)$, which have joint distribution $X_0 = Z_0$, $X_1 = Z_0 + 0.4Z_1$, and $X_2 = Z_1 + 0.4Z_2$ where $Z_0,Z_1,Z_2 \sim N(0,1)$ are independent. The noiseless responses are $y = 6.25(X_1-X_2) + 2.5X_3$, i.e. the ground truth is $3$-sparse. The $x$-axis is the number of samples. The $y$-axis is the out-of-sample prediction error (averaged over $10$ independent runs, and error bars indicate the standard deviation).}
    \label{fig:bpfailure}
\end{figure}

\section{Experimental details}\label{section:experimental}

The simulations were done using Python 3.9 and the Gurobi library \cite{gurobi}. Each figure took several minutes to generate using a standard laptop. 
\ifneurips See the file \texttt{auglasso.py} for code and execution instructions. \fi

\end{document}